\renewenvironment{proof}[1][\proofname]{\par
\pushQED{\qed}%
\normalfont \topsep6\p@\@plus6\p@\relax
\trivlist
\item\relax
{\bf
#1\@addpunct{.}}\hspace\labelsep\ignorespaces
}{%
\popQED\endtrivlist\@endpefalse
}
\newtheorem{example}{Example}
\newtheorem{lemma}{Lemma}
\newtheorem{theorem}{Theorem}
\newtheorem{proposition}{Proposition}
\newtheorem{definition}{Definition}
\newtheorem{corollary}{Corollary}
\newcommand{\mc}[1]{\ensuremath{\mathcal{#1}}\xspace}
\newcommand{\mi}[1]{\ensuremath{\mathit{#1}}\xspace}
\renewcommand{\EL}{\mc{E\kern-0.1emL}}
\newcommand{\Ind}{\mn{dom}}
\renewcommand{\phi}{\varphi}
\renewcommand{\epsilon}{\varepsilon}
\newcommand{\adom}{\ensuremath{\mathsf{dom}}}
\begin{document}

\title{When is Ontology-Mediated Querying Efficient?}

\author{\IEEEauthorblockN{Pablo Barcel\'o}
\IEEEauthorblockA{DCC, U of Chile \& IMFD Chile\\
pbarcelo@dcc.uchile.cl}
\and
\IEEEauthorblockN{Cristina Feier}
\IEEEauthorblockA{University of Bremen\\
feier@uni-bremen.de}
\and
\IEEEauthorblockN{Carsten Lutz}
\IEEEauthorblockA{University of Bremen\\
clu@uni-bremen.de}
\and
\IEEEauthorblockN{Andreas Pieris}
\IEEEauthorblockA{University of Edinburgh\\
apieris@inf.ed.ac.uk }

}

\IEEEoverridecommandlockouts
  \IEEEpubid{\makebox[\columnwidth]{978-1-7281-3608-0/19/\$31.00~
  \copyright2019 IEEE \hfill} \hspace{\columnsep}\makebox[\columnwidth]{ }}
\maketitle


\begin{abstract}
  In ontology-mediated querying, description logic (DL) ontologies are
  used to enrich incomplete data with domain knowledge which results
  in more complete answers to queries. However, the evaluation of
  ontology-mediated queries (OMQs) over relational databases is
  computationally hard.
  This raises the question when OMQ evaluation is efficient, in the
  sense of being tractable in combined complexity or fixed-parameter
  tractable. We study this question for a range of ontology-mediated
  query languages based on several important and widely-used DLs,
  using unions of conjunctive queries as the actual queries. For the
  DL $\ELHIbot$, we provide a characterization of the classes of OMQs
  that are fixed-parameter tractable. For its fragment $\ELpoly$,
  which restricts the use of inverse roles, we provide a
  characterization of the classes of OMQs that are tractable in
  combined complexity.  Both results are in terms of equivalence to
  OMQs of bounded tree width and rest on a reasonable assumption from
  parameterized complexity theory. They are similar in spirit to
  Grohe's seminal characterization of the tractable classes of
  conjunctive queries over relational databases. We further study the
  complexity of the meta problem of deciding whether a given OMQ is
  equivalent to an OMQ of bounded tree width, providing several
  completeness results that range from \NPclass~to \TwoExpTime, depending
  on the DL used. We also consider the DL-Lite family of DLs,
  including members that, unlike $\ELHIbot$, admit functional roles.
\end{abstract}

\setlength{\textfloatsep}{2mm}

\section{Introduction}

An ontology-mediated query (OMQ) is a database query enriched with an
ontology that contains domain knowledge
\cite{DBLP:conf/rweb/CalvaneseGLLPRR09,DBLP:journals/tods/BienvenuCLW14,DBLP:conf/rweb/BienvenuO15}. Adding
the ontology serves the purpose of delivering more complete answers to
queries and of enriching the vocabulary available for
querying. Ontologies are often formulated in description logics (DLs),
a family of ontology languages that has emerged from artificial
intelligence, underlies the OWL 2 recommendation for ontology
languages on the web, and whose members can be seen as decidable
fragments of (two-variable guarded) first-order logic
\cite{DBLP:books/daglib/0041477}. The actual queries in OMQs are
typically conjunctive queries (CQs), unions of CQs (UCQs), or
fragments thereof, query languages that are at the heart of relational
databases~\cite{AHV}.

An \emph{OMQ language} is a pair $(\Lmc,\Qmc)$ with \Lmc an ontology
language and \Qmc a query language
\cite{DBLP:journals/tods/BienvenuCLW14}. Depending on the OMQ language
chosen, the computational cost of evaluating OMQs can be high. Both
the combined complexity and the data complexity of OMQ evaluation have
received considerable interest in the literature, where data
complexity means that the OMQ is fixed while the database is treated
as an input, in line with the standard setup from database theory. The
combined complexity ranges from \PTime
\cite{DBLP:conf/ijcai/BienvenuOSX13,DBLP:journals/jacm/BienvenuKKPZ18,DBLP:conf/pods/BienvenuKKPRZ17}
to at least \TwoExpTime
\cite{DBLP:conf/cade/Lutz08,DBLP:conf/ijcai/EiterLOS09,DBLP:conf/kr/NgoOS16}. Regarding
the data complexity, there is an important divide between DLs that
include negation or disjunction and induce {\sc coNP}-hardness, and
DLs that do not
\cite{DBLP:journals/jar/HustadtMS07,DBLP:conf/jelia/EiterGOS08,DBLP:journals/ai/CalvaneseGLLR13,DBLP:conf/lpar/KrisnadhiL07}. Studying
the data complexity of the former has turned out to be closely related
to the complexity of constraint satisfaction problems
(CSPs)~\cite{DBLP:journals/tods/BienvenuCLW14}. 

In this paper, we explore the frontiers of two important notions of
OMQ tractability, \PTime combined complexity and fixed-parameter
tractability (\FPTclass) where the parameter is the size of the
OMQ. We believe these to be more realistic than \PTime data complexity
given that ontologies can get large. In fact, hundreds or thousands of
logical statements are not unusual in real world ontologies, and this
can even go up to hundreds of thousands in extreme but important cases
such as {\sc snomed CT}~\cite{DBLP:conf/amia/SpackmanCC97}.
However, there are only few OMQ languages that have \PTime combined
complexity or are FPT without imposing serious restrictions on the
shape of the query or the ontology. An example for the former is
$(\mathcal{ELH}^{dr}_\bot,\text{AQ})$ where $\cdot^{dr}$ stands for
domain and range restrictions and AQ refers to the class of atomic
queries of the form $A(x)$, $A$ a concept name; this result is
implicit in \cite{LutzTomanWolter-IJCAI09}. An example for FPT is
$(\mathcal{ELHI}_\bot,\text{AQ})$; we are not aware of this being
stated explicitly anywhere, but it is not too hard to prove using
standard means. Both $\mathcal{ELH}^{dr}_\bot$ and
$\mathcal{ELHI}_\bot$ are widely used DLs that underpin profiles of
the OWL 2 recommendation \cite{owl2-overview}.  One should think of
the former as an important DL without negation and disjunction, and of
the latter as an important DL in which basic reasoning problems such
as subsumption are still in \PTime.  Note that the unrestricted use of
CQs and UCQs rules out both of the considered complexities since
(U)CQ-evaluation is \NPclass-complete in complexity and
$\Wclass[1]$-hard, thus most likely not fixed-parameter
tractable~\cite{GroheBook}.


A remarkable
result by Grohe precisely characterizes the (recursively enumerable)
classes of CQs over schemas of bounded arity that can be evaluated in
\PTime~\cite{DBLP:journals/jacm/Grohe07}: this is the case if and only
if for some $k$, every CQ in the class is equivalent to a CQ of tree
width $k$, unless the assumption from parameterized complexity theory
that $\FPTclass \neq \Wclass[1]$ fails. Grohe's result also
establishes that \PTime complexity and \FPTclass coincide for
evaluating CQs.  
A generalization to UCQs is in \cite{DBLP:journals/tocl/Chen14}, more
details are given in Sections~\ref{sect:prelims} and~\ref{sect:eval}.

Our main contribution is to establish the following, under the widely-held
assumption that $\FPTclass \neq \Wclass[1]$:
\begin{enumerate}

\item a precise characterization of classes of OMQs from
  $(\ELHIbot,\text{UCQ})$ for which evaluation is in FPT as those
  classes in which each OMQ is equivalent to an OMQ of bounded tree
  width (Section~\ref{sect:fpt});

\item a precise characterization of classes of OMQs from $(\ELpoly,\text{UCQ})$ that admit \PTime evaluation as
  those classes in which each OMQ is equivalent to an OMQ of bounded
  tree width  (Section~\ref{sect:ptimeupper}),

\end{enumerate}
where an OMQ has bounded tree width if the actual query in it
has. 
%
Regarding Point~1, we also observe that the runtime of the FPT
algorithm can be made single exponential in the parameter.  In
Point~2, we work under the assumption that the ontology does not
introduce relations beyond those admitted in the database; such
additional relations are introduced to enrich the vocabulary available
for querying, bearing a similarity to views in relational
databases~\cite{Levy95}.
%
Given that \ELpoly is a fragment of \ELHIbot, Points~1 and~2 imply that \PTime
complexity and \FPTclass coincide in $(\ELpoly,\text{UCQ})$. To prove
the
`upper bound' of Point~2, we use
existential pebble
games adapted in a careful way to OMQs. For the rather non-trivial
`lower bound', we build on Grohe's result. Here, the fact that
OMQs can introduce additional relations results in serious challenges; in fact, several fundamental techniques that are standard in relational databases must be replaced by more subtle ones.

Related to our second main result, it has been shown in
in~\cite{DBLP:conf/ijcai/BienvenuOSX13} that whenever \Qmc is a class
of CQs that can be evaluated in \PTime, then the same is true for OMQs
from $(\mathcal{ELH},\Qmc)$.  In particular, \Qmc might be the class
of CQs of tree width bounded by some~$k$. Our tractability results are
stronger than this: adding an ontology can lower the complexity of a
(U)CQ and it is in fact not hard to see that there are classes of OMQs
from $(\mathcal{EL},\text{CQ})$ that can be evaluated in \PTime, but
the class of CQs used in them cannot. In our characterizations,
equivalence to an OMQ $Q$ of bounded tree width includes the case
that $Q$ uses a different ontology than the original OMQ. We also
show, however, that in most of the studied cases there is no benefit
in changing the ontology. More loosely related studies of the combined
complexity of OMQs in which the ontology is formulated in fragments of
$\ELHIbot$ such as DL-Lite$^\Rmc$ and DL-Lite$_{\mn{horn}}^\Rmc$,
important in ontology-based data integration, and where the queries
have bounded tree width are in
\cite{DBLP:conf/pods/BienvenuKKPRZ17,DBLP:journals/jacm/BienvenuKKPZ18}.

We further study the complexity of the meta problem of deciding
whether a given OMQ is equivalent to an OMQ of bounded tree width
(Section~\ref{sect:deciding}).  Decidability is needed for the
characterizations described above, but we also consider this question
interesting in its own right. Our results range from $\Pi^p_2$ between
$(\text{DL-Lite}^\Rmc,\text{CQ})$ and
$(\text{DL-Lite}^\Rmc_{\mn{horn}},\text{UCQ})$ via \ExpTime between
$(\EL,\text{CQ})$ and $(\ELpoly,\text{UCQ})$ to \TwoExpTime between
$(\ELI,\text{CQ})$ and $(\ELHIbot,\text{UCQ})$; all these are
completeness results. As an important
special case, we consider the full database schema, meaning that the
ontology cannot introduce additional relations. There, the complexity
drops considerably, to \NPclass, \NPclass, and \ExpTime,
respectively. The case of the full schema is also interesting because
it admits constructions that are closer to the case of relational
databases. We remark
that when the schema is full, the problems studied here are closely
related to the evaluation of (U)CQs of bounded tree width over
relational databases with integrity
constraints~\cite{PabloAndreasPods}. However, the constraints
languages considered there are different from ontology languages and
the connection breaks when the schema is not full.

Finally, we take a first glimpse at ontology languages that include a
form of counting, more precisely at DL-Lite$^\Fmc$, in which some
binary relations can be declared to be partial functions
(Section~\ref{sec:dl-lite-f}). This turns out to be closely related to
the evaluation of UCQs over relational databases 
in the presence of key dependencies, as studied by
Figueira~\cite{Figueira16}. We show that evaluating OMQs that are
equivalent to an OMQ of tree width bounded by some $k$ is in FPT and
even in PTime when $k=1$, and that the meta problem of deciding
whether an OMQ belongs to this class is decidable in \ThreeExpTime and
\NPclass-complete when $k=1$.  In this part, we assume the full
database schema and Boolean queries. For the case $k>1$, we
additionally assume that the ontology cannot be changed.

Most of the proof details are deferred to the appendix, available at
\url{http://www.informatik.uni-bremen.de/tdki}.

\section{Preliminaries}
\label{sect:prelims}

\subsection{Databases and Queries}

{\bf {\em Databases.}}  Let \NC, \NR, and \Cbf be countably infinite sets of
\emph{concept names}, \emph{role names}, and \emph{constants}, respectively.  A
\emph{database} \Dmc is a finite set of \emph{facts} of the form
$A(a)$ and $r(a,b)$ where, here and in the remainder of the paper, $A$
ranges over \NC, $r$ ranges over \NR, and $a,b$ range
over \Cbf. We denote by $\mn{dom}(\Dmc)$ the set of constants
used in \Dmc and sometimes write $r^-(a,b) \in \Dmc$ in place of
$r(b,a) \in \Dmc$.  A \emph{schema} $\Sbf$ is a set of concept and
role names. An \emph{$\Sbf$-database} is a database that uses only
concept and role names from  $\Sbf$. Note that as usual in
the context of DLs, databases can only refer to unary and binary
relations, i.e., concept and role names, but not to relations of
higher (or lower) arity. We shall sometimes consider also infinite databases and
then say so explicitly.

A \emph{homomorphism} from database $\Dmc_1$ to database $\Dmc_2$ is a
function $h :\mn{dom}(\Dmc_1) \to \mn{dom}(\Dmc_2)$ such that $A(h(a))
\in \Dmc_2$ for every $A(a) \in \Dmc_1$, and $r(h(a),h(b)) \in \Dmc_2$
for every $r(a,b) \in \Dmc_1$.  We write $\Dmc_1 \rightarrow \Dmc_2$
if there is a homomorphism from $\Dmc_1$ to $\Dmc_2$.
For a database \Dmc and tuple (or set) $\abf$ of constants,
we write $\Dmc|_\abf$ to denote the restriction of \Dmc to facts that
involve only constants from \abf.

\smallskip

{\bf {\em Conjunctive Queries.}}  A \emph{conjunctive query (CQ)} is of the
form $q = \exists \ybf\,\varphi(\xbf,\ybf), $ where \xbf and \ybf are
tuples of variables and $\varphi(\xbf,\ybf)$ is a conjunction of
\emph{atoms} of the form
$A(x)$ and $r(x,y)$ with $x,y$ variables.  We call \xbf the
\emph{answer variables} of~$q$, \ybf the \emph{quantified variables},
and use $\mn{var}(q)$ to denote $\xbf \cup \ybf$.
We take the liberty
to write $\alpha \in q$ to indicate that $\alpha$ is an atom in $q$
and sometimes write $r^-(x,y) \in q$ in place of $r(y,x) \in q$. We
neither admit equality atoms nor constants in CQs, but all results in
this paper remain valid when both are admitted.

Every CQ $q$ can be seen as a database $\Dmc_q$ by dropping the
existential quantifier prefix and viewing variables as constants.  A
\emph{homomorphism} from $q$ to a database \Dmc is a homomorphism from
$\Dmc_q$ to $\Dmc$. We write $\Dmc \models q(\abf)$ and call the
tuple of constants \abf an \emph{answer to $q$ on \Dmc} if there is a
homomorphism $h$ from $q$ to $\Dmc$ with $h(\xbf) = \abf$, \xbf the
answer variables of $q$. Moreover, $q(\Dmc)$ denotes the set of all
answers to $q$ on \Dmc.

A \emph{union of conjunctive queries (UCQ)} $q$ is a disjunction of one
or more CQs that all have the same answer
variables.
\emph{Answers} to a UCQ $q$ are defined in the expected
way, and so is $q(\Dmc)$.
The \emph{arity} of a (U)CQ $q$ is defined as the number of answer variables in it, and we use the term \emph{Boolean} interchangeably with
`arity zero'.

A \emph{homomorphism} from a CQ $q_1(\xbf)$ to a CQ $q_2(\xbf)$
 is a homomorphism $h$
from $\Dmc_{q_1}$ to $\Dmc_{q_2}$ such that $h(\xbf)=\xbf$.  We write
  $q_1 \rightarrow q_2$ if such a homomorphism exists. For a CQ
$q$ and tuple (or set) \zbf of variables, 
the restriction $q|_\zbf$ of $q$ to the variables in \zbf is defined
in the expected way (it might involve a change of arity).

\smallskip

{\bf {\em Tree Width.}} The {\em evaluation problem for CQs} takes as input a CQ $q(\xbf)$, a database $\Dmc$, and a candidate
answer $\abf$, and asks whether $\abf \in q(\Dmc)$.
While in general the evaluation problem for CQs is {\rm NP}-complete, it becomes tractable
for CQs of {\em bounded tree width} \cite{DKV02}. The notion of tree width of a CQ, which is central to our work, is introduced next.

A \emph{tree decomposition} of an undirected graph
$G=(V,E)$ is a triple $D=(V_D,E_D,\mu)$ where $(V_D,E_D)$ is an
undirected tree and $\mu:V_D \rightarrow 2^V$ a function such that
\begin{itemize}
\item $\bigcup_{t \in V_D} \mu(t)=V$;
\item if $\{v_1,v_2\} \in E$, then $v_1,v_2 \in \mu(t)$ for some $t \in V_D$;
\item for every $v \in V$, the subgraph of $(V_D,E_D)$ induced by
   the vertex set $\{t \in V_D \mid v \in \mu(t)\}$ is connected.
\end{itemize}
The \emph{width} of $D$ is $\max_{t \in V_D}
|\mu(t)|-1$ and the \emph{tree width} of~$G$ is the smallest
width of any tree decomposition of $G$.

Each database \Dmc is associated with an undirected graph (without
self loops) $G_\Dmc$, its \emph{Gaifman graph}, defined as follows:
the nodes of $G_\Dmc$ are the constants in \Dmc and there is an edge
$\{a,b\}$ iff $\Dmc$ contains a fact that involves both $a$ and
$b$. Each CQ $q$ is associated with the directed graph
$G_q=G_{\Dmc_q}$. We can thus use standard terminology from graph
theory for databases and CQs, e.g.\ saying that a database \Dmc is
connected and speaking about the tree width of \Dmc (when \Dmc
contains no binary facts, we define its tree width to be~1). There is
an exception, though. 
The \emph{tree width of a CQ} $q=\exists \ybf \, \vp(\xbf,\ybf)$ is
defined in a more liberal way, namely as the tree width of
$G_{q|_\ybf}$ (and 1 if $q|_\ybf$ contains no binary atoms).
%
%
%
For every $k \geq 0$, a \emph{CQ$_k$} is a CQ of tree width at most
$k$ and a \emph{UCQ$_k$} is a union of CQ$_k$s with the same answer
variables.


\subsection{Description Logics and Ontology-Mediated Queries}

{\bf {\em Concepts and Ontologies.}} We introduce several widely used
description logics, see~\cite{DBLP:books/daglib/0041477} for more
details. An \emph{\ELIbot-concept} is formed according to the syntax rule
$$C,D ::= A
\mid \top \mid \bot \mid C \sqcap D \mid \exists r. C \mid \exists r^-
. C.
$$
An expression $r^-$ is an \emph{inverse role} and a
\emph{role} is a role name or an inverse role. As usual, we identify
$(r^-)^-$ with~$r$.  An \emph{$\EL_\bot$-concept} is an
$\ELI_\bot$-concept with no inverse roles.

An \emph{$\mathcal{ELHI}_\bot$-ontology} is a finite set of
\emph{$\ELI_\bot$-concept inclusions} of the form $C \sqsubseteq D$, with $C,D$
\ELIbot-concepts, and \emph{role inclusions} of the form $r \sqsubseteq s$, with
$r,s$ roles. In the name $\mathcal{ELHI}_\bot$, the letter \Hmc
indicates that role inclusions are admitted, \Imc indicates that
inverse roles are admitted, and $\cdot_\bot$ indicates that the
$\bot$-concept may be used. It should thus also be clear what we mean
by an \emph{$\mathcal{EL}$-ontology}, an
\emph{$\mathcal{ELH}_\bot$-ontology}, and so on. An
\emph{$\ELpoly$-ontology} is an $\mathcal{ELH}_\bot$-ontology that
additionally admits \emph{range restrictions} $\exists r^- . \top
\sqsubseteq C$ with $r$ a role name and $C$ an $\EL_\bot$-concept.
Note that $\cdot^{dr}$ stands for domain and range restrictions, where
domain restrictions are simply \ELbot-concept inclusions of the form
$\exists r . \top \sqsubseteq C$. We assume without loss of generality, and without further notice, that the $\bot$-concept occurs only in
concept inclusions of the form $C \sqsubseteq \bot$, where $C$ does not contain $\bot$.

\smallskip

{\bf {\em Semantics.}}   The semantics of
ontologies is defined based on interpretations, relational structures
that interpret only relations of arity one and two. We choose a
presentation here that is slightly nonstandard, but equivalent to the
usual one~\cite{DBLP:books/daglib/0041477}: an \emph{interpretation}
is a finite or infinite database \Imc with \mbox{$\mn{dom}(\Imc) \neq
  \emptyset$}. Each \ELIbot-concept $C$ and role $r$ is associated
with an \emph{extension} $C^\Imc$, resp.\ $r^\Imc$, according to
Figure~\ref{fig:semantics}.
\begin{figure}[t]
 \begin{boxedminipage}[t]{\columnwidth}
 \vspace{-\abovedisplayskip}
   \begin{center}
$$
\begin{array}{r@{\,}c@{\,}l@{}c@{}r@{\,}c@{\,}l}
  \top^\Imc &=& \mn{dom}(\Imc) &&
  \bot^\Imc &=& \emptyset \\[1mm]
  A^\Imc &=& \{ d \mid A(d) \in \Imc \} &&
  r^\Imc &=& \{ (d,e) \mid r(d,e) \in \Imc \} \\[1mm]
  (C \sqcap D)^\Imc &=& C^\Imc \cap D^\Imc  &&
  (r^-)^\Imc &=& \{ (e,d) \mid r(d,e) \in \Imc \} \\[1mm]
  (\exists r . C)^\Imc &=& \multicolumn{5}{@{}l}{\{ d \in \mn{dom}(\Imc) \mid \exists e:
  (d,e) \in r^\Imc \wedge e \in C^\Imc \}}
\end{array}
$$
   \end{center}
 \end{boxedminipage}
  \caption{Semantics of $\ELIbot$-concepts}
  \label{fig:semantics}
\end{figure}

An interpretation \Imc \emph{satisfies} a concept inclusion $C
\sqsubseteq D$ if $C^\Imc \subseteq D^\Imc$, and a role inclusion $r
\sqsubseteq s$ if $r^\Imc \subseteq s^\Imc$. It is a \emph{model} of
an $\mathcal{ELHI}_\bot$-ontology \Omc if it satisfies all the
inclusions in~\Omc, and of a database \Dmc if $\Dmc \subseteq \Imc$. A
database \Dmc is \emph{consistent with} \Omc if \Dmc and \Omc have a
common model.

Note that standard reasoning tasks, e.g., the consistency
of a given database with a given ontology, are in \PTime in \ELpoly and
\ExpTime-complete between $\ELI_\bot$ and
\ELHIbot~\cite{DBLP:books/daglib/0041477}. Note also that all the DLs defined above can be translated into (two-variable guarded) first-order logic in a standard way~\cite{DBLP:books/daglib/0041477}.


\smallskip

{\bf {\em Ontology-Mediated Queries.}}
An \emph{ontology-mediated query (OMQ)} takes the form
$Q=(\Omc,\Sbf,q)$ with \Omc an ontology,
$\Sbf$ 
a schema (which indicates that $Q$ will be evaluated over $\Sbf$-databases), and $q$ a query. The \emph{arity} of $Q$ is the
arity of~$q$.  
We write $Q(\xbf)$ to emphasize that the answer variables of $q$
are~\xbf.  When $\Sbf = \NC \cup \NR$, then we denote it
with $\Sbf_{\mn{full}}$ and speak of the \emph{full schema}.
It makes perfect sense to use a non-full schema \Sbf while
referring to concept and role names from outside \Sbf in both the
ontology~\Omc and query $q$. In fact, enriching the schema
with additional symbols is one main application of ontologies in
querying \cite{jair-data-schema}. This is similar to the distinction between
extensional and intensional relations in Datalog \cite{AHV}.

Consider an OMQ $Q(\xbf)$ and an $\Sbf$-database \Dmc. A tuple $\abf
\in \mn{dom}(\Dmc)^{|\xbf|}$ is an \emph{answer} to $Q$ on \Dmc,
written $\Dmc \models Q(\abf)$, if $\Imc \models q(\abf)$ for all
models \Imc of \Dmc and~\Omc. We write $Q(\Dmc)$ for the set of
answers to $Q$ on \Dmc.
%
%

An OMQ $Q=(\Omc,\Sbf,q)$ is \emph{empty} if, for all
$\Sbf$-databases~\Dmc consistent with \Omc, there is no answer to $Q$
on \Dmc, i.e., $Q(\Dmc)$ is empty.  Let $Q_1,Q_2$ be OMQs,
$Q_i=(\Omc_i,\Sbf,q_i)$ for $i \in \{1,2\}$.  Then $Q_1$ is
\emph{contained} in $Q_2$, written $Q_1 \subseteq Q_2$, if $Q_1(\Dmc)
\subseteq Q_2(\Dmc)$ for all $\Sbf$-databases \Dmc. Further, $Q_1$ and
$Q_2$ are \emph{equivalent}, written $Q_1 \equiv Q_2$, if $Q_1
\subseteq Q_2$ and $Q_2 \subseteq Q_1$. 
We use $(\Lmc,\Qmc)$ to refer to the \emph{OMQ language} in which the
ontology is formulated in \Lmc and where the actual queries are from~\Qmc,
e.g., $(\ELbot,\text{CQ})$ and $(\ELIHbot,\text{UCQ})$.
As usual, we write $|O|$ for the \emph{size} of a syntactic object $O$
such as an OMQ, an ontology, or a conjunctive query, that is, the
number of symbols needed to write $O$ where concept names, role names,
variables names, and the like count as one.

\smallskip

{\bf {\em The Chase.}} The chase is a widely used tool in database
theory that allows us, whenever a database is consistent with an
$\mathcal{ELHI}_\bot$-ontology \Omc, to construct a {\em universal
  model} of $\Dmc$ and \Omc that enjoys many good properties; cf.,
\cite{beeri-chase,AHV}.


Let \Omc be an $\mathcal{ELHI}_\bot$-ontology.
Intuitively,
{\em the chase of  $\Dmc$ with respect to \Omc},
denoted $\mn{ch}_\Omc(\Dmc)$,
is the potentially infinite interpretation \Imc that is obtained in
the limit of recursively applying the following two rules on
$\Dmc$, based on the inclusions in $\Omc$:
\begin{enumerate}

\item if $a \in C^\Imc$, $C \sqsubseteq D \in \Omc$, and $D \neq
  \bot$, then add $D(a)$~to~\Imc;

\item if $(a,b) \in r^\Imc$ and $r \sqsubseteq s \in \Omc$, then add
  $s(a,b)$ to \Imc.

\end{enumerate}
In Rule~1, `add $D(a)$ to \Imc' means to add to \Imc a finite
tree-shaped database that represents the \ELI-concept $D$, identifying
its root with~$a$. For example, the concept $A \sqcap \exists r . (B
\sqcap \exists s . \top)$ corresponds to the database $\{A(a), r(a,b),
B(b), s(b,c) \}$. Our chase is oblivious, a formal definition can be
found in the appendix.
We sometimes apply the chase directly to a CQ~$q$, implicitly meaning
its application to the database $\Dmc_q$.
The following lemma summarizes the main properties of the chase.


%
%
%
%
%
%
%
\begin{lemma}
\label{lem:chaseprop}
  Let \Dmc be a database and $Q=(\Omc,\Sbf,q)$ an OMQ from
  $(\ELHIbot,\text{UCQ})$. Then
  \begin{enumerate}

  \item $\Dmc$ is inconsistent with \Omc iff there is an $a \in
    \dom(\mn{ch}_\Omc(\Dmc))$ and a $C \sqsubseteq \bot \in \Omc$ such
    that $a \in C^{\mn{ch}_\Omc(\Dmc)}$;

  \item $Q(\Dmc)=q(\mn{ch}_\Omc(\Dmc))$,
 if \Dmc is consistent with \Omc;

  \item $\mn{ch}_\Omc(\Dmc) \rightarrow \Imc$ via a homomorphism that
    is the identity on $\mn{dom}(\Dmc)$, for every model \Imc of \Dmc
    and \Omc;

  \item the tree width of \Dmc and of $\mn{ch}_\Omc(\Dmc)$ are
    identical.

  \end{enumerate}
\end{lemma}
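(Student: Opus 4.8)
The plan is to derive all four items from a single structural fact: homomorphisms preserve $\ELI$-concept membership. Concretely, I would first establish the auxiliary claim that if $h$ is a homomorphism from a (possibly infinite) database $\Dmc_1$ to a database $\Dmc_2$ and $a \in C^{\Dmc_1}$ for an $\ELI$-concept $C$, then $h(a) \in C^{\Dmc_2}$. This is a routine induction on the structure of $C$: the cases $A$, $\top$, and $C' \sqcap C''$ are immediate from the definition of a homomorphism, and for $\exists r . C'$ (and symmetrically for $\exists r^- . C'$) one takes the $r$-successor (resp.\ $r$-predecessor) witnessing membership, maps it through $h$ --- which preserves both $r$-facts and, by the induction hypothesis, membership in $C'$ --- and reads off the required witness in $\Dmc_2$. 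With this in hand, item~3 is the real workhorse, and I would prove it next.

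For item~3 (universality), fix a model $\Imc$ of $\Dmc$ and $\Omc$ and build the homomorphism $h$ from $\mn{ch}_\Omc(\Dmc)$ to $\Imc$ as the limit of an increasing chain of partial homomorphisms indexed by the chase steps. Since $\Dmc \subseteq \Imc$, the identity on $\mn{dom}(\Dmc)$ serves as the base case. For the inductive step I distinguish the two rules. If a step applies Rule~2, adding $s(a,b)$ because the pair $(a,b)$ already belongs to $r$ in the chase built so far and $r \sqsubseteq s \in \Omc$, then the current $h$ already gives $(h(a),h(b)) \in r^\Imc \subseteq s^\Imc$, so no extension is needed. If a step applies Rule~1, adding the tree-shaped database for $D$ rooted at $a$ because $a$ belongs to $C$ in the chase built so far and $C \sqsubseteq D \in \Omc$, then the preservation claim yields $h(a) \in C^\Imc$, hence $h(a) \in D^\Imc$ since $\Imc$ satisfies $C \sqsubseteq D$; unfolding the witness for $h(a) \in D^\Imc$ furnishes, level by level, images in $\Imc$ for the freshly introduced tree nodes, and I extend $h$ accordingly. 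Taking the union over all stages gives the desired homomorphism, which is the identity on $\mn{dom}(\Dmc)$ by construction.

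Items~1 and~2 then follow. For the `if' direction of item~1, a witness $a \in C^{\mn{ch}_\Omc(\Dmc)}$ with $C \sqsubseteq \bot \in \Omc$ would, via item~3 and the preservation claim, force $h(a) \in C^\Imc \subseteq \bot^\Imc = \emptyset$ in every putative model $\Imc$, so no model exists and $\Dmc$ is inconsistent. For the `only if' direction I argue contrapositively: in the absence of such a witness, the chase satisfies every $\bot$-inclusion, while it satisfies every other inclusion by construction, since the exhaustive application of Rules~1 and~2 closes it under all concept and role inclusions with $D \neq \bot$, and it extends $\Dmc$; hence it is itself a model, witnessing consistency. Item~2 uses the same observation that, when $\Dmc$ is consistent, $\mn{ch}_\Omc(\Dmc)$ is a model of $\Dmc$ and $\Omc$: the inclusion $q(\mn{ch}_\Omc(\Dmc)) \subseteq Q(\Dmc)$ follows by composing a homomorphism from some disjunct of $q$ into the chase with the universal homomorphism of item~3 (using that the answer variables land in $\mn{dom}(\Dmc)$, where this homomorphism is the identity), and the converse inclusion is immediate because the chase is one particular model.

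Finally, item~4 is a tree-width bookkeeping argument. Since $\Dmc \subseteq \mn{ch}_\Omc(\Dmc)$ and tree width is monotone under subgraphs, one inequality is free. For the other, I note that Rule~2 only ever adds a second role name to a pair of constants already adjacent in the Gaifman graph and hence leaves $G_\Dmc$ unchanged, whereas each application of Rule~1 attaches a fresh tree-shaped gadget at a single existing constant $a$. I would therefore start from an optimal tree decomposition of $G_\Dmc$ and, for every such gadget, graft on the canonical width-$1$ decomposition of the tree (one bag per new node, containing it and its parent) at a bag containing $a$; the connectivity and edge-cover conditions are preserved and the width does not grow beyond $\max(\mathrm{tw}(\Dmc),1)=\mathrm{tw}(\Dmc)$, using the convention that tree width is at least $1$. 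The main obstacle I anticipate is item~3: making the limit homomorphism well defined requires care with the oblivious chase, whose fresh elements must be assigned images consistently across infinitely many stages, and the whole argument hinges on the preservation claim being set up correctly, in particular for inverse roles.
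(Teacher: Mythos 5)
Your proof is correct. The paper itself gives no proof of this lemma: it is stated as a summary of standard properties of the (oblivious) chase, with only the formal chase definition supplied in the appendix and the result otherwise treated as folklore from the chase literature. Your argument --- preservation of $\ELI$-concept membership under homomorphisms, the stage-by-stage construction of the universal homomorphism for item~3, deriving items~1 and~2 from universality plus the observation that the chase is itself a model when no $\bot$-witness exists, and the tree-decomposition grafting for item~4 --- is exactly the canonical route the authors are implicitly relying on, and the details you flag (well-definedness of the limit map, the tree-width-$\geq 1$ convention) are handled correctly.
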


\subsection{Parameterized Complexity}

We study the evaluation problem for OMQs (defined below) both in terms of a traditional complexity analysis and
in terms of its parameterized complexity; cf., \cite{GroheBook}.
A \emph{parameterized problem} over
an alphabet $\Sigma$ is a pair $(P,\kappa)$, with $P \subseteq \Sigma^*$ a
decision problem and $\kappa$ a \emph{parameterization} of $P$, that
is, a \PTime computable function \mbox{$\kappa: \Sigma^* \rightarrow
  \mathbb{N}$}. A prime example is p-{\sc clique}, where $P$ is the set of all
pairs $(G,k)$ with $G$ an undirected graph that contains a $k$-clique
and $\kappa(G,k)=k$.

A problem $(P,\kappa)$ is
\emph{fixed-parameter tractable (fpt)} if there is a computable
function $f: \mathbb{N} \rightarrow \mathbb{N}$ and an algorithm that decides
$P$ in time $|x|^{O(1)} \cdot f(\kappa(x))$, where $x$ denotes the
input. We use \FPTclass to denote the class of all parameterized problems
that are fpt. Notice that \FPTclass corresponds to a relaxation of the
usual notion of tractability: a problem in \PTime is also in
\FPTclass, but the latter class also contains some \NPclass-complete problems.

An \emph{fpt-reduction} from a problem
$(P_1,\kappa_1)$ over $\Sigma_1$ to a problem
$(P_2,\kappa_2)$ over $\Sigma_2$ is a function $\rho:\Sigma_1^*
\rightarrow \Sigma_2^*$ such that, for some computable functions $f,g:
\mathbb{N} \rightarrow \mathbb{N}$,
\begin{enumerate}

\item $x \in P_1$ iff $\rho(x) \in P_2$, for all $x \in \Sigma_1^*$;

\item $\rho(x)$ is computable in time $|x|^{O(1)} \cdot f(\kappa_1(x))$, for $x \in \Sigma_1^*$;
\item $\kappa_2(\rho(x)) \leq g(\kappa_1(x))$, for all $x \in \Sigma_1^*$.

\end{enumerate}

An important parameterized complexity class is $\Wclass[1]
\supseteq \FPTclass$. Hardness for $\Wclass[1]$ is
defined in terms of fpt-reductions. It is believed that $\FPTclass \neq \Wclass[1]$,
the status of this problem being comparable to that of \mbox{$\PTime
  \neq \NPclass$}. Hence, if a parameterized
  problem $(P,\kappa)$ is $\Wclass[1]$-hard then $(P,\kappa)$ is not fpt unless
  $\FPTclass = \Wclass[1]$. A well-known $\Wclass[1]$-hard problem is precisely
  p-{\sc clique}~\cite{DF95}.


\section{OMQ Evaluation and Semantic Tree-likeness}
\label{sect:eval}


\subsection{OMQ Evaluation}

The main concern of this work is the \emph{evaluation problem} for
classes of OMQs \Qbf, defined as follows:

\begin{center}
\fbox{\begin{tabular}{ll}
\small{PROBLEM} : & {\sc Evaluation}$(\Qbf)$ \\
{\small INPUT} : &  An OMQ $Q=(\Omc,\Sbf,q(\xbf))$ from
\Qbf, \\ & an $\Sbf$-database \Dmc, a tuple
$\abf \in \mn{dom}(\Dmc)^{|\xbf|}$ \\
{\small QUESTION} : & Is it the case that $\abf \in Q(\Dmc)$?  \end{tabular}}
\end{center}

We are particularly interested in classifying the complexity of
{\sc Evaluation}$(\Qbf)$ for \emph{all} subsets \Qbf of an OMQ
language $(\Lmc,\Qmc)$ of interest, where we view the latter as a set
of OMQs.


  We are also interested in the parameterized version of this problem,
  with the parameter being the size $|Q|$ of the OMQ~$Q$, as customary
  in the database literature \cite{PY99}, which we call $\sf p$-{\sc
    Evaluation}$(\Qbf)$. In particular, if $\sf p$-{\sc
    Evaluation}$(\Qbf)$ is in \FPTclass, then it can be solved in time
$|\Dmc|^{O(1)} \cdot f(|Q|),$
for a computable function $f : \mathbb{N} \to \mathbb{N}$.
%
In general, the evaluation problem for CQs is NP-hard, and its parameterized version $\Wclass[1]$-hard~\cite{PY99}. Therefore, the same holds for the OMQ evaluation problem.

\begin{proposition}
For any of the DLs $\Lmc$ introduced above,
\begin{enumerate}
\item {\sc Evaluation}$(\Lmc,\text{{\em CQ}})$ is {\em NP}-hard;
\item $\sf p$-{\sc Evaluation}$(\Lmc,\text{{\em CQ}})$ is $\Wclass[1]$-hard.
\end{enumerate}
The above hold even when the ontology is empty.
\end{proposition}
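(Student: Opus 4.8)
The plan is to reduce from the two classical hardness results for conjunctive query evaluation over plain relational databases, transferring them to the OMQ setting in the special case where the ontology is empty. Since the statement explicitly allows $\Omc = \emptyset$, an OMQ $Q = (\emptyset, \Sbf, q)$ behaves exactly like the bare query $q$: when $\Omc$ is empty, every $\Sbf$-database is consistent with it, and by Lemma~\ref{lem:chaseprop}(2) we have $Q(\Dmc) = q(\mn{ch}_\emptyset(\Dmc)) = q(\Dmc)$, because the chase with an empty ontology adds nothing. Thus {\sc Evaluation}$(\Lmc, \text{CQ})$ with empty ontologies is literally the CQ-evaluation problem, and likewise for the parameterized version. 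All that remains is to exhibit the relevant hardness of CQ evaluation in a form that respects the restriction that databases in our setting use only unary and binary relations.

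For Point~1, I would give a polynomial-time reduction from a standard \NPclass-complete problem. The cleanest choice is the \emph{clique} problem: given a graph $G$ and an integer $k$, decide whether $G$ contains a $k$-clique. Encode $G$ as a binary database $\Dmc_G$ with the edge relation $r$, and take $q_k$ to be the Boolean CQ $\exists x_1 \cdots \exists x_k \bigwedge_{1 \le i < j \le k} r(x_i, x_j)$ (suitably symmetrized, or using an undirected encoding so that $r(a,b) \in \Dmc_G$ iff $r(b,a) \in \Dmc_G$ for every edge). Then $G$ has a $k$-clique iff $\Dmc_G \models q_k$, i.e.\ iff the empty answer is in $Q_k(\Dmc_G)$ for $Q_k = (\emptyset, \{r\}, q_k)$. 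This gives \NPclass-hardness and fits the unary/binary signature restriction. One point to be careful about: in the non-parameterized problem the query is part of the input, so $q_k$ may be built at reduction time from $k$ given in the instance; this is a valid many-one reduction in polynomial time.

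For Point~2, I would reduce from p-{\sc clique}, which the preliminaries have already recalled to be $\Wclass[1]$-hard~\cite{DF95}, using exactly the same encoding. The reduction maps $(G,k)$ to the pair $(Q_k, \Dmc_G)$ with $Q_k = (\emptyset, \{r\}, q_k)$ as above and asks whether the empty tuple is an answer. For this to be an \emph{fpt-reduction} in the sense defined in the preliminaries, I must verify the three conditions: correctness is immediate from the clique encoding; the reduction is computable in time $|x|^{O(1)} \cdot f(k)$ (indeed polynomial, since $q_k$ has size $O(k^2)$); and the new parameter $\kappa_2(Q_k, \Dmc_G) = |Q_k|$ is bounded by a computable function $g(k)$, which holds because $|Q_k| = O(k^2)$ depends only on $k$ and not on $G$. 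Hence $\sf p$-{\sc Evaluation}$(\Lmc, \text{CQ})$ is $\Wclass[1]$-hard.

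\textbf{The main obstacle} I anticipate is not the core combinatorics, which is entirely standard, but making the parameter bookkeeping precise: p-{\sc clique} is parameterized by $k$, whereas our problem is parameterized by $|Q|$, so the fpt-reduction must ensure that $|Q_k|$ is a function of $k$ alone (condition~3). This is exactly where the uniform clique-query encoding pays off, since $|q_k| = \Theta(k^2)$ is independent of the input graph. A secondary subtlety is the signature restriction inherent to DLs: databases and queries may only use concept and role names (arity one and two), so I must choose a hardness source, such as clique, whose natural encoding already lives on binary relations; reductions from, say, general hypergraph problems would not transfer directly. With the clique encoding these concerns are handled cleanly, and the empty-ontology observation via Lemma~\ref{lem:chaseprop}(2) completes the transfer from CQ evaluation to OMQ evaluation.
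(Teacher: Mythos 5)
Your proposal is correct and follows essentially the same route as the paper, which simply observes that with an empty ontology OMQ evaluation coincides with plain CQ evaluation and then cites the known \NPclass-hardness and $\Wclass[1]$-hardness of the latter (via the clique encodings you spell out). The only difference is that you make the standard reductions and the fpt-reduction bookkeeping explicit, which the paper leaves to the cited references.
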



On the other hand, 
CQ evaluation is tractable if restricted to CQs of tree width bounded
by $k$, for any $k$. As established by Bienvenu et al., this positive
behavior extends to OMQ evaluation in
$(\mathcal{ELH},\text{CQ}_k)$~\cite{DBLP:conf/ijcai/BienvenuOSX13}, and
it is not hard to extend their result to $(\ELpoly,\text{UCQ}_k)$. We
refrain
from giving details.

\begin{proposition} 
 {\sc Evaluation}$(\ELpoly,\text{{\em UCQ}}_k)$ is in \PTime for each fixed $k \geq 1$.
\end{proposition}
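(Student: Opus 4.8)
The plan is to reduce, in polynomial time, the evaluation of $Q=(\Omc,\Sbf,q)$ to testing for a homomorphism from $q$ into a finite, polynomial-size representation of $\mn{ch}_\Omc(\Dmc)$, and then to run a dynamic program that exploits the bound $k$ on the tree width of the disjuncts of $q$. First I would dispose of consistency. Since consistency of an $\Sbf$-database with an \ELpoly-ontology is decidable in \PTime, I check whether \Dmc is consistent with \Omc. If it is not, then \Dmc and \Omc have no common model and every tuple over $\mn{dom}(\Dmc)$ is vacuously an answer; in particular the candidate $\abf$ is one, so I accept. If \Dmc is consistent, Lemma~\ref{lem:chaseprop} yields that $\abf \in Q(\Dmc)$ iff $\abf \in q(\mn{ch}_\Omc(\Dmc))$, which holds iff some disjunct $q_i$ of the UCQ $q$ admits a homomorphism $h$ into $\mn{ch}_\Omc(\Dmc)$ with $h(\xbf)=\abf$. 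As there are at most $|q|$ disjuncts, it suffices to solve this per CQ$_k$ disjunct.

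The central observation is that for \ELpoly the chase has a very restricted shape: it consists of a \emph{core}, namely $\mn{dom}(\Dmc)$ together with the unary and binary facts derivable on it (computable in polynomial time), to which a \emph{forest} of anonymous elements is attached, with all anonymous edges directed away from the core since inverse roles are absent. Moreover this forest is \emph{regular}: the set of concepts satisfied by an anonymous element --- its \emph{type} --- is determined by the existential subconcept of \Omc that generated it together with the roles through which it is reached (these matter only for the range restrictions of \ELpoly), and there are only polynomially many such generating subconcepts. Folding elements of equal type yields a \emph{type graph} of polynomial size which, together with the core, finitely represents the (infinite) chase.

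On top of this representation I would run the standard dynamic program for evaluating a tree-width-$k$ query, fused with a tree-automaton-style traversal of the regular forest. Fixing a width-$k$ tree decomposition of $q_i|_\ybf$, I process its bags bottom-up, maintaining for each bag the set of admissible partial homomorphisms of its at most $k+1$ variables, with the answer variables pinned to the components of $\abf$ in the core. Each bag variable is sent either to a core constant or into the anonymous part, and since a bag spans at most $k+1$ elements and the forest has only polynomially many types, the relevant information about an anonymous image is captured by a bounded-size pattern of types and local edges; there are only polynomially many such patterns for fixed $k$. Hence each bag carries polynomially many partial homomorphisms and the bottom-up combination runs in polynomial time, without ever materialising the (exponentially deep and wide) truncation of the forest.

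The step I expect to be delicate, and where the argument of Bienvenu et al.\ must be adapted, is guaranteeing that the partial homomorphisms computed against the folded type graph are \emph{faithful}, i.e.\ that they lift to a genuine homomorphism into the real forest. Folding can in principle create spurious homomorphisms for cyclic fragments of $q_i$ and for connected components of $q_i|_\ybf$ that attach to no core constant and thus ``float freely'' deep in the anonymous part, and naively tracking only types loses the identity of an anonymous image across adjacent bags. The bound on tree width is exactly what tames this: since the forest is acyclic and regular, cyclic behaviour is forced back into the core, while the bounded bag size lets the automaton traversal over the regular forest certify both that each floating component genuinely embeds and that shared variables are assigned consistently. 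Verifying this faithfulness is the only non-routine ingredient; everything else is polynomial-time bookkeeping.
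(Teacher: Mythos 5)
Your overall architecture -- consistency check in \PTime, reduction via Lemma~\ref{lem:chaseprop} to finding homomorphisms into $\mn{ch}_\Omc(\Dmc)$, per-disjunct treatment of the UCQ, and splitting a match into a core part and an anonymous-forest part -- is sound, and it is the same general strategy as the result the paper relies on. But the step you flag as ``delicate'' is not a detail you can defer: it is the entire content of the proposition, and the mechanism you propose for it provably fails. A dynamic program whose bag states record only ``patterns of types and local edges'' over the \emph{folded} type graph accepts spurious matches. Concretely, take $\Omc = \{A \sqsubseteq \exists r . B,\; B \sqsubseteq \exists r . A\}$, $\Dmc = \{A(a)\}$, and the Boolean CQ $q = \exists x_1 \cdots \exists x_4\, \bigl(r(x_1,x_2)\wedge r(x_2,x_3)\wedge r(x_3,x_4)\wedge r(x_4,x_1)\bigr)$, which has tree width $2$. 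The chase below $a$ is an infinite directed path alternating between the two anonymous types, and a directed cycle admits no homomorphism into a directed forest (depths must increase by $1$ along every atom), so the correct answer is ``no''. Yet the folded type graph contains a directed $2$-cycle between the two anonymous types, $q$ maps into it, and every bag of a width-$2$ decomposition of $q$ is locally consistent with this map: the bags $\{x_1,x_2,x_3\}$ and $\{x_1,x_3,x_4\}$ each realize a pattern that even exists in the genuine forest, and they agree on the types of the shared variables $x_1,x_3$, so the bottom-up join accepts. Note that your slogan ``cyclic behaviour is forced back into the core'' is true of the real chase but false of the folded graph, which is exactly the structure your algorithm works on. Repairing this requires bag states that carry \emph{relative positions} of anonymous images inside a single chase tree (ancestor/descendant relations, depth differences up to $|q|$, identities of the generating existentials, same-tree versus different-tree information), together with a proof that this state space is polynomial for fixed $k$ and sufficient for faithfulness; none of this is in your sketch.

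For comparison: the paper itself gives no self-contained proof -- it cites Bienvenu et al.~\cite{DBLP:conf/ijcai/BienvenuOSX13}, who establish the statement for $(\mathcal{ELH},\text{CQ}_k)$, and asserts that the extension to $(\ELpoly,\text{UCQ}_k)$ is routine. Tellingly, Section~\ref{sect:ptimeupper} explicitly warns against your route: replacing $\mn{ch}_\Omc(\Dmc)$ by the polynomial-size folded representation of \cite{LutzTomanWolter-IJCAI09} forces a rewriting of $q$ that can increase its tree width, which is precisely the faithfulness problem you ran into. The technique that actually works, both in \cite{DBLP:conf/ijcai/BienvenuOSX13} and in the paper's own Section~\ref{sect:ptimeupper} machinery, never folds the anonymous part at all: variables mapped there are handled symbolically via the $\Dmc$-labelings of Lemma~\ref{lem:charptime2}, whose conditions reduce genuineness of the anonymous portion of a match to (i) the relevant query region $\mn{reach}(x,y)$ being a homomorphic preimage of a ditree and (ii) evaluating the \emph{initial} such ditree $\mn{dtree}_{(x,y)}$ as a tree-shaped OMQ at the attachment point, which is in \PTime. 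If you replace your folding-plus-faithfulness step by that characterization, your dynamic programming over the tree decomposition (in effect, the paper's existential pebble game) does go through.
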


Adding inverse roles, however, destroys this property. In fact,
evaluation is \ExpTime-complete already in $(\ELI,\text{CQ})$, with
the lower bound being a consequence of the fact that the subsumption
problem in \ELI is \ExpTime-hard
\cite{DBLP:conf/owled/BaaderLB08}. Even with inverse roles, however,
evaluating OMQs in which the actual queries are of bounded tree width
is still fixed-parameter tractable.

\begin{restatable}{proposition}{profptjelia}
\label{pro:fpt-jelia}
$\sf p$-{\sc Evaluation}$(\ELHIbot,\text{\em UCQ}_k)$ is in \FPTclass, for any $k \geq 1$, with single exponential running time in the parameter.
\end{restatable}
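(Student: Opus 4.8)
The plan is to reduce the problem to ordinary CQ$_k$ evaluation over a suitable finite structure, using the chase machinery from Lemma~\ref{lem:chaseprop}. Given an OMQ $Q=(\Omc,\Sbf,q(\xbf))$ from $(\ELHIbot,\text{UCQ}_k)$, a database $\Dmc$, and a candidate tuple $\abf$, Lemma~\ref{lem:chaseprop}(2) tells us that $\abf \in Q(\Dmc)$ iff $\abf \in q(\mn{ch}_\Omc(\Dmc))$, provided $\Dmc$ is consistent with $\Omc$ (and consistency is checked first, as discussed below). The chase is in general infinite, so the first key step is to argue that we need only a \emph{finite} portion of it. The intuition is that any homomorphism from a disjunct of $q$ into $\mn{ch}_\Omc(\Dmc)$ uses at most $|\mn{var}(q)|$ elements, and since each disjunct has tree width at most $k$, a witnessing homomorphism can reach into the tree-shaped parts of the chase only to a depth bounded by $|\mn{var}(q)|$. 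Hence I would truncate the chase to the database $\Dmc^\star$ consisting of $\mn{dom}(\Dmc)$ together with all tree-structures generated up to depth $|\mn{var}(q)|$. This $\Dmc^\star$ is finite, and its size is bounded by $|\Dmc|^{O(1)} \cdot f(|Q|)$ for a suitable computable $f$: the branching and labelling are controlled by $\Omc$, and the exponential blowup is confined to the depth bound, which depends only on $|Q|$.

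The second key step is to verify that evaluating $q$ over $\Dmc^\star$ is sound and complete, i.e., $\abf \in q(\mn{ch}_\Omc(\Dmc))$ iff $\abf \in q(\Dmc^\star)$. Soundness ($\Leftarrow$) is immediate since $\Dmc^\star \subseteq \mn{ch}_\Omc(\Dmc)$. For completeness ($\Rightarrow$), I would show that any homomorphism $h$ from a disjunct $q_i$ into the full chase can be redirected into $\Dmc^\star$. Because $q_i$ has tree width at most $k$ and the part of the chase outside $\Dmc^\star$ is tree-shaped with all connections to $\mn{dom}(\Dmc)$ passing through bounded-depth ancestors, the image of $h$ must already lie within the truncation; this is precisely the standard argument that a bounded-tree-width query cannot probe deeper than its own variable count into an anonymous tree.

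The third step handles consistency. By Lemma~\ref{lem:chaseprop}(1), $\Dmc$ is inconsistent with $\Omc$ iff some $C \sqsubseteq \bot \in \Omc$ fires on an element of the chase; since consistency checking in $\ELHIbot$ is in \ExpTime\ in combined complexity, and here $\Omc$ is part of the parameter, this test runs in time $|\Dmc|^{O(1)} \cdot g(|Q|)$ and so fits within the fpt budget. If $\Dmc$ is inconsistent, every tuple is an answer and we accept. Otherwise we build $\Dmc^\star$, invoke the tractable CQ$_k$-evaluation algorithm of \cite{DKV02} on $(q_i,\Dmc^\star,\abf)$ for each disjunct $q_i$, and accept iff one of them succeeds. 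By Lemma~\ref{lem:chaseprop}(4) the relevant tree width is preserved, so the evaluation runs in polynomial time in $|\Dmc^\star|$, giving total running time $|\Dmc|^{O(1)} \cdot h(|Q|)$ as required.

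The main obstacle I anticipate is making the depth-truncation argument fully rigorous in the presence of role inclusions and inverse roles, which complicate the shape of the chase: edges can point back towards the root via inverse roles, and role inclusions can create additional edges that a naive tree-depth bound might miss. The careful accounting needed to guarantee that the single-exponential depth bound $|\mn{var}(q)|$ genuinely captures every witnessing homomorphism---and that the truncated structure still has size controlled as $|\Dmc|^{O(1)} \cdot f(|Q|)$ with $f$ single exponential---is where the real work lies. Establishing the single-exponential dependence explicitly, rather than merely some computable $f$, requires tracking that the number of fresh elements per database element grows only exponentially in $|Q|$ and that the tree width bound keeps the CQ-evaluation phase polynomial.
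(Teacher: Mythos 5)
Your overall architecture --- check consistency, replace the infinite chase by a finite structure of size $|\Dmc|^{O(1)}\cdot 2^{p(|Q|)}$, then run the tree-width-$k$ evaluation algorithm of \cite{DKV02} --- is exactly the paper's plan. The genuine gap is the step you yourself flag as ``where the real work lies'': how the finite structure $\Dmc^\star$ is actually \emph{computed}. ``All tree-structures generated up to depth $|\mn{var}(q)|$'' is not a well-defined fpt-computable object under the paper's oblivious chase. If it means running the chase for a bounded number of rounds (or only unfolding trees to bounded depth before the labelling has stabilized), the construction is incomplete: in $\ELHIbot$ a concept membership $B(a)$ at a \emph{named} constant $a$ can require a derivation that descends arbitrarily deep into the anonymous part and propagates back up via left-hand sides $\exists r.C$ (e.g.\ $A_0\sqsubseteq\exists r.A_1,\dots,A_{n-1}\sqsubseteq\exists r.A_n$, $A_n\sqsubseteq B$, $\exists r.B\sqsubseteq B$ forces $B(a)$ only after a depth-$n$ excursion, with $n$ unrelated to $|\mn{var}(q)|$). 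If instead it means restricting the \emph{limit} chase to depth $|\mn{var}(q)|$, then soundness and completeness of the homomorphism argument go through, but you have given no procedure for computing that restriction within the fpt budget, which is the actual content of the proposition.

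The paper closes this gap with a two-phase construction you would need to supply: first a terminating, consequence-driven saturation (over an ontology in normal form) that computes, for every named constant $a$ and every $C\in\mn{sub}(\Omc)$, whether $\Dmc,\Omc\models C(a)$, using subsumption checks relative to $\Omc$ (each single-exponential in $|\Omc|$, applied polynomially many times in $|\Dmc|$) --- this also yields the consistency test, which you cannot simply cite as ``\ExpTime in combined complexity'' and then assert factors as $|\Dmc|^{O(1)}\cdot g(|Q|)$ without such an argument; and second, an unfolding to depth $|q|$ that is guided by precomputed \emph{types} and maximal existential successors ($\Omc\models^{\max} t\sqsubseteq\exists r.t'$), together with extra fresh constants realizing every type entailed to be realized, so that every neighbourhood of diameter $|q|$ of the universal model (including components of $q$ mapped entirely into the anonymous part) is guaranteed to appear in the finite structure. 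Without this type-based saturation your truncation argument does not survive the inverse-role back-propagation you correctly worry about, so the proof as written is not yet a proof. Two smaller points: the appeal to Lemma~\ref{lem:chaseprop}(4) is misdirected (what matters is the tree width of the query, which is given, not of the chased database), and the single-exponential bound on $|\Dmc^\star|$ should be made explicit as $|\Dmc|\cdot|\Omc|^{O(|q|)}$ rather than asserted.
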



\subsection{Semantic Tree-likeness for OMQs}

Recall that CQs $q$ and $q'$ over schema \Sbf are {\em equivalent} if
$q(\Dmc) = q'(\Dmc)$, for every $\Sbf$-database $\Dmc$.  Grohe's
Theorem establishes that, under the assumption $\FPTclass \neq \Wclass[1]$,
the classes of CQs that can be evaluated in \PTime over
$\Sbf$-databases are precisely those of bounded tree width {\em modulo
  equivalence}. Also, fixed-parameter tractability does not add
anything to standard tractability in this scenario.

\begin{theorem}[Grohe's Theorem \cite{DBLP:journals/jacm/Grohe07}] \label{thm:grohe}
Let $\Qbf$ be a recursively enumerable class of CQs over a schema $\Sbf$.
The following are equivalent, assuming $\FPTclass \neq \Wclass[1]$:
\begin{itemize}
\item the evaluation problem for CQs in $\Qbf$ is in \PTime;
\item the evaluation problem for CQs in $\Qbf$ is in \FPTclass;
\item  there is a $k \geq 1$ such that every $q \in \Qbf$ is equivalent to a CQ $q'$ in {\em CQ}$_k$.
\end{itemize}
\end{theorem}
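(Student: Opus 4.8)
The plan is to establish the cycle $\text{(iii)} \Rightarrow \text{(i)} \Rightarrow \text{(ii)} \Rightarrow \text{(iii)}$, where (i)--(iii) denote the three bullet points in the order listed. The implication $\text{(i)} \Rightarrow \text{(ii)}$ is immediate, since a polynomial-time algorithm is trivially fpt (take the function $f$ constant). For the rest I would first record the standard reformulation of condition (iii): a CQ $q$ is equivalent to a CQ in $\mathrm{CQ}_k$ if and only if its core has tree width at most $k$. Indeed, equivalence of CQs coincides with homomorphic equivalence, the core is the unique-up-to-isomorphism minimal such representative and arises as an induced subquery of $q$ via a retraction, and tree width is monotone under taking induced subqueries; hence cores are preserved by equivalence and already witness the smallest tree width. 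Thus (iii) says precisely that the cores of the queries in $\Qbf$ have tree width bounded by a common $k$.

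For $\text{(iii)} \Rightarrow \text{(i)}$ I would deliberately avoid computing cores, which is too costly in combined complexity, and instead use existential pebble games. The relevant fact, due to Dalmau, Kolaitis and Vardi, is that whenever the core of $q$ has tree width at most $k$ one has $\abf \in q(\Dmc)$ iff Duplicator wins the existential $(k+1)$-pebble game between $q$ and $\Dmc$, after pinning the answer variables of $q$ to the constants in $\abf$ (for the liberal tree-width notion of this paper it is exactly the quantified part that must be decomposable, which pinning achieves). The winner of that game is computable in time $(|q|\cdot|\Dmc|)^{O(k)}$ by a greatest-fixpoint iteration over the sets of extensible partial homomorphisms on at most $k+1$ variables. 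Since (iii) supplies the bound $k$, this is a polynomial-time combined algorithm, giving (i).

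The substance is $\text{(ii)} \Rightarrow \text{(iii)}$, which I would prove contrapositively: if the cores of $\Qbf$ have \emph{unbounded} tree width, then $\mathsf{p}$-evaluation for $\Qbf$ is $\Wclass[1]$-hard, hence not fpt unless $\FPTclass=\Wclass[1]$. The reduction is from p-\textsc{clique}, and its engine is the Excluded Grid Theorem of Robertson and Seymour: unbounded tree width of the cores forces arbitrarily large grid minors in them (here it is essential that $\Sbf$ has bounded arity, so that tree width of structures and of Gaifman graphs agree up to constants). Given a clique instance $(G,\ell)$, I would use recursive enumerability of $\Qbf$ to compute, \emph{as a function of the parameter $\ell$ alone}, a query $q_\ell \in \Qbf$ whose core contains a grid minor large enough to host an $\ell$-clique search; since computing cores and testing tree width are both decidable, enumerating $\Qbf$ until a core of sufficiently large tree width appears is an effective procedure. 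I would then construct a database $\Dmc_G$ encoding $G$ and use the grid structure of the core of $q_\ell$ to route the $\binom{\ell}{2}$ adjacency tests of an $\ell$-clique through the grid, so that $\abf \in q_\ell(\Dmc_G)$ holds iff $G$ has an $\ell$-clique. As $q_\ell$ depends only on $\ell$, this is an fpt-reduction.

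The main obstacle is exactly this last construction. A grid \emph{minor} lives in the core only through branch sets that may be large, irregular, and entangled with the rest of the core, so turning it into a faithful clique gadget requires a careful combinatorial argument showing that homomorphisms from the core into $\Dmc_G$ correspond bijectively to $\ell$-cliques of $G$---both producing the intended solutions and ruling out spurious homomorphisms created by the extra structure of the core beyond the grid. This minor-theoretic bookkeeping, together with the quantitative form of the Excluded Grid Theorem, is the genuinely hard and technical heart of Grohe's proof.
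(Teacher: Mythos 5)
The paper does not prove this theorem --- it is quoted from Grohe \cite{DBLP:journals/jacm/Grohe07} --- and your sketch is a faithful outline of the standard proof, using exactly the ingredients this paper itself extracts and reuses: the Dalmau--Kolaitis--Vardi existential $(k{+}1)$-pebble game for the direction from bounded tree width of cores to \PTime evaluation, and the Excluded Grid Theorem together with the clique-encoding database $\Dmc_G$ (restated here as Theorem~\ref{thm:grohetechnew}) for the $\Wclass[1]$-hardness of the remaining direction. The one cosmetic imprecision is that the homomorphism/clique correspondence in Grohe's gadget is an equivalence of existence (a homomorphism $h$ with $h_0(h(\cdot))$ the identity exists iff a clique of the required size does), not a bijection, but your stated intent --- producing the intended solutions while excluding spurious homomorphisms --- is the right one.
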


Interestingly, the notion that characterizes tractability in this case, namely, being of bounded tree width modulo equivalence,
is decidable. Recall that a {\em retract} of a CQ $q$ is a homomorphic image $q'$ of $q$ that is also equivalent to $q$, and  a {\em core} of $q$ is a maximum retract of it, i.e., a retract that admits no further retractions \cite{HellNesetrilBook}.
It can be proved that a CQ $q$ is equivalent to a CQ $q'$ in CQ$_k$, for $k \geq 1$, iff the core of $q$ is in CQ$_k$. This problem is NP-complete, for each $k \geq 1$ \cite{DKV02}.
%
There is also a natural generalization of this characterization and of Theorem \ref{thm:grohe}
to the class of UCQs \cite{DBLP:journals/tocl/Chen14}.

At this point, it is natural to ask whether it is possible to obtain a
characterization of the classes of OMQs that can be efficiently
evaluated, in the style of Theorem \ref{thm:grohe} and, in particular,
whether a suitably defined notion of ``being equivalent to a query of
small tree width'' for OMQs exhausts tractability or \FPTclass for
OMQ evaluation, as is the case in Grohe's Theorem.
%
 The following definition introduces such a notion.
 Notice that equivalence is applied no longer on the level of the (U)CQ,
 but to the whole OMQ.

\begin{definition}[\textbf{UCQ$_k$-equivalence}]
Let $\Lmc$ be one of the DLs introduced above.
An OMQ $Q=(\Omc,\Sbf,q)$ from $(\Lmc,\text{UCQ})$ is \emph{UCQ$_k$-equivalent} if there exists an OMQ $Q'=(\Omc',\Sbf,q')$ from $(\Lmc,\text{UCQ}_k)$ such that $Q \equiv Q'$. If even $\Omc=\Omc'$, then we say that $Q$ is \emph{UCQ$_k$-equivalent while preserving the ontology}.
  \end{definition}

  Likewise, we define \emph{CQ$_k$-equivalence} and
  \emph{CQ$_k$-equivalence while preserving the ontology}.  In
  informal contexts, we may refer to (U)CQ$_k$-equivalence as
  \emph{semantic tree-likeness}.
We denote by $(\Lmc,\Qmc)^\equiv_{\text{\Qmc}'_k}$, where $\Qmc,\Qmc' \in \{ \text{CQ}, \text{UCQ} \}$, the class of OMQs from $(\Lmc,\Qmc)$ that are $\Qmc'_k$-equivalent. For example, $(\ELHIbot,\text{CQ})^\equiv_{\text{UCQ}_k}$ is the restriction of $(\ELHIbot,\text{CQ})$ to OMQs that are equivalent to an OMQ from $(\ELHIbot,\text{UCQ}_k)$.

    \begin{figure}[t]
    \begin{boxedminipage}[t]{\columnwidth}
      \begin{center}
\begin{tabular}{ccc}
 \begin{tikzpicture}[auto, scale=0.7]
 \GraphInit
 \footnotesize
  \node(qp) at (-0.6, 1.5) {$q:$};

 \node(x1) [active] at (1.5, 1.5) {$x_1$};
 \node(x1l)  at   (2.2, 1.5) {$A_1$};

 \node(x2) [active] at (0, 0) {$x_2$};
  \node(x2l)  at (-0.6, 0) {$A_2$};

 \node(x3) [active] at (1.5, -1.5) {$x_3$};
 \node(x3l)  at (2.2, -1.6) {$A_3$};

 \node(x4) [active] at (3, 0) {$x_4$};
 \node(x4l)  at (3.6, 0) {$A_4$};

 \draw [-latex] (x2) -- (x1) node[midway, left] {$r$};
 \draw [-latex] (x2) -- (x3) node[midway, left] {$r$};
  \draw [-latex] (x4) -- (x3) node[midway, right] {$r$};
 \draw [-latex] (x4) -- (x1) node[midway, right] {$r$};

 \end{tikzpicture}
&

 &

 \begin{tikzpicture}[auto, scale=0.7]
 \GraphInit
 \footnotesize

 \node(qp) at (-0.5, 1.5) {$q':$};
 \node(x1) [active] at (1.5, 1.5) {$x_1$};
 \node(x1l)  at   (2.2, 1.5) {$A_1$};

 \node(x2) [active] at (0, 0) {$x_2$};
  \node(x2l)  at (-0.6, 0) {$A_2$};

 \node(x3) [active] at (1.5, -1.5) {$x_3$};
 \node(x3l)  at (2.2, -1.6) {$A_3$};

 \node(x4) [active] at (3, 0) {$x_4$};
 \node(x4l)  at (3.6, 0) {$A_4$};

 \draw [-latex] (x2) -- (x1) node[midway, left] {$r$};
 \draw [-latex] (x2) -- (x3) node[midway, left] {$r$};
  \draw [-latex] (x4) -- (x1) node[midway, right] {$r$};

 \end{tikzpicture}

\end{tabular}

      \end{center}
\vspace*{-4mm}
    \end{boxedminipage}
    \caption{CQs for Example~\ref{ex:first}}
    \label{fig:CQktwo}
  \end{figure}
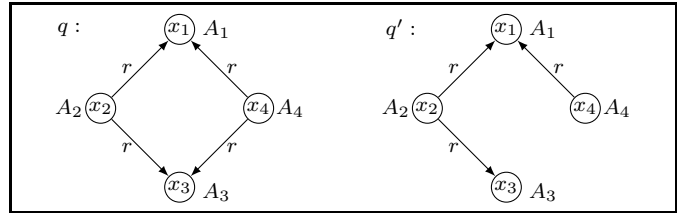

\begin{example}\label{ex:first}
  {\em We first illustrate that the ontology can have an impact on tree width.
  To this end, consider the OMQ $Q_1=(\Omc_1,\Sbf_{\mn{full}},q)$ from
  $(\EL,\text{CQ})$ given by
  $$
  \begin{array}{rcl}
    \Omc_1 &=& \{ A_2 \sqsubseteq A_4 \}\\[1mm]
  q()&=&r(x_2,x_1) \wedge r(x_4,x_1) \wedge r(x_2,x_3) \wedge
  r(x_4,x_3) \, \wedge \\[1mm]
  && A_1(x_1) \wedge A_2(x_2) \wedge A_3(x_3) \wedge A_4(x_4),
  \end{array}
  $$
  see also Figure~\ref{fig:CQktwo}.
  Then $q$ is a core of tree width 2, and thus not equivalent to a CQ of tree width 1. Yet $Q$ is from
  $(\EL,\text{CQ})^\equiv_{\text{CQ}_1}$ as it is equivalent to the
  OMQ $(\Omc_1,\Sbf_{\mn{full}},q|_{\{x_1,x_2,x_3\}})$ in which the
  CQ has tree width~1.

  We next show that the schema can have an impact as well. This is in
  a sense trivial as every OMQ based on the empty schema has tree
  width~1. The following example is more interesting. Let
  $Q_2=(\Omc_2,\Sbf_{\mn{full}},q)$ where
  $$
  \begin{array}{rcr@{\;}c@{\;}lr@{\;}c@{\;}l}
    \Omc_2 &=& \{ \qquad B_1 &\sqsubseteq& A_1, & B_2 &\sqsubseteq& A_1,
    \\[1mm]
    && \exists r . B_1 &\sqsubseteq& A_4, &  B_2 &\sqsubseteq& A_3 \ \ \}.
  \end{array}
  $$
  Then it is not hard to see that $Q_2$ is not in
  $(\ELHIbot,\text{CQ})^\equiv_{\text{UCQ}_1}$. If, however, the
  concept name $A_1$ is omitted from the schema, then $Q_2$ is
  equivalent to the OMQ $(\Omc_2,\Sbf_{\mn{full}} \setminus \{ A_1 \},q')$ where
  $$
  \begin{array}{rcl}
  q'() = r(x_2,x_1) \wedge r(x_4,x_1) \wedge r(x_4,x_3) \, \wedge \\[1mm]
A_1(x_1) \wedge A_2(x_2) \wedge A_3(x_3) \wedge A_4(x_4)
  \end{array}
  $$
  and thus in $(\EL,\text{CQ})^\equiv_{\text{CQ}_1}$. To see 
  this, take a homomorphism $h$ from $q'$ to
  $\Imc=\mn{ch}_{\Omc_2}(\Dmc)$ for any $\Sbf_{\mn{full}} \setminus \{ A_1
  \}$-database \Dmc. Then $h(x_1) \in B_1^\Imc$ or $h(x_1) \in
  B_2^\Imc$. In the former case, we obtain from $h$ a homomorphism
  from
  $q$ to \Imc by setting $h(x_4)=h(x_2)$, in the latter case we
  set $h(x_3)=h(x_1)$.
} \qed
\end{example}
In general, CQ$_k$-equivalence and
UCQ$_k$-equivalence do not coincide, i.e., sometimes it is
possible to rewrite into a disjunction of tree-like CQs, but not
into a single one.
\begin{restatable}{proposition}{propUCQnoCQ}
\label{prop:UCQnoCQ}
In $(\mathcal{ELI},\text{CQ})$, the notions of CQ$_1$-equivalence
while preserving the ontology and UCQ$_1$-equivalence while preserving
the ontology do not coincide.
\end{restatable}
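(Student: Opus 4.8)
The plan is to separate the two notions by exhibiting a single OMQ $Q=(\Omc,\Sbf,q)$ from $(\mathcal{ELI},\text{CQ})$ that is UCQ$_1$-equivalent but not CQ$_1$-equivalent, in both cases while preserving the ontology. I would take $q$ to be a \emph{cyclic} Boolean CQ of tree width $2$ over a single role $r$, decorated with concept names in the spirit of the $4$-cycle of Example~\ref{ex:first} (Figure~\ref{fig:CQktwo}), so that $q$ folds homomorphically onto a tree in two genuinely different ways. The ontology $\Omc$ would then use inverse-role existentials, e.g.\ axioms of the form $P_i\sqsubseteq\exists r^-.G_i$ together with inclusions propagating the $G_i$, so that on \emph{every} $\Sbf$-database the chase $\mn{ch}_\Omc(\Dmc)$ offers exactly one of the two foldings, the choice governed by which of two trigger concepts an element carries. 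The point of using $\mathcal{ELI}$ rather than $\mathcal{EL}$ is that inverse-role existentials generate fresh witnesses whose relevant properties hold across all $\Sbf$-databases and cannot be destroyed by stray facts; this makes the disjunctive behaviour robust to the fixed schema, in contrast to the $\mathcal{EL}$-query $Q_2$ of Example~\ref{ex:first}, whose disjunctive reading breaks once $A_1$ is kept in the schema.

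For the positive direction I would define the two tree-width-$1$ CQs $q_1,q_2$ obtained from $q$ by its two foldings and show $(\Omc,\Sbf,q)\equiv(\Omc,\Sbf,q_1\vee q_2)$. Each $q_i$ is a homomorphic image of $q$, so $q\rightarrow q_i$ and hence $(\Omc,\Sbf,q_i)\subseteq(\Omc,\Sbf,q)$ by composing homomorphisms through Lemma~\ref{lem:chaseprop}; this gives $(\Omc,\Sbf,q_1\vee q_2)\subseteq(\Omc,\Sbf,q)$. For the converse I take any $\Sbf$-database $\Dmc$ and a homomorphism $h\colon q\rightarrow\mn{ch}_\Omc(\Dmc)$. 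Since the $\mathcal{ELI}$-chase is a forest of trees hanging off the database core and its only cycles lie inside $\Dmc$ itself, a case analysis on the trigger concept satisfied by the image of the critical variable (exactly as in the $B_1$-versus-$B_2$ argument of Example~\ref{ex:first}) shows that $h$ already realises one of the two foldings, i.e.\ $q_1\rightarrow\mn{ch}_\Omc(\Dmc)$ or $q_2\rightarrow\mn{ch}_\Omc(\Dmc)$. With Lemma~\ref{lem:chaseprop}(2) this yields the equivalence, hence UCQ$_1$-equivalence while preserving $\Omc$.

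For the negative direction I argue by contradiction: assume there is a CQ$_1$ $q'$ with $(\Omc,\Sbf,q')\equiv(\Omc,\Sbf,q)$. I would first build two \emph{one-sided} $\Sbf$-databases $\Dmc_1,\Dmc_2$: into $\mn{ch}_\Omc(\Dmc_1)$ only $q_1$ maps and into $\mn{ch}_\Omc(\Dmc_2)$ only $q_2$ maps. Then $q$ holds on both, so by assumption $q'\rightarrow\mn{ch}_\Omc(\Dmc_1)$ and $q'\rightarrow\mn{ch}_\Omc(\Dmc_2)$. The goal is to combine $\Dmc_1,\Dmc_2$ into a third $\Sbf$-database $\Dmc_0$ such that the \emph{tree-like} $q'$ still maps into $\mn{ch}_\Omc(\Dmc_0)$ while \emph{neither} folding survives, so that $q_1\vee q_2$, and hence $q$, fails on $\Dmc_0$; this gives $q'\rightarrow\mn{ch}_\Omc(\Dmc_0)$ with $q\not\rightarrow\mn{ch}_\Omc(\Dmc_0)$, contradicting $(\Omc,\Sbf,q')\subseteq(\Omc,\Sbf,q)$.

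The delicate point, and the step I expect to be the main obstacle, is the construction and correctness of $\Dmc_0$. The naive categorical product of $\mn{ch}_\Omc(\Dmc_1)$ and $\mn{ch}_\Omc(\Dmc_2)$ is useless, since $q\rightarrow A\times B$ holds whenever $q\rightarrow A$ and $q\rightarrow B$ for \emph{any} $q$, cyclic ones included, so the product would preserve $q$ rather than break it. Instead I would use a cycle-breaking amalgam of the two chases that preserves homomorphisms from tree-width-$1$ CQs --- for instance an alternating, unfolded gluing certified by the existential $2$-pebble game (the game underlying the \PTime upper bound of Section~\ref{sect:ptimeupper}) --- so that $q'$, being tree-like, is forced in by its maps into the two factors, whereas the cyclic patterns $q_1,q_2$ cannot close up. The difficulty is that, because $\Omc$ uses inverse roles, the chase generates structure both below and above the database core, so the amalgam must be carried out at the level of the chases and then exhibited as the chase of a genuine $\Sbf$-database consistent with $\Omc$; verifying that $q'$ is fooled while $q_1\vee q_2$ is not is where the real work lies.
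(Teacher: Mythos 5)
Your overall strategy---a single OMQ over $\mathcal{ELI}$ whose cyclic tree-width-$2$ CQ admits exactly two tree-shaped foldings, with the positive direction argued by a case split on which trigger concept the image of the critical variable carries---matches the paper, and the positive direction is fine. The problem is the negative direction, where you explicitly leave the decisive step open. Your plan ``combine $\Dmc_1,\Dmc_2$ into $\Dmc_0$ via a cycle-breaking amalgam certified by the existential $2$-pebble game, then check that $q'$ survives while neither folding does'' is precisely the part that carries all the difficulty: you supply no construction of the amalgam, no argument that it is realized as the chase of a genuine $\Sbf$-database consistent with $\Omc$, and no verification that the two foldings $q_1,q_2$---which are themselves of tree width $1$, so a generic ``tree-width-$1$-preserving'' gluing is in danger of letting them through---actually fail on it. As written, the proposal does not establish the proposition.

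Moreover, your dismissal of the direct product rests on a misreading of how it is used. The paper does not take the product as the counterexample database; it engineers $\Omc$ and $\Sbf$ so that $\mn{ch}_\Omc(\Dmc_1)\times\mn{ch}_\Omc(\Dmc_2)$ is isomorphic to $\Dmc_q$ itself. Hence any equivalent $q'$ satisfies $q'\rightarrow\Dmc_q$, and since $q'$ has tree width $1$ it further maps into the $1$-unraveling $\Dmc_q^u$ of $\Dmc_q$, so $Q$ is forced to be equivalent to the OMQ whose query is that unraveling. The counterexample database is then a concrete relabelling of the unraveling: an infinite $r$-path carrying the $B_i$ and $C_i$ trigger concepts in an alternating pattern chosen so that the chase derives enough $A_i$'s to satisfy the unraveled query, but the conjunctive axioms such as $B_2\sqcap\exists r.C_1\sqsubseteq A_4$ never fire at the seams, so neither folding can close up. This two-step use of the product---first identify the product with $\Dmc_q$, then break the cycle by unraveling and relabel with the trigger concepts---is exactly the ``real work'' your proposal acknowledges is missing.
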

On the other hand, CQ$_k$-equivalence and
UCQ$_k$-equivalence coincide in $(\mathcal{ELIH}_\bot,\text{UCQ})$,
for all $k \geq 1$, when we restrict our attention to the full
schema (see Corollary \ref{coro:full-schema-equiv} below).
%





%

\smallskip

{\bf {\em A Characterization of Semantic Tree-likeness.}}
We provide a characterization of when an OMQ $Q$ is UCQ$_k$-equivalent.
But first we need some auxiliary terminology.

A CQ $q$ is a \emph{contraction} of a CQ $q'$ if it can be obtained
from $q'$ by identifying variables. When an answer variable $x$ is
identified with a non-answer variable~$y$, the resulting variable is
$x$; the identification of two answer variables is not allowed.
Let $Q=(\Omc,\Sbf,q) \in (\ELHIbot,\text{UCQ})$ and $k
\geq 1$.  The \emph{UCQ$_k$-approximation of} $Q$ is the OMQ
$Q_a=(\Omc,\Sbf,q_a)$, where $q_a$ denotes the UCQ that consists of all
contractions of a CQ from $q$ of tree width at most $k$. By construction, $Q_a \subseteq Q$, and in this sense $Q_a$ is an approximation of $Q$ from below. The following result gives two central properties of $Q_a$, in particular that it is the best possible such approximation.
\begin{restatable}{theorem}{thmbestapprox}
\label{thm:bestapprox}
Let $Q$ be an OMQ from $(\ELHIbot,\text{UCQ})$, $k \geq 1$, and $Q_a$
the UCQ$_k$-approxi\-mation of $Q$. Then
\begin{enumerate}

\item $Q(\Dmc)=Q_a(\Dmc)$ for any $\Sbf$-database \Dmc of treewidth $\leq$ $k$;

\item $Q' \subseteq Q_a$ for every $Q' \in (\ELHIbot,\text{UCQ}_k)$ with $Q' \subseteq Q$.

\end{enumerate}
\end{restatable}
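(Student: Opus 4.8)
The plan is to first record the inclusion $Q_a \subseteq Q$ and use it for one half of Part~1. Each disjunct $p$ of $q_a$ is, by definition, a contraction of some disjunct $q_i$ of $q$, so there is a surjective homomorphism $q_i \to p$ and hence $p \subseteq q_i \subseteq q$. Evaluating this over $\mn{ch}_\Omc(\Dmc)$ and invoking Lemma~\ref{lem:chaseprop}(2) yields $Q_a(\Dmc) \subseteq Q(\Dmc)$ for \emph{every} $\Sbf$-database $\Dmc$, which simultaneously proves the remark ``$Q_a \subseteq Q$'' and the $\supseteq$-inclusion of Part~1.

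For the converse inclusion of Part~1, I fix $\Dmc$ of tree width $\leq k$. If $\Dmc$ is inconsistent with $\Omc$ both queries return all tuples, so I assume consistency and take $\abf \in Q(\Dmc) = q(\mn{ch}_\Omc(\Dmc))$, witnessed by a disjunct $q_i$ and a homomorphism $h : q_i \to \mn{ch}_\Omc(\Dmc)$ with $h(\xbf) = \abf$. I form the contraction $p$ of $q_i$ by identifying any two variables with equal $h$-image, \emph{never} merging two answer variables (merging a quantified variable into an answer variable is permitted and retains the answer variable). Then $h$ factors as $h = h' \circ c$, where $c : q_i \to p$ is the contraction map and $h' : p \to \mn{ch}_\Omc(\Dmc)$ is injective on the quantified variables of $p$. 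Consequently $G_{p|_{\ybf}}$ embeds as a subgraph of the Gaifman graph of $\mn{ch}_\Omc(\Dmc)$, whose tree width equals that of $\Dmc$ by Lemma~\ref{lem:chaseprop}(4) and is thus $\leq k$. Hence $p$ is a contraction of a disjunct of $q$ of tree width $\leq k$, i.e.\ a disjunct of $q_a$, and $h'$ witnesses $\abf \in q_a(\mn{ch}_\Omc(\Dmc)) = Q_a(\Dmc)$.

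For Part~2, let $Q' = (\Omc', \Sbf, q') \in (\ELHIbot, \text{UCQ}_k)$ with $Q' \subseteq Q$, fix an $\Sbf$-database $\Dmc$ and $\abf \in Q'(\Dmc)$; I must show $\abf \in Q_a(\Dmc)$. The argument has three steps. \textbf{Step~A:} produce an $\Sbf$-database $\Dmc_0$ of tree width $\leq k$, a homomorphism $g : \Dmc_0 \to \Dmc$, and a tuple $\abf_0$ with $g(\abf_0) = \abf$ such that $\abf_0 \in Q'(\Dmc_0)$. \textbf{Step~B:} since $Q' \subseteq Q$ we obtain $\abf_0 \in Q(\Dmc_0)$, and as $\Dmc_0$ has tree width $\leq k$, Part~1 gives $\abf_0 \in Q_a(\Dmc_0)$. \textbf{Step~C:} push this answer forward along $g$, using that OMQ answers are preserved under homomorphisms between $\Sbf$-databases: since the oblivious chase is universal, $g$ extends to a homomorphism $\mn{ch}_\Omc(\Dmc_0) \to \mn{ch}_\Omc(\Dmc)$, so composing with the $q_a$-witness for $\abf_0$ yields $\abf = g(\abf_0) \in Q_a(\Dmc)$. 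Steps~B and~C are routine given Part~1 and Lemma~\ref{lem:chaseprop}.

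The \textbf{main obstacle} is Step~A. Its guiding idea is to unravel $\Dmc$ along a width-$\leq k$ tree decomposition of the disjunct $q'_j$ of $q'$ that witnesses $\abf \in Q'(\Dmc)$ via some $h : q'_j \to \mn{ch}_{\Omc'}(\Dmc)$: one takes disjoint copies of the local $\Sbf$-neighbourhoods of $\Dmc$ indexed by the bags and glues them along shared variables, so the result inherits tree width $\leq k$ and maps homomorphically back to $\Dmc$. Two points make this genuinely delicate and explain why the standard relational construction does not transfer. First, $\Omc'$ and $q'$ may use symbols outside $\Sbf$, so $\Dmc_0$ must remain an $\Sbf$-database while still letting $\mn{ch}_{\Omc'}(\Dmc_0)$ regenerate the anonymous, tree-shaped chase material that $h$ exploits; dropping too many facts destroys the needed existential witnesses, while keeping all of them can blow up the tree width. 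Second, one cannot merely pass to the injective image of $h$ as in Part~1, because contracting $q'_j$ may raise its tree width above $k$. Showing that a tree-decomposition-guided $\Sbf$-unravelling reproduces, through the $\ELHIbot$-chase, precisely the concept types required to re-embed $q'_j$ at $\abf_0$ is where the careful, non-standard arguments are needed.
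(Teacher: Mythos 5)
Your Part~1 and the inclusion $Q_a \subseteq Q$ are correct and follow essentially the same route as the paper: use Lemma~\ref{lem:chaseprop} to reduce to a homomorphism $h$ into $\mn{ch}_\Omc(\Dmc)$, contract along $h$ (respecting answer variables), and observe that the contraction is injective on quantified variables, so its Gaifman graph is a subgraph of that of $\mn{ch}_\Omc(\Dmc)$, whose tree width equals that of \Dmc by Lemma~\ref{lem:chaseprop}(4). Your handling of answer variables is in fact more explicit than the paper's. Likewise, your Steps~B and~C of Part~2 are exactly the paper's argument: the paper instantiates your $\Dmc_0$ as the $k$-unraveling $\Dmc^u_\abf$ of \Dmc up to \abf (with $\abf_0=\abf$), applies Point~1 to it, and pushes the answer back along the homomorphism $\mn{ch}_\Omc(\Dmc^u_\abf) \rightarrow \mn{ch}_\Omc(\Dmc)$ using Lemmas~\ref{lem:kunravprop}(2) and~\ref{lem:hom}.

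The genuine gap is Step~A, which you explicitly leave open: you describe a ``guiding idea'' and then state that this ``is where the careful, non-standard arguments are needed.'' But Step~A \emph{is} the technical content of Point~2; naming the obstacle is not overcoming it. In the paper this step is Lemma~\ref{lem:kunravprop}, Points~(1) and~(3): the $k$-unraveling $\Dmc^u_\abf$ is defined purely from \Dmc and $k$ (independently of any query or ontology), and the proof of Point~(3) resolves precisely the two difficulties you identify. Concretely, given $h : q'_j \rightarrow \mn{ch}_{\Omc'}(\Dmc)$, the paper (i) identifies only the endpoints $y_1,y_n$ of paths in $q'_j$ whose interior variables are mapped to anonymous chase elements while $h(y_1),h(y_n) \in \mn{dom}(\Dmc)$, and drops variables mapped entirely into anonymous parts --- this contraction is \emph{minor-preserving} on $G_{q'_j|_{\ybf}}$, so tree width stays $\leq k$, which answers your worry that contraction may blow up tree width; (ii) re-attaches tree-shaped CQs $q_C$ at variables whose images satisfy concepts $C$ occurring in $\Omc'$, so that the anonymous witnesses are regenerated by the chase of the unraveling, which answers your worry about destroyed existential witnesses; and (iii) folds the resulting query into $\Dmc^u_\abf$ bag by bag along its tree decomposition, using that the unraveling contains a copy of every $\leq(k{+}1)$-element subdatabase of \Dmc glued consistently at every node. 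Without carrying out (i)--(iii), or some equivalent construction, your proposal does not establish Point~2.
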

Let $Q=(\Omc,\Sbf,q)$.  The proof of Point~1 uses the fact that a
homomorphism from $q$ to $\mn{ch}_\Omc(\Dmc)$ gives rise to a
collapsing of $q$ whose tree width is not larger than that of \Dmc.
For Point~2, we `unravel' the input database into a database of tree
width at most $k$ and apply Point~1.


We obtain the
following key corollary; `$3 \Rightarrow 2$' and `$2 \Rightarrow 1$' are immediate, while `$1 \Rightarrow 3$' follows from Theorem~\ref{thm:bestapprox}.

%
\begin{corollary}
\label{prop:wasaclaim}
  Let $Q$ be an OMQ from $(\ELHIbot,\text{UCQ})$ and $k
  \geq 1$.  The following are equivalent:
  \begin{enumerate}

  \item $Q$ is UCQ$_k$-equivalent;

  \item $Q$ is UCQ$_k$-equivalent while preserving the ontology;

  \item $Q$ is equivalent to its UCQ$_k$-approximation.

  \end{enumerate}
\end{corollary}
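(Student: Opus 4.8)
The plan is to establish the cycle of implications $3 \Rightarrow 2 \Rightarrow 1 \Rightarrow 3$, relying on Theorem~\ref{thm:bestapprox} for the only substantive step. Throughout, I would keep in hand two facts that follow directly from the definition of the UCQ$_k$-approximation $Q_a=(\Omc,\Sbf,q_a)$. First, every disjunct of $q_a$ is a contraction of a disjunct of $q$ and has tree width at most $k$, so $q_a$ is a genuine UCQ$_k$ and hence $Q_a \in (\ELHIbot,\text{UCQ}_k)$, using the \emph{same} ontology $\Omc$ and schema $\Sbf$ as $Q$. Second, each such disjunct is a homomorphic image of a disjunct of $q$, whence $q_a \subseteq q$ as UCQs; since both OMQs share $\Omc$ and $\Sbf$, evaluating on $\mn{ch}_\Omc(\cdot)$ and invoking Lemma~\ref{lem:chaseprop}(2) lifts this to $Q_a \subseteq Q$. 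This containment $Q_a \subseteq Q$ will be used freely below.

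With these observations the two easy implications are immediate. For $3 \Rightarrow 2$, if $Q \equiv Q_a$ then $Q_a$ is itself an OMQ from $(\ELHIbot,\text{UCQ}_k)$ that is equivalent to $Q$ and uses the ontology $\Omc$, so $Q$ is UCQ$_k$-equivalent while preserving the ontology. For $2 \Rightarrow 1$, ontology-preserving UCQ$_k$-equivalence is by definition a special case of UCQ$_k$-equivalence, so there is nothing to prove.

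The substantive direction is $1 \Rightarrow 3$. Assume $Q$ is UCQ$_k$-equivalent, witnessed by some $Q'=(\Omc',\Sbf,q') \in (\ELHIbot,\text{UCQ}_k)$ with $Q \equiv Q'$, where $\Omc'$ need not equal $\Omc$. The goal is the reverse containment $Q \subseteq Q_a$, which together with $Q_a \subseteq Q$ yields $Q \equiv Q_a$. I would obtain it by chaining three containments: $Q \subseteq Q'$ from the assumed equivalence; $Q' \subseteq Q_a$ from Theorem~\ref{thm:bestapprox}(2), applicable precisely because $Q' \in (\ELHIbot,\text{UCQ}_k)$ and $Q' \subseteq Q$ (the latter again from $Q \equiv Q'$); and finally $Q_a \subseteq Q$ as noted above. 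Thus $Q_a \subseteq Q \subseteq Q' \subseteq Q_a$, so all four OMQs are equivalent and in particular $Q \equiv Q_a$.

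The crux of the whole argument is the best-approximation property, Theorem~\ref{thm:bestapprox}(2): it is what allows an \emph{arbitrary} tree-like witness $Q'$, possibly built over a completely different ontology $\Omc'$, to be squeezed below the canonical approximation $Q_a$ over the original ontology $\Omc$. Once that theorem is available, the corollary is only a matter of correctly arranging containments; I do not expect any real obstacle beyond the routine but essential check that the contraction construction keeps $Q_a$ inside $(\ELHIbot,\text{UCQ}_k)$ while guaranteeing $Q_a \subseteq Q$.
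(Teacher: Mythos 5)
Your proof is correct and follows essentially the same route as the paper: the paper dispatches $3 \Rightarrow 2$ and $2 \Rightarrow 1$ as immediate and derives $1 \Rightarrow 3$ from Theorem~\ref{thm:bestapprox}, which is exactly your chain $Q \subseteq Q' \subseteq Q_a \subseteq Q$ using Theorem~\ref{thm:bestapprox}(2) applied to the (possibly ontology-changing) witness $Q'$. Your preliminary observations that $Q_a \in (\ELHIbot,\text{UCQ}_k)$ over the same ontology and that $Q_a \subseteq Q$ match what the paper records when defining the UCQ$_k$-approximation.
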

In $(\ELHIbot,\text{UCQ})$, the notion of UCQ$_k$-equivalence thus
coincides with UCQ$_k$-equivalence while preserving the
ontology. Moreover, Corollary~\ref{prop:wasaclaim} implies
decidability of UCQ$_k$-equivalence since OMQ containment is decidable
in the OMQ languages considered in this paper
\cite{DBLP:conf/ijcai/Bienvenu0LW16}. This is further elaborated in
Section~\ref{sect:deciding}.

\smallskip

{\bf {\em Full Schema.}}\footnote{There were some issues in this
  section in the conference version
  \cite{DBLP:conf/lics/BarceloFLP19}
  of this paper. We present here a reworked version.}
  We now study the case of OMQs based on the
full schema and on CQs, which admits constructions that are closer to
the case without ontologies. Recall that a CQ is equivalent to a
CQ$_k$ iff its core has tree width at most~$k$. It is illustrated by
Example~\ref{ex:first}, however, that we cannot simply replace CQs in
OMQs with their core to attain minimal treewidth as there are OMQs
$Q=(\Omc,\Sbf_{\mn{full}},q) \in (\EL,\text{CQ})^\equiv_{\text{CQ}_1}$
such that the core of $q$ has treewidth excceding 1. We identify a
stronger operation than taking the core that serves this purpose.

Observe that cores can be defined as maximum retracts and that every
retract is a contraction while the converse is clearly false. To
obtain OMQs of minimal treewidth, we use contractions of the actual
query that (need not be retracts and) preserve equivalence on the
level of OMQs. An OMQ $Q'=(\Omc,\Sbf_{\mn{full}},q')$ is a
\emph{maximum contraction} of an OMQ
$Q=(\Omc,\Sbf_{\mn{full}},q) \in (\ELHIbot,\text{CQ})$ if
$Q \equiv Q'$, $q'$ is a contraction of $q$, and
$Q \not\equiv (\Omc,\Sbf_{\mn{full}},q'')$ for any proper contraction
$q''$ of $q'$. In contrast to cores, the CQs in maximum contractions
are not unique up to isomorphism.
\begin{example}
  Let $Q=(\Omc,\Sbf_{\mn{full}},q)$ with
  $$
  \begin{array}{rcl}
    \Omc &= \{ & A_1 \sqsubseteq \exists r^- . (A_2 \sqcap A_4 \sqcap
                 \exists r . A_3) \\
         && A_1 \sqsubseteq A_2 \sqcap \exists r . (A_1 \sqcap A_3 \sqcap
            \exists r^- . A_4) \quad \}
  \end{array}
  $$
  and $q$ is the Boolean CQ from Figure~\ref{fig:CQktwo}. Then $Q_i
  =(\Omc,\Sbf_{\mn{full}},q_i)$ is a maximum rewriting of $Q$, for
  each
  $i \in \{1,2\}$ and where $q_1$ is obtained from $q$ by identifying
  $x_2$ and $x_4$ and $q_2$ is obtained from $q$ by identifying
  $x_1$ and $x_3$.
\end{example}
We now establish the main property of maximum contractions:
an OMQ from $(\ELHIbot,\text{CQ})$ based on the full schema is
UCQ$_k$-equivalent iff some or all of its maximum contractions (which is
equivalent) fall into $(\ELHIbot,\text{CQ}_k)$. In this
sense, rewritings behave like a core for CQs without an ontology.
\begin{restatable}{theorem}{thmfulldatabase}
\label{thm:fulldatabase}
Let $Q=(\Omc,\Sbf_{\mn{full}},q)$ be a non-empty OMQ from
$(\ELHIbot,\text{CQ})$ and $k \geq 1$.
 The following are equivalent:
\begin{enumerate}
\item $Q$ is UCQ$_k$-equivalent;
\item some maximum contraction of $Q$ falls within $(\ELHIbot,\text{CQ}_k)$;
\item all maximum contractions of $Q$ fall within $(\ELHIbot,\text{CQ}_k)$.
\end{enumerate}
\end{restatable}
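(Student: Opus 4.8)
The plan is to prove the cycle of implications $2 \Rightarrow 1 \Rightarrow 3 \Rightarrow 2$, with $1 \Rightarrow 3$ carrying essentially all the weight. For readability I write $\Phi(p):=(\Omc,\Sbf_{\mn{full}},p)$ for a CQ $p$. Before starting I would record three facts over the full schema. First, \emph{(monotonicity)} if there is a homomorphism $q_2 \to q_1$ then $\Phi(q_1) \subseteq \Phi(q_2)$, which is immediate by composing homomorphisms with Lemma~\ref{lem:chaseprop}(3). Second, \emph{(canonical containment)} if $\Dmc_{q_1}$ is consistent with $\Omc$, then $\Phi(q_1) \subseteq \Phi(q_2)$ holds iff $q_2 \to \mn{ch}_\Omc(\Dmc_{q_1})$ via a homomorphism preserving answer variables; this is exactly where the \emph{full schema} is decisive, since then $\mn{ch}_\Omc(\Dmc_{q_1})$ is a canonical database for $\Phi(q_1)$ and the equivalence follows from Lemma~\ref{lem:chaseprop}(2,3). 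Third, \emph{(collapsing)}, taken verbatim from the proof of Theorem~\ref{thm:bestapprox}(1): any homomorphism $g\colon q_0 \to \mn{ch}_\Omc(\Dmc)$ preserving answer variables (with $\Dmc$ consistent) factors through a contraction $\bar q_0$ of $q_0$ with $\width(\bar q_0) \le \width(\Dmc)$ that still maps into $\mn{ch}_\Omc(\Dmc)$ preserving answer variables.

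Two preliminary observations dispatch the easy directions. Maximum contractions always exist: starting from $q$ one repeatedly passes to a proper contraction preserving OMQ-equivalence to $Q$, the variable count strictly decreases, and compositions of contractions are contractions, so the terminal object is a maximum contraction. Moreover, for every maximum contraction $q'$ of the \emph{non-empty} $Q$, the database $\Dmc_{q'}$ is consistent with $\Omc$; otherwise $q'$ would be unsatisfiable under $\Omc$, forcing $\Phi(q')(\Dmc)=\emptyset$ on all consistent $\Dmc$ and contradicting $\Phi(q') \equiv Q$. Given existence, $3 \Rightarrow 2$ is immediate, and $2 \Rightarrow 1$ holds because $(\ELHIbot,\text{CQ}_k) \subseteq (\ELHIbot,\text{UCQ}_k)$.

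For $1 \Rightarrow 3$, fix an arbitrary maximum contraction $q'$; I must show $\width(q') \le k$. By Corollary~\ref{prop:wasaclaim}, $Q \equiv Q_a$ where $Q_a=\Phi(q_a)$ is the $\text{UCQ}_k$-approximation. Evaluating on the consistent database $\Dmc_{q'}$, the identity witnesses $\xbf \in \Phi(q')(\Dmc_{q'}) = Q_a(\Dmc_{q'})$, so some disjunct $p$ of $q_a$ — a contraction of $q$ with $\width(p)\le k$ — admits a homomorphism $h\colon p \to \mn{ch}_\Omc(\Dmc_{q'})$ preserving answer variables; by canonical containment, $\Phi(q') \subseteq \Phi(p)$. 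Since $p$ is a contraction of $q$, monotonicity gives $\Phi(p) \subseteq \Phi(q) = Q \equiv \Phi(q')$, so the two OMQs are equivalent; note also that $\Dmc_p$ maps into the model $\mn{ch}_\Omc(\Dmc_{q'})$ through $h$ and is therefore consistent with $\Omc$. Canonical containment applied to $\Phi(p) \subseteq \Phi(q')$ now yields a homomorphism $h'\colon q' \to \mn{ch}_\Omc(\Dmc_p)$ preserving answer variables.

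Finally I apply collapsing to $h'$: it factors through a contraction $q''$ of $q'$ with $\width(q'') \le \width(\Dmc_p) \le k$ (using Lemma~\ref{lem:chaseprop}(4), and treating answer variables as constants as the paper does) together with a homomorphism $q'' \to \mn{ch}_\Omc(\Dmc_p)$ preserving answer variables, whence $\Phi(p) \subseteq \Phi(q'')$ by canonical containment; and since $q''$ is a contraction of $q'$, monotonicity gives $\Phi(q'') \subseteq \Phi(q')$. Chaining, $\Phi(q') \equiv \Phi(p) \subseteq \Phi(q'') \subseteq \Phi(q')$, so $q''$ is a contraction of $q'$ that is still equivalent to $Q$. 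By maximality of $q'$ this contraction cannot be proper, so $q''=q'$ up to renaming and $\width(q')=\width(q'')\le k$, as required. I expect the main obstacle to be precisely this direction, and its crux is the two-sided use of canonical containment to sandwich $q'$ between the approximation disjunct $p$ and the collapse $q''$: this relies decisively on the full schema (so that $\mn{ch}_\Omc(\Dmc_{q'})$ and $\mn{ch}_\Omc(\Dmc_p)$ are canonical), and the fiddly technical points will be the bookkeeping of answer variables under contraction and collapsing, and verifying consistency of the canonical databases involved.
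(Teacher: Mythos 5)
Your proof is correct and follows essentially the same route as the paper's own argument for `$1 \Rightarrow 3$': the paper likewise invokes Corollary~\ref{prop:wasaclaim} to get an equivalent UCQ$_k$ over the same ontology, extracts from the containment characterization over the full schema (its Lemma~\ref{lem:contfull}) the two homomorphisms $q_c \to \mn{ch}_\Omc(p)$ and $p \to \mn{ch}_\Omc(q_c)$ for a single disjunct $p$, and uses maximality of the contraction to force injectivity — which is exactly your "the collapse $q''$ must equal $q'$" step in different words. The remaining directions and the bookkeeping (existence of maximum contractions, consistency of the canonical databases, answer variables) are handled at the same level of detail in both.
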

The interesting part of the proof is `$1 \Rightarrow 3$'. It works
by showing that if $Q'=(\Omc,\Sbf,q') \in (\ELHIbot,\text{UCQ}_k)$
is equivalent to $Q$ and $Q_c=(\Omc,\Sbf,q_c)$ is a maximum
contraction
of $Q$, then there is an injective homomorphism from $q_c$ to
$\mn{ch}_\Omc(\Dmc_{p})$ for some CQ $p$ in the UCQ $q'$, and
thus the tree width of $q_c$ is bounded by that of $q'$.
We obtain the following corollary.
%
%
%
\begin{corollary} \label{coro:full-schema-equiv}
In $(\ELHIbot,\text{CQ})$  based on the full schema, CQ$_k$-equivalence and UCQ$_k$-equivalence coincide, for $k \geq 1$.
\end{corollary}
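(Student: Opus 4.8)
The plan is to derive the corollary almost directly from Theorem~\ref{thm:fulldatabase}. The direction ``CQ$_k$-equivalence $\Rightarrow$ UCQ$_k$-equivalence'' is immediate, since every CQ$_k$ is in particular a UCQ$_k$ (a one-disjunct union), so any witness for the former is a witness for the latter. Only the converse requires work, and here I would split on whether $Q=(\Omc,\Sbf_{\mn{full}},q)$ is empty, because Theorem~\ref{thm:fulldatabase} is stated only for non-empty OMQs. For the non-empty case, suppose $Q$ is UCQ$_k$-equivalent. I would first observe that a maximum contraction of $Q$ always exists: $q$ is a contraction of itself with $(\Omc,\Sbf_{\mn{full}},q)\equiv Q$, and as $q$ has only finitely many contractions one can pick one minimal among those preserving equivalence. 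Fixing such a maximum contraction $Q_c=(\Omc,\Sbf_{\mn{full}},q_c)$, Theorem~\ref{thm:fulldatabase} (``$1\Rightarrow 3$'') gives that $q_c$ has tree width at most $k$, i.e.\ $Q_c\in(\ELHIbot,\text{CQ}_k)$. Since $q_c$ is a single CQ and $Q\equiv Q_c$ by definition of a maximum contraction, $Q_c$ witnesses that $Q$ is CQ$_k$-equivalent (in fact while preserving the ontology), settling the coincidence for all non-empty OMQs.

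The step that needs genuine care, and which I regard as the main obstacle, is the empty case, as it lies outside the scope of Theorem~\ref{thm:fulldatabase}. Here I would prove the stronger statement that every empty $Q$ is already CQ$_1$-equivalent while preserving the ontology, so that both notions hold for it and thus coincide. The key is a characterization specific to the full schema: $Q=(\Omc,\Sbf_{\mn{full}},q)$ is non-empty iff the canonical database $\Dmc_q$ is consistent with $\Omc$ --- if $\Dmc_q$ is consistent, the identity map witnesses $\mn{ch}_\Omc(\Dmc_q)\models q$ on a consistent database, and conversely a consistent $\Dmc$ with $\mn{ch}_\Omc(\Dmc)\models q$ yields a homomorphism from $\Dmc_q$ into the model $\mn{ch}_\Omc(\Dmc)$ of $\Omc$, which forces $\Dmc_q$ to be consistent with $\Omc$ (this is precisely where the full schema enters, ensuring that $\Dmc_q$ is an $\Sbf$-database).

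So if $Q$ is empty, then $\Dmc_q$ is inconsistent with $\Omc$, and by Lemma~\ref{lem:chaseprop}(1) there are an element $a$ and an inclusion $C\sqsubseteq\bot\in\Omc$ with $a\in C^{\mn{ch}_\Omc(\Dmc_q)}$. I would let $q_\bot$ be the Boolean tree-shaped CQ corresponding to the \ELIbot-concept $C$, rooted at a fresh existential variable; being an \ELIbot-concept, $C$ translates to a CQ of tree width at most $1$ (and it contains no $\bot$ by our normalization). On every $\Sbf_{\mn{full}}$-database $\Dmc$ consistent with $\Omc$, the chase $\mn{ch}_\Omc(\Dmc)$ is a model of $\Omc$, so $C^{\mn{ch}_\Omc(\Dmc)}=\emptyset$ and $q_\bot$ fails, making $(\Omc,\Sbf_{\mn{full}},q_\bot)$ empty. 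To match the arity of $Q$, I would conjoin $q_\bot$ with one atom per answer variable; since the tree width of a CQ is computed only from its quantified variables, the resulting $q'$ still has tree width at most $1$, and it remains unsatisfiable on every consistent database because of the $q_\bot$-conjunct. Hence $(\Omc,\Sbf_{\mn{full}},q')$ is empty, shares the ontology with $Q$ (so it agrees with $Q$ on inconsistent databases as well), and therefore $Q\equiv(\Omc,\Sbf_{\mn{full}},q')$ with $q'\in\text{CQ}_1$. This gives CQ$_1$-equivalence in the empty case and completes the proof.
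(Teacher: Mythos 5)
Your proof is correct and follows the paper's intended route: the corollary is obtained as a direct consequence of Theorem~\ref{thm:fulldatabase}, using that a maximum contraction always exists, is an equivalent OMQ over the same ontology, and (by `$1 \Rightarrow 3$') has a single CQ of tree width at most $k$, while the converse direction is trivial. The one place where you go beyond the paper is the empty case, which Theorem~\ref{thm:fulldatabase} excludes and which the paper leaves implicit; your argument that every empty OMQ over the full schema is CQ$_1$-equivalent while preserving the ontology --- via the observation that, under the full schema, non-emptiness is equivalent to consistency of $\Dmc_q$ with $\Omc$, and then replacing $q$ by the tree-shaped CQ corresponding to a violated inclusion $C \sqsubseteq \bot$ --- is sound and cleanly fills that gap.
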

An important property of cores of CQs that underlies the \PTime upper
bound in Theorem~\ref{thm:grohe} is that each core $q_c$ of a CQ $q$
is a subquery of $q$ in the sense that when we view both $q_c$ and $q$
as a hypergraph, then $q_c$ is an induced subhypergraph of $q$. In
contrast, the CQs from maximum contractions are not subqueries of the
CQ in the original OMQ. This is problematic for establishing our
\PTime upper bound later on and thus we go one step further.

Let $Q=(\Omc,\Sbf_{\mn{full}},q) \in (\ELHIbot,\text{CQ})$.  A
\emph{rewriting} of $Q$ is any OMQ
$\widehat Q=(\Omc,\Sbf_{\mn{full}},\widehat q)$ that can be obtained
as follows:
\begin{enumerate}

\item choose a maximum contraction $Q'=(\Omc,\Sbf_{\mn{full}},q')$ of $Q$;

\item choose a homomorphism $h$ from $q'$ to $\mn{ch}_\Omc(\Dmc_q)$;

\item take as $\widehat q$ the restriction of $q$ to the range of $h$;

\item let $V$ be the range of $h$ extended with all 
  $x \in \mn{var}(q)$ such that some fresh constant in the subdatabase
  of tree width 1 that the chase has generated below~$x$ is in the
  range of $h$;

\item for each $C \sqsubseteq D \in \Omc$ and each variable $x \in V$
  such that $x \in C^{\mn{ch}_\Omc(\Dmc_q)}$, let $q_C$ be $C$ viewed
  as a CQ using fresh variables and add $q_C$ to $\widehat q$,
  identifying $x$ with the root of~$q_C$.
 
\end{enumerate}
We say that the rewriting is \emph{based} on maximum contraction~$Q'$.
The following lemma states the main properties of rewritings.
\begin{restatable}{lemma}{lemrewritings}
\label{lem:rewritings}
  Let $Q=(\Omc,\Sbf_{\mn{full}},q)$ be a non-empty OMQ from $(\ELHIbot,\text{CQ})$ and let
  $\widehat Q=(\Omc,\Sbf_{\mn{full}},\widehat q)$ be a rewriting of
  $Q$ based on maximum contraction $Q'=(\Omc,\Sbf_{\mn{full}},q')$.
  Then
  \begin{enumerate}

  \item $Q \equiv \widehat Q$ and

  \item the treewidth of $\widehat q$ does not exceed that of $q'$.

  \end{enumerate}
\end{restatable}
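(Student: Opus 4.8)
The plan is to prove the two points separately, reducing throughout to databases \Dmc that are consistent with \Omc: since \Omc is shared by $Q$ and $\widehat Q$, the notion of consistency coincides for both, and on an inconsistent \Dmc both queries return every candidate tuple. For consistent \Dmc, Lemma~\ref{lem:chaseprop}(2) replaces certain answers by plain evaluation over the chase, so $Q\equiv\widehat Q$ reduces to $q(\mn{ch}_\Omc(\Dmc))=\widehat q(\mn{ch}_\Omc(\Dmc))$ for all such \Dmc. I would establish the two containments via the standard homomorphism characterisation of OMQ containment over the full schema: $(\Omc,\Sbf_{\mn{full}},p_1)\subseteq(\Omc,\Sbf_{\mn{full}},p_2)$ iff there is a homomorphism from $p_2$ to $\mn{ch}_\Omc(\Dmc_{p_1})$ that is the identity on the answer variables (here $\Dmc_{p_1}$ is automatically an $\Sbf_{\mn{full}}$-database). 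Write $V_0:=\mn{var}(q)\cap\mathrm{range}(h)$ and recall that $\widehat q$ is the subquery $q|_{V_0}$ of $q$ (Step~3) together with, for each $x\in V$ and each $C\sqsubseteq D\in\Omc$ with $x\in C^{\mn{ch}_\Omc(\Dmc_q)}$, a tree-shaped gadget $q_C$ attached at the single root $x$ (Steps~4--5).

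For Point~1, the inclusion $Q\subseteq\widehat Q$ is the easy direction: the subquery $q|_{V_0}$ embeds identically into $\mn{ch}_\Omc(\Dmc_q)$, and because $x\in C^{\mn{ch}_\Omc(\Dmc_q)}$ each gadget $q_C$ admits a root-preserving homomorphism into the chase; as these agree on the shared roots they combine to an answer-preserving homomorphism $\widehat q\to\mn{ch}_\Omc(\Dmc_q)$. For the converse I would use $Q\equiv Q'$ and instead build an answer-preserving homomorphism $q'\to\mn{ch}_\Omc(\Dmc_{\widehat q})$. The key observation is that Steps~4--5 re-activate, at every original $x\in V$, exactly the concept memberships $x\in C^{\mn{ch}_\Omc(\Dmc_q)}$ that fired existential rules in the original chase; hence the oblivious chase of $\Dmc_{\widehat q}$ regenerates an isomorphic copy of every anonymous subtree sitting below an element of $\mathrm{range}(h)$ in $\mn{ch}_\Omc(\Dmc_q)$. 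This yields a homomorphism from the restriction of $\mn{ch}_\Omc(\Dmc_q)$ to $\mathrm{range}(h)$ into $\mn{ch}_\Omc(\Dmc_{\widehat q})$ that fixes the originals in $V$; composing it with $h$ gives $q'\to\mn{ch}_\Omc(\Dmc_{\widehat q})$ and thus $\widehat Q\subseteq Q'\equiv Q$.

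For Point~2 I would first dispose of the gadgets: each $q_C$ is an \ELI-concept read as a CQ, hence a tree of treewidth $1$, and it meets the rest of $\widehat q$ only in its root. Attaching trees at single cut vertices cannot raise the treewidth above $\max(\cdot,1)$, and since $q'$ always has treewidth at least $1$ by convention, it suffices to bound the treewidth of the core $q|_{V_0}$ by that of $q'$. The strategy is to exhibit an injective homomorphism (a subgraph embedding) $q|_{V_0}\hookrightarrow q'$, which immediately gives $\mathrm{tw}(q|_{V_0})\le\mathrm{tw}(q')$. Writing $\pi:\mn{var}(q)\to\mn{var}(q')$ for the contraction map with $q'=\pi(q)$, the natural candidate is $\pi$ restricted to $V_0$: it always reflects the edges of $q|_{V_0}$ into edges of $q'$, so the entire content of Point~2 concentrates in one statement.

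The main obstacle is therefore to show that $\pi$ is injective on $V_0$, i.e.\ that no two distinct original variables lying in $\mathrm{range}(h)$ are identified by the contraction; this is exactly where the maximality of $Q'$ must be invoked, playing the role that being a core plays for ordinary CQs. I expect to argue by contradiction: if $u_1\neq u_2$ in $V_0$ satisfy $\pi(u_1)=\pi(u_2)=w$, pick $v_1,v_2\in\mn{var}(q')$ with $h(v_i)=u_i$ (so $v_1\neq v_2$), and consider the canonical witness $\rho:=\iota\circ\pi$ of $Q'\subseteq Q$, where $\iota$ is the inclusion of $q'$ into $\mn{ch}_\Omc(\Dmc_{q'})$. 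By the universal property of the chase, $\rho$ extends to a homomorphism $\widehat\rho:\mn{ch}_\Omc(\Dmc_q)\to\mn{ch}_\Omc(\Dmc_{q'})$ fixing the originals of $q'$, and then $\widehat\rho\circ h$ sends both $v_1$ and $v_2$ to $\iota(w)$. Thus $\widehat\rho\circ h$ factors through the proper contraction $q''$ of $q'$ obtained by identifying $v_1$ and $v_2$, giving $q''\to\mn{ch}_\Omc(\Dmc_{q'})$ and hence $Q'\subseteq(\Omc,\Sbf_{\mn{full}},q'')$; since the reverse containment holds for every contraction, $q''$ would be a proper contraction of $q'$ equivalent to $Q$, contradicting maximality. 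Making this rigidity argument fully precise, including the bookkeeping for answer variables in the non-Boolean case, is the real work; once $\pi|_{V_0}$ is known to be injective, the subgraph embedding $q|_{V_0}\hookrightarrow q'$ and the treewidth bound follow, completing Point~2.
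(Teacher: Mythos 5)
Your proposal is correct. For Point~1 it follows the paper's own proof: $Q \subseteq \widehat Q$ via the evident homomorphism $\widehat q \rightarrow \mn{ch}_\Omc(\Dmc_q)$, and for the converse you argue, exactly as the paper does, that Step~5 forces $\mn{ch}_\Omc(\Dmc_{\widehat q})$ to regenerate the portion of $\mn{ch}_\Omc(\Dmc_q)$ covered by the range of $h$, so that $h$ itself becomes a homomorphism from $q'$ to $\mn{ch}_\Omc(\Dmc_{\widehat q})$, giving $\widehat Q \subseteq Q' \equiv Q$ (working with the restriction to the range of $h$, as you do, rather than to all of $V$ and the trees below it, is if anything the more accurate formulation). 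For Point~2, however, you take a genuinely different route at the decisive step, and it is the tighter one. The paper first shows that $h$ is injective (a non-injective $h$ would induce a proper contraction of $q'$ still equivalent to $Q$) and then asserts that, in consequence, the restriction $q|_{V_0}$ of $q$ to the variables in the range of $h$ is an induced subgraph of $q'$ via $h^{-1}$; this leaves implicit why an atom of $q$ between two variables of $V_0$ must be reflected by an atom of $q'$ between their $h$-preimages, which does not follow from injectivity of $h$ alone. You instead embed $q|_{V_0}$ into $q'$ through the contraction map $\pi$, for which edge-preservation is automatic because $\pi$ is a homomorphism onto $q'$, and you concentrate all the work in the injectivity of $\pi|_{V_0}$: composing $h$ with the chase extension $\widehat\rho$ of $\iota \circ \pi$ yields a homomorphism $q' \rightarrow \mn{ch}_\Omc(\Dmc_{q'})$ that would be non-injective if $\pi$ identified two elements of $V_0$, contradicting maximality of $Q'$. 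This is the same maximality-as-rigidity principle the paper invokes --- indeed your argument subsumes the paper's injectivity of $h$, which you never need separately --- but it discharges precisely the step that the paper's proof compresses. The loose ends you flag are genuinely routine: the consistency side conditions on the homomorphism characterization of containment are supplied by non-emptiness of $Q$ (which yields consistency of $\Dmc_q$, $\Dmc_{q'}$, and $\Dmc_{\widehat q}$ with $\Omc$), and in the non-Boolean case two answer variables can never be identified, since $h$, $\pi$, and $\widehat\rho$ all fix the answer variables, so the contraction $q''$ you construct is always a legal one.
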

It follows from Lemma~\ref{lem:rewritings} that
Theorem~\ref{thm:fulldatabase} still holds when maximum contractions
are replaced with rewritings.

\section{Fixed-Parameter Tractability}
\label{sect:fpt}

The aim of this section is to establish the following theorem.
\begin{theorem}
\label{thm:main2}
  For any recursively enumerable class of OMQs $\Qbf \subseteq
  (\ELHIbot,\text{UCQ})$, the following are
  equivalent, unless $\FPTclass=\Wclass[1]$:
  \begin{enumerate}
  \item $\sf p$-{\sc Evaluation}$(\Qbf)$ is in \FPTclass;
  \item $\Qbf \subseteq (\ELHIbot,\text{UCQ})^{\equiv}_{\text{UCQ}_k}$
    for some $k \geq 1$.
  \end{enumerate}
  If either statement is false, $\sf p$-{\sc Evaluation}$(\Qbf)$ is $\Wclass[1]$-hard.
\end{theorem}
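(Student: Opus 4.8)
The plan is to establish the equivalence of the two statements, together with the $\Wclass[1]$-hardness dichotomy, by reducing to Grohe's Theorem (Theorem~\ref{thm:grohe}) through the characterization of semantic tree-likeness developed in the preceding section. The direction $2 \Rightarrow 1$ is the easy one. Suppose $\Qbf \subseteq (\ELHIbot,\text{UCQ})^{\equiv}_{\text{UCQ}_k}$ for some fixed $k$. By Corollary~\ref{prop:wasaclaim}, each $Q \in \Qbf$ is equivalent to its UCQ$_k$-approximation $Q_a$, which is an OMQ from $(\ELHIbot,\text{UCQ}_k)$; crucially $Q_a$ preserves the ontology, so it can be computed from $Q$ by a procedure whose running time depends only on $|Q|$ (a computable function of the parameter). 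Having rewritten $Q$ into $Q_a$, I would invoke Proposition~\ref{pro:fpt-jelia}, which gives fixed-parameter tractability of $\sf p$-{\sc Evaluation}$(\ELHIbot,\text{UCQ}_k)$ with single-exponential dependence on the parameter. Composing the (fpt-computable) rewriting with this fpt algorithm yields membership of $\sf p$-{\sc Evaluation}$(\Qbf)$ in \FPTclass. One subtlety to address: since $k$ is not known a~priori from $\Qbf$ but merely guaranteed to exist, I would note that the value of $k$ is absorbed into the parameter-dependent part of the analysis, or alternatively dovetail over candidate values of $k$ together with the recursive enumeration of $\Qbf$.

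The substantive direction is the contrapositive of $1 \Rightarrow 2$ together with the final hardness claim: if statement~2 fails, then $\sf p$-{\sc Evaluation}$(\Qbf)$ is $\Wclass[1]$-hard (and hence, under $\FPTclass \neq \Wclass[1]$, not in \FPTclass). So assume that for every $k \geq 1$ there is some $Q_k \in \Qbf$ that is \emph{not} UCQ$_k$-equivalent. The strategy is to build an fpt-reduction from a known $\Wclass[1]$-hard problem, ultimately grounded in Grohe's Theorem, which already supplies $\Wclass[1]$-hardness for the evaluation of classes of plain CQs that are not of bounded tree width modulo equivalence. The bridge I would use is Theorem~\ref{thm:bestapprox}: for an OMQ $Q=(\Omc,\Sbf,q)$ and its UCQ$_k$-approximation $Q_a$, point~1 states that $Q$ and $Q_a$ agree on all $\Sbf$-databases of tree width at most $k$, and point~2 states that $Q_a$ is the best UCQ$_k$ underapproximation. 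The failure of UCQ$_k$-equivalence for $Q$ means $Q \not\equiv Q_a$, witnessed by some database $\Dmc$ with $Q(\Dmc) \neq Q_a(\Dmc)$; such a separating database necessarily has tree width exceeding $k$, and this is exactly the high-tree-width structure I want to exploit as the source of hardness.

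The main obstacle is precisely the feature the introduction flags as the source of ``serious challenges'': an OMQ evaluates over databases through the chase $\mn{ch}_\Omc(\Dmc)$, and the ontology can introduce fresh anonymous elements and additional relations, so I cannot directly transplant a hard CQ instance into the database the way one does in the ontology-free setting. To overcome this I would extract, from the family $\{Q_k\}$ that witnesses the failure of statement~2, a single (U)CQ whose homomorphic behaviour on chased databases mirrors CQ evaluation of unbounded tree width, and then simulate the Grohe/Chen hardness gadget inside the schema $\Sbf$-databases fed to the OMQ. Concretely, I expect to use the homomorphism-into-$\mn{ch}_\Omc(\Dmc)$ semantics of Lemma~\ref{lem:chaseprop}(2) together with the treewidth-preservation of the chase in Lemma~\ref{lem:chaseprop}(4) to argue that answering $Q$ on databases of tree width at most $k$ reduces to (and from) a CQ-evaluation problem of tree width governed by the query part of $Q$, so that Grohe's dichotomy applies. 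The care needed here is that the reduction must be uniform across the infinite family and must respect the fpt bound on the parameter $|Q|$; the hardest technical point will be encoding the combinatorial hardness instance (e.g.\ a clique instance) into $\Sbf$-databases while controlling both the chase-induced anonymous part and the tree width, so that the separating databases of high tree width guaranteed above translate into genuine \emph{no}-instances and the tree-width-$\leq k$ databases into faithfully answerable \emph{yes}-instances. Once this encoding is in place, the composition with Grohe's $\Wclass[1]$-hardness yields the required fpt-reduction and hence both the non-membership in \FPTclass and the explicit $\Wclass[1]$-hardness of the final statement.
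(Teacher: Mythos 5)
Your `$2 \Rightarrow 1$' direction is exactly the paper's argument (Corollary~\ref{prop:wasaclaim} plus Proposition~\ref{pro:fpt-jelia}), and your high-level plan for the hard direction — an fpt-reduction from p-CLIQUE that reuses Grohe's grid-minor database construction, seeded by a database separating $Q$ from its UCQ$_k$-approximation, which by Theorem~\ref{thm:bestapprox} must have tree width exceeding $k$ — is also the paper's. But the proposal stops precisely where the proof actually lives. The crux is to establish a two-way correspondence between homomorphisms of (contractions of) $q$ into $\mn{ch}_\Omc(\Dmc_G)$ and homomorphisms $\Dmc \rightarrow \Dmc_G$ with $h_0(h(\cdot)) = \id$, and you only say that you ``expect'' to control the chase-induced anonymous parts; no mechanism is given. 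The paper's mechanism is a multi-stage construction of $\Dmc$: (i) split $q$ into connected components and isolate a component $q_w$ that is itself not UCQ$_\ell$-equivalent; (ii) replace the separating database by one satisfying an \emph{io-minimality} condition (Lemma~\ref{lem:iomin}) — if some contraction $q_c$ of $q_w$ maps into the chase only via injective homomorphisms, then no less constrained contraction does so over any database homomorphically mapping to $\Dmc$ — which is what later forces the composed map $h_0(h(\cdot))$ to be injective; (iii) attach finite tree-width-1 unravelings $\Dmc_a$ at every constant and pass to a minimal set $\Dmc$ of purely binary facts with $\Dmc^+ \models Q_w$, so that the non-tree part $nt(q_c)$ of the relevant contraction must map injectively \emph{onto} $\mn{dom}(\Dmc)$ (the ontoness condition $(\dagger)$), from which one reads off the desired homomorphism $\Dmc \rightarrow \Dmc_G$. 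None of these ideas, nor substitutes for them, appear in your plan, and without them the claimed equivalence ``$\Dmc^{*}_G \models Q$ iff $G$ has a $k$-clique'' does not follow.

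A second, smaller gap: for the reduction to be an fpt-reduction you must actually \emph{compute} the OMQ $Q \notin (\ELHIbot,\text{UCQ})^\equiv_{\text{UCQ}_\ell}$ and the seed database $\Dmc_0$ by enumeration, which requires decidability of UCQ$_\ell$-equivalence (Theorem~\ref{thm:decupper}) and decidability of the io-minimality conditions (Lemma~\ref{lem:decHomPre}, via pseudo-tree databases and MSO on trees). You note uniformity as a concern but do not address how these searches terminate. Also, strengthening Theorem~\ref{thm:grohetechnew} with the ``such that'' clause in Point~1 (needed to handle role inclusions) is a step you would have to supply; Grohe's published construction does not give it directly.
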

%
%
We remark that Theorem~\ref{thm:main2} also covers OMQs where the
ontology is formulated in DL-Lite$^\Rmc_{\mn{horn}}$, introduced in
Section~\ref{sect:deciding}.
Below, we state the two directions of Theorem~\ref{thm:main2} as separate
theorems, starting with the much simpler `$2 \Rightarrow 1$' direction.
\begin{restatable}{theorem}{thmelifpt}
\label{thm:elifpt}
$\sf p$-{\sc Evaluation}$((\ELHIbot,\text{UCQ})^\equiv_{\text{UCQ}_k})$ is in \FPTclass, for any $k \geq 1$,
with single exponential running time in the parameter.
\end{restatable}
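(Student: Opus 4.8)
The plan is to reduce the evaluation of an OMQ $Q=(\Omc,\Sbf,q)$ from $(\ELHIbot,\text{UCQ})^\equiv_{\text{UCQ}_k}$ to the genuinely tree-like case handled by Proposition~\ref{pro:fpt-jelia}, via the UCQ$_k$-approximation $Q_a=(\Omc,\Sbf,q_a)$ of $Q$. Since $Q$ is by assumption UCQ$_k$-equivalent, Corollary~\ref{prop:wasaclaim} yields $Q\equiv Q_a$, so that $\abf\in Q(\Dmc)$ iff $\abf\in Q_a(\Dmc)$ for every $\Sbf$-database $\Dmc$. As $q_a$ is by construction a UCQ of tree width at most $k$, the OMQ $Q_a$ lies in $(\ELHIbot,\text{UCQ}_k)$, which is exactly the language to which Proposition~\ref{pro:fpt-jelia} applies. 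First I would note that $Q_a$ can be computed from $Q$ by a procedure whose cost depends on $|Q|$ alone: one enumerates all contractions of the CQs appearing in $q$ and retains those of tree width at most $k$, where each tree-width test acts on a graph of size at most $|Q|$ and hence costs only a function of $|Q|$.

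The quantitative obstacle is that $q_a$ can be exponentially larger than $q$: the number of contractions is bounded by the number of partitions of the variable set of each disjunct, hence by $2^{O(|Q|\log|Q|)}$, so feeding $q_a$ to Proposition~\ref{pro:fpt-jelia} as one combined UCQ would make the parameter of that subcall exponential in $|Q|$ and the overall running time \emph{double} exponential in the parameter. To keep the dependence single exponential, I would instead treat the disjuncts separately: enumerate the contractions $q'$ of tree width at most $k$ and, for each, run the algorithm of Proposition~\ref{pro:fpt-jelia} on the single-CQ OMQ $(\Omc,\Sbf,q')\in(\ELHIbot,\text{CQ}_k)$, accepting iff $\abf$ is an answer for at least one of them. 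Each such subcall has parameter $|q'|\le |q|\le |Q|$ and therefore runs in time $|\Dmc|^{O(1)}\cdot 2^{O(|Q|)}$.

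The correctness of this disjunct-by-disjunct strategy is where the chase is essential. For $\Dmc$ consistent with $\Omc$, point~2 of Lemma~\ref{lem:chaseprop} gives $Q_a(\Dmc)=q_a(\mn{ch}_\Omc(\Dmc))$, and since satisfaction of a UCQ over the single structure $\mn{ch}_\Omc(\Dmc)$ distributes over its disjuncts, this equals $\bigcup_{q'} q'(\mn{ch}_\Omc(\Dmc))=\bigcup_{q'}(\Omc,\Sbf,q')(\Dmc)$; for $\Dmc$ inconsistent with $\Omc$ every candidate tuple is an answer to all of these OMQs uniformly. Hence accepting exactly when some contraction yields $\abf$ decides $\abf\in Q_a(\Dmc)=Q(\Dmc)$. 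This equality would fail if one tried to distribute the disjunction over arbitrary models rather than over the universal model, which is precisely why the argument must be routed through the chase.

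Putting the pieces together, the algorithm iterates over at most $2^{O(|Q|\log|Q|)}$ contractions, spends time depending only on $|Q|$ on each tree-width test, and spends $|\Dmc|^{O(1)}\cdot 2^{O(|Q|)}$ on each surviving subcall, for a total running time of $|\Dmc|^{O(1)}\cdot 2^{O(|Q|\log|Q|)}$. This witnesses membership in \FPTclass with single exponential dependence on the parameter. I expect the main difficulty to lie not in correctness but in the bookkeeping that keeps the running time single rather than double exponential, i.e.\ in the insistence on processing the exponentially many contractions one at a time instead of assembling them into a single UCQ$_k$ before invoking Proposition~\ref{pro:fpt-jelia}.
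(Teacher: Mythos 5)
Your proof follows the paper's own argument exactly: Corollary~\ref{prop:wasaclaim} gives $Q\equiv Q_a$ with $Q_a\in(\ELHIbot,\text{UCQ}_k)$, and Proposition~\ref{pro:fpt-jelia} then yields membership in \FPTclass. Your additional care in running the algorithm of Proposition~\ref{pro:fpt-jelia} on the exponentially many tree-width-$k$ contractions one at a time, rather than on $q_a$ assembled into a single UCQ, is a correct and sensible way to make explicit the single-exponential dependence on the parameter that the paper's two-line proof leaves implicit.
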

The above follows from Corollary~\ref{prop:wasaclaim}, which states that an OMQ from $(\ELHIbot,\text{UCQ})^\equiv_{\text{UCQ}_k}$ is equivalent to its UCQ$_k$-approximation $Q_a \in (\ELHIbot,\text{UCQ}_k)$, and Proposition~\ref{pro:fpt-jelia}.



Now for the rather non-trivial `$1 \Rightarrow 2$' direction,
which we consider a main achievement of this
paper. 
\begin{restatable}{theorem}{thmgrohemainlower}
\label{thm:grohemainlower}
Let $\Qbf \subseteq (\ELHIbot,\text{UCQ})$ be a recursively enumerable
class of OMQs such that, for any $k \geq 1$, $\Qbf \not\subseteq
(\ELHIbot,\text{UCQ})^\equiv_{\text{UCQ}_k}$. Then $\sf p$-{\sc Evaluation}$(\Qbf)$ is $\Wclass[1]$-hard.
\end{restatable}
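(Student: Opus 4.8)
The plan is to reduce from Grohe's Theorem (Theorem~\ref{thm:grohe}), exploiting the fact that the class $\Qbf$ fails to be $\text{UCQ}_k$-equivalent for every $k$. By Corollary~\ref{prop:wasaclaim}, for each $k$ there is an OMQ $Q_k \in \Qbf$ that is \emph{not} equivalent to its own $\text{UCQ}_k$-approximation; by Theorem~\ref{thm:bestapprox}(1), the two agree on all $\Sbf$-databases of tree width $\leq k$, so any witness of their inequivalence must be a database of tree width strictly greater than $k$. This is the crucial leverage: from $\Qbf$ we can extract, for arbitrarily large $k$, an OMQ together with a high-tree-width database on which the query genuinely depends. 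The strategy is to show that this phenomenon lets us encode an instance of a known $\Wclass[1]$-hard problem—following Grohe, the natural target is the embedding/homomorphism problem for graphs of unbounded tree width, ultimately $p$-\textsc{clique}.

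The first step is to make the extraction uniform and fpt-computable: since $\Qbf$ is recursively enumerable, I would enumerate its members and, using decidability of $\text{UCQ}_k$-equivalence (noted after Corollary~\ref{prop:wasaclaim}, via decidability of OMQ containment), search for an $\text{index}$ $Q_k$ witnessing the failure for each parameter value. The second step is the core combinatorial argument: given an input to the hard source problem (a structure $G$ whose tree width we wish to detect), I would build a database $\Dmc_G$ of controlled tree width and select the appropriate $Q_k$ from $\Qbf$ so that $\Dmc_G \models Q_k$ iff the source instance is positive. The mechanism is to use the chase: a homomorphism from some CQ in $q_k$ into $\mn{ch}_\Omc(\Dmc_G)$ must, because $Q_k$ is not $\text{UCQ}_k$-equivalent, make essential use of the non-tree-like structure of $\Dmc_G$, and it is precisely this unavoidable high-tree-width matching that encodes the hard instance.

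The main obstacle—and what the paper itself flags as the serious difficulty—is that \textbf{the ontology can introduce fresh relations and fresh chase elements}, so the correspondence between homomorphisms and tree width is far subtler than in the ontology-free setting of Grohe's proof. In the relational case one works directly with the query's Gaifman graph; here, a homomorphic image of $q_k$ may route through anonymous individuals generated by $\Omc$, and the tree-width lower bound of the image (as exploited in the proofs of Theorems~\ref{thm:bestapprox} and~\ref{thm:fulldatabase}) must be transferred back to a structural property of $\Dmc_G$ that the reduction controls. I would rely on Lemma~\ref{lem:chaseprop}(4), that the chase preserves tree width, to argue that any witnessing structure $\Dmc_G$ must itself have large tree width; the delicate point is reversing this—arranging that large tree width of $\Dmc_G$ forces a positive answer—which is where the ``more subtle'' replacements for standard database techniques, alluded to in the introduction, are needed.

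Finally, I would assemble these pieces into a single fpt-reduction: the parameter of the source instance maps to the choice of $Q_k \in \Qbf$ (bounded by a computable function of the source parameter), and the database $\Dmc_G$ is computable in fpt time from the source. Correctness then follows from the two-way encoding established above, and Grohe's $\Wclass[1]$-hardness transfers to $\sf p$-\textsc{Evaluation}$(\Qbf)$. I expect the bulk of the technical work to lie not in the reduction's bookkeeping but in the chase-based transfer of tree width between $\Dmc_G$, the query $q_k$, and its homomorphic images, since that is exactly where the presence of the ontology departs most sharply from the classical argument.
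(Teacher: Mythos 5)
Your skeleton matches the paper's: an fpt-reduction from p-\textsc{clique} in the style of Grohe, where for the given clique size $k$ one fixes $\ell$ via the Excluded Grid Theorem, enumerates $\Qbf$ (using decidability of $\text{UCQ}_\ell$-equivalence) to find $Q \notin (\ELHIbot,\text{UCQ})^\equiv_{\text{UCQ}_\ell}$, and uses Theorem~\ref{thm:bestapprox} to conclude that any database separating $Q$ from its $\text{UCQ}_\ell$-approximation has tree width exceeding $\ell$ and hence contains the required grid minor, so that Grohe's core construction (Theorem~\ref{thm:grohetechnew}) applies. That much is right, and it is the same route the paper takes.

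However, the proposal stops exactly where the proof begins. Everything you defer to ``the chase-based transfer of tree width'' is in fact not a tree-width argument at all but a homomorphism-rigidity argument, and without it the backward direction of correctness fails. Concretely: Grohe's theorem only tells you that $G$ has a $k$-clique iff there is a homomorphism $h$ from $\Dmc$ to $\Dmc_G$ with $h_0(h(\cdot))$ the identity. From $\Dmc_G \models Q$ you only get \emph{some} homomorphism of \emph{some contraction} of (a connected component of) $q$ into $\mn{ch}_\Omc(\Dmc_G)$, and an arbitrary separating database $\Dmc$ gives you no way to turn that into the required map $\Dmc \rightarrow \Dmc_G$. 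The paper's solution is to choose $\Dmc$ with several extra properties your plan does not provide: (i) a $\preceq$-minimality condition on the contractions $q_c$ of $q_w$ that map \emph{only injectively} into $\mn{ch}_\Omc(\Dmc)$, stable under passing to homomorphic preimages (Lemmas~\ref{lem:iolem} and~\ref{lem:iomin}, whose conditions must themselves be shown decidable via pseudo-trees and MSO so the enumeration is effective); (ii) a decomposition $\Dmc^+$ of the database into a purely binary core $\Dmc$ plus attached tree-width-1 unravelings, so that the grid minor lives entirely in the binary core and injective homomorphisms of the non-tree part $nt(q_c)$ are forced to land in $\mn{dom}(\Dmc)$; and (iii) subset-minimality of $\Dmc$ with $\Dmc^+ \models Q_w$, which yields the ontoness condition $(\dagger)$ needed to show the range of the induced map is all of $\mn{dom}(\Dmc)$ and that $h_0^-$ is a homomorphism $\Dmc \rightarrow \Dmc_G$. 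Relatedly, your framing of the hard direction as ``arranging that large tree width of $\Dmc_G$ forces a positive answer'' misidentifies the target: large tree width is only used once, on the $G$-independent database $\Dmc$, to enable Grohe's construction; the hard direction is extracting a clique from $\Dmc_G^* \models Q$, which is a rigidity statement about injective homomorphisms, not a tree-width statement. There is also a decomposition into connected components ($Q_w$ with $Q_i \not\subseteq Q_j$, and the padding databases $\Dmc_i$ in $\Dmc_G^*$) that your proposal omits and that is needed to isolate the component doing the work. As written, the proposal is a correct plan with the central inventive content missing.
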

%

%

As stated in Theorem \ref{thm:grohe}, Grohe established a
characterization of those classes of Boolean CQs that can be evaluated
in \PTime combined complexity \cite{DBLP:journals/jacm/Grohe07}, a
special case of Theorem~\ref{thm:main2} where ontologies are empty and
schemas are full. The `lower bound part' of Grohe's proof is by
fpt-reduction from p-CLIQUE, a $\Wclass[1]$-hard problem. We prove
Theorem~\ref{thm:grohemainlower} by following the same approach,
carefully reusing a central construction
from~\cite{DBLP:journals/jacm/Grohe07}. For $k,\ell \geq 1$, the
\emph{$k \times \ell$-grid} is the graph with vertex set $\{ (i,j)
\mid 1\leq i \leq k \text{ and } 1 \leq j \leq \ell \}$ and an edge
between $(i,j)$ and $(i',j')$ iff $|i-i'|+|j-j'|=1$. A \emph{minor} of
an undirected graph is defined in the usual way, see,
e.g.,~\cite{DBLP:journals/jacm/Grohe07}. When $k$ is understood
from the context, we use $K$ to denote ${k \choose 2}$.
The following is what we use
from Grohe's proof.
\begin{theorem}[Grohe]
\label{thm:grohetechnew}
Given an undirected graph $G=(V, E)$, a $k > 0$, and a
connected 
$\Sbf$-database~$\Dmc$ such that $G_\Dmc$ 
contains the $k \times K$-grid as a minor, 
one can construct in time $f(k)\cdot\mn{poly}(|G|,|\Dmc|)$
an $\Sbf$-database $\Dmc_G$ such that:
  \begin{enumerate}

  \item there is a surjective homomorphism $h_0$ from $\Dmc_G$ to
    $\Dmc$ such that for every edge $\{a, b\}$ in the Gaifman graph of
    $\Dmc_G$:
    $s(a,b) \in \Dmc_G$ iff $s(h_0(a),h_0(b)) \in \Dmc$ for all roles~$s$;

  \item $G$ contains a $k$-clique iff there is a homomorphism $h$ from
    $\Dmc$ to $\Dmc_G$ such that $h_0(h(\cdot))$ is the identity.

  \end{enumerate}
\end{theorem}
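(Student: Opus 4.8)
The plan is to reconstruct the gadget underlying Grohe's $\Wclass[1]$-hardness reduction, read off the given grid minor. The idea is to make $\Dmc_G$ a faithful ``blow-up'' of $\Dmc$ in which each element is replaced by several annotated copies, the annotations recording a guessed piece of a $k$-clique of $G$; a section $h$ of the projection $h_0$ (that is, a homomorphism with $h_0\circ h=\mathrm{id}$) then corresponds exactly to a globally consistent guess, i.e.\ to a clique. I index the $k$ rows of the grid by the vertices $v_1,\dots,v_k$ of the putative clique and the $K$ columns by the two-element subsets $\{a,b\}\subseteq\{1,\dots,k\}$, equivalently by the $K$ edges to be certified. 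As a first step I would turn the minor hypothesis into explicit data, computing pairwise disjoint connected branch sets $B_{i,j}\subseteq\mn{dom}(\Dmc)$ (for $1\le i\le k$ and $1\le j\le K$) together with the edges of $G_\Dmc$ witnessing grid-adjacency; since the grid has only $O(k^3)$ vertices, a fixed pattern whose size depends only on the parameter, such a minor model can be found in time $f(k)\cdot\mn{poly}(|\Dmc|)$, and this is the source of the $f(k)$ factor.

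Next I would define $\Dmc_G$ as a cover of $\Dmc$: each $d\in\mn{dom}(\Dmc)$ is replaced by copies $(d,\sigma)$ whose annotation $\sigma$ depends on the grid region containing $d$ --- one copy per vertex of $G$ when $d$ lies in a row-$i$ branch set, one copy per edge of $G$ when $d$ lies in a column-$\{a,b\}$ branch set, and a single copy for every element outside the grid region. Facts are inherited from $\Dmc$ one by one, but a fact is lifted only between copies whose annotations are locally consistent: horizontal connections force a common vertex guess along a row, vertical connections force a common edge guess along a column, and at the crossing of row $i$ with column $\{a,b\}$ where $i\in\{a,b\}$ one imposes that $v_i$ is the corresponding endpoint of $e_{ab}$. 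Taking $h_0$ to forget the annotation, Condition~1 is then immediate: $h_0$ is surjective since every element keeps at least one copy, and it is edge-faithful since every fact of $\Dmc_G$ is a lift of an $\Dmc$-fact carrying the same role and, conversely, every $\Dmc$-fact is lifted.

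For Condition~2 the forward direction is routine: given a $k$-clique $\{v_1,\dots,v_k\}$ I set $h(d)=(d,\sigma_d)$ with $\sigma_d$ recording $v_i$ on row-$i$ regions, the clique edge $v_av_b$ on column-$\{a,b\}$ regions, and the unique copy on non-grid elements; the clique guarantees every local consistency constraint, so $h$ is a homomorphism, and $h_0\circ h=\mathrm{id}$ by construction. Conversely, any section $h$ picks for each $d$ a copy lying over it, and since the region type is determined by $d$ this respects rows and columns; connectivity of each branch set together with the propagation constraints forces a single vertex $v_i$ per row and a single edge $e_{ab}$ per column, and the crossing constraints force $e_{ab}$ to join $v_a$ and $v_b$. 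As $G$ is simple, these endpoints are distinct, so the $v_i$ are pairwise adjacent and distinct, yielding a $k$-clique.

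I expect the main obstacle to be reconciling Condition~1 with the encoding behind Condition~2. Edge-faithfulness forces $\Dmc_G$ to look locally exactly like $\Dmc$, so the clique constraints cannot be enforced by adding or deleting individual facts but only by choosing, among the copies, which lifts of each $\Dmc$-fact survive; the delicate point is to arrange the annotations and the consistent-lift restriction so that a global section exists precisely when a clique does, while correctly treating branch-set interiors, the connecting edges, and the elements lying outside the grid region. A secondary check is that each element receives at most $|E(G)|$ copies, so that the whole construction stays within the claimed $f(k)\cdot\mn{poly}(|G|,|\Dmc|)$ bound.
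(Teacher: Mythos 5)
Your overall plan — rebuild Grohe's gadget as a blow-up of $\Dmc$ in which sections of the covering map $h_0$ encode clique guesses — is indeed the skeleton of Grohe's construction, and your observation that a role-independent ``consistent lift'' relation yields the \emph{such that} clause of Condition~1 for free is a genuinely nicer route to that clause than the one the paper takes. But the paper does not reprove Grohe's construction at all: its proof treats the theorem \emph{without} the edge-faithfulness clause as a black box from \cite{DBLP:journals/jacm/Grohe07}, and its only original content is a schema transformation (replacing each maximal set of roles holding between a pair of constants by a single fresh role, applying Grohe, and reverting) to obtain that clause. You are therefore taking on the burden of reproving the cited result, and your sketch has concrete holes precisely where the work lies.

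Two of these holes are more than bookkeeping. First, your annotation scheme does not type-check: every branch set $B_{i,j}$ of the $k\times K$ grid sits at a crossing of row $i$ and column $j$, so there is no partition into ``row-$i$ branch sets'' carrying vertex guesses and ``column-$\{a,b\}$ branch sets'' carrying edge guesses; each position must carry a \emph{pair} (vertex guess, edge guess) with the compatibility constraint you mention. Second, and fatally for the construction as literally described, the backward direction of Condition~2 fails: with unordered edge guesses, the constraints at the two crossings of column $\{a,b\}$ only force $v_a\in e_{ab}$ and $v_b\in e_{ab}$, which permits $v_a=v_b$. Taking $v_1=\dots=v_k=u$ and $e_{ab}=\{u,w\}$ for every column, where $\{u,w\}$ is any single edge of $G$, yields a valid section, so your $\Dmc_G$ would certify a $k$-clique whenever $G$ has one edge. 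The standard repair is to guess \emph{ordered} edges per column and to bind row $a$ to the first and row $b$ to the second coordinate (so simplicity of $G$ forces $v_a\neq v_b$), but your distinctness argument as written presupposes exactly what must be enforced. Finally, you defer rather than resolve the treatment of facts of $\Dmc$ that are not part of the minor model (edges between grid-nonadjacent branch sets, and elements outside all branch sets), and you explicitly flag the reconciliation of Conditions~1 and~2 as an open obstacle — so the argument is incomplete at its core rather than merely terse.
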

A careful analysis of \cite{DBLP:journals/jacm/Grohe07} reveals that
the proof given there establishes Theorem~\ref{thm:grohetechnew}
without the `such that' part of Condition~(1), which we need to deal
with role inclusions. That part, however, can
be attained by first suitably switching from the original schema to a
schema that is based on sets of relations from the original one, then
applying Grohe, and then switching back.

%
To avoid overly messy notation, we first prove
Theorem~\ref{thm:grohemainlower} for the case where $\Qbf \subseteq
(\ELHIbot,\text{CQ})$ consists only of Boolean OMQs. In the appendix, we
explain how to extend the proof to the non-Boolean case, and from CQs
to UCQs.

For the fpt-reduction from p-CLIQUE, assume that $G$ is an undirected
graph and $k \geq 1$ a clique size, given as an input to the
reduction. 
By Robertson and Seymour's
Excluded Grid Theorem, there is an $\ell$ such that every graph of
tree width exceeding $\ell$ contains the $k \times K$-grid as a
minor~\cite{RS86}. By our assumption on $\Qbf$, we find an OMQ
$Q=(\Omc,\Sbf,q)$ from \Qbf such that
$Q \notin (\ELHIbot,\text{CQ})^\equiv_{\text{UCQ}_\ell}$. Since the
choice of $Q$ is independent of $G$ and since it is decidable whether
an OMQ from $(\ELHIbot,\text{CQ})$ belongs to
$(\ELHIbot,\text{CQ})^\equiv_{\text{UCQ}_\ell}$ by
Theorem~\ref{thm:decupper} in Section \ref{sect:deciding}, we can simply enumerate the OMQs
from \Qbf until we find $Q$.

Let $Q_a$ be the UCQ$_\ell$-approximation of~$Q$. Note that any
\Sbf-database \Dmc with $\Dmc \models Q$ and $\Dmc \not\models Q_a$
must be of tree width exceeding $\ell$ since $Q_a$ is equivalent
to $Q$ on $\Sbf$-databases of tree width at most~$\ell$ by
Theorem~\ref{thm:bestapprox}. Thus \Dmc
contains the $k \times K$-grid as a minor, which enables the
application of Theorem~\ref{thm:grohetechnew}. We could find such \Dmc
by brute force enumeration 
and then hope to show that $\Dmc_G \models Q$ iff there is a
homomorphism $h$ from $\Dmc$ to $\Dmc_G$ such that $h_0(h(\cdot))$ is
the identity and thus, by Theorem~\ref{thm:grohetechnew}, iff $G$
contains a $k$-clique. This would in fact be easy if \Omc was empty
and \Sbf was full since then we could assume \Dmc to be isomorphic to
$q$, but neither of this is guaranteed. As we show in the following,
however, it is possible to construct \Dmc in a very careful way so
that its relational structure is sufficiently tightly linked to $q$
to enable the reduction.

\subsection{The Construction of the Database}

Injective homomorphisms are an important ingredient to identifying
\Dmc since they link a CQ much closer to a database than non-injective
homomorphisms. In fact, a main idea is to construct \Dmc such that for
some contraction $q_c$ of $q$: if $q_c$ maps to $\mn{ch}_\Omc(\Dmc_G)$
but only in terms of \emph{injective} homomorphisms, then the
same is true for $q_c$ and $\mn{ch}_\Omc(\Dmc)$.

For a database \Dmc and a Boolean CQ~$p$, 
we write $\Dmc \models^{io} p$ if $\Dmc \models p$ and all
homomorphisms $h$ from $p$ to $\Dmc$ are injective. 
 Here, `io' stands for `injectively only'.
We start with a simple observation.
\begin{restatable}{lemma}{lemiolem}
\label{lem:iolem}
If $\Dmc
\models p$, for \Dmc  a potentially infinite database and $p$ a CQ, then $\Dmc
\models^{io} p_c$ for some contraction $p_c$ of $p$.
\end{restatable}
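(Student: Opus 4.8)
The plan is to take $p_c$ to be a \emph{maximally contracted} version of $p$ that still maps homomorphically into $\Dmc$. Concretely, among all contractions $p'$ of $p$ with $\Dmc \models p'$, I would choose one, $p_c$, whose variable set $\mn{var}(p_c)$ has minimum cardinality. This family of candidate contractions is non-empty, since $p$ is (trivially) a contraction of itself and $\Dmc \models p$ holds by hypothesis; it is finite because $p$ has only finitely many variables, so the minimum is attained. Note that here $p$ is Boolean, so there are no answer variables and every identification of variables counts as a legal contraction.

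By the choice of $p_c$ we immediately have $\Dmc \models p_c$, so the only thing left is to show that every homomorphism from $p_c$ to $\Dmc$ is injective. I would argue this by contradiction. Suppose some homomorphism $g$ from $p_c$ to $\Dmc$ identifies two distinct variables $x,y \in \mn{var}(p_c)$, i.e.\ $g(x)=g(y)$. Let $p_c'$ be the CQ obtained from $p_c$ by identifying $x$ with $y$. Then $p_c'$ is again a contraction of $p$, and $g$ factors through this identification, inducing a well-defined homomorphism from $p_c'$ to $\Dmc$; hence $\Dmc \models p_c'$. But $|\mn{var}(p_c')| = |\mn{var}(p_c)| - 1$, which contradicts the minimality used to select $p_c$. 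Therefore every such $g$ is injective, which is exactly $\Dmc \models^{io} p_c$.

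The only points that genuinely need care — and thus constitute the main (modest) obstacle — are two closure properties invoked above. The first is that a contraction of a contraction of $p$ is itself a contraction of $p$: this holds because ``obtained by identifying variables'' corresponds to a surjection on variable sets, and a composition of such surjections is again one and commutes with the induced translation of atoms, so the relation is transitive. The second is that a non-injective homomorphism $g$ really does factor through a proper contraction; this is just the observation that quotienting $p_c$ by the kernel equivalence of $g$ (or, as above, by a single identified pair) yields a CQ with strictly fewer variables into which $g$ descends. Once these two facts are in place, the minimality argument closes the proof without further bookkeeping.
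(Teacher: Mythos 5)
Your proof is correct and is essentially the paper's argument in a different guise: the paper reaches the same $p_c$ by iteratively contracting along non-injective homomorphisms until the process terminates (which it must, since the number of variables strictly decreases), whereas you select a variable-minimal contraction up front and derive a contradiction from any non-injective homomorphism. Both rest on exactly the same two closure facts you isolate, so no further comment is needed.
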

Let $q_1,\dots,q_n$ be the maximal connected components of $q$. For
$1 \leq i \leq n$, let $Q_i=(\Omc,\Sbf,q_i)$. We can assume
w.l.o.g.\ that $Q_i \not\subseteq Q_j$ for all $i \neq j$ because
if this is not the case, then we can drop the component $q_j$ from $q$
and the resulting OMQ is equivalent to $Q$. Since
$Q \notin (\ELHIbot,\text{CQ})^\equiv_{\text{UCQ}_\ell}$, it is clear
that $Q_w \notin (\ELHIbot,\text{CQ})^\equiv_{\text{UCQ}_\ell}$ for
some $w$, with $1\leq i\leq n$. 
%
From now on, we use
$Q_a$ to denote the UCQ$_\ell$-approximation of $Q_w$ (rather than
of $Q$), which we also compute as part of the reduction.

To achieve the desiderata for \Dmc mentioned above, we next identify
an \Sbf-database \Dmc such that $\Dmc \models Q_w$ and $\Dmc
\not\models Q_a$ and, additionally, if $\mn{ch}_\Omc(\Dmc)
\models^{io} q_c$ for a contraction $q_c$ of $q_w$, then there is no
`less constrained' contraction that does the same, even in databases
that homomorphically map to \Dmc. Here a contraction $q_c$ of $q_w$ is
\emph{less constrained} than a contraction $q'_c$ of $q_w$, written
$q_c \prec q_c'$, when $q'_c$ is a proper contraction of $q_c$. We
write $q_c \preceq q'_c$ when $q_c \prec q'_c$ or $q_c = q'_c$.

\begin{restatable}{lemma}{lemiomin}
\label{lem:iomin}
There is an $\Sbf$-database \Dmc such that the following conditions
are satisfied:
  \begin{enumerate}

  \item $\Dmc \models Q_w$ and $\Dmc \not \models Q_a$;


  \item if $\mn{ch}_\Omc(\Dmc) \models^{io} q_c$, for $q_c$ a contraction
    of $q_w$, then there is no $\Sbf$-database
    $\Dmc'$ and  contraction $q'_c$ of $q_w$ such that $\Dmc'
    \rightarrow \Dmc$, $\mn{ch}_\Omc(\Dmc') \models^{io}
    q'_c$, and \mbox{$q'_c \prec q_c$}.

  \end{enumerate}
\end{restatable}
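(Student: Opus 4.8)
The plan is to obtain $\Dmc$ as an extremal (inclusion-minimal) element of a suitable finite family, the key device being the observation that injective-only satisfaction of a query forbids satisfaction of every one of its \emph{proper} contractions. First I would record that Condition~1 is satisfiable at all: since $Q_w \notin (\ELHIbot,\text{CQ})^\equiv_{\text{UCQ}_\ell}$, Corollary~\ref{prop:wasaclaim} gives $Q_w \not\equiv Q_a$, and as $Q_a \subseteq Q_w$ always holds, there is an $\Sbf$-database $\Dmc$ with $\Dmc\models Q_w$ and $\Dmc\not\models Q_a$. Let $\mathcal{D}$ be the (nonempty) class of all $\Sbf$-databases satisfying Condition~1. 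Next I would check that $\mathcal{D}$ is closed under the relevant preimages: if $\Dmc'\rightarrow\Dmc$ with $\Dmc\in\mathcal{D}$ and $\Dmc'\models Q_w$, then $\Dmc'\in\mathcal{D}$. Indeed, consistency of $\Dmc$ transfers to $\Dmc'$ along the homomorphism $\mn{ch}_\Omc(\Dmc')\rightarrow\mn{ch}_\Omc(\Dmc)$ (cf.\ Lemma~\ref{lem:chaseprop}), and on consistent databases OMQ answers are preserved forward along homomorphisms, so $\Dmc\not\models Q_a$ yields $\Dmc'\not\models Q_a$. Crucially, a preimage that injectively-only satisfies some contraction $q'_c$ of $q_w$ automatically satisfies $Q_w$, because $q_w\rightarrow q'_c\rightarrow\mn{ch}_\Omc(\Dmc')$; hence every database witnessing the set $\mathcal{S}(\Dmc)$ below lies in $\mathcal{D}$.

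For $\Dmc\in\mathcal{D}$ I write $\mathcal{S}(\Dmc)$ for the set of contractions $q'_c$ of $q_w$ such that $\mn{ch}_\Omc(\Dmc'')\models^{io} q'_c$ for some $\Dmc''\rightarrow\Dmc$; this is exactly the family of contractions quantified over in Condition~2. Since $q_w$ has only finitely many contractions up to isomorphism, each $\mathcal{S}(\Dmc)$ is a subset of one fixed finite set, and $\Dmc'\rightarrow\Dmc$ implies $\mathcal{S}(\Dmc')\subseteq\mathcal{S}(\Dmc)$. I would therefore pick $\Dmc\in\mathcal{D}$ with $\mathcal{S}(\Dmc)$ minimal with respect to inclusion; such a $\Dmc$ exists as only finitely many values of $\mathcal{S}$ are possible, and Condition~1 holds for it by construction.

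The heart of the argument is a single observation: if $I\models^{io} q'_c$ and $q_c$ is a proper contraction of $q'_c$ (that is, $q'_c\prec q_c$), then $I\not\models q_c$, since any homomorphism $q_c\rightarrow I$ would compose with the identifying surjection $q'_c\twoheadrightarrow q_c$ to give a non-injective homomorphism $q'_c\rightarrow I$, contradicting $I\models^{io} q'_c$. With this in hand I would verify Condition~2 for the chosen $\Dmc$. Suppose it fails: $\mn{ch}_\Omc(\Dmc)\models^{io} q_c$, yet some $\Dmc'\rightarrow\Dmc$ has $\mn{ch}_\Omc(\Dmc')\models^{io} q'_c$ with $q'_c\prec q_c$; by the closure property $\Dmc'\in\mathcal{D}$. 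Applying the observation to $I=\mn{ch}_\Omc(\Dmc')$ gives $\mn{ch}_\Omc(\Dmc')\not\models q_c$, and since every $\Dmc''\rightarrow\Dmc'$ satisfies $\mn{ch}_\Omc(\Dmc'')\rightarrow\mn{ch}_\Omc(\Dmc')$, this forces $q_c\notin\mathcal{S}(\Dmc')$. But $q_c\in\mathcal{S}(\Dmc)$, witnessed by $\Dmc$ itself, so $\mathcal{S}(\Dmc')\subsetneq\mathcal{S}(\Dmc)$, contradicting minimality.

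I expect the main obstacle — and the real content — to be precisely this observation together with its use. A naive minimisation diverges, because simply realising a less constrained $q'_c$ by passing to a preimage need not remove $q_c$ from the reachable contractions, so the bad configuration can reproduce itself indefinitely. What breaks the loop is the realisation that once $q'_c$ is injectively-only satisfied, its proper contraction $q_c$ is \emph{not satisfied at all} in that chase, hence disappears from $\mathcal{S}$; this turns non-minimality of an injective-only contraction into a strict decrease of the finite set $\mathcal{S}$, making the extremal argument terminate.
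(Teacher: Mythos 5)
Your proof is correct and follows essentially the same route as the paper's: there, $\Dmc$ is obtained by iteratively replacing it with the offending $\Dmc'$, and termination is proved by showing that the set of injectively-only satisfied contractions strictly decreases under a lifted $\preceq$-order, where the strict decrease rests on exactly your key observation (composing a homomorphism from a proper contraction with the identifying surjection yields a non-injective homomorphism, contradicting $\models^{io}$). Your repackaging as a one-shot extremal argument --- minimizing the preimage-closed set $\mathcal{S}(\Dmc)$ under plain inclusion instead of running the iteration --- is sound and if anything slightly cleaner, since inclusion on subsets of the finite set of contractions of $q_w$ is trivially well-founded, whereas the paper must additionally invoke Lemma~\ref{lem:iolem} to establish monotonicity of its measure.
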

\begin{proof}
Since $Q \notin
  (\EL,\text{CQ})^\equiv_{\text{UCQ}_\ell}$ and $Q_a \subseteq Q_w$, we
  find an $\Sbf$-database $\Dmc_0$  such that $\Dmc_0 \models Q_w$, but $\Dmc_0 \not\models
  Q_a$.
%
  The database $\Dmc_0$  does not necessarily satisfy
  Condition~2, though. We thus replace it by a more suitable database
  \Dmc, which we identify in an iterative
  process. Start with $\Dmc = \Dmc_0$ and as
  long as there are an $\Sbf$-database $\Dmc'$ and contractions $q_c,q'_c$ of $q$ such that
  $\mn{ch}_\Omc(\Dmc) \models^{io} q_c$, $\Dmc' \rightarrow
  \Dmc$, $\mn{ch}_\Omc(\Dmc') \models^{io} q'_c$, and $q_c$ is
  a proper contraction of $q'_c$, replace $\Dmc$ by $\Dmc'$.

  It is clear that the resulting \Dmc satisfies
  Condition~(2). Condition~(1) is satisfied as well: we have
  $\mn{ch}_\Omc(\Dmc) \models^{io} q_c$ for some contraction
  $q_c$ of $q_w$, thus $\Dmc \models Q_w$; further, $\Dmc
  \rightarrow \Dmc_0$ and $\Dmc_0 \not \models Q_a$
  yield $\Dmc \not\models Q_a$.  We prove in the appendix that
  this iterative process terminates.
\end{proof}
The conditions in Lemma \ref{lem:iomin} are
decidable.  It can be shown that it suffices to consider
databases $\Dmc'$ of a certain `pseudo tree shape' (c.f.\
\cite{DBLP:conf/ijcai/Bienvenu0LW16}) which enables a
reduction to
satisfiability of {\em monadic second-order logic} (MSO) sentences on
trees.

\begin{restatable}{lemma}{lemdecHomPre}
\label{lem:decHomPre}
Given an $\Sbf$-database \Dmc and an OMQ $Q$ from
$(\ELHIbot,\text{CQ})$, it is decidable whether Conditions~1 and~2
from Lemma~\ref{lem:iomin} hold.
\end{restatable}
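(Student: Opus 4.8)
The plan is to decompose Conditions~1 and~2 into finitely many subtasks, each of which is either a standard OMQ evaluation or a single genuinely infinitary existence test, and then to attack the latter by restricting to pseudo-tree-shaped witnesses and invoking decidability of MSO over trees. First I would dispose of the routine parts. Condition~1 asks whether $\Dmc \models Q$ and $\Dmc \not\models Q_a$, where $Q_a$ is the UCQ$_\ell$-approximation; both are instances of OMQ evaluation in $(\ELHIbot,\text{UCQ})$ and hence decidable, since by Lemma~\ref{lem:chaseprop} they reduce to testing whether the relevant (U)CQ maps into $\mn{ch}_\Omc(\Dmc)$. I would then show that the predicate $\mn{ch}_\Omc(\Dmc) \models^{io} p$ is decidable for the fixed finite database \Dmc and a CQ $p$: a homomorphism from $p$ into any structure is non-injective precisely when it factors through a proper contraction of $p$, so $\mn{ch}_\Omc(\Dmc) \models^{io} p$ holds iff $\mn{ch}_\Omc(\Dmc) \models p$ while $\mn{ch}_\Omc(\Dmc) \not\models p'$ for every proper contraction $p'$ of $p$. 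As $p$ has only finitely many contractions and each test is again OMQ evaluation, this is decidable.

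Next I would reduce Condition~2 to a bounded Boolean combination of existence tests. Since $q$ is finite it has finitely many contractions, so the universally quantified $q_c$ and the existentially quantified $q'_c$ in Condition~2 both range over finite sets, and for each contraction $q_c$ I can decide $\mn{ch}_\Omc(\Dmc) \models^{io} q_c$ by the previous paragraph. Consequently Condition~2 fails iff there are contractions $q'_c \prec q_c$ of $q$ with $\mn{ch}_\Omc(\Dmc) \models^{io} q_c$ and an $\Sbf$-database $\Dmc'$ such that $\Dmc' \rightarrow \Dmc$ and $\mn{ch}_\Omc(\Dmc') \models^{io} q'_c$. Thus, after the finitely many checks above, everything hinges on the single existence question, for a fixed contraction $q'_c$: does there exist an $\Sbf$-database $\Dmc'$ with $\Dmc' \rightarrow \Dmc$ and $\mn{ch}_\Omc(\Dmc') \models^{io} q'_c$?

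To decide this existence statement I would argue that it suffices to quantify over $\Dmc'$ of a bounded \emph{pseudo-tree shape} --- a core part of size bounded by $|q'_c|$ with tree-shaped attachments of bounded branching and width, in the spirit of the counterexample databases used for OMQ containment in \cite{DBLP:conf/ijcai/Bienvenu0LW16}. Granting this, I would encode such candidate databases as labeled trees over a finite alphabet and express all requirements in MSO. The property $\Dmc' \rightarrow \Dmc$ is a homomorphism into the \emph{fixed finite} structure \Dmc, hence definable by a consistent labeling of the tree nodes with elements of $\mn{dom}(\Dmc)$ that respects all facts; the chase $\mn{ch}_\Omc(\Dmc')$ is MSO-interpretable within the encoding, since it only appends tree-shaped material prescribed by \Omc and preserves the pseudo-tree shape (cf.\ Lemma~\ref{lem:chaseprop}(4)); and $\mn{ch}_\Omc(\Dmc') \models^{io} q'_c$ is expressed as ``some homomorphic image of $q'_c$ occurs, but no homomorphic image of any proper contraction of $q'_c$ does'', each clause being a bounded pattern-matching statement. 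The existence of a witness $\Dmc'$ then becomes satisfiability of an MSO sentence over trees, which is decidable.

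The step I expect to be the main obstacle is the \emph{model-surgery} argument justifying the restriction to pseudo-tree-shaped $\Dmc'$. Unraveling a putative witness must preserve both constraints at once: the homomorphism $\Dmc' \rightarrow \Dmc$ is the benign direction, as any unraveling maps homomorphically onto $\Dmc'$ and composing keeps the image inside \Dmc; the delicate direction is the injectively-only property $\mn{ch}_\Omc(\Dmc') \models^{io} q'_c$, because unraveling could in principle introduce new --- possibly collapsing --- homomorphisms of $q'_c$ or of a proper contraction into the chase and thereby destroy the ``io'' status. The remedy is to unravel only to a depth and branching controlled by $|q'_c|$, so that every match of $q'_c$ and of each proper contraction is preserved and no spurious match is created; carrying out this locality analysis, and checking that the resulting class of witnesses is captured by an MSO sentence (equivalently, a tree automaton) of bounded size, is the technical heart of the proof.
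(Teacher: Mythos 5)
Your proposal follows essentially the same route as the paper's proof: Condition~1 and the finitely many $\models^{io}$ tests reduce to OMQ evaluation over the (finitely many) contractions, and the one remaining existential quantification over $\Dmc'$ is handled by restricting to pseudo-tree-shaped witnesses and deciding an MSO sentence over trees, exactly as the paper does (also leaning on the constructions of Bienvenu et al.). The only small inaccuracy is in locating the delicate step of the surgery: since the pseudo-tree database $\Dmc''$ still maps homomorphically to $\Dmc'$, any non-injective match of $q'_c$ in $\mn{ch}_\Omc(\Dmc'')$ would transfer to a non-injective match in $\mn{ch}_\Omc(\Dmc')$, so the ``only injectively'' half of $\models^{io}$ is preserved for free; what actually needs care --- and what the paper arranges by taking the core part of the pseudo tree to be the range of the injective homomorphism and attaching 1-unravelings at its constants --- is preserving the \emph{existence} of a match of $q'_c$ at all.
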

Let $\Dmc_0$ be the \Sbf-database from Lemma~\ref{lem:iomin}. Since
the properties of $\Dmc_0$ are independent of $G$, and due to
Lemma~\ref{lem:decHomPre}, we can find $\Dmc_0$ by
enumeration. However, $\Dmc_0$ is still not as required and needs to
be manipulated further to make it suitable for the reduction. We start
with some preliminaries about unravelings.


%

For each $a \in \Ind(\Dmc_0)$, let $\Dmc_0^{a}$ be the unraveling of $\Dmc_0$ into a database of tree width~1 starting at $a$,
defined in the appendix.  The proof of the following lemma
is omitted.
%
\begin{lemma}
\label{lem:unravcon}
$\Dmc_0 \models (\Omc,\Sbf,p)(a)$ iff $\Dmc_0^{a} \models
(\Omc,\Sbf,p)(a)$ for all unary CQs $p$ with $\Dmc_p$ of tree width
one.\footnote{Note that this is a stricter requirement than $p$ being
  of tree width~1 because answer variables are omitted from tree
  decompositions.}
%
%
%
%
%
\end{lemma}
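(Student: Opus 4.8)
The plan is to invoke Lemma~\ref{lem:chaseprop}(2) to rephrase the two sides of the equivalence as: there is a homomorphism from $p$ to $\mn{ch}_\Omc(\Dmc_0)$ sending the answer variable $x$ to $a$, respectively a homomorphism from $p$ to $\mn{ch}_\Omc(\Dmc_0^{a})$ sending $x$ to the root copy $\hat a$ of $a$; I then transport such homomorphisms across the \emph{folding homomorphism} $g:\Dmc_0^{a}\rightarrow\Dmc_0$ that comes with the unraveling, which satisfies $g(\hat a)=a$. The key structural fact, immediate from the definition of the tree-width-one unraveling, is that $g$ is a \emph{bisimulation}: it preserves all concept and role facts, and conversely for every $u\in\mn{dom}(\Dmc_0^{a})$ and every fact $r(g(u),c)\in\Dmc_0$ there is a $u'$ with $r(u,u')\in\Dmc_0^{a}$ and $g(u')=c$ (and symmetrically for $r^-$). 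I may assume $\Dmc_0$ is connected, so that $g$ is onto (this holds in the reduction; otherwise one restricts to the component of $a$). Since $\ELIbot$-concepts are invariant under bisimulation and inconsistency is witnessed, by Lemma~\ref{lem:chaseprop}(1), by membership in some $C$ with $C\sqsubseteq\bot\in\Omc$, the consistency of $\Dmc_0$ and of $\Dmc_0^{a}$ coincide; I therefore assume both are consistent (inconsistency makes every candidate an answer on both sides).

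For the direction from $\Dmc_0^{a}$ to $\Dmc_0$, which needs no assumption on $p$, I would appeal to functoriality of the oblivious chase: $g$ extends to a homomorphism $\hat g:\mn{ch}_\Omc(\Dmc_0^{a})\rightarrow\mn{ch}_\Omc(\Dmc_0)$ with $\hat g(\hat a)=a$, built by a straightforward induction on chase steps (a rule firing at $u$ also fires at $\hat g(u)$ because $\mn{ch}_\Omc(\Dmc_0)$ is a model of $\Omc$, supplying a target for the newly attached tree). Composing $\hat g$ with a homomorphism $p\rightarrow\mn{ch}_\Omc(\Dmc_0^{a})$ sending $x$ to $\hat a$ yields one into $\mn{ch}_\Omc(\Dmc_0)$ sending $x$ to $a$, as required.

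The substantial direction is from $\Dmc_0$ to $\Dmc_0^{a}$, and it is here that the hypothesis that $\Dmc_p$ has tree width one is used. Given a homomorphism $h:p\rightarrow\mn{ch}_\Omc(\Dmc_0)$ with $h(x)=a$, I would \emph{lift} it to $\hat h:p\rightarrow\mn{ch}_\Omc(\Dmc_0^{a})$ with $\hat h(x)=\hat a$ by recursion along the forest $\Dmc_p$ rooted at $x$. Two observations drive the recursion. First, because $\ELIbot$-concepts are bisimulation-invariant and the oblivious chase fires precisely according to such concept memberships, $g$ lifts to a bisimulation between $\mn{ch}_\Omc(\Dmc_0^{a})$ and $\mn{ch}_\Omc(\Dmc_0)$, and the anonymous tree-shaped material the chase attaches below bisimilar elements is in fact isomorphic. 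Second, the back property then lets me match each tree edge $r(y,z)$ of $p$: having placed $y$ at some $\hat h(y)$ bisimilar to $h(y)$, the successor $h(z)$ has an $r$-successor counterpart below $\hat h(y)$ that is again bisimilar to $h(z)$, supplied either by an original neighbour of $\hat h(y)$ in the unraveling or by the isomorphic anonymous subtree. Since $\Dmc_p$ is acyclic, distinct branches are never forced to meet, so the locally made choices are globally consistent; a disconnected Boolean component of $p$ is placed the same way, starting from any bisimilar partner, which exists because $g$ is onto.

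I expect the crux to be this last direction, and within it the bookkeeping that the chase ``forests'' hanging below corresponding elements of the two chases are genuinely \emph{isomorphic} rather than merely homomorphically comparable. This is exactly what allows a tree-shaped $p$ to be routed into the unraveled chase without ever being forced to revisit a node, and it is precisely the step at which acyclicity of $\Dmc_p$ (equivalently, tree width one) is indispensable: for a cyclic $p$ the unraveling would split a cycle across distinct branches and the lift would break.
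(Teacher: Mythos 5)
Your proposal is sound, but it takes a genuinely different route from the one the paper intends. The paper explicitly omits the proof of Lemma~\ref{lem:unravcon}; the template it has in mind is the proof of Point~3 of Lemma~\ref{lem:kunravprop} in the appendix. There, the easy direction is, as in your proposal, the projection homomorphism $\pi$ combined with Lemma~\ref{lem:hom} (your ``functoriality of the oblivious chase'' is exactly the paper's ``replicating any application of a chase rule''). For the hard direction, however, the paper does not lift $h$ along a bisimulation: it first \emph{contracts} $p$ by identifying the two named endpoints of every maximal path of $h$-anonymous variables and dropping the anonymous variables, obtaining a query $p'$ whose image lies entirely in $\mn{dom}(\Dmc_0)$ and whose Gaifman graph is a minor of that of $p$ (hence still of tree width one), then re-attaches tree-shaped concept queries $q_C$ for the concepts derived at each surviving variable, and finally routes the resulting $p''$ into the unraveling bag by bag along a tree decomposition, recovering the anonymous parts at the very end via the chase. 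Your argument instead keeps $p$ intact and threads it directly into $\mn{ch}_\Omc(\Dmc_0^{a})$ by a back-and-forth recursion on the forest $\Dmc_p$ rooted at $x$, resting on the observation that $\pi$ is a two-way bounded morphism and that this lifts to the chases (so that the anonymous forests below corresponding elements agree). Both arguments use the stricter hypothesis in the same place -- the answer variable must participate in the forest so that the recursion (resp.\ the tree decomposition with root bag $\{a\}$) can be anchored at $a$ -- and both correctly isolate acyclicity of $\Dmc_p$ as what prevents two branches from being forced to reconverge in the unraveling. What your approach buys is conceptual economy and no query surgery; what the paper's buys is that the identical argument scales to tree width $k$ and to the ``unraveling up to $\abf$'' variant of Lemma~\ref{lem:kunravprop}, where a bisimulation-style back property is no longer available. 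The one point you should not leave as ``bookkeeping'' is the inductive claim that the bounded-morphism property of $\pi$ propagates through the oblivious chase (including the $\exists r^-$ cases, where concept membership depends on predecessors); that induction is the actual content of the hard direction in your formulation, but it does go through.
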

%
%
Of course, $\Dmc^{a}_0$ can be infinite. By compactness, however, there is
a finite $\Dmc_a \subseteq \Dmc^{a}_0$ such that Lemma~\ref{lem:unravcon}
is satisfied for all (finitely many) $p$ that use only symbols from \Omc and $q$ and
satisfy $|p| \leq \max\{|\Omc|,|q|\}$. For brevity, we say that
$\Dmc_a$ \emph{satisfies Lemma~\ref{lem:unravcon} for all relevant
  CQs}.  We can find $\Dmc_a$ by constructing $\Dmc^{a}_0$ level by level
and deciding after each such extension whether we have found the
desired database, by checking the condition in
Lemma~\ref{lem:unravcon} for all relevant CQs.


Now for the further manipulation of $\Dmc_0$. We show that $\Dmc_0$
can be replaced with a database $\Dmc^+$ that is more closely linked
to $q_w$ than $\Dmc_0$ is.  For every $\Dmc \subseteq \Dmc_0$, let
$\Dmc^+$ denote the result of starting with~\Dmc and then disjointly
adding a copy of $\Dmc_a$, identifying the root of this copy with~$a$,
for each $a \in \mn{dom}(\Dmc)$.  For what follows, choose $\Dmc
\subseteq \Dmc_0$ minimal such that $\Dmc^+ \models Q_w$.  Note that
$\Dmc$ contains only \emph{binary facts} of the form $r(a,b)$ with $a
\neq b$, but no \emph{unary facts} of the form $A(a)$ or $r(a,a)$
since the latter can be made part of $\Dmc_a$. We can find $\Dmc$ by
considering all subsets of~$\Dmc_0$.
\begin{restatable}{lemma}{lemAprop}\label{lem:Aprop}~\\[-4mm]
  \begin{enumerate}

  \item $\Dmc^+$ satisfies
    Conditions~1 and~2 of Lemma~\ref{lem:iomin};

  \item $\Dmc$ has tree width exceeding $\ell$.


  \end{enumerate}
\end{restatable}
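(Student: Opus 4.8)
We need to prove two facts about $\Dmc^+$, where $\Dmc^+$ is obtained from a minimal $\Dmc \subseteq \Dmc_0$ by attaching a copy of the unraveling $\Dmc_a$ at each constant $a$, chosen so that $\Dmc^+ \models Q_w$ but no proper subset of $\Dmc$ has this property. Point 1 asks that $\Dmc^+$ inherit Conditions 1 and 2 of Lemma~\ref{lem:iomin}; Point 2 asks that the ``skeleton'' $\Dmc$ itself still have tree width exceeding $\ell$, so that Theorem~\ref{thm:grohetechnew} applies to $\Dmc$.

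**Plan for Point 1.** The plan is to exploit that $\Dmc^+$ is homomorphically sandwiched between $\Dmc_0$ and $\Dmc_0$ itself. First I would show $\Dmc^+ \to \Dmc_0$: the identity on $\dom(\Dmc)$ together with the fact that each attached copy of $\Dmc_a$ is, by construction, a subset of the unraveling $\Dmc_0^a$ (and unravelings map homomorphically back to $\Dmc_0$) yields a homomorphism $\Dmc^+ \to \Dmc_0$. Combined with $\Dmc_0 \not\models Q_a$, this gives $\Dmc^+ \not\models Q_a$. Since we chose $\Dmc$ so that $\Dmc^+ \models Q_w$, Condition~1 holds. For Condition~2, the key point is that $\Dmc^+ \to \Dmc_0$ means any database $\Dmc'$ with $\Dmc' \to \Dmc^+$ also satisfies $\Dmc' \to \Dmc_0$; hence any witness to a violation of Condition~2 for $\Dmc^+$ would transfer to $\Dmc_0$, contradicting that $\Dmc_0$ already satisfies Condition~2 by Lemma~\ref{lem:iomin}. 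The subtle part is the ``io'' premise: I must check that if $\mn{ch}_\Omc(\Dmc^+) \models^{io} q_c$ then also $\mn{ch}_\Omc(\Dmc_0) \models^{io} q_c$, which should follow because $\Dmc_a$ was chosen (via Lemma~\ref{lem:unravcon}, for all relevant CQs) precisely so that the tree-width-one attachments do not create genuinely new homomorphic behavior of small CQs beyond what $\Dmc_0$ already exhibits.

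**Plan for Point 2.** Here I would argue by contradiction from the minimality of $\Dmc$. Suppose $\Dmc$ has tree width at most $\ell$. Then $\Dmc^+$, obtained by gluing tree-width-one databases $\Dmc_a$ onto the constants of $\Dmc$, also has tree width at most $\ell$: attaching structures of tree width one at single vertices does not raise tree width above $\max(\ell,1)=\ell$ (each $\Dmc_a$ can be hung off the bag containing $a$ in a decomposition of $\Dmc$). But Theorem~\ref{thm:bestapprox}(1) guarantees $Q_w$ and its $\mathrm{UCQ}_\ell$-approximation $Q_a$ agree on all $\Sbf$-databases of tree width at most $\ell$; since $\Dmc^+ \models Q_w$ we would get $\Dmc^+ \models Q_a$, contradicting Condition~1 just established. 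Therefore $\Dmc$ has tree width exceeding $\ell$.

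**Main obstacle.** The routine parts are the tree-width bookkeeping in Point~2 and the homomorphism $\Dmc^+ \to \Dmc_0$. The genuinely delicate step is transferring the injectivity-only condition ($\models^{io}$) through the manipulations in Point~1: a homomorphism $\Dmc^+ \to \Dmc_0$ preserves satisfaction of CQs but can \emph{create} non-injective homomorphisms or destroy injective ones, so one cannot naively push $\models^{io}$ along homomorphisms. The crux is to use the careful choice of $\Dmc_a$ via Lemma~\ref{lem:unravcon} (applied to all relevant CQs, in particular the contractions $q_c$ of $q_w$ which have the right size bound) to ensure that the pseudo-tree attachments reflect exactly the io-behavior of $\Dmc_0$ on the relevant small CQs, so that any violating witness $(\Dmc', q'_c)$ for $\Dmc^+$ yields a violating witness for $\Dmc_0$. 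Making this reflection precise — in particular arguing that the copies of $\Dmc_a$ contribute no spurious io-satisfactions of contractions of $q_w$ — is where I expect to spend the real effort, and it is where the appendix-deferred argument for termination of the iterative process in Lemma~\ref{lem:iomin} and the relevant-CQ bound on $\Dmc_a$ will be invoked.
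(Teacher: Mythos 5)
Your treatment of Condition~1 of Lemma~\ref{lem:iomin} (via the homomorphism $\Dmc^+ \rightarrow \Dmc_0$) and of Point~2 (tree width of $\Dmc^+$ is bounded by that of $\Dmc$, then apply Theorem~\ref{thm:bestapprox}(1) and Condition~1) coincides with the paper's proof and is fine. The problem lies in how you propose to establish Condition~2.

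You correctly isolate the crux as the transfer of the $\models^{io}$ premise from $\Dmc^+$ to $\Dmc_0$, but the mechanism you invoke --- the choice of $\Dmc_a$ via Lemma~\ref{lem:unravcon} for all relevant CQs --- cannot deliver it. That lemma only controls unary CQs whose bodies have tree width one, whereas the contractions $q_c$ of $q_w$ at issue here are precisely the ones of large tree width. Worse, a non-injective homomorphism from $q_c$ into $\mn{ch}_\Omc(\Dmc_0)$ may use facts of $\Dmc_0$ that were discarded when passing to the minimal $\Dmc \subseteq \Dmc_0$, so no amount of care in choosing the attached trees $\Dmc_a$ can ``reflect'' such a homomorphism back into $\mn{ch}_\Omc(\Dmc^+)$; the implication $\mn{ch}_\Omc(\Dmc^+) \models^{io} q_c \Rightarrow \mn{ch}_\Omc(\Dmc_0) \models^{io} q_c$ is not a structural consequence of the $\cdot^+$-construction. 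The paper obtains it differently, as a consequence of $\Dmc_0$ itself satisfying Condition~2: if $\mn{ch}_\Omc(\Dmc_0) \not\models^{io} q_c$, then $\mn{ch}_\Omc(\Dmc_0) \models q_c$ (compose with $\Dmc^+ \rightarrow \Dmc_0$), so by Lemma~\ref{lem:iolem} there is a \emph{proper} contraction $p$ of $q_c$ with $\mn{ch}_\Omc(\Dmc_0) \models^{io} p$; but then the pair consisting of $\Dmc^+$ (which maps to $\Dmc_0$) and $q_c$ (which satisfies $q_c \prec p$ and $\mn{ch}_\Omc(\Dmc^+) \models^{io} q_c$) is exactly a witness forbidden by Condition~2 for $\Dmc_0$ --- contradiction. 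With that case disposed of, the remaining case $\mn{ch}_\Omc(\Dmc_0) \models^{io} q_c$ lets any violating witness $(\Dmc',q'_c)$ for $\Dmc^+$ transfer to $\Dmc_0$ via $\Dmc' \rightarrow \Dmc^+ \rightarrow \Dmc_0$, as you say. So the case split and the use of Lemma~\ref{lem:iolem} plus Condition~2 of $\Dmc_0$ are the missing ideas; Lemma~\ref{lem:unravcon} and the termination argument for the iterative process play no role here.
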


By Point~2 of Lemma~\ref{lem:Aprop} and choice of $\ell$, we have that $\Dmc$
contains the $k \times K$-grid as a minor. We can thus apply
Theorem~\ref{thm:grohetechnew} to $G$, $k$, and \Dmc, obtaining an
$\Sbf$-database $\Dmc_G$ and a homomorphism $h_0$ from $\Dmc_G$ to
\Dmc such that Points~1 and~2 of that theorem are satisfied. Recall
that $q_1,\dots,q_n$ are the maximal connected components of $q$,
giving rise to OMQs $Q_1,\dots,Q_n$, and that
$Q_i \not\subseteq Q_j$ for all $i \neq j$. As a consequence, for
each $i \neq w$ we can choose an $\Sbf$-database $\Dmc_i$ with
$\Dmc_i \models Q_i$ and $\Dmc_i \not\models Q_w$.  Let
\begin{enumerate}

\item $\Dmc^+_G$ be obtained by starting with $\Dmc_G$ and then
  disjointly adding, for each $a \in \mn{dom}(\Dmc_G)$, a copy of
  $\Dmc_{h_0(a)}$ identifying the root of this copy with~$a$;

\item $\Dmc^{*}_G$ be obtained by further disjointly adding
  $\Dmc_1,\dots,\Dmc_{w-1},\Dmc_{w+1},\dots,\Dmc_n$.

\end{enumerate}
The fpt reduction of p-CLIQUE consists then in computing $Q$ and
$\Dmc^{*}_G$ from $G$ and $k \geq 1$.

\subsection{Correctness of the Reduction}

We show in
the subsequent lemma that $\Dmc^{*}_G \models Q$ if and only if $G$
has a $k$-clique. For a CQ $p$, we use $nt(p)$ to denote the result of
removing all `dangling trees' from $p$, where trees might include
reflexive loops and multi-edges and `nt' stands for `no
trees'. Formally, $nt(p)$ is the maximal subset of $p$ (viewed as a
set of atoms) such that there is no articulation point $x \in
\mn{var}(p)$ that separates $nt(p)$ into components $p_1,p_2$ with
$p_2$ of tree width~1. It should be clear that $nt(p)$ is uniquely
defined when $p$ is connected and contains a non-tree part, that is,
the tree width of $p$ exceeds 1.
\begin{restatable}{lemma}{lemCliqueRed}
\label{lem:CliqueRed}
  $G$ has a $k$-clique iff $\Dmc^{*}_G \models Q$.
\end{restatable}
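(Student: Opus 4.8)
The plan is to prove the two directions of the biconditional separately, exploiting the machinery assembled in the preceding lemmas. For the direction ``$G$ has a $k$-clique $\Rightarrow \Dmc^{*}_G \models Q$'', I would start from the homomorphism $h$ from $\Dmc$ to $\Dmc_G$ guaranteed by Point~2 of Theorem~\ref{thm:grohetechnew}, which satisfies $h_0(h(\cdot)) = \id$. Since $\Dmc^+ \models Q_w$ by the choice of $\Dmc$ (minimality), and since $\Dmc^+$ is built from $\Dmc$ by attaching the unravelings $\Dmc_a$ at each vertex, I would lift $h$ to a homomorphism from $\Dmc^+$ (hence from $\mn{ch}_\Omc(\Dmc^+)$, using Lemma~\ref{lem:chaseprop}) into $\mn{ch}_\Omc(\Dmc^+_G)$: the copy of $\Dmc_a$ attached at a vertex $a$ of $\Dmc$ maps to the copy of $\Dmc_{h_0(h(a))} = \Dmc_a$ attached at $h(a)$ in $\Dmc_G$, using $h_0(h(a)) = a$. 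This yields $\Dmc^+_G \models Q_w$, and since the remaining components $Q_i$ ($i \neq w$) are witnessed by the disjointly added $\Dmc_i$ in $\Dmc^{*}_G$, we get $\Dmc^{*}_G \models Q$.

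For the converse direction ``$\Dmc^{*}_G \models Q \Rightarrow G$ has a $k$-clique'', which I expect to be the main obstacle, the goal is to produce a homomorphism $h$ from $\Dmc$ to $\Dmc_G$ with $h_0(h(\cdot))=\id$ and then invoke Point~2 of Theorem~\ref{thm:grohetechnew} in reverse. From $\Dmc^{*}_G \models Q$ and the fact that the components $\Dmc_i$ satisfy $\Dmc_i \not\models Q_w$, I would argue that the satisfaction of the non-tree-width-1 component $Q_w$ must already be witnessed within $\Dmc^+_G$ rather than through the attached gadgets $\Dmc_i$. The key technical step is to use Lemma~\ref{lem:iomin} together with Lemma~\ref{lem:iolem}: there is a contraction $q_c$ of $q_w$ with $\mn{ch}_\Omc(\Dmc^+) \models^{io} q_c$, and Point~1 of Lemma~\ref{lem:Aprop} transfers Conditions~1 and~2 to $\Dmc^+$. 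The minimality/``least constrained'' guarantee of Condition~2 is what forces the homomorphism from $q_c$ into $\mn{ch}_\Omc(\Dmc^+_G)$ to stay essentially injective and to respect the grid structure, so that its non-tree part $nt(q_c)$ lands on the structure transported from $\Dmc$ via $h_0$.

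The crux is then to extract, from such an injective-only homomorphism of the non-tree part $nt(q_c)$ into $\mn{ch}_\Omc(\Dmc^+_G)$, an actual homomorphism $h$ from $\Dmc$ back into $\Dmc_G$ realizing the identity modulo $h_0$. Here the careful construction pays off: because $nt(q_c)$ has tree width exceeding~1 and $\Dmc$ was chosen minimal with $\Dmc^+ \models Q_w$, the image of $nt(q_c)$ must hit the fresh constants only through the root identifications, so that it factors through the database part $\Dmc_G$ rather than through the tree-shaped chase material or the unravelings $\Dmc_a$. I would use the ``such that'' refinement of Point~1 of Theorem~\ref{thm:grohetechnew} (role preservation along Gaifman edges) to ensure that the combinatorial image of $\Dmc$ under the recovered map is role-faithful, hence genuinely a homomorphism into $\Dmc_G$; composing with $h_0$ recovers the identity on $\Dmc$, which is exactly what Point~2 of Theorem~\ref{thm:grohetechnew} needs to conclude the existence of a $k$-clique in $G$.

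Throughout, the main difficulty is controlling how homomorphisms interact with the chase-generated tree parts and with the unravelings $\Dmc_a$: a naive homomorphism from $q$ could ``cheat'' by mapping non-tree structure into these auxiliary gadgets. The role of Lemma~\ref{lem:iomin} (via Lemma~\ref{lem:Aprop}) and of the ``injectively only'' notion $\models^{io}$ is precisely to rule this out, forcing any witnessing homomorphism to use the grid-like relational backbone of $\Dmc_G$ that is tightly linked to $q_w$. I would therefore organize the converse proof around first reducing satisfaction to an $io$-homomorphism of a minimally-constrained contraction, and then performing the ``peel off the trees, read off the grid'' argument on $nt(q_c)$.
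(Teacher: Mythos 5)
Your proposal is correct and follows essentially the same route as the paper's proof: the easy direction lifts the clique-induced homomorphism $\Dmc \rightarrow \Dmc_G$ to $\Dmc^+ \rightarrow \Dmc^+_G$, and the converse isolates $Q_w$ in $\Dmc^+_G$, passes to an injectively-only contraction $q_c$ via Lemma~\ref{lem:iolem}, uses Condition~2 of Lemma~\ref{lem:iomin} (transferred by Lemma~\ref{lem:Aprop}) to force injectivity of $h_0(h(\cdot))$, restricts to $nt(q_c)$ to land in $\mn{dom}(\Dmc)$, and invokes the ``such that'' part of Point~1 of Theorem~\ref{thm:grohetechnew} to read off a homomorphism $\Dmc \rightarrow \Dmc_G$ inverting $h_0$. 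The only thing left implicit is the ontoness condition ($g$ hits every edge of \Dmc, which is where the minimality of \Dmc is actually used), but your plan correctly identifies that minimality and the $\models^{io}$ machinery are what rule out ``cheating'' through the chase trees and unravelings, exactly as in the paper.
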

\begin{proof}\ The `only if' direction is easy. If $G$ has
  a $k$-clique, then $\Dmc \rightarrow \Dmc_G$ by Point~2 of
  Theorem~\ref{thm:grohetechnew}. It is straightforward to extend a
  witnessing homomorphism to one from $\Dmc^+$ to $\Dmc^+_G$, and thus
  $\Dmc^+ \rightarrow \Dmc^{+}_G$. Consequently, $\Dmc^+ \models Q_w$
  implies $\Dmc^+_G \models Q_w$. By construction of $\Dmc^{*}_G$ it holds that
  $\Dmc^{*}_G \models Q$.

 For the `if' direction, assume that
  $\Dmc^{*}_G \models Q$. By choice of the components in
  $\Dmc^{*}_G \setminus \Dmc^{+}$, this means that
  $\Dmc^{+}_G \models Q_w$.  Then
  $\mn{ch}_\Omc(\Dmc^+_G) \models q_w$ and by Lemma~\ref{lem:iolem},
  we find a contraction $q_c$ of $q_w$ such that
  $\mn{ch}_\Omc(\Dmc^+_G) \models^{io} q_c$. We have
  $\Dmc_G \rightarrow \Dmc$ via the homomorphism $h_0$ from
  Theorem~\ref{thm:grohetechnew} and it is straightforward to extend
  $h_0$ so that it yields
  $\mn{ch}_\Omc(\Dmc^+_G) \rightarrow \mn{ch}_\Omc(\Dmc^+)$. It
  follows that $\mn{ch}_\Omc(\Dmc^+) \models q_c$. Thus we find a
  contraction $q'_c$ of $q_c$ such that
  $\mn{ch}_\Omc(\Dmc^+) \models^{io} q'_c$. We must have $q_c=q'_c$
  since $\Dmc^+$ satisfies Condition~2 of Lemma~\ref{lem:iomin}, via
  Lemma~\ref{lem:Aprop}.  Let $h$ be a homomorphism from $q_c$ to
  $\mn{ch}_\Omc(\Dmc^+_G)$.  Then the composition $h_0(h(\cdot))$ is a
  homomorphism from $q_c$ to $\mn{ch}_\Omc(\Dmc^+)$. Since
  $\mn{ch}_\Omc(\Dmc^+) \models^{io} q_c$, this homomorphism must be
  injective.  Let~$g$ be its restriction to the variables in
  $nt(q_c)$.\footnote{This is uniquely defined since $q_c$ is clearly
    connected and, moreover, has tree width exceeding $\ell$
    because $\Dmc^+ \not\models Q_a$ and thus $q_c$ is not a
    CQ in the UCQ $q_a$ in $Q_a$.}


  The range of $g$ must fall into $\mn{dom}(\Dmc)$ since $g$ is
  injective: if the range of $g$ involved an element from a tree
  width~1 part of $\mn{ch}_\Omc(\Dmc^+)$, added by the transition from
  \Dmc to $\Dmc^+$ or by the chase, then because of the injectivity of
  $g$ this gives rise to an articulation point in $nt(q_c)$ that
  separates $nt(q_c)$ into two components $q_1,q_2$ with $G_{q_2}$ a
  tree, but such an articulation point does not exist. Moreover, the
  elements of $\mn{ch}_\Omc(\Dmc^+)$ that have not been added by the
  $\cdot^+$-construction or by the chase are precisely those in
  $\mn{dom}(\Dmc)$.


  Moreover, $g$ must satisfy a certain ontoness condition regarding
  the subset \Dmc of $\mn{ch}_\Omc(\Dmc^+)$. When we speak of an
  \emph{edge in \Dmc}, we mean an edge $e=\{a,b\} \subseteq
  \mn{dom}(\Dmc)$ in the Gaifman graph of \Dmc. We say that $g$
  \emph{maps an atom $r(x,y) \in nt(q_c)$ to $e$} if
  $\{g(x),g(y)\}=\{a,b\}$. It can be verified that
  \begin{itemize}

  \item[($\dagger$)] for every edge $e$ in \Dmc, there is an atom in $nt(q_c)$
    that $g$ maps to $e$.

  \end{itemize}
  Assume to the contrary that $g$ maps no atom in $nt(q_c)$
  to an edge $\{a,b\} \in \Dmc$. We show in the appendix
  that, then the database $\Dmc_1$ obtained from \Dmc by removing all
  binary facts that involve $a$ and $b$ is such that $\Dmc_1^+
  \models Q$, contradicting the choice of \Dmc.



  We are now ready to finish the proof. At this point, we know that
  $g$ is a restriction of $h_0(h(\cdot))$, that it is injective, and
  that its range is a subset of $\mn{dom}(\Dmc)$. In fact, the range
  of $g$ must be exactly $\mn{dom}(\Dmc)$, by ($\dagger$) and since \Dmc
  contains only binary facts. As a consequence, the inverse $h_0^-$ of
  $h_0$ is an injective total function from $\mn{dom}(\Dmc)$ to
  $\mn{dom}(\mn{ch}_\Omc(\Dmc_G^+))$.  We next argue that its range
  actually falls within $\mn{dom}(\Dmc_G)$, that is, it does not hit
  any tree width 1 parts of $\mn{ch}_\Omc(\Dmc_G^+)$, added by the
  $\cdot^+$-construction or by the chase. Any constant from
  $\mn{dom}(\Dmc)$ occurs in a fact of the form $r(a,b)$.  By the
  ontoness condition~($\dagger$), $g$ maps some atom $r(x,y) \in nt(q_c)$ to
  the edge $\{a,b\}$ in \Dmc. But then $\mn{ch}_\Omc(\Dmc_G^+)$ must
  contain the fact $r(h(x),h(y))$ and, moreover,
  $\{h(x),h(y)\}=\{h_0^-(a),h_0^-(b)\}$. But $r(h(x),h(y))$ cannot be
  in any of the tree width~1 parts of $\mn{ch}_\Omc(\Dmc_G^+)$: since
  $h$ is injective, this would give rise to an articulation point in
  $nt(q_c)$ that separates $nt(q_c)$ into two components $q_1,q_2$
  with $G_{q_2}$ a tree. With $\{h(x),h(y)\} = \{h_0^-(a),h_0^-(b)\}$,
  we obtain $h_0^-(a),h_0^-(b) \in \mn{dom}(\Dmc_G)$ as desired.

  Thus, $h_0^-$ is a function from $\mn{dom}(\Dmc)$ to
  $\mn{dom}(\Dmc_G)$.  We show that it is a homomorphism from \Dmc to
  $\Dmc_G$, and thus Point~2 of Theorem~\ref{thm:grohetechnew} yields
  that $G$ contains a $k$-clique, finishing the proof.
  Let $r(a,b) \in \Dmc$.  By the ontoness condition ($\dagger$), we have that $g$ maps
  some atom $s(x,y) \in nt(q_c)$ to the edge $\{a,b\}$. We have
  already argued that $\{h(x),h(y)\} = \{h_0^-(a),h_0^-(b)\} \subseteq
  \mn{dom}(\Dmc_G)$.  Since $s(h(x),h(y)) \in \mn{ch}_\Omc(\Dmc_G^+)$
  and $\{h(x),h(y)\} \subseteq \mn{dom}(\Dmc_G)$, there must be some
  fact $s'(h(x),h(y)) \in \Dmc_G$. By the `such that' part of Point~1
  of Theorem~\ref{thm:grohetechnew} and since $\{h(x),h(y)\} = 
  \{h_0^-(a),h_0^-(b)\}$, we have $r(a,b) \in \Dmc_G$.
\end{proof}

We explain in the appendix how to extend the above proof to the case
where OMQs need not be Boolean, which essentially amounts to choosing
also concrete answers along with databases, and then removing and
reading the constants from the answers at the right places in the
proof.
%
%
%
We also explain how to extend the proof from CQs to UCQs. A difficulty
lies in identifying a \emph{connected} component of some CQ in the UCQ
$q$ that can play the role of $q_w$ in the original proof, despite the
presence of the other disjuncts in $q$. We overcome this be viewing
$q$ as a disjunction of conjunctions of connected CQs and rewriting
$q$ into an equivalent conjunction of disjunctions of connected CQs.

\section{PTime Combined Complexity}
\label{sect:ptimeupper}

The aim of this section is to establish the following theorem.
\begin{theorem}
\label{thm:main1}
  For any recursively enumerable class of OMQs $\Qbf \subseteq
  (\ELpoly,\text{UCQ})$ based on the full schema, the following are
  equivalent, unless $\FPTclass=\Wclass[1]$:
  \begin{enumerate}
  \item {\sc Evaluation}$(\Qbf)$ is in \PTime combined complexity;
  \item $\sf p$-{\sc Evaluation}$(\Qbf)$ is in $\text{\rm FPT}$;
  \item $\Qbf \subseteq (\ELpoly,\text{UCQ})^{\equiv}_{\text{UCQ}_k}$
    for some $k \geq 1$.
  \end{enumerate}
  If either statement is false, $\sf p$-{\sc Evaluation}$(\Qbf)$ is $\Wclass[1]$-hard.
\end{theorem}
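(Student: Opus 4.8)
\emph{Proof strategy.} The plan is to establish the cycle $(3)\Rightarrow(1)\Rightarrow(2)\Rightarrow(3)$, with the last implication also delivering the final $\Wclass[1]$-hardness claim. The implication $(1)\Rightarrow(2)$ is immediate: an algorithm running in time $\mn{poly}(|Q|+|\Dmc|)$ also runs in time $\mn{poly}(|\Dmc|)\cdot f(|Q|)$ for a suitable computable $f$, so \PTime combined complexity entails membership in \FPTclass. For $(2)\Rightarrow(3)$ I would argue contrapositively and reduce to the hardness already available for the larger language. Note first that $(\ELpoly,\text{UCQ})\subseteq(\ELHIbot,\text{UCQ})$, and that for an OMQ $Q\in(\ELpoly,\text{UCQ})$ the notions of UCQ$_k$-equivalence within $\ELpoly$ and within $\ELHIbot$ coincide: by Corollary~\ref{prop:wasaclaim} both amount to $Q$ being equivalent to its UCQ$_k$-approximation $Q_a$, and $Q_a$ preserves the (hence $\ELpoly$-) ontology. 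Thus, if (3) fails, i.e.\ $\Qbf\not\subseteq(\ELpoly,\text{UCQ})^\equiv_{\text{UCQ}_k}$ for every $k$, then also $\Qbf\not\subseteq(\ELHIbot,\text{UCQ})^\equiv_{\text{UCQ}_k}$ for every $k$, and Theorem~\ref{thm:grohemainlower} yields that $\sf p$-{\sc Evaluation}$(\Qbf)$ is $\Wclass[1]$-hard. This simultaneously proves the final sentence of the theorem and, assuming $\FPTclass\neq\Wclass[1]$, the failure of (2). This direction imposes no restriction on the schema.

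The substantial work is $(3)\Rightarrow(1)$, the \PTime combined-complexity upper bound, where the full-schema hypothesis is essential. Fix the $k$ provided by (3). Given $Q=(\Omc,\Sbf_{\mn{full}},q)\in\Qbf$, a database \Dmc, and a candidate answer \abf, I would first decide in \PTime whether \Dmc is consistent with \Omc, which is possible because reasoning in \ELpoly is in \PTime. If \Dmc is inconsistent with \Omc, then every tuple is a (vacuous) certain answer and we accept. Otherwise, by Corollary~\ref{prop:wasaclaim} we have $Q\equiv Q_a$, so that $\abf\in Q(\Dmc)$ iff $\abf\in Q_a(\Dmc)=q_a(\mn{ch}_\Omc(\Dmc))$ by Lemma~\ref{lem:chaseprop}. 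The obstacle is that $q_a$ collects \emph{all} contractions of CQs of $q$ of tree width at most $k$ and is in general of exponential size, so we cannot materialize it and invoke the known \PTime evaluation of a \emph{single} $(\ELpoly,\text{UCQ}_k)$; moreover $\mn{ch}_\Omc(\Dmc)$ is infinite and cannot be materialized either.

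To evaluate $q_a$ implicitly I would use an existential $(k+1)$-pebble game between $q$ and $\mn{ch}_\Omc(\Dmc)$, in which Duplicator maintains partial homomorphisms from at most $k+1$ variables of $q$ into the chase and is allowed to identify variables, so that a winning strategy corresponds to a width-$\leq k$ tree-decomposed homomorphism of some contraction of $q$ into the chase. The central correctness claim is that, \emph{precisely because} $Q$ is UCQ$_k$-equivalent, Duplicator wins the game (with the answer variables pinned to \abf) iff $\abf\in q_a(\mn{ch}_\Omc(\Dmc))$, iff $\abf\in Q(\Dmc)$. To run the game in \PTime despite the infinite target, I would exploit the shape of the \ELpoly-chase under the full schema: it consists of the finite named part over $\mn{dom}(\Dmc)$ together with tree-shaped anonymous parts attached to named elements, and---crucially for \ELpoly, in contrast to \ELHIbot---these anonymous parts realize only polynomially many element types, each computable in \PTime. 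Representing game positions symbolically, by named elements together with the type and bounded relative position of anonymous elements, yields a polynomial-size position space on which the set of winning positions is computed by a least fixed point in \PTime.

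I expect the last step to be the main obstacle: turning the pebble game into a genuine \PTime procedure over the infinite chase while keeping its verdict faithful to the exponential-size approximation $q_a$. Two points must be reconciled. First, the symbolic position space must be provably polynomial, which hinges on the polynomial bound on anonymous types in \ELpoly; unrestricted inverse roles would destroy this regularity, matching the \ExpTime lower bound already present in $(\ELI,\text{CQ})$ and explaining why the result is confined to \ELpoly. Second, the correctness of the game needs a careful argument, since a pebble strategy only certifies $k$-local consistency and is in general strictly weaker than the existence of a homomorphism; the hypothesis $Q\equiv Q_a$ (together with Theorem~\ref{thm:bestapprox}) is exactly what closes this gap, and the full-schema assumption is what guarantees that the chase is tightly enough controlled by \Dmc and $q$ for the argument to go through. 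Finally, non-Boolean queries are handled by pinning the answer variables to the constants of \abf throughout the game, which I regard as routine relative to the two obstacles above.
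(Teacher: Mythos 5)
Your handling of $(1)\Rightarrow(2)$ and $(2)\Rightarrow(3)$ coincides with the paper's (the latter via Theorem~\ref{thm:grohemainlower} plus Corollary~\ref{prop:wasaclaim}), and correctly delivers the final $\Wclass[1]$-hardness claim. The gap is in $(3)\Rightarrow(1)$, specifically in the correctness of your pebble game. You play the existential $(k+1)$-pebble game on $q$ itself against $\mn{ch}_\Omc(\Dmc)$ and assert that the hypothesis $Q\equiv Q_a$ makes a Duplicator win equivalent to $\abf\in q_a(\mn{ch}_\Omc(\Dmc))$. This does not follow. A Duplicator win on $(q,\Bmc)$ only certifies that $\Bmc$ satisfies every CQ $p$ of tree width at most $k$ that maps homomorphically \emph{into} $\Dmc_q$; the disjuncts of $q_a$ are contractions of $q$, which receive homomorphisms \emph{from} $q$ but in general do not map back into $\Dmc_q$, so the game says nothing about them directly. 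The classical repair---transferring Duplicator's strategy to an equivalent query of tree width $\leq k$---also fails here, because OMQ-equivalence $Q\equiv Q'$ with $Q'$ of bounded tree width only yields (via Lemma~\ref{lem:contfull}) a homomorphism from a tree-width-$k$ CQ into $\mn{ch}_\Omc(\Dmc_q)$, not into $\Dmc_q$; the tree-width-$k$ witness is therefore not a subquery of $q$, and winning strategies do not restrict to it. This is exactly the obstacle the paper flags (``the CQs from maximum contractions are not subqueries of the CQ in the original OMQ'') and resolves with the full-schema machinery of maximum contractions and rewritings (Theorem~\ref{thm:fulldatabase}, Lemma~\ref{lem:rewritings}): one first extends $q$ to $q^+$ by attaching the chase-generated tree CQs $q_C$ to variables of $q$, so that some rewriting $\widehat q$ of tree width $\leq k$ becomes a \emph{syntactic subquery} of $q^+$, and then plays the game on $q^+$; a Duplicator win restricts to $\widehat q$, which has tree width $\leq k$, and only then yields $\Dmc\models Q(\abf)$. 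Without this step the ``iff'' you rely on is unsubstantiated, and the hypothesis $Q \equiv Q_a$ alone does not close the gap.

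Your second ingredient---running the game on a finite, type-based symbolic representation of $\mn{ch}_\Omc(\Dmc)$---is precisely the route the paper considers and rejects: folding anonymous elements by type creates spurious homomorphisms unless $q$ is rewritten, and that rewriting can increase tree width, invalidating the game. The paper instead replaces homomorphisms into the infinite chase by $\Dmc$-labelings (Lemma~\ref{lem:charptime2}), which record for each variable either its named image or the guarded pair and named element through which it enters an anonymous ditree, and whose defining conditions are locally \PTime-checkable via auxiliary OMQ evaluations with the ditree queries $\mn{dtree}_{(x,y)}$. Your proposal to track ``type and bounded relative position'' of pebbles inside the anonymous trees is neither shown to be polynomial nor shown to be faithful; both would require substantial additional argument, and the labeling-based formulation is what makes the elimination procedure on positions go through.
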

%
%

The `$1 \Rightarrow 2$' direction is trivial.
For showing `$2 \Rightarrow 3$', observe that $\Qbf \subseteq
(\ELHIbot,\text{UCQ})$. Thus, by Theorem~\ref{thm:grohemainlower} and the
hypothesis $\FPTclass \neq \Wclass[1]$, $\Qbf \subseteq
(\ELHIbot,\text{UCQ})^\equiv_{\text{UCQ}_k}$. Therefore, for every $Q
\in \Qbf$, there exists $Q' \in (\ELHIbot,\text{UCQ}_k)$ such that $Q
\equiv Q'$. Since, by Corollary~\ref{prop:wasaclaim},
UCQ$_k$-equivalence coincides with UCQ$_k$-equivalence while
preserving the ontology, we can assume that $Q' \in
(\ELpoly,\text{UCQ}_k)$. This implies that $Q \in
(\ELpoly,\text{UCQ})^{\equiv}_{\text{UCQ}_k}$.
It thus remains to address the `$3
\Rightarrow 1$' direction, that is, to prove the following.
\begin{restatable}{theorem}{thmpolycombcompl}
\label{thm:polycombcompl}
{\sc Evaluation}$((\ELpoly,\text{UCQ})^\equiv_{\text{UCQ}_k})$ based on
the full schema is in \PTime combined complexity, for any $k \geq 1$.
%
%
%
\end{restatable}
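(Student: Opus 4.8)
The plan is to evaluate $Q$ without ever materializing the (exponentially large) UCQ$_k$-approximation, by reducing the problem to an existential pebble game played against the chase. The starting point is Corollary~\ref{prop:wasaclaim}: since $Q \in (\ELpoly,\text{UCQ})^\equiv_{\text{UCQ}_k}$, we have $Q \equiv Q_a$, where $Q_a=(\Omc,\Sbf_{\mn{full}},q_a)$ is the UCQ$_k$-approximation of $Q$ and uses the \emph{same} ontology $\Omc$; hence $\abf \in Q(\Dmc)$ iff $\abf \in Q_a(\Dmc)$. First I would dispose of inconsistency: since consistency of $\Dmc$ with an $\ELpoly$-ontology is decidable in \PTime, I test it up front, and if $\Dmc$ is inconsistent with $\Omc$ then every candidate tuple is an answer and we return ``yes''. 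Otherwise, by Lemma~\ref{lem:chaseprop}(2) we have $Q(\Dmc)=q_a(\mn{ch}_\Omc(\Dmc))$, so it remains to decide whether some contraction of tree width at most $k$ of some disjunct of $q$ homomorphically maps into $\mn{ch}_\Omc(\Dmc)$ while sending the answer variables to $\abf$.

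Second, I would phrase exactly this condition as an existential pebble game. Working per disjunct $p$ of $q$ and pinning the answer variables to the constants $\abf$, the claim is that $\abf\in p_a(\mn{ch}_\Omc(\Dmc))$, where $p_a$ denotes the union of the tree-width-at-most-$k$ contractions of $p$, holds iff the Duplicator wins the existential $(k+1)$-pebble game from $\Dmc_p$ to $\mn{ch}_\Omc(\Dmc)$. This is the standard fact that the $(k+1)$-pebble game captures precisely the tree-width-at-most-$k$ relaxation of a CQ: identifying two pebbled variables realizes a contraction, and $k+1$ pebbles bound the width of the contraction so realized. The evaluation problem is thereby reduced to computing the winner of a pebble game over the chase.

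The remaining, and main, difficulty is that $\mn{ch}_\Omc(\Dmc)$ is infinite, so the pebble game cannot be played naively; this is where the absence of genuine inverse roles in $\ELpoly$ is decisive. The chase consists of the named part on $\mn{dom}(\Dmc)$, closed under the unary consequences and the role hierarchy of $\Omc$, together with an anonymous forest of tree-shaped pieces hanging below the named elements. Since $\Omc$ has no inverse roles but only range restrictions, each anonymous element together with the subtree rooted at it is completely determined by a \emph{type} drawn from a set of polynomial size in $|\Omc|$ (an existential concept of $\Omc$ together with the role by which the element is reached, which fixes the applicable range restrictions). I would therefore build a polynomial-size finite representation of $\mn{ch}_\Omc(\Dmc)$ in which every anonymous element is replaced by its type, and play the pebble game over this representation, recording for the at most $k+1$ pebbled anonymous elements only their parent/ancestor relationships, the sole relational information a no-inverse chase exposes among them. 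The set of game positions is then of polynomial size (for fixed $k$), and the Duplicator's winning region of the existential $(k+1)$-pebble game is computable in time polynomial in the number of positions by the standard greatest-fixpoint computation, so the whole procedure runs in \PTime combined complexity.

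The step I expect to be the crux is proving that this type-abstracted game is \emph{sound and complete} for $\abf\in q_a(\mn{ch}_\Omc(\Dmc))$: one must show that no relational information visible to either player is lost by collapsing anonymous elements of equal type, and that an abstract winning strategy lifts to an actual homomorphism of a tree-width-at-most-$k$ contraction into the infinite chase. Both directions rest on the observation that a connected CQ entering an anonymous subtree is confined to it as a tree-shaped (width-$1$) appendage, because no inverse role can lead back up; this is precisely the structural property that fails for $\ELI$, where the combined complexity rises to \ExpTime, and it is what simultaneously keeps the representation polynomial and the abstraction faithful. Finally, I would note that the pebble-game reduction and the $\ELpoly$-specific chase representation interact only through this downward-tree structure, so that running the argument for each of the polynomially many disjuncts of $q$ and taking the disjunction of the outcomes yields the claimed \PTime bound.
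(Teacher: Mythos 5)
There is a genuine gap, and it sits exactly where the paper has to work hardest. Your reduction to the existential $(k+1)$-pebble game rests on the claim that, for an \emph{arbitrary} CQ $p$, Duplicator wins the game from $\Dmc_p$ to a target structure iff some tree-width-$\le k$ contraction of $p$ maps into that target. This is not a standard fact and it is false once $p$ has tree width exceeding $k$. Counterexample with empty ontology and $k=1$: let $p$ be the directed triangle $r(x,y)\wedge r(y,z)\wedge r(z,x)$ and let the target be the directed $5$-cycle. No contraction of $p$ maps into the $5$-cycle (the cycle has no triangle and no self-loops, while every proper contraction of the triangle has a self-loop), yet Duplicator wins the existential $2$-pebble game by simply walking her two pebbles around the cycle forever. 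The correct statement (the one the paper quotes) is that the game is sound only when the CQ being played on has tree width $\le k$, or is homomorphically equivalent \emph{over all structures} to a single such CQ. In your setting the promise is much weaker: $(\Omc,\Sbf,p)$ is equivalent to a tree-width-$\le k$ OMQ only \emph{relative to the ontology and to chases of $\Sbf$-databases}, and to a \emph{union} of CQs at that (Example~\ref{ex:first} shows $p$ itself can be a core of larger tree width). So the direction ``Duplicator wins $\Rightarrow$ $\abf\in Q(\Dmc)$'' is unproven, and your proposal identifies the wrong step as the crux: the type-abstraction of the chase is the comparatively routine part, while game soundness is the real obstacle.

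This is also why you never needed the full-schema hypothesis, which should have been a warning sign -- the paper leaves the non-full-schema case open precisely because of this issue. The paper's proof bridges the gap as follows: using maximum contractions and rewritings (Theorem~\ref{thm:fulldatabase} and Lemma~\ref{lem:rewritings}, both of which require $\Sbf_{\mn{full}}$), one shows that $Q$ has a rewriting $(\Omc,\Sbf_{\mn{full}},q_r)$ with $q_r$ of tree width $\le k$ that is a \emph{syntactic subquery} of a polynomial-time computable extension $q^+$ of $q$ (namely $q$ augmented with chase-derived concept atoms). The game is then played on $q^+$: completeness is as in your sketch, and soundness is obtained by restricting Duplicator's winning strategy to the subquery $q_r$, where the standard bounded-tree-width argument applies. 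Separately, the infinite chase is handled not by playing on a compacted canonical model (the paper explicitly notes this would force a tree-width-increasing rewriting of $q$) but via $\Dmc$-labelings, whose conditions delegate the anonymous part to polynomial-time evaluations of tree-shaped auxiliary OMQs $\mn{dtree}_{(x,y)}$; your per-anonymous-element type bookkeeping is in the same spirit but would need the same care. Without the rewriting step, however, no amount of care in representing the chase repairs the argument.
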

%
Evaluating an OMQ $(\Omc,\Sbf,q)$ from $(\ELpoly,\text{UCQ})$
is the same as evaluating every OMQ $(\Omc,\Sbf,p)$, $p$ a CQ in~$q$, and taking the union of the answer
sets. To establish Theorem~\ref{thm:polycombcompl}, it thus suffices
to prove that {\sc Evaluation}$((\ELpoly,\text{CQ})^\equiv_{\text{UCQ}_k})$
based on the full schema is in \PTime.

We do this by using a suitable form of existential pebble game. Such
games are also employed in the case of CQ evaluation over relational
databases, that is, in the special case of
Theorem~\ref{thm:polycombcompl} when the ontology is empty
\cite{DKV02,DBLP:journals/jacm/Grohe07}.  In that case, the game is
played on the CQ $q$ and the input database~\Dmc, details are given
later.  When the ontology $\Omc$ is non-empty, a natural idea is to
play the pebble game on $q$ and $\mn{ch}_\Omc(\Dmc)$ instead, which
can be shown to give the correct result. However, $\mn{ch}_\Omc(\Dmc)$
need not be finite. There is a way to compute in polynomial time a
finite representation of
$\mn{ch}_\Omc(\Dmc)$~\cite{LutzTomanWolter-IJCAI09}, but using that
representation in place of $\mn{ch}_\Omc(\Dmc)$ requires to
rewrite~$q$ in a way that might increase the tree width 
and as a consequence there is no guarantee that the
resulting game delivers the correct
result. We thus start by giving a novel characterization of answers to
OMQs from $(\ELpoly,\text{CQ})$ that is tailored towards being
verified by existential pebble games.

\subsection{Characterization of OMQ Answers}

We require several definitions
and preliminaries.
%

Let $q$ be a CQ. 
A database \Dmc 
is a \emph{ditree} if
the directed graph $(\dom(\Dmc),\{(a,b) \mid r(a,b) \in \Dmc\})$ is a
tree.  Note that multi-edges are admitted while reflexive loops are
not.  We say that $q$ \emph{is a homomorphic preimage of a
  ditree} if there is a homomorphism from $q$ to a ditree database
\Dmc. Consider the CQ $q'$ obtained from $q$ by exhaustively
identifying variables $x_1$ and $x_2$ whenever there are atoms
$r(x_1,y)$ and $s(x_2,y)$. It can be verified that $q$ is a
homomorphic preimage of a ditree if and only if $\Dmc_{q'}$ is a
ditree. This also means that it is decidable in \PTime whether a given
$q$ is a homomorphic preimage of a ditree.  If this is the case, then
$\Dmc_{q'}$ is \emph{initial} among all ditrees \Dmc that $q$ is a
homomorphic preimage of, that is, $\Dmc_{q'}$ admits a homomorphism
to any such \Dmc. We use $\mn{dtree}_{q}(x_0)$ to denote $\Dmc_{q'}$
viewed as a CQ, constants corresponding to variables, in which the
root constant is the only answer variable $x_0$ and all other
variables are quantified. If $q$ is not a homomorphic preimage of a
ditree, then $\mn{dtree}_{q}$ is undefined.

A pair of variables $x,y$ from $q$ is \emph{guarded}
if they are linked by an edge in the Gaifman graph of $\Dmc_q$.  Let
$G^q_2$ be the set of all guarded pairs of variables from $q$.  For
every $(x,y) \in G^q_2$ with $y$ quantified and for every $i \geq 0$,
define $\mn{reach}^i(x,y)$ to be the smallest set such that
\begin{enumerate}

\item $x \in \mn{reach}^0(x,y)$ and $y \in \mn{reach}^1(x,y)$;

\item if $z \in \mn{reach}^i(x,y)$, $i > 0$, and $r(z,u) \in q$, then
  $u \in \mn{reach}^{i+1}(x,y)$;

\item if $y \in \mn{reach}^{i+1}(x,y)$ and $r(z,y) \in q$, then
  $z \in \mn{reach}^{i}(x,y)$.

\end{enumerate}
Moreover, $\mn{reach}(x,y) = \bigcup_i
\mn{reach}^i(x,y)$.  A guarded pair $(x,y)$ is
\emph{$\exists$-eligible} if $q|_{\mn{reach}(x,y)}$ is a homomorphic
preimage of a ditree.  We use $\mn{dtree}_{(x,y)}$ as a shorthand
for $\mn{dtree}_{q|_{\mn{reach}(x,y)}}$.

Informally, $(x,y)$ being $\exists$-eligible means that in a
homomorphism $h$ from $q$ to $\mn{ch}_\Omc(\Dmc)$, for some database
\Dmc and some \ELpoly-ontology \Omc, atoms $r(x,y) \in q$ can `cross
the boundary' between $\Dmc$ and the part of $\mn{ch}_\Omc(\Dmc)$
generated by the chase in the sense that $h(x) \in \mn{dom}(\Dmc)$ and
$h(y)$ is mapped to a constant that was introduced by the chase. Note
that the chase generates only structures that are ditrees.

Let $Q=(\Omc,\Sbf_{\mn{full}},q)$ be an OMQ from $(\ELpoly,\text{CQ})$
and let \Dmc be an \Sbf-database that is consistent with \Omc.  We now
define the central notion underlying the announced characterization,
called $\Dmc$-labeling of $q$, which (partially) represents a
homomorphism from the CQ $q$ to $\mn{ch}_\Omc(\Dmc)$.

An \emph{$\exists$-MCC} is a subquery $p \subseteq q$ that constitutes
a maximally connected component of $q$ and contains only quantified
variables. For an \Sbf-database \Dmc, we use $\mn{ch}^-_\Omc(\Dmc)$ to
denote the restriction of $\mn{ch}_\Omc(\Dmc)$ to the constants in
$\mn{dom}(\Dmc)$.
A \emph{\Dmc-labeling} of $q$ is a function $\ell:\mn{var}(q) \rightarrow \mn{dom}(\Dmc) \cup \{\exists \} \cup (G^q_2 \times \mn{dom}(\Dmc)) \}$ such that the following conditions are satisfied:
\begin{enumerate}

\item $\ell(x) \in \mn{dom}(\Dmc)$ for every answer variable $x$;

\item the restriction of $\ell$ to the variables in $V:=\{ x \mid
  \ell(x) \in \mn{dom}(\Dmc) \}$ is a homomorphism from $q|_V$ to
  $\mn{ch}^-_\Omc(\Dmc)$;

\item if $r(x,y) \in q$ and $\ell(y) \in \mn{dom}(\Dmc)$, then
  $\ell(x) \in \mn{dom}(\Dmc)$;

\item if $(x,y) \in G^q_2$, $\ell(x) \in \mn{dom}(\Dmc)$, $\ell(y) \notin \mn{dom}(\Dmc)$, then
  \begin{enumerate}

  \item $(x,y)$ is $\exists$-eligible,

  \item $\Dmc \models (\Omc,\Sbf,\mn{dtree}_{(x,y)})(\ell(x))$, and

  \item $\ell(y)=((x',y'),\ell(x))$ where $x' \in \mn{reach}^0(x,y)$
    and $y' \in \mn{reach}^1(x,y)$;

  \end{enumerate}

\item if $r(x,y) \in q$ and $\ell(x) = ((x',y'),a)$, then
  $\ell(y)=\ell(x)$;

\item if $r(x,y) \in q$, $\ell(y) = ((x',y'),a)$, and $y \notin \mn{reach}^0(x',y')$, then
  $\ell(x)=\ell(y)$;

\item if $r(x,y) \in q$, $\ell(y) = ((x',y'),a)$, and
  $y \in \mn{reach}^0(x',y')$, then $\ell(x)=a$;

\item if $q'$ is an $\exists$-MCC of $q$ such that $\ell(x) \notin
  \mn{dom}(\Dmc)$ for every variable $x$ in $q'$, then $q'$ is a
  homomorphic preimage of a ditree and 
  $\Dmc \models (\Omc,\Sbf,\exists x_0 \, \mn{dtree}_{q'})$.

\end{enumerate}
A \Dmc-labeling of $q$ represents a homomorphism $h$ from $q$ to
$\mn{ch}_\Omc(\Dmc)$ with the following conditions. If $\ell(z)=a\in
\mn{dom}(\Dmc)$, then $h(z)=a$. If $\ell(z)={\exists}$, then $z$ is in
an $\exists$-MCC of $q$ and mapped to a constant generated by the
chase. Finally, if $\ell(z)=((x,y),a)$, then (the same is true or)
$h(x) =a \in \mn{dom}(\Dmc)$, $h(y)$ is a constant generated by the
chase, and $h(z)$ is in the tree-shaped sub-database
of $\mn{ch}_\Omc(\Dmc)$ rooted at~$h(y)$.

\begin{restatable}{lemma}{lemcharptimetwo}
\label{lem:charptime2}
For every
$\abf \in\mn{dom}(\Dmc)^{|\xbf|}$, $\Dmc \models Q(\abf)$
iff there is a \Dmc-labeling $\ell$ of $q(\xbf)$ such that
\mbox{$\ell(\xbf)=\abf$}.
\end{restatable}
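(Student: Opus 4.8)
The plan is to prove Lemma~\ref{lem:charptime2} by establishing both directions of the equivalence, with the harder direction being the construction of a \Dmc-labeling from a homomorphism. Throughout I will use the characterization $\Dmc \models Q(\abf)$ iff $\mn{ch}_\Omc(\Dmc) \models q(\abf)$ from Lemma~\ref{lem:chaseprop}(2), so the task reduces to relating homomorphisms $h : q \to \mn{ch}_\Omc(\Dmc)$ to \Dmc-labelings $\ell$.

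\textbf{From a labeling to a homomorphism (the ``if'' direction).} Suppose $\ell$ is a \Dmc-labeling of $q$ with $\ell(\xbf)=\abf$. First I would assemble a homomorphism $h : q \to \mn{ch}_\Omc(\Dmc)$ following the informal recipe stated just before the lemma. On variables $z$ with $\ell(z)\in\mn{dom}(\Dmc)$, set $h(z)=\ell(z)$; condition~2 guarantees this is a partial homomorphism into $\mn{ch}^-_\Omc(\Dmc)$, hence into $\mn{ch}_\Omc(\Dmc)$. For a guarded pair $(x,y)$ crossing the boundary (condition~4), condition~4(b) gives $\Dmc \models (\Omc,\Sbf,\mn{dtree}_{(x,y)})(\ell(x))$, which by Lemma~\ref{lem:chaseprop}(2) yields a homomorphism from $\mn{dtree}_{(x,y)}$ into $\mn{ch}_\Omc(\Dmc)$ sending the root to $\ell(x)$; since $\mn{dtree}_{(x,y)}$ is initial among the ditrees that $q|_{\mn{reach}(x,y)}$ maps to, composing gives a homomorphism of the whole $\mn{reach}(x,y)$-region that agrees with $h$ at $x$. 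Conditions~5--7 are exactly the bookkeeping that makes these regional maps consistent on shared variables, and condition~8 handles $\exists$-MCCs that lie entirely in the chase part, again via Lemma~\ref{lem:chaseprop}(2) applied to $\exists x_0\,\mn{dtree}_{q'}$. I would verify that the pieces glue into a single well-defined $h$ and that every atom of $q$ is satisfied, checking atoms within $V$, atoms wholly inside a tree region, and boundary atoms separately.

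\textbf{From a homomorphism to a labeling (the ``only if'' direction).} This is the direction I expect to be the main obstacle. Given $h : q \to \mn{ch}_\Omc(\Dmc)$ with $h(\xbf)=\abf$, I would define $\ell$ by cases according to where $h$ sends each variable. If $h(z)\in\mn{dom}(\Dmc)$, set $\ell(z)=h(z)$; this immediately gives conditions~1 and~2, and condition~3 follows because in $\mn{ch}_\Omc(\Dmc)$ every incoming $r$-edge at a \Dmc-constant originates at a \Dmc-constant (chase edges only point \emph{into} the freshly generated trees). The delicate part is the variables mapped into the tree-shaped chase part. For each such $z$, the chase structure is a ditree, so $h(z)$ sits below a unique boundary edge: there is a guarded atom realized as $h(x)\in\mn{dom}(\Dmc)$, $h(y)$ a fresh constant, with $z$ in the subtree rooted at $h(y)$. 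I would argue this boundary is witnessed within $q$ by some guarded pair $(x,y)\in G^q_2$, set $\ell(z)=((x',y'),\ell(x))$ per condition~4(c), and then \emph{verify $\exists$-eligibility}: the restriction $q|_{\mn{reach}(x,y)}$ must map into the ditree hanging off $h(y)$, so it is a homomorphic preimage of a ditree, giving 4(a); and Lemma~\ref{lem:chaseprop}(2) together with initiality of $\mn{dtree}_{(x,y)}$ gives 4(b). The subtle points are showing that $\mn{reach}(x,y)$ is \emph{exactly} the set of variables that must travel into that particular subtree (this is where the three closure rules defining $\mn{reach}^i$ do their work, propagating forward along $r(z,u)$ edges and pulling back predecessors of $y$), and that conditions~5--7 hold for the chosen boundary pair. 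Condition~8 is handled by restricting $h$ to a fully-chase $\exists$-MCC.

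\textbf{Where the difficulty concentrates.} The genuine work is checking that the $\mn{reach}$-based regions and the $G^q_2\times\mn{dom}(\Dmc)$ labels correctly and unambiguously capture the tree-part of $h$: that distinct boundary crossings yield consistent labels on overlapping reach-sets, that a variable is never forced into two incompatible subtrees, and that the ditree initiality lets us replace ``$h$ maps this region somewhere in the chase'' by the decidable local condition ``$\Dmc \models (\Omc,\Sbf,\mn{dtree}_{(x,y)})(\ell(x))$''. I would isolate as a sublemma the structural fact that $\mn{ch}_\Omc(\Dmc)$ decomposes as $\mn{ch}^-_\Omc(\Dmc)$ (on $\mn{dom}(\Dmc)$) with disjoint ditrees attached at boundary edges, since every subsequent verification reduces to reasoning separately on the core and on one attached ditree. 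Once that decomposition is in hand, both directions become systematic case analyses over the five kinds of atoms (both endpoints in $V$; both in the same tree; crossing the boundary; both in a fully-existential MCC), and the remaining steps are routine.
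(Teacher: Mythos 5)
Your proposal is correct and follows essentially the same route as the paper: the ``only if'' direction is the identical case analysis on where $h$ sends each variable (database part versus a uniquely attached chase ditree, with $\exists$-MCCs handled separately), and the ``if'' direction performs the same gluing of regional homomorphisms obtained from conditions~4 and~8 via initiality of $\mn{dtree}_{(x,y)}$ and Lemma~\ref{lem:chaseprop}. The only organizational difference is that the paper factors the homomorphism construction through an intermediate notion of \emph{weak} $\Dmc$-labeling (Lemma~\ref{lem:charptime}), whose non-local condition $\ell(x)=\ell(x')$ for all $x'\in\mn{reach}^0(x,y)$ is exactly the consistency-of-overlapping-regions fact you flag as the crux, so that Lemma~\ref{lem:charptime2} itself only has to derive that condition from the local conditions~5--7; you do the same work inline.
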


Note that Conditions~1 to~8 can
all be verified in polynomial time, essentially because the evaluation
of OMQs from $(\ELpoly,\text{CQ})$ is in \PTime when the actual CQ is
tree-shaped; this is implicit in \cite{LutzTomanWolter-IJCAI09}, see
also \cite{DBLP:conf/ijcai/BienvenuOSX13}.

\begin{lemma}\label{lem:d-labeling-ptime}
  For an OMQ $Q=(\Omc,\Sbf,q)$ from $(\ELpoly,\text{CQ})$, an
  \Sbf-database \Dmc, and a mapping $\ell:\mn{var}(q) \rightarrow
  \mn{dom}(\Dmc) \cup \{\exists \} \cup (G^q_2 \times \mn{dom}(\Dmc))
  \}$, the problem of deciding whether $\ell$ is a $\Dmc$-labeling of
  $q$ is in \PTime.
\end{lemma}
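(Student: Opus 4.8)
The plan is to verify directly that each of the eight defining conditions of a $\Dmc$-labeling can be checked in polynomial time, and then combine these checks. Since there are only polynomially many objects to inspect --- $O(|q|)$ atoms, $O(|\mn{var}(q)|)$ answer variables, $O(|\mn{var}(q)|^2)$ guarded pairs in $G^q_2$, and $O(|\mn{var}(q)|)$ maximally connected components --- it suffices to show that the test associated with each condition runs in polynomial time, after precomputing a few auxiliary objects. The only genuinely nontrivial tests are those that hide an OMQ-evaluation problem (Conditions~2, 4(b), and~8); everything else is either a syntactic scan over $q$ or a standard graph computation.

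First I would precompute the data shared across conditions. Computing $V := \{x \mid \ell(x) \in \mn{dom}(\Dmc)\}$ is a linear scan. For each guarded pair $(x,y) \in G^q_2$ with $y$ quantified, the sets $\mn{reach}^i(x,y)$ form a monotone sequence that stabilises after at most $|\mn{var}(q)|$ steps, so $\mn{reach}(x,y)$ is computable by a straightforward fixpoint iteration in polynomial time; the same holds for the $\exists$-MCCs, which are just the connected components of the subgraph of $G_q$ induced by the quantified variables. Deciding whether a subquery is a homomorphic preimage of a ditree --- needed for $\exists$-eligibility in~4(a) and for the premise of~8 --- is in \PTime by the earlier observation that one may exhaustively identify $x_1,x_2$ whenever $r(x_1,z),s(x_2,z)$ occur and then test whether the result is a ditree; this computation simultaneously yields the query $\mn{dtree}_{(x,y)}$, resp.\ $\mn{dtree}_{q'}$. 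With these objects in hand, Conditions~1, 3, 5, 6, 7, and~4(c) reduce to syntactic checks over the atoms of $q$ together with membership tests in the already-computed reach sets, all clearly polynomial.

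The substantive work lies in Conditions~2, 4(b), and~8. For Condition~2 I would first materialise $\mn{ch}^-_\Omc(\Dmc)$, that is, compute for every $a \in \mn{dom}(\Dmc)$ the set of concept names $A$ with $a \in A^{\mn{ch}_\Omc(\Dmc)}$ and the binary facts over $\mn{dom}(\Dmc)$ entailed by $\Dmc$ and $\Omc$; since $\ELpoly$ reasoning over named individuals is tractable and the existential rules of the chase create edges only towards fresh constants, this materialisation is computable in polynomial time. Checking that the restriction of $\ell$ to $V$ is a homomorphism from $q|_V$ to $\mn{ch}^-_\Omc(\Dmc)$ is then a per-atom lookup. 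Conditions~4(b) and~8 each ask to decide $\Dmc \models (\Omc,\Sbf,p)(\cdot)$ where $p$ is one of the tree-shaped queries $\mn{dtree}_{(x,y)}$ or $\exists x_0\,\mn{dtree}_{q'}$; by construction these are ditrees, so the evaluation is an instance of the \PTime-evaluable fragment of $(\ELpoly,\text{CQ})$ with a tree-shaped actual query, which is known to be tractable~\cite{LutzTomanWolter-IJCAI09,DBLP:conf/ijcai/BienvenuOSX13}. There are only polynomially many such instances (one per relevant guarded pair and one per relevant $\exists$-MCC), so their combined cost remains polynomial.

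The main obstacle I expect is making precise that the two OMQ-evaluation subroutines can be invoked in polynomial time in the combined size of $\Omc$, $q$, and $\Dmc$, rather than merely in data complexity: one must verify both that the materialisation $\mn{ch}^-_\Omc(\Dmc)$ is obtainable in combined polynomial time and that the tree-shaped evaluation algorithm underlying~\cite{LutzTomanWolter-IJCAI09} has polynomial combined complexity on ditree queries. Granting these two ingredients, the lemma follows by running all of the above checks and accepting iff every condition holds.
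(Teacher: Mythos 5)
Your proposal is correct and follows essentially the same route as the paper, which dispatches this lemma with the single observation that Conditions~1--8 are polynomial-time checkable because evaluating OMQs from $(\ELpoly,\text{CQ})$ with tree-shaped (ditree) actual queries is tractable in combined complexity \cite{LutzTomanWolter-IJCAI09,DBLP:conf/ijcai/BienvenuOSX13}; you merely spell out the routine precomputations (the reach sets, the ditree test, the materialisation of $\mn{ch}^-_\Omc(\Dmc)$) that the paper leaves implicit. The only nitpick is that an $\exists$-MCC is a maximally connected component of $q$ itself that happens to contain only quantified variables, not an arbitrary connected component of the induced subgraph on quantified variables, but this does not affect the polynomial-time argument.
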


%
%

\subsection{Existential Pebble Games}

We now describe the polynomial time algorithm for evaluating OMQs from
$(\ELpoly,\text{CQ})^\equiv_{\text{UCQ}_k}$ based on the full schema,
first recalling the existential $k+1$-pebble games from~\cite{DKV02},
in a form that does not make pebbles explicit.  The game is played
between two players, Spoiler and Duplicator, on a CQ $q(\xbf)$, a
database~\Dmc, and a candidate answer \abf. The positions are pairs
$(V,h)$ that consist of a set $V$ of quantified variables from $q$ of
size at most $k+1$ and a mapping
$h:V \cup \xbf \rightarrow \mn{dom}(\Dmc)$ such that $h(\xbf)=\abf$.
The initial position is $(\emptyset,\emptyset)$. In each round of the
game, Spoiler chooses a new set $V$ of size at most $k+1$. Then
Duplicator chooses a new mapping $h:V \rightarrow \mn{dom}(\Dmc)$ such
that if $(V',h')$ was the previous position, then $h(x)=h'(x)$ for all
$x \in V \cap V'$.  Spoiler wins when $h$ is not a homomorphism from
$q|_V$ to \Dmc. Duplicator wins if she has a winning strategy, that
is, if she can play forever without Spoiler ever winning. It is known
that when $q$ is of tree width bounded by $k$, then Duplicator has a
winning strategy if and only if there is a homomorphism from $q$ to
$\Dmc$. This remains true if $q$ is equivalent to a CQ of tree width
bounded by $k$. The existence of a winning strategy for Duplicator can
be decided in polynomial time by a straightforward elimination
procedure: start with the set of all positions, exhaustively eliminate
those from which Duplicator loses in one round, and then check whether
$(\emptyset,\emptyset)$ has survived.


Now let $Q=(\Omc,\Sbf_{\mn{full}},q)$ be an OMQ from
$(\ELpoly,\text{CQ})$, \Dmc an \Sbf-database, and \abf a candidate
answer. To decide whether $\Dmc \models Q(\abf)$, we can assume that
\Dmc is consistent with \Omc since this property is decidable in
polynomial time (implicit in~\cite{LutzTomanWolter-IJCAI09}) and the
result is clear on inconsistent databases.  By
Lemma~\ref{lem:charptime2}, it suffices to find a \Dmc-labeling
of~$q$. This is achieved by a version of the existential
$k+1$-pebble game in which positions take the form $(V,\ell)$ where
$V$ is as before and $\ell$ is a mapping from $V \cup \xbf $ to
$\mn{dom}(\Dmc) \cup \{ \exists \} \cup (G^q_2(q) \times
\mn{dom}(\Dmc))$ such that $\ell(\xbf)=\abf$. The moves of Spoiler and
Duplicator are as before.  Spoiler wins if $\ell$ is not a
\Dmc-labeling of $q|_V$ and the winning condition for Duplicator
remains unchanged. The existence of a winning strategy for Duplicator
can be decided in polynomial time by the same elimination
procedure because, by Lemma~\ref{lem:d-labeling-ptime}, it can be
decided in polynomial time whether a given mapping $\ell$ is a
\Dmc-labeling. The following can be proved in the same way
as without ontologies, see~\cite{DKV02,DBLP:journals/jacm/Grohe07}.
\begin{lemma}
  If $q$ is of tree width at most $k$, then Duplicator has a winning
  strategy if and only if there is a \Dmc-labeling of~$q$.
\end{lemma}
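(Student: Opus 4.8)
The plan is to follow the classical correctness argument for existential $k{+}1$-pebble games \cite{DKV02,DBLP:journals/jacm/Grohe07}, replacing throughout the notion of \emph{partial homomorphism} by that of \emph{partial \Dmc-labeling}, i.e.\ a map $\ell$ defined on $V \cup \xbf$ (with $V$ a set of at most $k{+}1$ quantified variables) that is a \Dmc-labeling of $q|_V$. I would prove the two directions separately, the ``if'' direction being routine and the ``only if'' direction being the one that uses the hypothesis that $q$ has tree width at most $k$.

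For the ``if'' direction, suppose $\ell^\ast$ is a global \Dmc-labeling of $q$ with $\ell^\ast(\xbf)=\abf$. Duplicator's strategy is to answer every set $V$ chosen by Spoiler with the restriction $\ell^\ast|_{V\cup\xbf}$; consistency with previous positions is immediate since the same $\ell^\ast$ is used throughout, so it only remains to show that $\ell^\ast|_{V\cup\xbf}$ is always a \Dmc-labeling of $q|_V$, which guarantees that Spoiler never wins. The core observation is that the restriction of a \Dmc-labeling of $q$ to any subquery $q|_V$ is a \Dmc-labeling of $q|_V$: Conditions~1--7 refer only to single atoms or guarded pairs and are therefore inherited by subqueries. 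The only delicate point is Condition~8, since an $\exists$-MCC of $q|_V$ entirely labeled $\exists$ is a connected subquery $p$ of some $\exists$-MCC $q'$ of $q$, and one has to transfer $\Dmc \models (\Omc,\Sbf,\exists x_0\,\mn{dtree}_{q'})$ down to $p$. This follows by monotonicity: from $p \subseteq q'$ one gets a query homomorphism $\mn{dtree}_{p}\to\mn{dtree}_{q'}$ by initiality, and OMQ satisfaction is preserved along such homomorphisms.

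For the ``only if'' direction I would first pass from Duplicator's winning strategy to the family $\mathcal{W}$ of all positions from which Duplicator wins; exactly as in the ontology-free case, $\mathcal{W}$ is nonempty, closed under restriction, and has the extension property up to size $k{+}1$. Then I fix a tree decomposition $(V_D,E_D,\mu)$ of $G_{q|_\ybf}$ of width at most $k$, root it, and process it top-down: starting from the $\xbf$-only position, I repeatedly invoke the extension property to choose, for each bag, a labeling in $\mathcal{W}$ that agrees with the already-fixed labeling of the parent bag on their shared variables. By the connectivity condition of tree decompositions these local choices agree on every shared variable, so their union is a well-defined global map $\ell^\ast$ on $\mn{var}(q)$ with $\ell^\ast(\xbf)=\abf$. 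The ``local'' Conditions~1--7 (and the homomorphism Condition~2) then hold for $\ell^\ast$ because, by the edge-covering property, every atom and every guarded pair of $q$ lies inside $\mu(t)\cup\xbf$ for some bag $t$ (the answer variables being present in every position), and on that bag $\ell^\ast$ restricts to a member of $\mathcal{W}$.

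The main obstacle is Condition~8: an $\exists$-MCC $q'$ of $q$ that is entirely labeled $\exists$ under $\ell^\ast$ may span many bags and so far exceed the window size $k{+}1$, whereas the game only ever inspects the corresponding condition on the bounded pieces $q'\cap\mu(t)$. To bridge this gap I would exploit that satisfaction of a \emph{tree-shaped} \ELpoly-query composes bottom-up along the query tree: $\Dmc \models (\Omc,\Sbf,\exists x_0\,\mn{dtree}_{q'})$ holds iff it holds node by node, each step involving only a variable and its children in $\mn{dtree}_{q'}$. Every such local composition step is witnessed inside a single bag through the local instance of Condition~8, and since all chosen local labelings come from the one winning family $\mathcal{W}$ these witnesses are mutually consistent; assembling them along the query tree yields the global satisfaction, using crucially that the \ELpoly-chase generates only ditrees so that $q'$ is indeed a homomorphic preimage of a ditree. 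The remaining bookkeeping is routine and mirrors \cite{DKV02,DBLP:journals/jacm/Grohe07}.
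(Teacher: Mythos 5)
The paper never actually writes out a proof of this lemma (it only asserts that it ``can be proved in the same way as without ontologies''), so I am judging your plan against that standard argument. Your ``if'' direction is essentially fine, modulo one omission: Condition~8 of $\Dmc$-labelings applies to every $\exists$-MCC of $q|_V$ all of whose variables are labeled \emph{outside} $\mn{dom}(\Dmc)$, which includes components carrying labels of the form $((x',y'),a)$ rather than $\exists$; for those the required facts come from Conditions~4a and~4b (via $\mn{dtree}_{(x',y')} $ rooted at $a$) rather than from Condition~8 of the global labeling, but the monotonicity argument is the same.

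The ``only if'' direction has a genuine gap at exactly the step you flag as the main obstacle, and your proposed bridge does not close it. If Condition~8 is checked \emph{locally}, i.e.\ one only requires $\Dmc \models (\Omc,\Sbf,\exists x_0\,\mn{dtree}_{p})$ for the $\exists$-MCCs $p$ of each window $q|_{\mu(t)}$, then these facts do not compose to $\Dmc \models (\Omc,\Sbf,\exists x_0\,\mn{dtree}_{q'})$ for their union $q'$: take $q$ to be the Boolean directed path $r(y_1,y_2)\wedge\cdots\wedge r(y_{n-1},y_n)$ (tree width $1$, a single $\exists$-MCC), $\Omc=\emptyset$ and $\Dmc=\{r(a,b)\}$. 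Every window of size $2$ yields the ditree query $\exists x_0\exists y\, r(x_0,y)$, which $\Dmc$ satisfies, so Duplicator survives every round by labeling everything $\exists$ --- consistency across rounds is trivial because $\exists$ is a single symbol --- yet for $n\geq 3$ we have $\Dmc\not\models Q$ and hence, by Lemma~\ref{lem:charptime2}, no $\Dmc$-labeling of $q$ exists. Your appeal to ``mutual consistency because all local labelings come from the one winning family'' is exactly what fails here: membership in the winning family constrains only the \emph{labels}, and the label $\exists$ records nothing about where in $\mn{ch}_\Omc(\Dmc)$ the variable is sent, so the homomorphisms witnessing the local instances of Condition~8 at different bags need not agree on shared variables, and your bottom-up composition has nothing to compose. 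The lemma is only true if Condition~8 (like the $\mn{reach}$-sets in Condition~4) is evaluated with respect to the $\exists$-MCCs of the \emph{full} query $q$: the moment a single variable of such a component $q'$ is labeled $\exists$ in some position, the global requirement $\Dmc\models(\Omc,\Sbf,\exists x_0\,\mn{dtree}_{q'})$ must be imposed there, so that nothing needs to be assembled across bags (and the check stays polynomial by Lemma~\ref{lem:d-labeling-ptime}). With that reading your tree-decomposition gluing goes through; with the purely local reading you adopt, the statement you are proving is false.
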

The remaining obstacle on the way to prove
Theorem~\ref{thm:polycombcompl} is that $q$ needs not be of tree width
$k$. It would be convenient to play on a rewriting of $q$ instead,
which by Theorem~\ref{thm:fulldatabase} and Lemma~\ref{lem:rewritings}
is of tree width bounded by $k$.  However, we have no way of computing
a full rewriting in \PTime.  The solution is to first extend $q$ with
additional atoms as in Step~5 of the construction of
rewritings in Section~\ref{sect:eval} and to then play on the
resulting CQ $q'$. It can be shown that this gives the correct result
because when $(\Omc,\Sbf_{\mn{full}},q')$ is a rewriting of $Q$,
then $q'$ must syntactically be a subquery of $q'$.

\section{Deciding Semantic Tree-Likeness}
\label{sect:deciding}

We study the complexity of deciding whether a given OMQ is
$\text{UCQ}_k$-equivalent. Apart from $\ELHIbot$ and its fragments
introduced in Section~\ref{sect:prelims}, we also consider the
additional fragments DL-Lite$^\Rmc$ and DL-Lite$^\Rmc_{\mn{horn}}$,
which are prominent in ontology-based data integration~\cite{ACKZ09,CGLLR07}.

A \emph{basic concept} is a concept name or of one of the forms
$\top$, $\bot$, $\exists r . \top$, and $\exists r^- . \top$. A
\emph{DL-Lite$^\Rmc_{\mn{horn}}$-ontology} is a finite set of statements of
the form
$$
   B_1 \sqcap \cdots \sqcap B_n \sqsubseteq B
   \quad
   r \sqsubseteq s
   \quad
   r \sqsubseteq s^-
   \quad
   r_1 \sqcap \cdots \sqcap r_n \sqsubseteq \bot
$$
where $B_1,\dots,B_n,B$ range over basic concepts and
$r,s,r_1,\dots,r_n$ range over role names. A \emph{DL-Lite$^\Rmc$-ontology} \Omc is a \emph{DL-Lite$^\Rmc_{\mn{horn}}$-ontology} such that whenever $B_1
\sqcap \cdots \sqcap B_n \sqsubseteq B \in \Omc$, then $n=1$ or $B=
\bot$.

\subsection{Non-full Schema}

We first concentrate on the case where the schema is non-full. The
next result provides lower bounds.

\begin{restatable}{theorem}{thmmainhard}
\label{thm:mainhard}
  For any $k \geq 1$, $\text{UCQ}_k$-equivalence is
  \begin{enumerate}
  \item \ExpTime-hard in  $(\EL,\text{CQ})$;
  \item \TwoExpTime-hard in  $(\ELI,\text{CQ})$;
  \item $\Pi^p_2$-hard in $(\text{DL-Lite}^\Rmc,\text{CQ})$.
  \end{enumerate}
  The same lower bounds apply to $\text{CQ}_k$-equivalence, both while
  preserving the ontology and in the general case.
\end{restatable}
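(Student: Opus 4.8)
The plan is to prove the three lower bounds for $\text{UCQ}_k$-equivalence by reduction from natural hard problems for each DL, and then argue that the same constructions already witness $\text{CQ}_k$-hardness (both variants). The unifying idea is that deciding $\text{UCQ}_k$-equivalence of an OMQ $Q = (\Omc, \Sbf, q)$ requires, in essence, reasoning about the behaviour of $\Omc$ over all $\Sbf$-databases, so I expect to encode the relevant reasoning task of the ontology language into the tree-width-related question. The schema being non-full is crucial: it lets me hide concept/role names from the database and thereby force the chase to exhibit or suppress structure that changes tree width only when the ontology satisfies some hard-to-check property.

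For Point~1 (\ExpTime-hardness in $(\EL,\text{CQ})$) and Point~2 (\TwoExpTime-hardness in $(\ELI,\text{CQ})$), the natural source of hardness is \emph{subsumption} (or, more usefully, \emph{OMQ emptiness/containment}), which is \ExpTime-complete for $\EL$-style reasoning and \TwoExpTime-complete once inverse roles are present, as noted in the preliminaries. First I would fix a small "high-tree-width gadget" CQ $q_0$ whose core has tree width exactly $2$ (for instance the four-cycle of Figure~\ref{fig:CQktwo}), so that $q_0$ is \emph{not} $\text{CQ}_1$-equivalent on its own. The reduction then takes an instance of the hard reasoning problem and builds an ontology $\Omc$ together with a query $q$ that glues $q_0$ to a "trigger" subquery, arranging things so that: if the reasoning instance is positive, then $\Omc$ collapses the gadget (via concept inclusions, exactly as $Q_1$ in Example~\ref{ex:first} collapses a tree-width-$2$ core to tree width $1$), making $Q$ equivalent to a tree-like OMQ; and if it is negative, the gadget survives in $\mn{ch}_\Omc(\Dmc)$ for suitable witnessing $\Sbf$-databases, so no $\text{UCQ}_k$ can be equivalent to $Q$. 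By Theorem~\ref{thm:bestapprox} and Corollary~\ref{prop:wasaclaim}, $\text{UCQ}_k$-equivalence reduces to comparing $Q$ with its $\text{UCQ}_k$-approximation $Q_a$, so non-equivalence is witnessed by some database of tree width exceeding $k$ on which $Q$ and $Q_a$ disagree; I must ensure the negative case produces exactly such a witness while the positive case precludes it.

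For Point~3 ($\Pi^p_2$-hardness in $(\text{DL-Lite}^\Rmc,\text{CQ})$), reasoning in DL-Lite$^\Rmc$ is cheap, so the hardness must come from the combinatorics of choosing databases and contractions rather than from ontology reasoning. Here I would reduce from a canonical $\Pi^p_2$-complete problem such as $\forall\exists$-\textsc{3SAT} (validity of a quantified Boolean formula $\forall \mathbf{u}\,\exists \mathbf{v}\,\psi$). The outer $\forall$ over $\mathbf{u}$ corresponds to quantifying over a family of candidate witnessing $\Sbf$-databases (truth assignments to $\mathbf{u}$), and the inner $\exists$ over $\mathbf{v}$ plus the satisfaction of $\psi$ corresponds to the existence of a low-tree-width contraction/homomorphism that rescues equivalence — matching the $\Pi^p_2$ shape of "for all witness databases, there exists a tree-like explanation". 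The $\Sbf$-schema is again used to control which disjunctive routes the chase can take via role and concept inclusions of DL-Lite$^\Rmc$.

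The main obstacle I anticipate is the negative-case soundness in Points~1 and~2: I must guarantee that when the reasoning instance is negative, there genuinely exists a witnessing database of unbounded (or at least $>k$) tree width on which $Q$ separates from its $\text{UCQ}_k$-approximation, and that this holds \emph{for every} $k$ simultaneously (the statement is "for any $k \geq 1$"), not just for $k=1$. Controlling tree width robustly across all $k$ typically forces the gadget to be a scalable grid-like family rather than a single four-cycle, so I expect to replace $q_0$ by a parametrized high-tree-width query and to argue via the chase properties (Lemma~\ref{lem:chaseprop}, especially that the chase preserves tree width) that the ontology cannot lower its tree width in the negative case. Finally, for the transfer to $\text{CQ}_k$-equivalence I would invoke Corollary~\ref{coro:full-schema-equiv}-style reasoning where applicable, but since the schema is non-full I must instead check directly that in each reduction the constructed $Q$ is $\text{UCQ}_k$-equivalent if and only if it is $\text{CQ}_k$-equivalent — which I can arrange by making the positive-case rewriting a \emph{single} tree-like CQ rather than a proper union, so the two notions provably coincide on the instances produced.
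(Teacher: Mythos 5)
Your plans for Points~1 and~3 and for the transfer to CQ$_k$-equivalence track the paper closely: Point~1 is indeed a reduction from OMQ emptiness for $(\EL_\bot,\text{AQ})$ under a restricted schema (note that plain subsumption is in \PTime for \EL, so the non-full schema is not just convenient but necessary for the hardness source), with a trigger concept $A$ that can only be derived together with a concept $B$ that chase-generates the gadget; Point~3 is exactly a $\forall\exists$-QBF reduction with assignments to the universal variables encoded as databases and the existential part realized by paths in the chase; the scalable gadget for general $k$ is a subdivided $(k+2)$-clique whose levels collapse onto a directed path (so it is absorbable by an \EL-concept) and whose vertices carry distinct concept names (so it is rigid and does not map into its own $k$-unraveling — a grid would need the same labelling trick); and the positive case is always made equivalent to a single atomic CQ, which is precisely how the paper gets CQ$_k$-hardness in both variants for free.

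The genuine gap is in Point~2. Your mechanism — a \emph{fixed} high-tree-width gadget $q_0$ glued to a trigger, with the ontology collapsing $q_0$ \emph{via concept inclusions} exactly when the hard instance is positive — cannot yield \TwoExpTime-hardness for $(\ELI,\text{CQ})$. Any collapse condition that can be internalized as a concept inclusion (``if concept $C$ holds at the trigger point, generate the gadget'') reduces the positive/negative dichotomy to a concept-level entailment question, and those are all decidable in \ExpTime for \ELHIbot; the \TwoExpTime-hardness of containment in $(\ELI,\text{CQ})$ lives essentially in having a genuine CQ on the right-hand side, which cannot be expressed as a concept. Consequently the high-tree-width conjunct cannot be a fixed gadget decoupled from the instance: if the instance's CQ $p$ happened to have low tree width, the negative case would not produce a witness of tree width exceeding $k$, and a fixed gadget generated unconditionally below the trigger gets absorbed in both cases. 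The paper resolves this by taking $q(x)=A^*(x)\wedge q_w$ where $q_w$ \emph{is} the CQ from the \TwoExpTime containment-hardness construction of Bienvenu, Hansen, Lutz and Wolter, re-engineered (via an alternating Turing machine with work alphabet of size exceeding $k+1$) so that $q_w$ is connected, has tree width exceeding $k$, is not equivalent to any CQ of smaller tree width, and has a trivial chase over the visible schema; the negative case then forces any tree-width-$\leq k$ contraction in the UCQ$_k$-approximation to map onto $q_w$ itself, contradicting its rigidity. Without importing (or reproving) these extra structural properties of the containment instance, your reduction for Point~2 does not go through.
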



Point~1 is proved by reduction from the problem whether a given OMQ
$(\Omc,\Sbf,A(x))$ from $(\EL_\bot,\text{CQ})$ is empty (as defined in
Section~\ref{sect:prelims}), which is known to be
\ExpTime-hard~\cite{jair-data-schema}.
Point~2 is shown by reducing the word problem of an exponentially
space bounded alternating Turing machine $M$, with work alphabet of
size at least $k+1$, to the complement of (U)CQ$_k$-equivalence.
Our reduction carefully makes use of
a construction from~\cite{DBLP:conf/ijcai/Bienvenu0LW16}, where it is
shown that containment in $(\ELI,\text{CQ})$ is~\TwoExpTime-hard.
For Point~3 we provide a reduction from $\forall\exists$-QBF, building
on and extending an \NPclass-hardness proof for the combined
complexity of (a restricted version of) query evaluation in
$(\text{DL-Lite}^\Rmc,\text{CQ})$ on databases of the form $\{ A(a)
\}$ given in \cite{DBLP:conf/dlog/KikotKZ11}.

The next result establishes matching upper bounds.

\begin{restatable}{theorem}{thmdecupper}
\label{thm:decupper}
  For any $k \geq 1$, UCQ$_k$-equivalence is
  \begin{enumerate}
  \item in \ExpTime in $(\ELpoly,\text{UCQ})$;
  \item in \TwoExpTime in $(\ELHIbot,\text{UCQ})$;
  \item in $\Pi^p_2$ in $(\text{DL-Lite}^\Rmc_{\mn{horn}},\text{UCQ})$.
  \end{enumerate}
\end{restatable}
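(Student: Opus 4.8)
The plan is to reduce $\text{UCQ}_k$-equivalence to a single OMQ containment test and then bound the cost of that test for each DL. By Corollary~\ref{prop:wasaclaim}, an OMQ $Q=(\Omc,\Sbf,q)$ is $\text{UCQ}_k$-equivalent iff $Q\equiv Q_a$, where $Q_a=(\Omc,\Sbf,q_a)$ is its $\text{UCQ}_k$-approximation; since $Q_a\subseteq Q$ holds by construction, this amounts to deciding the one containment $Q\subseteq Q_a$. Containment of OMQs is decidable in all the DLs considered \cite{DBLP:conf/ijcai/Bienvenu0LW16}, so the only real issue is complexity, and its source is clear: $q_a$ is the disjunction of \emph{all} contractions of CQs in $q$ of tree width at most $k$, so it has exponentially many disjuncts (although each has tree width $\le k$ and polynomial size). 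Feeding $Q_a$ with $q_a$ written out to the containment algorithm would cost one exponential too much, so the whole point is to test $Q\subseteq Q_a$ without ever materialising $q_a$.

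For $(\ELHIbot,\text{UCQ})$ I would follow the automata route for OMQ containment of \cite{DBLP:conf/ijcai/Bienvenu0LW16}. First show that a counterexample to $Q\subseteq Q_a$, if one exists, is witnessed by a \emph{pseudo-tree-shaped} $\Sbf$-database consistent with $\Omc$, so the question can be phrased over finite trees. Then build an automaton $\mathcal{A}_+$ accepting the tree encodings of databases $\Dmc$ with $\Dmc\models Q$ and an automaton $\mathcal{A}_a$ accepting those with $\Dmc\models Q_a$, so that $Q\subseteq Q_a$ iff $\mathcal{A}_+\cap\overline{\mathcal{A}_a}=\emptyset$. The standard construction runs in doubly exponential time \emph{when both OMQs are given explicitly}; applying it with $q_a$ spelled out would be \ThreeExpTime. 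The fix is to integrate $q_a$ into $\mathcal{A}_a$ by letting it \emph{guess a disjunct on the fly} while it descends the tree, namely a CQ $p$ in $q$ together with a partition of $\mn{var}(p)$ whose quotient has tree width at most $k$ and a homomorphism of that quotient into the chase. Since a guessed partition/homomorphism is just a subset-style annotation of $\mn{var}(q)$, $\mathcal{A}_a$ remains \emph{singly} exponential in $|Q|$; with inverse roles forcing two-way navigation, its complement and $\mathcal{A}_+$ are at most doubly exponential, and emptiness of the product is decidable in \TwoExpTime.

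For $(\ELpoly,\text{UCQ})$ the target is only \ExpTime, so the doubly-exponential complementation above must be avoided altogether. Here the absence of inverse roles makes the chase grow strictly downward and admit the $\Dmc$-labeling characterisation of Lemma~\ref{lem:charptime2}. I would exploit this to replace complementation by an explicit, single-exponential enumeration of canonical pseudo-tree-shaped $\Sbf$-witnesses. For each candidate $\Dmc$ I check $\Dmc\models Q$ (in \NPclass in $|\Dmc|$, hence within the exponential budget) and check $\Dmc\not\models Q_a$ by evaluating, one at a time, each of the exponentially many tree-width-$\le k$ contractions of $q$ with the \PTime evaluation algorithm for bounded-tree-width OMQs over $\ELpoly$; exponentially many polynomial tests give an exponential loop and no automaton is complemented. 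Correctness rests on a small-witness argument: guided by the $\Dmc$-labeling, a counterexample can be shrunk to one in this canonical family while preserving both $\models Q$ (keep the labeled original constants and the needed chase trees) and $\not\models Q_a$ (free, by monotonicity of $Q_a$ under taking sub-databases).

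For $(\text{DL-Lite}^\Rmc_{\mn{horn}},\text{UCQ})$ the polynomial-depth, well-structured DL-Lite chase yields a polynomial-size-witness property: if $Q\not\subseteq Q_a$, then there is a witness $\Dmc$ of size polynomial in $|Q|$. Non-containment is then in $\Sigma^p_2$: guess $\Dmc$ and a tuple $\abf$, verify $\Dmc\models Q(\abf)$ in \NPclass (guess a homomorphism from some CQ of $q$ into the polynomial relevant fragment of the chase), and verify $\Dmc\not\models Q_a(\abf)$ in \textsc{coNP} (its negation ``$\Dmc\models Q_a(\abf)$'' is in \NPclass, guessing a CQ in $q$, a tree-width-$\le k$ contraction of it, and a homomorphism). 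Hence $\text{UCQ}_k$-equivalence, being the complement, is in $\Pi^p_2$. In every case the decisive obstacle is the exponential size of $q_a$: the argument turns on taming it either by on-the-fly guessing of a single disjunct (the automata cases) or by exploiting the \PTime evaluability of each individual bounded-tree-width contraction (the enumeration and DL-Lite cases), and, specifically for $(\ELpoly,\text{UCQ})$, on steering clear of the complementation blow-up that would otherwise push the bound to \TwoExpTime.
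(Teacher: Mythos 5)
Your overall strategy coincides with the paper's: by Corollary~\ref{prop:wasaclaim} you reduce UCQ$_k$-equivalence to the single containment $Q\subseteq Q_a$, and the whole difficulty is the exponential number of disjuncts of $q_a$. For Points~2 and~3 your solutions are essentially the ones in the paper. The automaton for $Q_a$ is kept singly exponential by folding the choice of a disjunct into the transition relation; the paper realises your ``guess a disjunct on the fly'' concretely via forest decompositions, observing that every forest decomposition of a CQ in $q_a$ is already a forest decomposition of $q$ because forest decompositions are defined on contractions, so the automaton construction of Theorem~\ref{thm:fromcontainment} needs no enlargement at all. For DL-Lite$^\Rmc_{\mn{horn}}$ the paper likewise guesses a polynomial-size witness, verifies $\Dmc\models Q$ in \NPclass, and refutes $\Dmc\models Q_a$ in \coNP by co-guessing an identification of the variables of $q$ whose quotient has tree width at most $k$ --- exactly your argument.

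The gap is in Point~1. The paper needs no separate algorithm for \ELpoly: Theorem~\ref{thm:fromcontainment} states that the containment automaton $\Amf_Q$ has only \emph{polynomially} many states when \Omc is an \ELpoly-ontology, so the very same emptiness test that is \TwoExpTime for \ELHIbot runs in \ExpTime for \ELpoly. Your replacement --- enumerating canonical pseudo-tree-shaped witnesses and testing each against $Q$ and against all bounded-tree-width contractions --- is not carried through. You give no bound on the number or the size of the canonical witnesses; the tree-width-1 components of a pseudo tree are a priori of unbounded depth, and even without inverse roles one still has to prove that some fixed (singly exponential) depth suffices to detect a counterexample. Moreover, the ``small-witness argument guided by the $\Dmc$-labeling'' leans on Lemma~\ref{lem:charptime2}, which the paper establishes only for OMQs over the \emph{full} schema, whereas Point~1 of the theorem concerns arbitrary schemas, and it is precisely the non-full schema that makes these constructions delicate. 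As written, the \ExpTime bound for $(\ELpoly,\text{UCQ})$ is therefore not established; the other two points are fine.
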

The proof rests on Corollary~\ref{prop:wasaclaim}, that is, we compute
the UCQ$_k$-approximation $Q_a$ of the input OMQ $Q$ and check whether
$Q \subseteq Q_a$ (the converse holds unconditionally). It was shown
in \cite{DBLP:conf/ijcai/Bienvenu0LW16} that OMQ containment is
\ExpTime-complete in $(\ELHbot,\text{CQ})$ and \TwoExpTime-complete in
$(\ELHIbot,\text{CQ})$ and in \cite{DBLP:conf/kr/BienvenuLW12}. that
OMQ-containment is $\Pi^p_2$-complete in
$(\text{DL-Lite}^\Rmc_{\mn{horn}},\text{CQ})$. These results extend to
the OMQ languages in Points~1 and~2 of Theorem~\ref{thm:decupper}.  We
show that with a bit of care we can obtain the same overall
complexities despite the fact that the UCQ in $Q_a$ consisting of
exponentially many (polynomial size) CQs. 

\subsection{Full Schema}

We now focus on the special case of the full schema, where
the complexity of deciding semantic tree-likeness turns out to be
identical to that of query evaluation.
%

\begin{restatable}{theorem}{thmnpcompl}
\label{thm:npcompl}
For any $k \geq 1$, and OMQs based on the full schema,
(U)CQ$_k$-equivalence is complete for
\begin{enumerate}

\item \NPclass between $(\EL,\text{CQ})$ and $(\ELpoly,\text{UCQ})$;

\item \ExpTime between $(\ELI,\text{CQ})$ and $(\ELHIbot,\text{UCQ})$;

\item \NPclass between $(\text{DL-Lite}^\Rmc,\text{CQ})$ and
  $(\text{DL-Lite}^\Rmc_{\mn{horn}})$.

\end{enumerate}
\end{restatable}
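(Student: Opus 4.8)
The plan is to derive all three upper bounds from a single simplification available under the full schema, and to obtain the lower bounds by inheriting classical hardness for Points~1 and~3 and by a dedicated tree-width-collapse reduction for Point~2. The key enabler is a \emph{full-schema containment lemma}: for $Q=(\Omc,\Sbf_{\mn{full}},q)$ and any $Q'=(\Omc,\Sbf_{\mn{full}},q')$ from $(\ELHIbot,\text{UCQ})$, one has $Q \subseteq Q'$ iff $\Dmc_p \models Q'$ for every CQ $p$ in $q$, where $\Dmc_p$ is the canonical database of $p$. The left-to-right direction is immediate since $\Dmc_p \models Q$. For the converse, if $p \to \mn{ch}_\Omc(\Dmc)$ for some $\Sbf$-database \Dmc, then by universality (Lemma~\ref{lem:chaseprop}) $\mn{ch}_\Omc(\Dmc_p) \to \mn{ch}_\Omc(\Dmc)$, and composing with a homomorphism witnessing $\Dmc_p \models Q'$ yields $q' \to \mn{ch}_\Omc(\Dmc)$. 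This argument is insensitive to inverse roles, so it applies uniformly across all three logics.

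Writing $Q_a$ for the UCQ$_k$-approximation of $Q$, Corollary~\ref{prop:wasaclaim} gives that $Q$ is (U)CQ$_k$-equivalent iff $Q \equiv Q_a$, and since $Q_a \subseteq Q$ always holds this reduces to $Q \subseteq Q_a$; by the containment lemma this in turn holds iff, for every CQ $p$ in $q$, some contraction $q'$ of a CQ in $q$ with $\width(q') \leq k$ admits a homomorphism $q' \to \mn{ch}_\Omc(\Dmc_p)$ respecting the answer variables (using also Corollary~\ref{coro:full-schema-equiv} so that CQ$_k$- and UCQ$_k$-equivalence coincide). For Points~1 and~3 this yields an \NPclass procedure: guess, for each $p$, a contraction together with a width-$\leq k$ tree decomposition (polynomial) and a homomorphism into $\mn{ch}_\Omc(\Dmc_p)$ specified by giving each variable a polynomial-length \emph{address} in the forest-shaped chase (a database constant followed by a bounded sequence of existential steps), whose consistency is checkable in \PTime since fact membership in the chase is tractable for $\ELpoly$ \cite{LutzTomanWolter-IJCAI09} and for DL-Lite$^\Rmc_{\mn{horn}}$. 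For Point~2 the chase need no longer be forest-shaped, but the same characterization is decided in \ExpTime by iterating over the (exponentially many) contractions $q'$ of $q$, checking $\width(q') \leq k$ in \PTime and testing $\Dmc_p \models (\Omc,q')$ by $\ELHIbot$-CQ evaluation, which is in \ExpTime. This is exactly the point where the full schema lowers the \TwoExpTime bound of Theorem~\ref{thm:decupper}: the general OMQ-containment test is replaced by evaluation on the canonical databases $\Dmc_p$.

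For the lower bounds of Points~1 and~3 I would take the empty ontology, which lies in both $\EL$ and DL-Lite$^\Rmc$. Under the full schema, $(\emptyset,\Sbf_{\mn{full}},q)$ is (U)CQ$_k$-equivalent iff $q$ is classically equivalent to a (U)CQ of tree width at most $k$, i.e.\ iff the core of $q$ has tree width at most $k$; the latter is \NPclass-hard for every fixed $k \geq 1$ \cite{DKV02}, so the reduction is a direct embedding and gives the \NPclass lower bounds.

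The remaining and by far most delicate part is \ExpTime-hardness of (U)CQ$_k$-equivalence in $(\ELI,\text{CQ})$ under the full schema, which I would establish by reduction from $\ELI$-subsumption $\Omc \models A \imp B$, an \ExpTime-hard task \cite{DBLP:conf/owled/BaaderLB08}. Mirroring the effect in Example~\ref{ex:first}, where an entailment licenses an identification of variables that collapses the tree width, I would build $Q=(\Omc',\Sbf_{\mn{full}},q)$ whose CQ contains a rigid gadget of tree width exactly $k+1$ (a $(k+1)$-clique- or grid-like pattern) together with a probe attached at two designated variables $x,y$, with $\Omc'$ arranged so that $\Omc \models A \imp B$ forces $x$ and $y$ to coincide in $\mn{ch}_{\Omc'}(\Dmc_q)$ and thereby contracts the gadget to tree width $k$, whereas if the subsumption fails no contraction of $q$ simultaneously preserves equivalence and reaches tree width $k$; then $Q$ is CQ$_k$-equivalent iff the subsumption holds. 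The main obstacle is designing the gadget so that the \emph{only} route to tree width $k$ passes through the intended identification, and controlling how the inverse roles of $\Omc$ interact with the chase; I expect to reuse, stripped down to the single-exponential level permitted by the full schema, the book-keeping from the containment-hardness construction of \cite{DBLP:conf/ijcai/Bienvenu0LW16}. The matching \ExpTime upper bound above then yields completeness.
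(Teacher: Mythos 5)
Your upper bounds and the \NPclass lower bounds are essentially the paper's argument. The paper likewise exploits that, under the full schema and with a shared ontology, containment reduces to evaluation on the canonical databases of the disjuncts, and its \NPclass/\ExpTime procedures guess (resp.\ enumerate) a candidate width-$k$ query and verify equivalence by OMQ evaluation; your variant guesses a width-$\leq k$ contraction directly via the UCQ$_k$-approximation and Corollary~\ref{prop:wasaclaim}, whereas the paper guesses a subquery of $q$ extended with the chase-generated trees, justified by Theorem~\ref{thm:fulldatabase} and rewritings. Both routes are sound. One small soft spot: your \NPclass certificate encodes the homomorphism into $\mn{ch}_\Omc(\Dmc_p)$ by ``polynomial-length addresses'' in the anonymous forest. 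For a connected piece of $q'$ that maps \emph{entirely} into the anonymous part, polynomial depth is not immediate; it holds for \ELpoly and DL-Lite$^\Rmc_{\mn{horn}}$ because the type of an anonymous element depends only on the existential restriction that generated it, so any match can be pulled up to depth $O(|\Omc|+|q|)$ — but that argument (or simply invoking \NPclass-membership of OMQ evaluation in these languages as a black box, as the paper does) needs to be said.

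The genuine gap is the \ExpTime lower bound for Point~2, which you correctly identify as the delicate part but leave as a plan. You have the right source problem (\ELI subsumption $\Omc\models A\imp B$), but the mechanism you propose — arranging $\Omc'$ so that the entailment \emph{forces two gadget variables to be identified} in $\mn{ch}_{\Omc'}(\Dmc_q)$, collapsing the tree width — is both harder to realize than necessary and not what makes the reduction work. The paper's construction instead makes the entailment render the gadget \emph{redundant}: take $q(x)=A(x)\wedge g$ where $g$ is the diamond $r(x,y_1)\wedge r(x,y_2)\wedge B_1(y_1)\wedge B_2(y_2)\wedge r(y_1,z)\wedge r(y_2,z)$ (tree width $2$, a core), and set $\Omc'=\Omc\cup\{B\imp \ex r.(B_1\+ B_2\+\ex r.\top)\}$ with $B_1,B_2,r$ fresh. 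If $\Omc\models A\imp B$, then whenever $A$ is derived at $a$ the chase grows a path below $a$ into which $g$ maps (sending $y_1,y_2$ to the same anonymous element), so $Q\equiv(\Omc',\Sbf_{\mn{full}},A(x))$, which has tree width~$1$; if not, a countermodel with the gadget attached witnesses, via $k$-unraveling, that no width-$\leq k$ contraction is equivalent. For $k>1$ the diamond is replaced by a subdivided oriented $(k+2)$-clique whose vertices carry distinct concept names, chosen so that it maps homomorphically into a single chase-generated path described by an \EL-concept. Note that the added axiom needs no inverse roles and no identification of database elements ever occurs — the collapse happens because the anonymous part absorbs the gadget, so the ``only route to tree width $k$'' problem you worry about disappears. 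Without a concrete gadget and a proof of both directions, this part of your proposal is not yet a proof.
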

The \NPclass lower bounds are inherited from the case where the
ontology is empty~\cite{DKV02}, while the \ExpTime lower bound is
proved by a reduction from the subsumption problem in \ELI~\cite{DBLP:conf/owled/BaaderLB08}.
The upper bounds rest on Theorem~\ref{thm:fulldatabase}, that is,
given an input OMQ $Q=(\Omc,\Sbf,q)$ we first extend $q$ to a CQ $q'$
based on \Omc as in the 
second step of the construction of retracts, then guess a subquery $q''$ of $q'$, and
finally check whether $Q \subseteq (\Omc,\Sbf,q'')$. This yields the
upper bounds in Theorem~\ref{thm:npcompl} as containment of two OMQs
that share the same ontology and are based on the full schema
trivially reduces to OMQ evaluation, which is of the stated
complexities
\cite{DBLP:conf/dlog/KikotKZ11,LutzTomanWolter-IJCAI09,DBLP:conf/kr/OrtizRS10}.

\section{Dealing with Functional Roles}\label{sec:dl-lite-f}

Until now we considered description logics that do not admit functional roles, an important feature for ontology modeling. Here we take a first look, focussing on DL-Lite$^\Fmc$ as a basic yet prominent such DL~\cite{CGLLR07}.  While a main observation is that functional roles result in serious technical complications for tree widths exceeding one, we are also able to obtain some interesting initial results.
Throughout the section, we focus on Boolean CQs (BCQs).
%
%
%
%
%

Recall that a \emph{basic concept} is a concept name or of one of the forms $\top$, $\bot$, $\exists r . \top$, and $\exists r^- . \top$. A \emph{DL-Lite$^\mathcal{F}$-ontology} is a finite set of statements of
the form
$$
   B_1 \sqsubseteq B_2
   \quad
   B_1 \sqcap \cdots \sqcap B_n \sqsubseteq \bot
   \quad
   r_1 \sqcap \cdots \sqcap r_n \sqsubseteq \bot
   \quad
   (\mathsf{funct}~r)
$$
where $B_1,\dots,B_n$ range over basic concepts and
$r_1,\dots,r_n,r$ range over role names.
An interpretation {\em satisfies} $(\mathsf{funct}~r)$ if
whenever $(d,e_1) \in r^{\mathcal{I}}$ and $(d,e_2) \in r^{\mathcal{I}}$, then $e_1 = e_2$.

In the $\text{DL-Lite}^{\Fmc}$ case, our problems are tightly related to the semantic tree-likeness of BCQs over relational databases in the presence of key dependencies 
studied by Figueira~\cite{Figueira16}.
We argue in the appendix that the results of~\cite{Figueira16} can easily be generalized to unions of BCQs (UBCQ). This entails an interesting statement about DL-Lite$^{\Fmc}_{=}$, the fragment of DL-Lite$^{\Fmc}$ that admits only functionality assertions, but no inclusions of any kind.
%
%
\begin{theorem}[Figueira]\label{pro:dl-lite-f-ucq-k-equiv}
  For OMQs from $(\text{DL-Lite}^{\Fmc}_{=},\text{UBCQ})$ based on the full schema, UBCQ$_k$-equivalence while preserving the ontology is in~\TwoExpTime, for any $k \geq 1$.
  %
%
%
%
%
  Moreover, an equivalent OMQ from $(\text{DL-Lite}^{\Fmc}_{=},\text{UBCQ}_k)$ can be constructed in double exponential time (if it exists).
\end{theorem}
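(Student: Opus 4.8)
The plan is to reduce UBCQ$_k$-equivalence while preserving the ontology in $(\text{DL-Lite}^{\Fmc}_{=},\text{UBCQ})$ to the semantic tree-likeness of unions of Boolean CQs over relational databases subject to key dependencies, and then to invoke (a mild generalisation of) Figueira's procedure. First I would fix the correspondence between the ontology and a set of keys. Since a $\text{DL-Lite}^{\Fmc}_{=}$-ontology $\Omc$ consists solely of functionality assertions $(\mathsf{funct}~r)$, and such an assertion says exactly that the first position of the binary relation $r$ determines the second, I read $\Omc$ as the set $\Sigma_\Omc$ of unary key dependencies (one per functional role). Crucially, $\Omc$ contains no positive inclusions and no $\bot$-statements, so the associated chase never introduces fresh elements and never causes inconsistency; its only effect is to force equalities. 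For a database $\Dmc$ I would therefore work with the equality-generating chase $\Dmc^\sharp$ obtained by repeatedly merging constants $b,c$ whenever $r(a,b),r(a,c)\in\Dmc$ for some $(\mathsf{funct}~r)\in\Omc$. This terminates because each merge decreases the number of constants, the result $\Dmc^\sharp$ is a $\Sigma_\Omc$-satisfying (legal) database, and it serves as a universal model of $\Dmc$ and $\Omc$ for certain-answer computation; hence, for any Boolean UCQ $q$, $\Dmc\models(\Omc,\Sbf_{\mn{full}},q)$ iff some disjunct of $q$ maps homomorphically into $\Dmc^\sharp$.

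The conceptual bridge is the following equivalence, which I would establish next: for Boolean UCQs $q,q'$, the OMQs $(\Omc,\Sbf_{\mn{full}},q)$ and $(\Omc,\Sbf_{\mn{full}},q')$ are equivalent iff $q$ and $q'$ agree on all $\Sigma_\Omc$-satisfying databases. One direction is immediate from the previous paragraph, since $\Dmc^\sharp$ is always $\Sigma_\Omc$-satisfying; for the converse I would use that every $\Sigma_\Omc$-satisfying database is its own chase, so that OMQ evaluation collapses to plain query evaluation on legal databases. Because tree width does not depend on the (fixed) ontology and because \emph{preserving the ontology} keeps $\Sigma_\Omc$ unchanged, this shows that $(\Omc,\Sbf_{\mn{full}},q)$ is UBCQ$_k$-equivalent while preserving the ontology precisely when $q$ is equivalent, over all $\Sigma_\Omc$-satisfying databases, to a UCQ of tree width at most $k$ — that is, exactly when $q$ is semantically tree-like of width $k$ under the key dependencies $\Sigma_\Omc$ in Figueira's sense.

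It then remains to apply Figueira's decision-and-construction result. Since it is stated for a single BCQ, the one genuine piece of work is to lift it to unions; I would carry this out in the appendix by treating the disjuncts jointly, combining the per-CQ chase with the keys with the standard reduction of UCQ-under-constraints equivalence to finitely many CQ-in-UCQ containments, and checking that the width bound and the overall double-exponential running time survive. Plugging this UBCQ version of Figueira's procedure into the correspondence above yields both a \TwoExpTime decision procedure for UBCQ$_k$-equivalence while preserving the ontology and a double-exponential construction of a witnessing OMQ from $(\text{DL-Lite}^{\Fmc}_{=},\text{UBCQ}_k)$ whenever one exists.

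I expect the main obstacle to be exactly this generalisation to unions while keeping the complexity at double exponential: one must ensure that the interaction between the key-chase of the query and the bounded-tree-width rewriting behaves uniformly across the disjuncts, so that no additional exponential blow-up is incurred in either the containment checks or the synthesis of the tree-like union. By contrast, matching the certain-answer semantics of the DL with Figueira's legal-database semantics is clean, and this is precisely because $\Omc$ generates no existential witnesses, so the bridge in the first two paragraphs goes through without the complications that existential inclusions would introduce.
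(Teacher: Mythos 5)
Your proposal is correct and takes essentially the same route as the paper: Figueira's single-BCQ result is used as the engine (your semantic bridge via the equality-generating chase is exactly why the paper can ``immediately inherit'' that result for OMQs from $(\text{DL-Lite}^{\Fmc}_{=},\text{BCQ})$), and the only substantive work is the lift to unions. The paper makes the union step you sketch precise as Lemma~\ref{lem:bcqs-to-ubcqs}, a Sagiv--Yannakakis-style decomposition showing that the union is UBCQ$_k$-equivalent while preserving the ontology iff every disjunct is either individually BCQ$_k$-equivalent or contained in another disjunct, which is the clean form of the per-disjunct CQ-in-UCQ containment argument you gesture at and immediately yields both the \TwoExpTime bound and the double-exponential construction.
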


%
The above result relies on a sophisticated argument based on tree walking automata. It is open whether the \TwoExpTime~upper bound is optimal. The best known lower bound is \NPclass, from the case without functionality assertions~\cite{DKV02}.



%

We now use Theorem~\ref{pro:dl-lite-f-ucq-k-equiv} to obtain results
for $\text{DL-Lite}^{\Fmc}$.
%
%
Let $(\text{DL-Lite}^{\Fmc},\text{UBCQ})^{\equiv,\mi{po}}_{\text{UBCQ}_k}$ denote the class of OMQs from $(\text{DL-Lite}^{\Fmc},\text{UBCQ})$ that are UBCQ$_k$-equivalent while preserving the ontology.
%
%
\begin{theorem}\label{the:main-dl-lite-f}
For any $k \geq 1$,
\begin{enumerate}
\item $\sf p$-{\sc Evaluation}$((\text{DL-Lite}^{\Fmc},\text{UBCQ})^{\equiv,\mi{po}}_{\text{UBCQ}_k})$ based on the full schema is in \FPTclass.

\item For OMQs from $(\text{DL-Lite}^\Fmc,\text{UBCQ})$ based on the full schema, UBCQ$_k$-equivalence while preserving the ontology is in {\rm 3}\textsc{ExpTime}.
\end{enumerate}
\end{theorem}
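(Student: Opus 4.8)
The plan is to reduce both parts of Theorem~\ref{the:main-dl-lite-f} to the corresponding statements about $(\text{DL-Lite}^{\Fmc}_{=},\text{UBCQ})$ already supplied by Theorem~\ref{pro:dl-lite-f-ucq-k-equiv}. The key observation is that a $\text{DL-Lite}^{\Fmc}$-ontology $\Omc$ splits into two parts: the functionality assertions $(\mathsf{funct}~r)$, which form a $\text{DL-Lite}^{\Fmc}_{=}$-ontology $\Omc_f$, and the remaining inclusions of the forms $B_1 \sqsubseteq B_2$, $B_1 \sqcap \cdots \sqcap B_n \sqsubseteq \bot$, and $r_1 \sqcap \cdots \sqcap r_n \sqsubseteq \bot$, which form a DL-Lite$^\Rmc_{\mn{horn}}$-ontology $\Omc_r$ (in fact a $\ELHIbot$-ontology). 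Since we work with the full schema and are allowed to preserve the ontology, the idea is that the inclusions in $\Omc_r$ can be \emph{compiled into the query} by a finite rewriting step, after which only the functionality assertions remain and Theorem~\ref{pro:dl-lite-f-ucq-k-equiv} applies.

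For Point~1, I would first use the fact that over the full schema the inclusion part $\Omc_r$ is an $\ELHIbot$-ontology, so Corollary~\ref{prop:wasaclaim} and the machinery of Section~\ref{sect:eval} let me compute the UBCQ$_k$-approximation $Q_a$ of the input OMQ $Q$. Because $Q$ is UBCQ$_k$-equivalent while preserving the ontology, we have $Q \equiv Q_a$ with $Q_a \in (\text{DL-Lite}^{\Fmc},\text{UBCQ}_k)$. It then suffices to evaluate $Q_a$, and I would argue that evaluating an OMQ from $(\text{DL-Lite}^{\Fmc},\text{UBCQ}_k)$ over the full schema is in FPT with the size of the OMQ as parameter: the functionality assertions can only force identifications of chase-generated elements, so a homomorphism from a tree-width-$k$ UBCQ into a model respecting functionality can be found by the same dynamic-programming / pebble-game strategy used in Proposition~\ref{pro:fpt-jelia}, augmented with a bookkeeping of which successors are functional. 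The running time is $|\Dmc|^{O(1)} \cdot f(|Q|)$ because the tree-width bound $k$ is part of the parameter $|Q|$.

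For Point~2, the plan is to decide UBCQ$_k$-equivalence while preserving the ontology by a two-stage procedure. Stage one eliminates the inclusions $\Omc_r$: for the input $Q=(\Omc,\Sbf_{\mn{full}},q)$ I would construct, in double exponential time, a finite UCQ-rewriting $q'$ that internalizes $\Omc_r$, so that $(\Omc,\Sbf_{\mn{full}},q) \equiv (\Omc_f,\Sbf_{\mn{full}},q')$; such a rewriting exists because the chase with $\Omc_r$ produces tree-shaped structures of bounded branching and the relevant homomorphic images of $q$ into it can be enumerated, yielding a UBCQ $q'$ of at most exponential size in $|Q|$. Stage two applies Theorem~\ref{pro:dl-lite-f-ucq-k-equiv} to the resulting OMQ $(\Omc_f,\Sbf_{\mn{full}},q')$ from $(\text{DL-Lite}^{\Fmc}_{=},\text{UBCQ})$, testing UBCQ$_k$-equivalence in \TwoExpTime in the size of $q'$. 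Composing an exponential blow-up with a double-exponential test gives the overall \ThreeExpTime~bound. A subtle point I would need to verify is that internalizing $\Omc_r$ preserves UBCQ$_k$-equivalence in both directions, i.e.\ that $Q$ is UBCQ$_k$-equivalent while preserving $\Omc$ iff $(\Omc_f,\Sbf_{\mn{full}},q')$ is UBCQ$_k$-equivalent while preserving $\Omc_f$; the `preserving the ontology' qualifier is what makes this manageable, since it forbids changing $\Omc_f$ and thus keeps the functional structure fixed across the two formulations.

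The main obstacle I expect is the interaction between functionality assertions and the inclusion axioms during the rewriting step of Point~2. Unlike in the pure $\ELHIbot$ setting, the chase with $\Omc_r$ and the functional identifications forced by $\Omc_f$ do not commute freely: a functional role may collapse two existentially generated witnesses, which in turn can trigger further inclusion consequences, so naively rewriting $\Omc_r$ away first and then handling $\Omc_f$ may not be sound. I would address this by working with the combined chase of $\Omc=\Omc_r \cup \Omc_f$ and showing that, over the full schema, the functional collapses occur only among chase-introduced elements in a controlled, tree-structured manner, so that the homomorphic images of $q$ relevant to bounded-tree-width equivalence are still finitely enumerable and the separation into $\Omc_r$-rewriting followed by an $\Omc_f$-only equivalence test is justified. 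Establishing this commutation-up-to-equivalence property cleanly is the crux of the argument.
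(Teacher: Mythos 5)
Your Point~2 follows the paper's route: split $\Omc$ into its functionality part $\Omc^=$ and its inclusion part, internalize the inclusions into a UBCQ $\mathsf{rew}(Q)$ of single-exponential size that agrees with $Q$ on all databases satisfying $\Omc^=$, and then feed $(\Omc^=,\Sbf_{\mn{full}},\mathsf{rew}(Q))$ to Theorem~\ref{pro:dl-lite-f-ucq-k-equiv}; the ``subtle point'' you flag is exactly Lemma~\ref{lem:rewriting-ucq-equiv}. One thing you underplay is that the rewriting must \emph{preserve tree width} (if $q\in\text{UBCQ}_k$ then $\mathsf{rew}(Q)\in\text{UBCQ}_k$), which is needed for the direction ``$Q$ is UBCQ$_k$-equivalent $\Rightarrow$ the rewritten OMQ is''; a generic PerfectRef-style rewriting does not guarantee this, and the paper's construction includes an explicit minor check to enforce it. Still, your overall plan for Point~2 is sound.

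Point~1, however, has a genuine gap. You invoke Corollary~\ref{prop:wasaclaim} to conclude that a UBCQ$_k$-equivalent $Q$ from $(\text{DL-Lite}^\Fmc,\text{UBCQ})$ is equivalent to its UBCQ$_k$-approximation $Q_a$. That corollary is proved only for $(\ELHIbot,\text{UCQ})$, and its proof (Theorem~\ref{thm:bestapprox}) rests on two facts that fail once functionality assertions are present and $k>1$: the chase preserves tree width (the paper cites Figueira's Lemma~4.3 showing that a single functionality assertion can blow up the tree width of the chase arbitrarily), and databases can be $k$-unravelled while preserving OMQ answers (unravelling duplicates successors and so violates $(\mathsf{funct}\ r)$). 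So ``$Q$ is UBCQ$_k$-equivalent while preserving the ontology $\Rightarrow Q\equiv Q_a$'' is unproven for DL-Lite$^\Fmc$, and you cannot obtain a concrete equivalent tree-width-$k$ OMQ this way. The paper instead gets that witness from the constructive ``Moreover'' part of Theorem~\ref{pro:dl-lite-f-ucq-k-equiv}, applied to $(\Omc^=,\Sbf_{\mn{full}},\mathsf{rew}(Q))$ after the same rewriting step as in Point~2; since that construction depends only on $|Q|$, and evaluating the resulting UBCQ$_k$ on a database satisfying $\Omc^=$ is in \PTime (with inconsistency checkable in \PTime), FPT follows. Your Point~1 should be rerouted through this rewriting rather than through the approximation machinery; your ``pebble game with functional bookkeeping'' for the evaluation step is then unnecessary.
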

The proof of Theorem~\ref{the:main-dl-lite-f} uses first-order rewritability.
For a DL-Lite$^\Fmc$-ontology $\Omc$, let $\Omc^{=}$ be the set of functionality assertions in $\Omc$. We show that every OMQ $Q$ from $(\text{DL-Lite}^{\Fmc},\text{UBCQ})$ can be rewritten into a UBCQ $\mathsf{rew}(Q)$ such that, for every database $\Dmc$ that satisfies $\Omc^=$, $Q(\Dmc) = \mathsf{rew}(Q)(\Dmc)$.
Further, our rewriting procedure preserves the tree width, that is, for every $Q \in (\text{DL-Lite}^\Fmc,\text{UBCQ}_k)$, $\mathsf{rew}(Q) \in \text{UBCQ}_k$.
We obtain the following.

\begin{restatable}{lemma}{lemrewritingucqequiv}
\label{lem:rewriting-ucq-equiv}
Fix $k \geq 1$. For an OMQ $Q = (\Omc,\Sbf_{\mn{full}},q) \in (\text{DL-Lite}^\Fmc,\text{UBCQ})$ the following are equivalent:
\begin{enumerate}
\item $Q$ is UBCQ$_k$-equivalent while preserving the ontology;

\item $(\Omc^=,\Sbf_{\mn{full}},\mathsf{rew}(Q))$ is UBCQ$_k$-equivalent while preserving the ontology.
\end{enumerate}
\end{restatable}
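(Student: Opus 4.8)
The plan is to convert the first-order rewriting $\mathsf{rew}$, which is only guaranteed to be faithful on databases satisfying $\Omc^=$, into a statement about genuine OMQ equivalence over \emph{all} databases, and then to use this to move a bounded–tree-width witness back and forth between the two ontologies. The first step I would carry out is a reusable fact: for \emph{every} OMQ $P=(\Omc,\Sbf_{\mn{full}},p_0)$ from $(\text{DL-Lite}^\Fmc,\text{UBCQ})$ one has $P \equiv (\Omc^=,\Sbf_{\mn{full}},\mathsf{rew}(P))$ as OMQs. Indeed, on a database $\Dmc$ satisfying $\Omc^=$ the rewriting property gives $P(\Dmc)=\mathsf{rew}(P)(\Dmc)$, and since such a $\Dmc$ is itself the least model of $\Dmc$ and $\Omc^=$ (functionality generates no new atoms), for the monotone Boolean query $\mathsf{rew}(P)$ the right-hand side equals $(\Omc^=,\Sbf_{\mn{full}},\mathsf{rew}(P))(\Dmc)$; on a database violating $\Omc^=$ both OMQs are inconsistent (as $\Omc^=\subseteq\Omc$) and hence trivially true. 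Applying this to $Q$ itself yields $Q \equiv (\Omc^=,\Sbf_{\mn{full}},\mathsf{rew}(Q))$, which I will refer to as Fact~A.

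For the direction \mbox{`$1\Rightarrow2$'}, assume $q'\in\text{UBCQ}_k$ with $Q\equiv(\Omc,\Sbf_{\mn{full}},q')$. I would set $p:=\mathsf{rew}\bigl((\Omc,\Sbf_{\mn{full}},q')\bigr)$; by the tree-width preservation of the rewriting, $p\in\text{UBCQ}_k$. Applying the reusable fact to $(\Omc,\Sbf_{\mn{full}},q')$ gives $(\Omc,\Sbf_{\mn{full}},q')\equiv(\Omc^=,\Sbf_{\mn{full}},p)$, and chaining with Fact~A yields $(\Omc^=,\Sbf_{\mn{full}},\mathsf{rew}(Q))\equiv Q\equiv(\Omc,\Sbf_{\mn{full}},q')\equiv(\Omc^=,\Sbf_{\mn{full}},p)$. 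Thus $p\in\text{UBCQ}_k$ witnesses that $(\Omc^=,\Sbf_{\mn{full}},\mathsf{rew}(Q))$ is UBCQ$_k$-equivalent while preserving its ontology $\Omc^=$.

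For \mbox{`$2\Rightarrow1$'}, assume $p\in\text{UBCQ}_k$ with $(\Omc^=,\Sbf_{\mn{full}},\mathsf{rew}(Q))\equiv(\Omc^=,\Sbf_{\mn{full}},p)$; by Fact~A this gives $Q\equiv(\Omc^=,\Sbf_{\mn{full}},p)$. My claim is that the \emph{same} $q':=p$ witnesses statement~1, i.e.\ $Q\equiv(\Omc,\Sbf_{\mn{full}},p)$. On databases inconsistent with $\Omc$ both OMQs are trivially true, so fix a $\Dmc$ consistent with $\Omc$; then $\Dmc$ satisfies $\Omc^=$ and the equivalence above gives $q(\mn{ch}_\Omc(\Dmc))=Q(\Dmc)=p(\Dmc)$. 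The inclusion $Q\subseteq(\Omc,\Sbf_{\mn{full}},p)$ is immediate: if $Q(\Dmc)$ holds then $p(\Dmc)$ holds, and since $\Dmc\subseteq\mn{ch}_\Omc(\Dmc)$ also $p(\mn{ch}_\Omc(\Dmc))$ holds, which is exactly $(\Omc,\Sbf_{\mn{full}},p)(\Dmc)$.

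The main obstacle is the converse inclusion $(\Omc,\Sbf_{\mn{full}},p)\subseteq Q$: I must rule out that $p$ matches $\mn{ch}_\Omc(\Dmc)$ by exploiting chase-generated elements while $q$ does not, which would make $(\Omc,\Sbf_{\mn{full}},p)$ fire spuriously. I would argue by compactness. If $p$ maps into $\mn{ch}_\Omc(\Dmc)$ then, being a finite UCQ, it already maps into a finite subdatabase $\Dmc_0\subseteq\mn{ch}_\Omc(\Dmc)$. Since $\mn{ch}_\Omc(\Dmc)$ is a model of both $\Dmc_0$ and $\Omc$, the database $\Dmc_0$ is consistent with $\Omc$ and in particular satisfies $\Omc^=$. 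Hence $p(\Dmc_0)=(\Omc^=,\Sbf_{\mn{full}},p)(\Dmc_0)=Q(\Dmc_0)=q(\mn{ch}_\Omc(\Dmc_0))$ is true, and by the universal-model property of the chase $\mn{ch}_\Omc(\Dmc_0)\to\mn{ch}_\Omc(\Dmc)$ identically on $\mn{dom}(\Dmc_0)$, so $q$ maps into $\mn{ch}_\Omc(\Dmc)$ as well, i.e.\ $Q(\Dmc)$ is true. Together with the easy inclusion this gives $Q\equiv(\Omc,\Sbf_{\mn{full}},p)$ and closes the proof. The delicate points I expect to have to justify carefully are that the chase of DL-Lite$^\Fmc$ retains the universal-model property on $\Omc$-consistent databases (the analogue of Lemma~\ref{lem:chaseprop}, which holds despite functional roles), and the accompanying facts that $Q(\Dmc)=q(\mn{ch}_\Omc(\Dmc))$ on consistent $\Dmc$ and that any subset of an $\Omc$-model is $\Omc$-consistent; all of these follow from the standard DL-Lite$^\Fmc$ chase analysis.
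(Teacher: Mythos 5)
Your proposal is correct and follows essentially the same route as the paper: both directions rest on Lemma~\ref{lem:rewriting-eval} (faithfulness of $\mathsf{rew}$ on $\Omc^=$-satisfying databases plus tree-width preservation of the rewriting) together with the observation that $P \equiv (\Omc^=,\Sbf_{\mn{full}},\mathsf{rew}(P))$ as OMQs. Your compactness/sub-database argument for lifting the equivalence $Q \equiv (\Omc^=,\Sbf_{\mn{full}},p)$ to $Q \equiv (\Omc,\Sbf_{\mn{full}},p)$ is a sound and explicit rendering of the steps the paper dismisses as ``not difficult'' and ``easy to verify''.
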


Since the UBCQ $\mathsf{rew}(Q)$ can be constructed in time single exponential in the size of $Q$~\cite{CGLLR07}, Theorem~\ref{the:main-dl-lite-f} is a consequence of Theorem~\ref{pro:dl-lite-f-ucq-k-equiv} and Lemma~\ref{lem:rewriting-ucq-equiv}.


As already observed in~\cite{PabloAndreasPods,Figueira16}, the case $k=1$, is more well-behaved than the general case. The main reason is the fact that the chase procedure for DL-Lite$_{=}^{\Fmc}$, which identifies terms according to the functionality assertions in the ontology, preserves tree width~1 in the sense that when a database \Dmc is of tree width 1, then so is $\mn{ch}_{\Omc}(\Dmc)$.  This fails for tree widths larger than 1. It has, in fact, been observed by Figueira (Lemma~4.3 in~\cite{Figueira16}) that when starting with a database of tree width $k > 1$, then the chase can arbitrarily increase the tree width even if \Omc consists only of a single functionality assertion.

By exploiting the above property, we can strengthen Theorem~\ref{the:main-dl-lite-f} for the case $k=1$, using an approach that does not rely on Theorem~\ref{pro:dl-lite-f-ucq-k-equiv}. In fact, we show that every OMQ $Q=(\Omc,\Sbf,q)$ from $(\text{DL-Lite}^{\Fmc},\text{UBCQ})$ can be rewritten in polynomial time into an OMQ $Q'=(\Omc^\sqsubseteq,\Sbf,q')$, where $\Omc^\sqsubseteq$ denotes the result of removing from \Omc all functionality assertions, such that $Q$ is UBCQ$_1$-equivalent iff $Q'$ is; the proof of the latter again involves first-order rewritability. This allows us to apply results from previous sections, in particular Corollary~\ref{prop:wasaclaim}, Theorem~\ref{thm:main1}, and Theorem~\ref{thm:npcompl}.
%
%
\begin{restatable}{theorem}{thedllitefacyclic}
\label{the:main-dl-lite-f-acyclic}~\\[-4mm]

\begin{enumerate}

\item In $(\text{DL-Lite}^{\Fmc},\text{UBCQ})$, UBCQ$_1$-equivalence coincides with
  UBCQ$_1$-equivalence while preserving the ontology.

\item {\sc Evaluation}$((\text{DL-Lite}^{\Fmc},\text{UBCQ})^{\equiv}_{\text{UBCQ}_1})$ based on the full schema is in \PTime.

\item For OMQs from $(\text{DL-Lite}^\Fmc,\text{UBCQ})$ based on the full schema, UBCQ$_1$-equivalence is \NPclass-complete.

\end{enumerate}
\end{restatable}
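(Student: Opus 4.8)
The plan is to compile away the functionality assertions and then reuse the functionality-free machinery. Write $\Omc=\Omc^{\sqsubseteq}\cup\Omc^{=}$ with $\Omc^{=}$ the functionality assertions. The central step is a reduction lemma: from $Q=(\Omc,\Sbf_{\mn{full}},q)$ compute in polynomial time a UBCQ $q'$ so that $Q'=(\Omc^{\sqsubseteq},\Sbf_{\mn{full}},q')$ satisfies $Q\equiv Q'$. Here $q'$ is the first-order (UCQ) rewriting that accounts \emph{only} for $\Omc^{=}$: whenever the functional merging performed by $\mn{ch}_\Omc(\Dmc)$ (relative to $\mn{ch}_{\Omc^{\sqsubseteq}}(\Dmc)$) would let some CQ of $q$ match, $q'$ carries a disjunct in which the merged variables are split back into distinct functional siblings. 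Correctness, namely $q(\mn{ch}_\Omc(\Dmc))=q'(\mn{ch}_{\Omc^{\sqsubseteq}}(\Dmc))$ for every $\Dmc$ consistent with $\Omc$, is exactly the first-order rewritability of functionality in DL-Lite, as already used for Lemma~\ref{lem:rewriting-ucq-equiv} and Theorem~\ref{pro:dl-lite-f-ucq-k-equiv}.

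The decisive feature, and the reason the result is confined to $k=1$, is that this rewriting preserves tree width~$1$: splitting two $r$-siblings of a common node keeps a tree-width-$1$ CQ tree-width-$1$, so the rewriting maps $(\text{DL-Lite}^\Fmc,\text{UBCQ}_1)$ into the functionality-free $(\text{DL-Lite}^\Rmc,\text{UBCQ}_1)$. This mirrors the fact, recalled before Theorem~\ref{the:main-dl-lite-f}, that the $\text{DL-Lite}^\Fmc_{=}$ chase sends tree-width-$1$ databases to tree-width-$1$ structures, whereas by Figueira's Lemma~4.3~\cite{Figueira16} the chase can increase the tree width arbitrarily once it exceeds~$1$; there the split disjuncts would no longer be tree-width-$1$ controlled. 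I would prove tree-width-$1$ preservation in tandem with correctness, and in particular that the UBCQ$_1$-approximations of $Q$ and $Q'$ correspond under the rewriting, so that $Q\equiv Q_a$ iff $Q'\equiv Q'_a$ for the respective approximations $Q_a,Q'_a$.

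All three items then follow. Since $\Omc^{\sqsubseteq}$ is a functionality-free $\text{DL-Lite}^\Rmc$ ontology, $Q'\in(\text{DL-Lite}^\Rmc_{\mn{horn}},\text{UBCQ})$ and Corollary~\ref{prop:wasaclaim} applies to $Q'$: it is UBCQ$_1$-equivalent iff it equals its UBCQ$_1$-approximation $Q'_a$, iff it is UBCQ$_1$-equivalent while preserving $\Omc^{\sqsubseteq}$. Using $Q\equiv Q'$ together with the correspondence $Q\equiv Q_a\Leftrightarrow Q'\equiv Q'_a$, the same chain transfers to $Q$ and, since $Q_a$ keeps the ontology $\Omc$, yields Item~1 (the converse direction being trivial). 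For Item~2, if $Q$ is UBCQ$_1$-equivalent then so is $Q'$, and by Theorem~\ref{thm:main1} applied to the functionality-free $Q'$ on the full schema, {\sc Evaluation}$(Q')$ is in \PTime; as $Q\equiv Q'$, evaluating $Q$ reduces to evaluating $Q'$. For Item~3, computing $Q'$ in polynomial time and deciding UBCQ$_1$-equivalence of $Q'$ by Theorem~\ref{thm:npcompl} (which covers $\text{DL-Lite}^\Rmc$ on the full schema) gives the \NPclass upper bound, while the matching lower bound is inherited from the empty-ontology case~\cite{DKV02}, a special case of $(\text{DL-Lite}^\Fmc,\text{UBCQ})$.

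The main obstacle I expect is the reduction lemma: establishing simultaneously that the functionality rewriting is (i) correct as a full OMQ equivalence $Q\equiv Q'$, (ii) tree-width-$1$ preserving, and (iii) of polynomial size. A naive enumeration of merge patterns can both blow up and create cycles; the point of $k=1$ is that only local, tree-respecting merges need to be recorded, which should keep the disjunct count polynomial and the tree width at~$1$. A closely related subtlety arises in proving the approximation correspondence $Q\equiv Q_a\Leftrightarrow Q'\equiv Q'_a$ directly (the $\text{DL-Lite}^\Fmc$ analogue of Theorem~\ref{thm:bestapprox} at $k=1$): its Point~2 unravels the input database to tree width~$1$, and unraveling breaks functionality, so one must argue that the subsequent chase re-merges harmlessly --- again exactly the place where tree-width-$1$ preservation is invoked.
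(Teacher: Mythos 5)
Your high-level strategy -- compile the functionality assertions away, land in a functionality-free DL-Lite OMQ, and then invoke Corollary~\ref{prop:wasaclaim}, Theorem~\ref{thm:main1} and Theorem~\ref{thm:npcompl} -- is the same as the paper's. But the central claim your reduction rests on, namely that $Q\equiv Q'$ for $Q'=(\Omc^{\sqsubseteq},\Sbf_{\mn{full}},q')$ as a \emph{full} OMQ equivalence, cannot hold: on a database that violates a functionality assertion (e.g.\ $\{r(a,b),r(a,c)\}$ with $(\mathsf{funct}\,r)\in\Omc$), $Q$ returns all tuples by inconsistency, while a functionality-free $Q'$ evaluates normally, and no UCQ disjunct can detect the violation since CQs have no inequality atoms. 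Your own correctness statement is only ``for every $\Dmc$ consistent with $\Omc$'', and an equivalence restricted to consistent databases does not by itself yield the biconditional ``$Q$ is UBCQ$_1$-equivalent iff $Q'$ is'' that Items~1 and~3 need: a tree-width-$1$ witness for $Q$ may itself carry functionality assertions, and transferring it to a witness for $Q'$ (and back) is exactly the content of the paper's Lemma~\ref{lem:dlliteToELI}, whose proof requires the auxiliary rewriting of Lemma~\ref{lem:dlliteFOrewr} and is the hardest part of the argument -- not something that ``then follows''. The paper sidesteps the inconsistency problem by keeping the original ontology $\Omc$ in the final equivalent OMQ (the ``Moreover'' part of Lemma~\ref{lem:dlliteToELI} returns $(\Omc,\Sbf_{\mn{full}},q'_2)$, not $(\Omc^{\sqsubseteq},\Sbf_{\mn{full}},q'_2)$) and, for Item~2, by a separate polynomial-time consistency pre-check before switching to $\Omc^{\sqsubseteq}$.

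Two further gaps. First, your Item~1 chain routes through a DL-Lite$^{\Fmc}$ analogue of Theorem~\ref{thm:bestapprox} (the correspondence $Q\equiv Q_a\Leftrightarrow Q'\equiv Q'_a$), and you yourself identify why its proof breaks: unraveling destroys functionality. You flag this but do not close it; the paper never needs such an approximation theorem for DL-Lite$^{\Fmc}$, because all approximation reasoning happens on the functionality-free side. Second, your ``splitting'' rewriting is applied to the \emph{input} query $q$, which has arbitrary tree width, so the $k=1$ locality argument you invoke to bound the number of disjuncts does not apply to it; the paper's corresponding construction $\mathsf{rew}(\cdot)$ is single exponential, and its polynomial-time reduction instead uses the deterministic \emph{contraction} $\mn{id}_{\Omc^=}(q)$ (merging functional siblings in the query) together with the observation (Lemma~\ref{lem:dlliteFindep}) that on consistent databases the functionality assertions of a DL-Lite$^{\Fmc}$ ontology can simply be dropped without changing $q$ at all -- no key-dependency-style splitting is needed.
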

%
%
%
%
%
The upper bounds in Theorems~\ref{the:main-dl-lite-f} and~\ref{the:main-dl-lite-f-acyclic} extend to DL-Lite$^\Fmc_{\mn{horn}}$ in a
straightforward way, that is, to
the extension of DL-Lite$^\Fmc$ with statements of the form $ B_1 \sqcap \cdots \sqcap B_n \sqsubseteq B$. Moreover, Theorem~\ref{the:main-dl-lite-f-acyclic} extends to the non-Boolean case.

\section{Conclusion}

An intriguing open problem that emerges from this paper is whether
Theorem~\ref{thm:main1} can be generalized to the case where the
schema is not required to be full, that is, whether OMQ evaluation is
in \PTime in $(\EL,\text{CQ})^\equiv_{\text{UCQ}_k}$ and modest
extensions thereof such as
$(\ELpoly,\text{UCQ}) ^\equiv_{\text{UCQ}_k}$. Note that the companion
Theorem~\ref{thm:main2} does not assume the full schema.  Related to
this seems to be the problem whether CQ$_k$-equivalence coincides with
UCQ$_k$-equivalence in $(\mathcal{EL},\text{CQ})$; recall that this is
not the case in $(\mathcal{ELI},\text{CQ})$ by
Proposition~\ref{prop:UCQnoCQ}.

Being a bit more adventurous, one could be interested in classifying
\PTime combined complexity within $(\ELI,\text{CQ})$ and related
languages, and in classifying \PTime combined complexity and \FPTclass for
DLs with functional roles, existential rule languages such as
(frontier-)guarded rules, and DLs that include negation and
disjunction such as \ALC and \shiq.

\smallskip
\noindent
{\bf Acknowledgments.} We thank the anonymous reviewers for their
helpful comments. This
research was supported by ERC consolidator grant 647289 CODA,
by EPSRC grant EP/S003800/1 EQUID, by the Millennium Institute for
Foundational Research on Data, and Fondecyt grant 1170109.

\cleardoublepage

\appendix

\section*{Technical lemmas and constructions}

We give some fundamental lemmas and technical constructions
used in proofs throughout the paper. The following facts about
homomorphisms, databases, and OMQs are well-known, see for example
\cite{DBLP:journals/tods/BienvenuCLW14}.
\begin{lemma}
\label{lem:hom}
  Let $\Dmc_1$ and $\Dmc_2$ be databases, $h$ a homomorphism from $\Dmc_1$
  to $\Dmc_2$, and $Q=(\Omc,\Sbf,q)$ an OMQ from $(\ELHIbot,\text{UCQ})$. Then
  \begin{enumerate}
  \item if $\Dmc_2$ is consistent with \Omc, then so is $\Dmc_1$;

  \item if $\Dmc_1 \models Q(\abf)$, then $\Dmc_2 \models Q(h(\abf))$ for all
    tuples $\abf$ over $\adom(\Dmc_1)$.

  \end{enumerate}
\end{lemma}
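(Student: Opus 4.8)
The plan is to route both statements through a single technical fact about the chase: the database homomorphism $h$ lifts to a homomorphism $\widehat{h}$ from $\mn{ch}_\Omc(\Dmc_1)$ to $\mn{ch}_\Omc(\Dmc_2)$ that extends $h$. The engine behind this is that every $\ELIbot$-concept $C$ is preserved under homomorphisms: being positive and existential (built from concept names, $\top$, $\sqcap$, $\exists r.$ and $\exists r^-.$), it satisfies $g(a) \in C^{\Imc_2}$ whenever $g$ is a homomorphism from $\Imc_1$ to $\Imc_2$ and $a \in C^{\Imc_1}$. I would first record this by a routine induction on the structure of $C$; the only mildly notable case is $\exists r^-.C'$, which goes through because a homomorphism preserves a role atom together with the identity of both of its endpoints.

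Next I would prove the lifting claim. Viewing the oblivious chase of $\Dmc_1$ as a (possibly infinite) sequence of rule applications, I define $\widehat{h}$ by induction, starting from $\widehat{h}=h$ on $\Dmc_1$. At a role-inclusion step adding $s(a,b)$ from $r(a,b)$ with $r \sqsubseteq s \in \Omc$, the inductive hypothesis gives $r(\widehat{h}(a),\widehat{h}(b)) \in \mn{ch}_\Omc(\Dmc_2)$, and closure of $\mn{ch}_\Omc(\Dmc_2)$ under Rule~2 yields $s(\widehat{h}(a),\widehat{h}(b))$. At a concept-inclusion step that attaches the tree for $D$ below $a$ because $a \in C^{\mn{ch}_\Omc(\Dmc_1)}$ with $C \sqsubseteq D \in \Omc$, preservation of $\ELIbot$-concepts gives $\widehat{h}(a) \in C^{\mn{ch}_\Omc(\Dmc_2)}$, so by closure under Rule~1 the matching tree for $D$ already sits below $\widehat{h}(a)$, and I extend $\widehat{h}$ onto the freshly attached elements by the root-preserving isomorphism of the two $D$-trees. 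Crucially, this construction never presupposes that $\Dmc_1$ is consistent: the chase is always defined, only $\bot$ is never added. The main obstacle is exactly the bookkeeping here, namely fixing the injection of each attached tree into the corresponding oblivious copy below $\widehat{h}(a)$ and checking it is preserved across all, possibly infinitely many, rule applications; everything else is immediate.

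For Part~1, suppose $\Dmc_2$ is consistent with \Omc but, towards a contradiction, $\Dmc_1$ is not. By Lemma~\ref{lem:chaseprop}(1) there are $a \in \mn{dom}(\mn{ch}_\Omc(\Dmc_1))$ and $C \sqsubseteq \bot \in \Omc$ with $a \in C^{\mn{ch}_\Omc(\Dmc_1)}$; here $C$ contains no $\bot$ by the normal-form assumption of Section~\ref{sect:prelims}, so it is a genuine $\ELIbot$-concept. Applying the lifted homomorphism and preservation of $C$ yields $\widehat{h}(a) \in C^{\mn{ch}_\Omc(\Dmc_2)}$, whence Lemma~\ref{lem:chaseprop}(1) makes $\Dmc_2$ inconsistent, a contradiction.

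For Part~2, suppose $\Dmc_1 \models Q(\abf)$. If $\Dmc_2$ is inconsistent with \Omc, then $\Dmc_2$ and \Omc have no common model, so $\Imc \models q(h(\abf))$ holds vacuously for every such model; as $h(\abf) \in \mn{dom}(\Dmc_2)^{|\xbf|}$, we obtain $\Dmc_2 \models Q(h(\abf))$. Otherwise $\Dmc_2$ is consistent, hence so is $\Dmc_1$ by Part~1, and Lemma~\ref{lem:chaseprop}(2) gives $Q(\Dmc_i)=q(\mn{ch}_\Omc(\Dmc_i))$ for $i \in \{1,2\}$. From $\abf \in q(\mn{ch}_\Omc(\Dmc_1))$ I fix a homomorphism $g$ from some disjunct of $q$ into $\mn{ch}_\Omc(\Dmc_1)$ with $g(\xbf)=\abf$; then $\widehat{h} \circ g$ maps that disjunct into $\mn{ch}_\Omc(\Dmc_2)$ with $(\widehat{h} \circ g)(\xbf)=\widehat{h}(\abf)=h(\abf)$, since $\widehat{h}$ extends $h$ and $\abf$ ranges over $\mn{dom}(\Dmc_1)$. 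Thus $h(\abf) \in q(\mn{ch}_\Omc(\Dmc_2))=Q(\Dmc_2)$, as required.
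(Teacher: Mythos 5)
Your proof is correct. Note, however, that the paper itself gives no proof of Lemma~\ref{lem:hom}: it is stated in the appendix as a well-known fact with a pointer to the literature, so there is no in-paper argument to compare against. Your route is the standard one and all the steps check out: the preservation of $\ELIbot$-concepts under homomorphisms (they are positive existential, so the induction is routine), the lifting of $h$ to a homomorphism $\widehat{h}$ between the oblivious chases by induction on rule applications (with the fresh $D$-tree below $a$ sent isomorphically onto the corresponding $D$-tree that fairness guarantees below $\widehat{h}(a)$), and the observation that the chase and the lifting are defined regardless of consistency. The case analysis in Part~2 is also handled properly: if $\Dmc_2$ is inconsistent the claim is vacuous under the paper's certain-answer semantics (and $h(\abf)$ is indeed a tuple over $\mn{dom}(\Dmc_2)$), and otherwise Part~1 guarantees that $\Dmc_1$ is consistent too, so Lemma~\ref{lem:chaseprop}(2) applies on both sides and composing $g$ with $\widehat{h}$ finishes the argument. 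The one point worth being slightly more explicit about is the inductive invariant in the lifting: at each stage $\Imc_i$ of the chase of $\Dmc_1$ you maintain that $\widehat{h}|_{\mn{dom}(\Imc_i)}$ is a homomorphism from $\Imc_i$ into $\mn{ch}_\Omc(\Dmc_2)$ (not merely into some stage of the second chase), and the limit is the union of these compatible partial maps; you gesture at this, and it is exactly the bookkeeping needed.
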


\smallskip

{\bf The chase.}
We next introduce a version of the well-known chase procedure, see for
example \cite{beeri-chase}. %
The constants in
$\mn{ch}_\Omc(\Dmc)$ that are not in \Dmc are called \emph{fresh}.

Let \Omc be an \ELHIbot-ontology. Consider
the following two chase rules that extend an interpretation \Imc based
on the inclusions in \Omc:
\begin{enumerate}

\item if $a \in C^\Imc$, $C \sqsubseteq D \in \Omc$, and $D \neq
  \bot$, then add $D(a)$~to~\Imc;

\item if $(a,b) \in r^\Imc$ and $r \sqsubseteq s \in \Omc$, then add
  $s(a,b)$ to \Imc.

\end{enumerate}
In Rule~1, `add $D(a)$ to \Imc' means to add to \Imc a finite database
that represents the \ELI-concept $D$, identifying its root with
$a$. For example, the concept $A \sqcap \exists r . (B \sqcap \exists
s . \top)$ corresponds to the database $\{A(a), r(a,b), B(b), s(b,c)
\}$.

Let \Dmc be a database. The \emph{chase of \Dmc with \Omc},
denoted $\mn{ch}_\Omc(\Dmc)$, is the potentially infinite
interpretation that is obtained as the limit of any sequence
$\Imc_0,\Imc_1,\dots$, where
\begin{itemize}
\item $\Imc_0=\Dmc$,
\item $\Imc_{i+1}$ is obtained
from $\Imc_i$ by applying both chase rules in all possible ways, and
\item
rule application is {\em fair}, which intuitively means that if an assertion $C \sqsubseteq D$ with $D \neq \bot$, or $r \sqsubseteq s$ in $\Omc$ is not satisfied at some point during the construction of the chase, then eventually it will be satisfied.
\end{itemize}
Since our chase is \emph{oblivious}, that
is, Rule~1 adds a $D(a)$ to $\Imc_i$ even if $a \in D^{\Imc_i}$, the
result of all these sequences is identical up to isomorphism.

\smallskip

{\bf Containment under full schema.}
The following lemma characterizes containment between OMQs based on
UCQs and the full schema. It is an extension of a classic
lemma for containment of UCQs \cite{SY80}.

\begin{lemma}
  \label{lem:contfull}
  Let $Q_i = (\Omc_i,\Sbf_{\mn{full}},q_i) \in (\ELbot,\text{UCQ})$,
$i  \in \{1,2\}$. Then $Q_1 \subseteq Q_2$ iff for every CQ $q'_1$ in
  $q_1$ such that $\Dmc_{q'_1}$ is consistent with $\Omc_1$, there is
  a CQ $q'_2$ in $q_2$ $q'_2 \rightarrow \mn{ch}_{\Omc_2}(q'_1)$.
\end{lemma}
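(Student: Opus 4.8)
The plan is to read this as a chase-aware refinement of the classical Sagiv--Yannakakis criterion for UCQ containment~\cite{SY80}. The feature that makes the reduction possible is that the schema is full: for \emph{every} disjunct $q_1'$ of $q_1$, the canonical database $\Dmc_{q_1'}$ (the CQ $q_1'$ read as a database, its variables turned into constants) is a legitimate $\Sbf_{\mn{full}}$-database and hence an admissible probe for the containment. Writing $\xbf$ for the answer variables shared by $q_1$ and $q_2$, I read a homomorphism $q_2' \rightarrow \mn{ch}_{\Omc_2}(q_1')$ as one sending $\xbf$, viewed as constants of $\Dmc_{q_1'}$, to themselves. Both directions then rest on the universal-model property of the chase (Lemma~\ref{lem:chaseprop}, parts~2 and~3) and on homomorphism preservation (Lemma~\ref{lem:hom}).

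For ``$\Rightarrow$'', assume $Q_1 \subseteq Q_2$ and fix a disjunct $q_1'$ with $\Dmc_{q_1'}$ consistent with $\Omc_1$. The identity embeds $q_1'$ into $\Dmc_{q_1'} \subseteq \mn{ch}_{\Omc_1}(\Dmc_{q_1'})$ and fixes $\xbf$, so Lemma~\ref{lem:chaseprop}(2) gives $\xbf \in q_1(\mn{ch}_{\Omc_1}(\Dmc_{q_1'})) = Q_1(\Dmc_{q_1'})$, whence $\xbf \in Q_2(\Dmc_{q_1'})$ by containment. Assuming $\Dmc_{q_1'}$ is also consistent with $\Omc_2$ (the complementary case is discussed below), a second use of Lemma~\ref{lem:chaseprop}(2) rewrites this as $\xbf \in q_2(\mn{ch}_{\Omc_2}(\Dmc_{q_1'}))$, i.e.\ some disjunct $q_2'$ of $q_2$ admits a homomorphism $q_2' \rightarrow \mn{ch}_{\Omc_2}(q_1')$ fixing $\xbf$, as required.

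For ``$\Leftarrow$'', assume the homomorphism condition, fix an arbitrary $\Sbf_{\mn{full}}$-database $\Dmc$ and $\abf \in Q_1(\Dmc)$, and aim for $\abf \in Q_2(\Dmc)$. If $\Dmc$ is inconsistent with $\Omc_2$ then $Q_2(\Dmc)$ is the set of all tuples and we are done, so suppose $\Dmc$ is consistent with $\Omc_2$. By Lemma~\ref{lem:chaseprop}(2) there are a disjunct $q_1'$ of $q_1$ and a homomorphism $h : q_1' \rightarrow \mn{ch}_{\Omc_1}(\Dmc)$ with $h(\xbf)=\abf$; since $\mn{ch}_{\Omc_1}(\Dmc)$ is a model of $\Omc_1$, Lemma~\ref{lem:hom}(1) shows $\Dmc_{q_1'}$ is consistent with $\Omc_1$, so the hypothesis yields a disjunct $q_2'$ and a homomorphism $g : q_2' \rightarrow \mn{ch}_{\Omc_2}(\Dmc_{q_1'})$ fixing $\xbf$. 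It then remains to transport $g$ into $\mn{ch}_{\Omc_2}(\Dmc)$ along $h$: when the two ontologies coincide, $h$ already targets a model of the ontology, and the relativized universal-model property (the homomorphism-compatible version of Lemma~\ref{lem:chaseprop}(3)) lifts $\mn{ch}_{\Omc_2}(\Dmc_{q_1'})$ into $\mn{ch}_{\Omc_2}(\Dmc)$ extending $h$, giving $q_2' \rightarrow \mn{ch}_{\Omc_2}(\Dmc)$ with $\xbf \mapsto \abf$ and hence $\abf \in q_2(\mn{ch}_{\Omc_2}(\Dmc)) = Q_2(\Dmc)$.

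The main obstacle is precisely this transport step together with the bookkeeping of consistency against two ontologies. It goes through cleanly when $\Omc_1=\Omc_2=\Omc$ --- the situation in every application of the lemma, such as testing $Q \subseteq Q_a$ --- because then $h$ lands in $\mn{ch}_{\Omc}(\Dmc)$, which is a model of $\Omc$, so the lift of $\mn{ch}_{\Omc}(\Dmc_{q_1'})$ along $h$ is immediate and consistency with $\Omc$ transfers verbatim. For genuinely distinct $\Omc_1,\Omc_2$ one must separately treat the sub-cases where $\Dmc_{q_1'}$ is consistent with $\Omc_1$ but \emph{inconsistent} with $\Omc_2$ (there $\xbf \in Q_2(\Dmc_{q_1'})$ holds vacuously and the homomorphism has to be justified directly) and where $h$ fails to target the $\Omc_2$-chase; handling these inconsistency-driven answers correctly is the only delicate point, the homomorphism core being exactly the Sagiv--Yannakakis argument transported along the chase.
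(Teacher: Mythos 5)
The paper states Lemma~\ref{lem:contfull} without proof, offering it only as ``an extension of a classic lemma for containment of UCQs \cite{SY80}'', so there is no in-paper argument to measure yours against. Your proof is the argument one would expect, and it is complete and correct in the case $\Omc_1=\Omc_2$, which is the only case in which the lemma is ever invoked (in the proofs of Theorem~\ref{thm:fulldatabase} and Theorem~\ref{thm:npcompl} the two OMQs share their ontology). Both of your ingredients are sound and are used elsewhere in the paper: the full schema makes the canonical database $\Dmc_{q_1'}$ an admissible probe, and the lift of $\mn{ch}_{\Omc}(\Dmc_{q_1'})\rightarrow\mn{ch}_{\Omc}(\Dmc)$ along $h$ by replaying chase-rule applications is exactly the step taken in the proof of Theorem~\ref{thm:fulldatabase} (``We can extend $h_2$ to a homomorphism from $\mn{ch}_\Omc(p)$ to $\mn{ch}_\Omc(q_c)$, by inductively following the applications of the chase rules''). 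So the transport step you flag as the main obstacle is unproblematic.

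Where your last paragraph is too optimistic is in suggesting that the remaining sub-cases for genuinely distinct $\Omc_1,\Omc_2$ merely ``have to be justified directly'': they cannot be, because the biconditional is false in both directions once the ontologies differ. For ``$\Rightarrow$'', take $\Omc_1=\emptyset$, $q_1=\exists x\, A(x)$, $\Omc_2=\{A\sqsubseteq\bot\}$, $q_2=\exists x\, B(x)$. Every database satisfying $Q_1$ contains an $A$-fact and is therefore inconsistent with $\Omc_2$, so $Q_1\subseteq Q_2$; yet $\Dmc_{q_1}=\{A(x)\}$ is consistent with $\Omc_1$, $\mn{ch}_{\Omc_2}(q_1)=\{A(x)\}$ (the chase never fires on $\bot$-inclusions), and no disjunct of $q_2$ maps into it. For ``$\Leftarrow$'', take $\Omc_1=\{A\sqsubseteq\bot\}$, $\Omc_2=\emptyset$, $q_1=q_2=\exists x\, B(x)$: the right-hand side holds via the identity, but $\Dmc=\{A(a)\}$ satisfies $Q_1$ vacuously and not $Q_2$, so $Q_1\not\subseteq Q_2$. (Your own proof already signals the second failure: without $\Dmc$ being consistent with $\Omc_1$ you cannot invoke Lemma~\ref{lem:chaseprop}(2) to extract the homomorphism $h$.) The lemma should therefore be read with the implicit restriction $\Omc_1=\Omc_2$, under which your proof goes through in full and the consistency bookkeeping you worry about collapses to the two cases you already handle.
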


\smallskip

{\bf Unravelings.}
We next describe the unraveling of a database into a (potentially
infinite) database of bounded tree width. Let \Dmc be a database and
$k \geq 1$. The \emph{$k$-unraveling of \Dmc} is obtained as the limit
of a (potentially) infinite sequence of databases
$\Dmc_0,\Dmc_1,\dots$. Along with this sequence, we define additional
sequences $(V_0,E_0,\mu_0),(V_1,E_1,\mu_1),\dots$ and
$\pi_0,\pi_1,\dots$ such that $(V_i,E_i,\mu_i)$ is a tree
decomposition of $\Dmc_i$ of width at most $k$ and $\pi_i$ is a
function from $\mn{dom}(\Dmc_i) \rightarrow \mn{dom}(\Dmc)$ such that
for each $v \in V_i$ we have that $\pi_i|_{\mu_i(v)}$ is an isomorphism between
$\Dmc_i|_{\mu_i(v)}$ and an induced subdatabase of~\Dmc. The initial database $\Dmc_0$ is empty and $V_0$ consists of a single vertex $v$ with $\mu_0(v)=\emptyset$. For each $i \geq 0$, we obtain $\Dmc_{i+1}$, $(V_{i+1},E_{i+1},\mu_{i+1})$, and $\pi_{i+1}$ by extending $\Dmc_{i}$, $(V_i,E_i,\mu_i)$, and $\pi_{i}$ as follows: for every $v \in V_i$ that is a leaf in the tree $(V_i,E_i)$ and every non-empty subdatabase $\Dmc'$ of \Dmc with $\mn{dom}(\Dmc') \leq k + 1$, add a successor $v'$ of $v$; as $\mu_{i+1}(v')$, use the isomorphic copy of $\Dmc'$ obtained by replacing $a \in \mn{dom}(\Dmc')$ with $b \in \mu_i(v)$ if $\pi_i(b) = a$ and with a fresh constant otherwise. These renamings also define $\pi_{i+1}$, in the obvious way. Let $\Dmc^u$, $(V,E,\mu)$, and $\pi$ be the limits of the constructed sequences. It is clear that $(V,E,\mu)$ is an (infinite) tree decomposition of \Dmc of width at most $k$.

For a tuple $\abf$ over $\Ind(\Dmc)$, we define the \emph{k-unraveling of \Dmc up to \abf}, denoted $\Dmc^u_\abf$, to be the database obtained by starting with the unraveling $\Dmc|^u_{\overline{\abf}}$ of the restriction $\Dmc|_{\overline{\abf}}$ of \Dmc to those facts that do not involve a constant from \abf and then adding the following facts:

\begin{enumerate}

\item all facts from \Dmc that involve only constants
  from \abf;

\item for every $r(a,b) \in \Dmc$ with $\{a,b\} \cap \abf = \{ a \}$
  and every $c \in \Ind(\Dmc|^u_{\overline{\abf}})$ with $\pi(c)=b$,
  the fact $r(a,c)$;

\item for every $r(b,a) \in \Dmc$ with $\{a,b\} \cap \abf = \{ a \}$
  and every $c \in \Ind(\Dmc|^u_{\overline{\abf}})$ with $\pi(c)=b$,
  the fact $r(c,a)$.

\end{enumerate}
The following lemma summarizes the main properties of $k$-unravelings.
\begin{lemma}
\label{lem:kunravprop}
  Let \Dmc be a database, $k \geq 1$, $\abf \in \mn{dom}(\Dmc)^n$, and
  $\Dmc^u_\abf$ be the $k$-unraveling of \Dmc up to \abf. Then
  \begin{enumerate}

  \item $\Dmc^u_\abf \rightarrow \Dmc$ via a homomorphism that
    is the identity on~\abf;

  \item $\mn{ch}_\Omc(\Dmc^u_\abf) \rightarrow \mn{ch}_\Omc(\Dmc)$ via
    a homomorphism that is the identity on~\abf, for all
    \ELHIbot-ontologies \Omc;

  \item $\Dmc \models Q(\abf)$ iff $\Dmc^u_\abf \models Q(\abf)$,
    for all OMQs $Q$ from $(\ELHIbot,\text{UCQ}_k)$;

  \item \Dmc is consistent with \Omc iff $\Dmc^u_\abf$ is 
    consistent with \Omc, for all $\ELHIbot$-ontologies \Omc.

  \end{enumerate}
\end{lemma}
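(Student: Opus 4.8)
The plan is to prove the four items in the order 1, 2, 4, 3, isolating the one genuinely hard ingredient into a single homomorphism-lifting step that both Parts~3 and~4 then invoke. For Part~1 I would take the map $h$ that is the identity on \abf and agrees with the limit projection $\pi$ on every constant of $\Dmc|^u_{\overline{\abf}}$, and verify directly that it sends facts to facts: facts inside $\Dmc|^u_{\overline{\abf}}$ map under $\pi$ to \Dmc because each bag is an isomorphic copy of an induced subdatabase of \Dmc; the type-1 facts are already facts of \Dmc among the constants of \abf; and the reconnecting facts of types~2 and~3, say $r(a,c)$ with $\pi(c)=b$, map to the originating $r(a,b)\in\Dmc$. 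For Part~2 I would lift $h$ to the oblivious chases by induction on the construction of $\mn{ch}_\Omc(\Dmc^u_\abf)$: whenever a rule fires (a ditree for $C\sqsubseteq D$ with $D\neq\bot$, or a role fact for $r\sqsubseteq s$), its triggering configuration has an image under the partial homomorphism that already satisfies the same left-hand side in $\mn{ch}_\Omc(\Dmc)$ — the left-hand sides are positive existential \ELIbot-concepts, since the paper assumes $\bot$ occurs only in $C\sqsubseteq\bot$, and these are preserved by homomorphisms — so by fairness the corresponding structure is present in $\mn{ch}_\Omc(\Dmc)$ and the homomorphism extends. This is the standard monotonicity of the oblivious chase, it preserves the identity on \abf, and it needs no consistency assumption.

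The heart of the remaining parts is the following \emph{core lemma}: for every CQ $p$ of tree width at most $k$ and every homomorphism from $p$ to $\mn{ch}_\Omc(\Dmc)$ fixing $\xbf\mapsto\abf$, there is a homomorphism from $p$ to $\mn{ch}_\Omc(\Dmc^u_\abf)$ fixing $\xbf\mapsto\abf$. I would prove this in two steps. First, I invoke the $\Omc=\emptyset$ instance of the present lemma — equivalently, the classical fact that a CQ of tree width at most $k$ maps into a (possibly infinite) database fixing \abf iff it maps into the $k$-unraveling of that database up to \abf — applied with $\mn{ch}_\Omc(\Dmc)$ in place of \Dmc. This yields a homomorphism from $p$ into the unraveling $(\mn{ch}_\Omc(\Dmc))^u_\abf$ fixing \abf. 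Second, it remains to exhibit a homomorphism $(\mn{ch}_\Omc(\Dmc))^u_\abf \to \mn{ch}_\Omc(\Dmc^u_\abf)$ fixing \abf, and composing finishes the core lemma. \textbf{This bridge homomorphism is the main obstacle.} I would build it along the inductive construction of the unraveling, sending each width-$\le k$ bag — an induced subdatabase of $\mn{ch}_\Omc(\Dmc)$ of size at most $k+1$ — to a matching region of $\mn{ch}_\Omc(\Dmc^u_\abf)$. Bags consisting only of \Dmc-constants are realizable because $\Dmc^u_\abf$ was built to contain, through its all-directions branching, every width-$\le k$ development of \Dmc around \abf; bags meeting the chase-generated ditrees are handled by locality of the chase. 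The delicate point is precisely this locality: whenever a \Dmc-constant $b$ satisfies the left-hand side $C$ of an inclusion in $\mn{ch}_\Omc(\Dmc)$, so that a ditree is attached below it, the witness for $b\in C^{\mn{ch}_\Omc(\Dmc)}$ is a tree-shaped, hence tree-width-$1\le k$, pattern rooted at $b$, which the unraveling reproduces below any copy $c$ of $b$ (here we use $k\ge 1$); hence $c\in C^{\mn{ch}_\Omc(\Dmc^u_\abf)}$ and the oblivious chase regenerates the same ditree below $c$. Making this reproduction precise — tracking how concept memberships and generated ditrees transfer to copies under $\pi$, coherently across the tree of bags — is where the real work lies.

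With the core lemma available, Part~4 follows: by Lemma~\ref{lem:chaseprop}, \Dmc is inconsistent iff the tree-shaped CQ $q_C$ of some concept $C$ with $C\sqsubseteq\bot\in\Omc$ maps into $\mn{ch}_\Omc(\Dmc)$, and $q_C$ has tree width $1\le k$; the core lemma transfers such a map to $\mn{ch}_\Omc(\Dmc^u_\abf)$ while Part~2 transfers it back, so the two databases are inconsistent simultaneously. (The easy half, \Dmc consistent $\Rightarrow \Dmc^u_\abf$ consistent, also reads off directly from Part~1 and Lemma~\ref{lem:hom}.) For Part~3 I would first use Part~4 to dispose of the inconsistent case, where both databases satisfy $Q(\abf)$ vacuously. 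Assuming consistency, $\Dmc\models Q(\abf)$ iff some CQ of the UCQ$_k$ in $Q$ maps into $\mn{ch}_\Omc(\Dmc)$ fixing \abf (Lemma~\ref{lem:chaseprop}); the core lemma, applied disjunct by disjunct since each disjunct has tree width at most $k$, gives the direction $\Dmc\models Q(\abf)\Rightarrow\Dmc^u_\abf\models Q(\abf)$, while the converse is immediate by composing the homomorphism witnessing $\Dmc^u_\abf\models Q(\abf)$ with the Part~2 homomorphism $\mn{ch}_\Omc(\Dmc^u_\abf)\to\mn{ch}_\Omc(\Dmc)$.
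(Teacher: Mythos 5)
Your Parts~1 and~2 coincide with the paper's proof (the projection $\pi$ extended by the identity on $\abf$, then lifted to the chases by replaying rule applications), and your reduction of Parts~3 and~4 to a single homomorphism-transfer statement for tree-width-$k$ queries matches the paper's strategy of proving Point~4 by reusing the Point~3 construction on the tree-shaped CQ $q_C$. Where you genuinely diverge is in how the hard direction of that transfer is established. The paper performs surgery on the query: given $h:p\to\mn{ch}_\Omc(\Dmc)$ it contracts $p$ along chains of chase-mapped variables, drops the variables mapped outside $\mn{dom}(\Dmc)$, decorates the result with tree-shaped concept atoms $q_C$ recording which left-hand sides hold at each image, maps this decorated query $p''$ into $\Dmc^u_\abf$ bag by bag along a width-$k$ tree decomposition, and finally re-grows the dropped parts using the chase of $\Dmc^u_\abf$. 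You instead factor the argument through two database-level statements: the classical $\Omc=\emptyset$ unraveling fact applied to $\mn{ch}_\Omc(\Dmc)$, followed by a bridge homomorphism $(\mn{ch}_\Omc(\Dmc))^u_\abf\to\mn{ch}_\Omc(\Dmc^u_\abf)$ fixing $\abf$. Your bridge is a strictly stronger and more reusable statement than what the paper proves (it handles all tree-width-$k$ queries at once and commutes unraveling with the chase globally), and the decomposition is cleaner; the price is that constructing it forces you to treat arbitrary $(k{+}1)$-element bags mixing database constants with chase-generated elements, maintaining an invariant that the database-constant part of each bag lands in a single bag of $\Dmc^u_\abf$ while chase elements land in the ditrees below the corresponding copies. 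The crux you flag --- that concept memberships and generated ditrees transfer from a constant $b$ to every copy of $b$, proved by induction on chase stages interleaved with the structure of $\ELI_\bot$-concepts, using $k\ge 1$ to reproduce the tree-shaped witnesses --- is exactly the role played by the $q_C$-decoration in the paper's $p''$, so neither route avoids it; the paper is no more detailed at this point than you are. Two small things to make explicit if you write this up: the unraveling must be defined for the possibly infinite database $\mn{ch}_\Omc(\Dmc)$ (harmless, but the paper only defines it for databases), and the ``classical fact'' you invoke in Step~1 must be the variant relativized to $\abf$ (answer variables excluded from the tree decomposition), which is precisely the $\Omc=\emptyset$ instance of Point~3 and so should be cited or proved rather than treated as free.
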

\begin{proof}
(1) The function $\pi: \Ind(\Dmc|^u_{\overline{\abf}}) \rightarrow
  \Ind(\Dmc|_{\overline{\abf}})$ introduced during the construction
  of $\Dmc|^u_{\overline{\abf}}$ is a homomorphism from
  $\Dmc|^u_{\overline{\abf}}$ to $\Dmc|_{\overline{\abf}}$. We can extend
  $\pi$ to a homomorphism $\pi'$ from $\Dmc^u_{\abf}$ to \Dmc by setting
  $\pi'(a)=a$ for all $a \in \abf$.

\smallskip 
  (2) We obtain the desired homomorphism $\pi''$ from
  $\mn{ch}_\Omc(\Dmc^u_\abf)$ to $\mn{ch}_\Omc(\Dmc)$
  with $\pi''(\abf)=\abf$ by further extending the homomorphism
  $\pi'$ from $\Dmc^u_{\abf}$ to \Dmc from Point~(1). This is in
  fact easy by replicating in \Dmc any application of a chase rule
  in~$\Dmc^u_\abf$.

\smallskip 

  (3) The `if' direction follows from Point 1 and
  Lemma~\ref{lem:hom}.  For the `only if' direction, assume that
  $\Dmc \models Q(\abf)$ and let $Q=(\Omc,\Sbf,q)$. Then there
  is a homomorphism $h$ from some CQ
  $p=\exists \ybf \, \varphi(\xbf,\ybf)$ in $q$ to
  $\mn{ch}_\Omc(\Dmc)$ with $h(\xbf)=\abf$. Let the CQ $p'$ be
  obtained from $p$ as follows:
  \begin{enumerate}

  \item if $y_1,\dots,y_n$ are quantified variables in $p$ such that
    $p$ contains an atom that contains both $y_i$ and $y_{i+1}$ for
    $1 \leq i < n$, $h(y_1),h(y_n) \in \mn{dom}(\Dmc)$, and
    $h(y_2),\dots,h(y_{n-1}) \notin \mn{dom}(\Dmc)$, then identify
    $y_1$ and $y_n$, that is, replace all occurrences of $y_n$ with $y_1$.

  \item drop all variables $y \in \mn{var}(p)$ such that $h(y) \notin
    \mn{dom}(\Dmc)$.

  \end{enumerate}
  Clearly, $h$ is also a homomorphism from $p' = \exists \ybf' \,
  \varphi'(\xbf,\ybf')$ to $\mn{ch}_\Omc(\Dmc)$ with range
  $\mn{dom}(\Dmc)$.  Moreover, it can be verified that
  $G_{p'|_{\ybf'}}$ is a minor of $G_{p|_\ybf}$ and thus $p'$ has
  treewidth at most~$k$. Extend $p'$ to a CQ $p''$ as follows: for
  every variable $x$ in $p'$ and CI $C\sqsubseteq D \in \Omc$ such
  that $h(x) \in C^{\mn{ch}_\Omc(\Dmc)}$, take a CQ $q_C$ that
  represents $C$ and is variable disjoint from $p'$, and add it to
  $p'$ identifying the root with $x$. Clearly, also $p''= \exists \ybf' \,
  \varphi'(\xbf,\ybf'')$ is of
  treewidth at most $k$ and $h$ can be extended to a homomorphism
  from $p''$ to $\mn{ch}_\Omc(\Dmc)$.

  Let $(V,E,\mu)$ be a tree decomposition of $G_{p''|_{\ybf''}}$ of
  width at most $k$.  We use $h$ and $(V,E,\mu)$ to define a
  homomorphism $h'$ from $p''$ to $\mn{ch}_\Omc(\Dmc^u_\abf)$ with
  $h''(\xbf)=\abf$.

  To start, choose some root $v \in V$ and let $(V',E',\mu')$ be the
  tree decomposition of $D|^u_{\abf}$ from the construction of
  $\Dmc^u_\abf$. By construction of $\Dmc^u_\abf$ and of
  $(V',E',\mu')$, there must be a $v' \in V'$ such that
  \begin{itemize}
  \item[($*$)]   $y \in \mu(v)$ and $h(y) \notin \abf$ implies that there is a
  $c_y \in \mu'(v')$ with $\pi(c_y)=h(y)$.
\end{itemize}
Set $h'(y) = c_y$ for all such $y$ and $h'(y)=h(y)$ if $h(y) \in
\abf$. We now proceed to the neighbors $u$ of $v$ in $(V,E,\mu)$ and
define $h'$ for the variables $y \in \mu(u)$ in the same way, then
proceed to neighbors of neighbors $u$, and so on. In each step except
the first one, there will be variables $y \in \mu(u)$ with $h'(y)$
already defined. But by construction of $\Dmc^u_\abf$ and of
$(V',E',\mu')$, we find a $u' \in V'$ such that ($*$) is satisfied
(with $u$/$u'$ in place of $v$/$v'$) with the additional assumption
that if $h'(y)$ is already defined, then $c_y=h'(y)$.

This finishes the construction of $h'$. It is readily verified that
$h'$ is indeed a homomorphism and that $h'(\xbf)=\abf$. Using the
extension of $p'$ to $p''$, it is now possible in a straightforward
way to extend $h'$ to a homomorphism from $q$ to
$\mn{ch}_\Omc(\Dmc^u_\abf)$.  This yields $\Dmc^u_\abf \models
Q(\abf)$, as required. 
\smallskip 

(4) The `if' direction follows from Point 1 and Lemma~\ref{lem:hom}.
For the `only if' direction, assume that $\Dmc$ is inconsistent with
\Omc. Then \Omc contains a CI $C \sqsubseteq \bot$ and
$a \in C^{\mn{ch}_\Omc(D)}$ for some~$a$, that is, there is a
homomorphism $h$ from $C$ viewed as a tree-shaped CQ $q_C(x)$ to
${\mn{ch}_\Omc(D)}$. As in the proof of Point~3, we can construct a
homomorphism $h'$ from $q_C(x)$ to $\mn{ch}_\Omc(D^u_\abf)$.
It follows that $D^u_\abf$ is inconsistent with \Omc.
\end{proof}
We also need a slightly different version of unraveling a database
\Dmc into a database of tree width~1, which we introduce next.
The difference to the unravelings above is that we choose a
constant $a \in \adom(\Dmc)$ at which to start the unraveling.

For a constant $a \in \adom(\Dmc)$, the \emph{1-unraveling of $\Dmc$
  at $a$} is defined similarly to the 1-unraveling of $\Dmc$ except
that we start the construction with the (potentially empty)
restriction of $\Dmc$ to facts that involve only the constant~$a$,
with the tree decomposition $(V_0,E_0,\mu_0)$ where $V_0$ contains a
single vertex~$v$ with $\mu_0(v)=\{a\}$ rather than $\mu_0(v) =
\emptyset$, and with the function $\pi_0$ that is the identity on~$a$.

\section*{Proofs for Section~\ref{sect:eval}}


\profptjelia*
\begin{proof}\ (sketch) Let the following be given: an OMQ $Q=(\Omc,\Sbf,q)$
  from $(\mathcal{ELHI}_\bot,\text{UCQ}_k)$, an \Sbf-database $\Dmc_0$,
  and a candidate answer $\abf \in \Ind(\Dmc_0)$.

  Our algorithm consists of three different steps. We start with some
  preliminaries. We assume w.l.o.g.\ that the ontology is in normal
  form, that is, all concept inclusions in it are of one of the forms
  $$
  \top \sqsubseteq A, \quad A \sqsubseteq \bot, \quad A_1 \sqcap A_2 \sqsubseteq A, \quad
  \exists r . A \sqsubseteq B, \quad A \sqsubseteq \exists r . B
  $$
  where $A, B, A_1, A_2$ range over concept names and $r$ ranges over roles.
  It is well-known that every can be converted into normal form in
  linear time with the resulting ontology being equivalent up to the
  introduction of fresh concept names
  \cite{DBLP:books/daglib/0041477}.

  An \emph{extended database} is a database that might
  additionally contain \emph{concept facts} of the form $C(a)$ with
  $C$ an \ELI-concept. An interpretation \Imc \emph{satisfies} a
  concept fact $C(a)$ if $a \in C^\Imc$. It is a \emph{model} of an
  extended database \Bmc if \Imc is a model of all facts in \Bmc,
  including the concept facts.
  We write $\Bmc,\Omc \models C(a)$ if
  every model of \Bmc and \Omc satisfies $C(a)$.
  With $\mn{sub}(\Omc)$, we denote the set of all subconcepts of
  concepts that occur in \Omc. For $\ELI_\bot$-concepts $C,D$, we
  write $\Omc \models C \sqsubseteq D$ if $C^\Imc \subseteq D^\Imc$
  for all models \Imc of \Omc. Whether $\Omc \models C \sqsubseteq D$
  can be decided in single exponential time \cite{DBLP:books/daglib/0041477}.
  
  \smallskip
  \emph{Step~1}. We exhaustively apply the following chase rules, starting
  with the database $\Dmc = \Dmc_0$ and producing a sequence of
  extended databases:
  \begin{enumerate}

  \item if $C_1(a),\dots,C_n(a) \in \Dmc$, $\Omc \models C_1 \sqcap 
    \cdots \sqcap C_n \sqsubseteq C$ with $C \in \mn{sub}(\Omc)$, and 
    $C(a) \notin \Dmc,$ then add $C(a)$ to \Dmc; 

  \item if $r(a,b),C(b) \in \Dmc$,
    $\exists r . C \sqsubseteq D \in \Omc$, and $D(a) \notin \Dmc,$
    then add $D(a)$ to \Dmc;

  \item if $r(a,b) \in \Dmc$, $r \sqsubseteq s \in \Omc$, and $s(a,b)
    \notin \Dmc$, then add $s(a,b)$ to \Dmc.


  \end{enumerate}
  It is clear that no chase rule is applicable after at most
  polynomially many steps. From now on, we use \Dmc to refer to the extended 
  database resulting from the chase. It is not difficult to show the
  following.
  \\[2mm]
  {\bf Claim.} For all $a \in \Ind(\Dmc_0)$ and $C \in
  \mn{sub}(\Omc)$, $\Dmc_0,\Omc \models C(a)$ iff $C(a) \in \Dmc$.
  \\[2mm]
  Therefore, if \Dmc contains a concept fact of the form $\bot(a)$,
  then $\Dmc_0$ is inconsistent with \Omc and we can answer `true'.
  Otherwise, $\Dmc_0$ is consistent with \Omc.

  \smallskip \emph{Step~2}. A \emph{type} is a subset
  $t \subseteq \mn{sub}(\Omc)$.  For each interpretation \Imc and
  $a \in \Ind(\Dmc)$, the type realized at $a$ in \Imc is
  $t^\Imc_a= \{ C \in \mn{sub}(\Omc) \mid a \in C^\Imc \}$. For
  types, $t_1,t_2$, we write $\Omc \models t_1 \rightsquigarrow t_2$
  if every model of \Omc that realizes $t_1$ also realizes~$t_2$. For
  every role $r$, we write
  $\Omc \models^{\mn{max}} t_1 \sqsubseteq \exists r . t_2$ if
  $\Omc \models t_1 \sqsubseteq \exists r . t_2$ and $t_2$ is maximal
  with this property regarding set inclusion.  It is not hard to see
  that both $\Omc \models t_1 \rightsquigarrow t_2$ and
  $\Omc \models^{\mn{max}} t_1 \sqsubseteq \exists r . t_2$ can also
  be decided in single exponential time. For example, the former is
  equivalent to $\Bmc,\Omc' \models A(a)$ where \Bmc
  is the extended database $\{ C(a) \mid C \in t_1\}$, $\Omc'$ is
  \Omc extended with $\bigsqcap t_2 \sqsubseteq \bot$, and $A$ is a
  fresh concept name.
  
  In Step~2, we first extend \Dmc as follows: for every
  $a \in \mn{dom}(\Dmc)$, every type $t$ with
  $\Omc \models t^\Dmc_a \rightsquigarrow t$, and every $C \in t$, add
  $C(a_t)$. Note that this extension relies on Step~1 to make explicit
  the types of the constants in $\Ind(\Dmc_0)$. We then apply the
  following chase rule exactly $|q|$ times:
  \begin{itemize}

  \item[($*$)] for every $a \in \Ind(\Dmc)$ and every role $r$ and type $t$ 
    such that $\Omc \models^{\mn{max}} t^\Dmc_a \sqsubseteq \exists r
    . t$ and there is no $b \in \Ind(\Dmc)$ with $r(a,b) \in \Dmc$
    and $t^\Dmc_b \supseteq t$, add $r(a,b)$ and $C(b)$ for every
    $C \in t$, with $b$ a fresh constant.
    
  \end{itemize}
  This second chase generates a tree of depth up to $|q|$ below each
  constant from $\Ind(\Dmc_0)$ and we again denote the result with
  \Dmc. If we chased forever rather than only $|q|$ steps, we would
  obtain a (potentially infinite) universal model of $\Dmc_0$
  and~\Omc. The truncated version \Dmc constructed here is not
  universal, but every neighborhood of diameter at most $|q|$ in the
  universal model is also present in \Dmc, due to the extension of
  \Dmc carried out before starting the second chase. Thus,
  $\Dmc_0 \models Q(\abf)$ iff there is a homomorphism from $q(\xbf)$
  to \Dmc with $h(\xbf)=\abf$ iff $\Dmc \models q(\abf)$ in the
  standard sense of relational databases.
  
  \smallskip \emph{Step~3}. Apply the polynomial time algorithm for
  evaluating CQs of tree width bounded by $k$ (see
  Theorem~\ref{thm:grohe}) to $q$ and \Dmc. 

  \medskip The correctness of the algorithm follows from what was said
  above. Regarding the overall running time, it is not hard to see
  that $|\Ind(\Dmc)| \leq |D_0| + |D_0| \cdot |\Omc|^{|q|}$ and thus
  $|D_0| \leq p(|D_0| \cdot |\Omc|^{|q|})$, $p$ a polynomial. Since
  the check in Step~3 needs only polynomial time, this gives the desired
  $2^{p(|Q|)} \cdot p(|\Dmc_0|)$ bound on the running time where $p$
  is again a polynomial.
\end{proof}

\propUCQnoCQ*
\begin{proof}\
    Let $Q=(\Omc,\Sbf,q)$ be an OMQ from ($\ELI$, CQ) with:
    $$
    \begin{array}{rcrclccrcl}
      \Omc &=& \{ B_1 &\sqsubseteq& A_1 &\quad& B_1 \sqcap \exists r^-.B_4 &\sqsubseteq&  A_3 \\[1mm]
&&       B_2 &\sqsubseteq& A_2 &\quad& B_2 \sqcap \exists r.C_1 &\sqsubseteq&  A_4 \\[1mm]
&&       B_3 &\sqsubseteq& A_3 &\quad& \exists r.B_3 \sqcap C_2 &\sqsubseteq&  A_4 \\[1mm]
&&       B_4 &\sqsubseteq& A_4 &\quad& B_4 \sqcap \exists r.C_3 &\sqsubseteq& A_2 \\[1mm]
 &&      C_1 &\sqsubseteq& A_1 &\quad& \exists r.C_1 \sqcap C_4 &\sqsubseteq&  A_2 \\[1mm]
&&       C_2 &\sqsubseteq& A_2 &\quad& C_2 \sqcap \exists r.B_1 &\sqsubseteq&  A_4 \\[1mm]
&&       C_3 &\sqsubseteq& A_3 &\quad& \exists r.C_3 \sqcap B_2 &\sqsubseteq&  A_4 \\[1mm]
&&       C_4 &\sqsubseteq& A_4 &\quad& C_4 \sqcap \exists r.B_3 &\sqsubseteq&  A_2 \}, \\[1mm]
       \Sbf &=& \multicolumn{7}{l}{\{ B_1,B_2,B_3,B_4,C_1, C_2, C_3, C_4, r\}. }
    \end{array}
    $$
    and $q$ the CQ from Example \ref{ex:first}.


     Then $Q$ is UCQ$_1$-equivalent, but not CQ$_1$-equivalent while
     preserving the ontology. We have that $Q \equiv (\Omc, \Sbf, \exists x_1 \exists x_2 \exists x_3 \exists x_4\, (\phi_1 \vee \phi_2))$, where: $$\phi_1=r(x_2, x_1) \wedge r(x_2, x_3) \wedge A_1(x_1) \wedge A_2(x_2) \wedge A_4(x_2) \wedge A_3(x_3)$$ and $$\phi_2= r(x_2, x_1)\wedge r(x_4, x_1) \wedge A_1(x_1) \wedge A_3(x_1) \wedge A_2(x_2) \wedge A_4(x_4).$$

     To see that $Q$ is not CQ$_1$-equivalent while preserving the
     ontology, assume the contrary. Then, there exists some OMQ
     $Q'=(\Omc, \Sbf, q')$ with $q'$ from CQ$_1$ such that $Q \equiv
     Q'$. Let
     $$
     \begin{array}{rcl}
       \Dmc_1 &=& \{r(b_2, b_1), r(b_2, b_3), r(b_4, b_3),  r(b_4,
       b_1), \\[1mm]
       && \ B_1(b_1), B_2(b_2), B_3(b_3), B_4(b_4)\} \\[1mm]
       \Dmc_2 &=& \{r(c_2, c_1), r(c_2, c_3), r(c_4, c_3), r(c_4,
c_1), \\[1mm]
&&\ C_1(c_1), C_2(c_2), C_3(c_3), C_4(c_4)\}.
     \end{array}
     $$
     Then $\Dmc_1 \models Q$ and $\Dmc_2 \models Q$. Thus $\Dmc_1
     \models Q'$ and $\Dmc_2 \models Q' $ and consequently
     $\mn{ch}_{\Omc}(\Dmc_1) \times \mn{ch}_{\Omc}(\Dmc_2) \models q'$
     where `$\times$' denotes the direct product.  It can be verified
     that $\mn{ch}_{\Omc}(\Dmc_1) \times \mn{ch}_{\Omc}(\Dmc_2)$ is
     isomorphic to $\Dmc_q$.  As $q'$ is from CQ$_1$, it follows that
     $\Dmc_q^u \models q'$, where $\Dmc_q^u$ is the $1$-unraveling of
     $\Dmc_q$, that is,
         $$\begin{array}{lll}\Dmc_q^u& =&\{C_1(x_1),   C_2(x_2),
     C_3(x_3), C_4(x_4), C_1(x_5), \ldots \\[1mm]
         &&\hspace{2mm} r(x_2 ,x_1), r(x_2,x_3), r(x_4, x_3), r(x_4, x_5), \ldots \}\\
         \end{array}$$
Let $Q^u=(\Omc, \Sbf, q^u)$, $q^u$ a Boolean CQ such that $D_{q^u}=D_q^u$.  Then, $Q^u \subseteq Q'\equiv Q$ and $Q \subseteq Q^u$. Thus, $Q \equiv Q^u$.
    Now consider the database
    $$\begin{array}{lll} \Dmc_{12}&= &\{ r(x_2,x_1), r(x_2,x_3), r(x_4, x_3), r(x_4, x_5), \ldots, \\[1mm]
    && \hspace{1mm} B_1(x_1), B_2(x_2), B_3(x_3), B_4(x_4), \\[1mm]
    &&\hspace{1mm} C_1(x_5), C_2(x_6), C_3(x_7), C_4(x_8), \\[1mm]
    && \hspace{1mm} B_1(x_9), \ldots, \\[1mm]
    && \hspace{1mm}\ldots \qquad \qquad\}
    \end{array}
    $$
    We have that $\Dmc_{12} \models Q^u$, but $\Dmc_{12} \not \models Q$ -- contradiction.
\end{proof}

\thmbestapprox*
\begin{proof}
Let $Q=(\Omc,\Sbf,q)$ and $Q_a=(\Omc,\Sbf,q_a)$. For
  Point~1, further let \Dmc be an \Sbf-database of tree width at
  most~$k$.  We can assume that \Dmc is consistent with \Omc.  We have
  $Q_a(\Dmc) \subseteq Q(\Dmc)$ by construction of $Q_a$,
  independently of the tree width of \Dmc. For the converse, let $\abf
  \in Q(\Dmc)$. By Lemma~\ref{lem:chaseprop}, we find a CQ $p(\xbf)$
  in $q$ and a homomorphism $h$ from $p$ to
  $\mn{ch}_\Omc(\Dmc^u_\abf)$ with $h(\xbf)=\abf$. Let $\widehat
  p(\xbf)$ be the contraction of $p$ obtained by identifying any two
  variables $z_1$ and $z_2$ with $h(z_1)=h(z_2)$. Since \Dmc has tree
  width at most $k$, by Lemma~\ref{lem:chaseprop} so has
  $\mn{ch}_\Omc(\Dmc^u_\abf)$. Consequently, also $\widehat p$ has
  tree width at most $k$ and thus is a CQ in $q_a$. It follows that
  $\abf \in Q_a(\Dmc)$.

  For Point~2, let $Q'=(\Omc',\Sbf,q')$.  Take an $\Sbf$-database~\Dmc
  such that $\Dmc \models Q'(\abf)$. We can assume that \Dmc is
  consistent with \Omc.  Consider $\Dmc^u_\abf$, the
    $k$-unraveling of $\Dmc$ up to \abf.  Lemma~\ref{lem:kunravprop}
yields $\Dmc^u_\abf \models Q'(\abf)$ and
  consequently $\Dmc^u_\abf \models Q(\abf)$. From Point~1, we thus
  get $\Dmc^u_\abf \models Q_a(\abf)$. By Lemma~\ref{lem:kunravprop},
  there is also a homomorphism $g$ from $\mn{ch}_{\Omc}(\Dmc^u_\abf)$
  to $\mn{ch}_{\Omc}(\Dmc)$ with $h(\abf)=\abf$. Lemma~\ref{lem:hom}
  yields $\Dmc \models Q_a(\abf)$ as required.
\end{proof}

\thmfulldatabase*
\begin{proof}
  `$2 \Rightarrow 1$' is immediate and so is `$3 \Rightarrow 2$' since
  a maximum constraction clearly always exists.  We show
  `$1 \Rightarrow 3$'. Thus assume that
  $Q=(\Omc,\Sbf_{\mn{full}},q) \in (\ELHIbot,\text{CQ})$ is
  UCQ$_k$-equivalent.  By Corollary~\ref{prop:wasaclaim}, this is
  witnessed by an equivalent OMQ $Q'=(\Omc,\Sbf_{\mn{full}},q')$ from
  $(\ELHIbot,\text{UCQ}_k)$. Since $Q$ is non-empty, so is $Q'$ and we
  can assume w.l.o.g.\ that $\Dmc_{p}$ is consistent with \Omc for any
  CQ $p$ in $q'$. Let $Q_c=(\Omc,\Sbf_{\mn{full}},q_c)$ be any maximum
  contraction of $Q$. Assume w.l.o.g.\ that $q$, $q'$, and $q_c$ all
  have the same answer variables \xbf. Since $Q_c$ and $Q'$ are
  equivalent and by Lemma~\ref{lem:contfull} and since $\Dmc_{q'}$ is
  consistent with \Omc, there must be a CQ $p$ in $q'$ and
  homomorphisms
  \begin{itemize}

  \item $h_1$ from $q_c$ to $\mn{ch}_\Omc(p)$ with $h_1(\xbf)=\xbf$ and

  \item $h_{2}$ from $p$ to $\mn{ch}_{\Omc}(q_c)$ with
    $h_2(\xbf)=\xbf$.

  \end{itemize}
  We can extend $h_2$ to a homomorphism from $\mn{ch}_\Omc(p)$ to
  $\mn{ch}_\Omc(q_c)$, by inductively following the applications of
  the chase rules. The composition $h=h_2 \circ h_1$ is a homomorphism
  from $q_c$ to $\mn{ch}_\Omc(q_c)$ and must be injective since
  otherwise $Q_c$ would not be a maximal contraction. Thus, $h_1$ must
  also be injective. But $p$ has tree width at most $k$, and
  consequently so has $\mn{ch}_\Omc(p)$. It thus follows that $q_c$
  also has tree width at most~$k$.
\end{proof}

\lemrewritings*
\begin{proof}
  For Point~1, it is obvious that $Q \subseteq \widehat Q$. For the
  converse, let $V$ be the set of variables identified in Step~4 of
  the construction of $\widehat Q$ and let $\mn{ch}_\Omc(\Dmc_q)^-$
  denote the restriction of $\mn{ch}_\Omc(\Dmc_q)$ to the variables in
  $V$ and the constants from trees that the chase has generated below
  such variables.  Due to Step~5 of the construction of $\widehat Q$,
  $\mn{ch}_\Omc(\Dmc_q)^-$ is isomorphic to a subset of
  $\mn{ch}_\Omc(\Dmc_{\widehat q})$. We in fact even have
  $\mn{ch}_\Omc(\Dmc_q)^- \subseteq \mn{ch}_\Omc(\Dmc_{\widehat q})$
  if the fresh contants introduced by the chase are chosen uniformly,
  so let us assume that this is the case.  As a consequence, the
  homomorphism $h$ from Step~2 is a homomorphism from $q'$ to
  $\mn{ch}_\Omc(\Dmc_{\widehat q})$. Since $Q$ and thus $\widehat Q$
  is non-empty, $\Dmc_{\widehat q}$ is consistent with \Omc and thus
  it follows that $\widehat Q \subseteq Q' \equiv Q$.

  For Point~2, we again use the fact that $h$ is a homomorphism from
  $q'$ to $\mn{ch}_\Omc(\Dmc_{\widehat q})$, as argued in the proof of
  Point~1. Since $h$ is also a homomorphism from $q'$ to
  $\mn{ch}_\Omc(\Dmc_q)$ and $\Dmc_q$ is consistent with \Omc, $h$ must
  be injective as otherwise $Q'$ would not be a maximum rewriting of
  $Q$. By construction of $\widehat q$ and since the chase adds only
  tree-shaped subdatabases, the treewidth of
  $\mn{ch}_\Omc(\Dmc_{\widehat q})$ is identical to the treewidth of
  the restriction $\widehat q_0$ of $q$ to the range of $h$, as taken
  in Step~3 of the construction of $\widehat Q$. Since $h$ is
  injective, $\widehat q_0$ is an induced subgraph of $q'$ and thus
  the treewidth of $\widehat q_0$ does not exceed that of $q'$, as
  required.
\end{proof}

\section*{Proofs for Section~\ref{sect:fpt}}


\noindent
{\bf Theorem~\ref{thm:grohetechnew}} (Grohe){\bf .}
Given an undirected graph $G=(V, E)$, a $k > 0$, and a
connected 
$\Sbf$-database~$\Dmc$ such that $G_\Dmc$ 
contains the $k \times K$-grid as a minor, 
one can construct in time $f(k)\cdot\mn{poly}(|G|,|\Dmc|)$
an $\Sbf$-database $\Dmc_G$ such that:
  \begin{enumerate}

  \item there is a surjective homomorphism $h_0$ from $\Dmc_G$ to
    $\Dmc$ such that for every edge $\{a, b\}$ in the Gaifman graph of
    $\Dmc_G$:
    $s(a,b) \in \Dmc_G$ iff $s(h_0(a),h_0(b)) \in \Dmc$ for all roles~$s$;

  \item $G$ contains a $k$-clique iff there is a homomorphism $h$ from
    $\Dmc$ to $\Dmc_G$ such that $h_0(h(\cdot))$ is the identity.

  \end{enumerate}

\noindent
\begin{proof}\
  A careful analysis of \cite{DBLP:journals/jacm/Grohe07} reveals that
  the proof given there establishes Theorem~\ref{thm:grohetechnew}
  without the `such that' condition in Point~1. That condition,
  however, can easily be attained as follows. Assume that $G$, $k$,
  and \Dmc are given. First rewrite \Dmc into a new schema $\Sbf'$
  that consists of the concept names in \Sbf and a fresh role name
  $r_R$ for every set
  $R \subset \{ r,r^- \mid r \text{ role name in } \Sbf \}$, by
  replacing every maximal set $\{ r_0(a,b),\dots,r_n(a,b) \}$,
  $r_1,\dots,r_n$ (potentially inverse) roles, with the fact
  $r_{r_0,\dots,r_n}(a,b)$. Then apply Theorem~\ref{thm:grohetechnew}
  without the `such that' condition in Point~1, obtaining $D'_G$ and
  $h_0$. Now revert back $D'_G$ to schema \Sbf in the obvious way.  It
  can be verified that the resulting database $D_G$ and $h_0$ satisfy
  Conditions~1 and~2, also with the `such that' condition in Point~1.
\end{proof}

\lemiolem*
\begin{proof}\  We can find $p_c$ by the following iterative process:
  start with $p_c=p$ and then exhaustively take a homomorphism $h$
  from $p_c$ to $\mn{ch}_\Omc(\Dmc)$ and if $h$ is not injective,
  replace $p_c$ with the contraction $p'_c$ of $p_c$ induced by $h$,
  that is, identify quantified variables $y_1$ and $y_2$ if
  $h(y_1)=h(y_2)$. Clearly, this process must terminate since the
  number of variables in $p_c$ decreases with each step.
\end{proof}

\lemiomin*
\begin{proof}\
%
%
 It remains to argue that the iterative process described in the main part of the paper terminates.  For an $\Sbf$-database \Bmc let $\Qmc_{\Bmc}$ denote the set of contractions
  $q_c$ of $q_w$ such that $\mn{ch}_\Omc(\Bmc) \models^{io}
  q_c$. The partial order `$\preceq$' lifts
  to sets $\Qmc_1,\Qmc_2$ of contractions of $q$ in the obvious way:
  $\Qmc_1 \preceq \Qmc_2$ if for every $q_c \in \Qmc_1$, there is a
  $q'_c \in \Qmc_2$ with $q_c \preceq q'_c$. As usual, we write
  $\Qmc_1 \prec \Qmc_2$ if $\Qmc_1 \preceq \Qmc_2$ and $\Qmc_1
  \not\preceq \Qmc_2$.

  Observe that whenever \Dmc is replaced by $\Dmc'$ in the iterative process, then $\Qmc_{\Dmc'} \prec
  \Qmc_{\Dmc}$:
  \begin{enumerate}

  \item $\Qmc_{\Dmc'} \preceq \Qmc_{\Dmc}$:
  Let $q_c \in \Qmc_{\Dmc'}$. Using $\Dmc' \rightarrow
    \Dmc$, one can prove that $\mn{ch}_\Omc(\Dmc')
    \rightarrow \mn{ch}_\Omc(\Dmc)$, following the applications
    of the chase rules. From $\mn{ch}_\Omc(\Dmc') \models q_c$,
    we thus obtain $\mn{ch}_\Omc(\Dmc) \models q_c$. By
    Lemma~\ref{lem:iolem}, we find a $q_c'$ with $q_c \preceq q'_c$
    and $\mn{ch}_\Omc(\Dmc) \models^{io} q'_c$, thus $q'_c \in
    \Qmc_{\Dmc}$.

  \item $\Qmc_{\Dmc} \not\preceq \Qmc_{\Dmc'}$:
  Assume that $\Dmc$ was replaced by $\Dmc'$
    because of the contractions $q_c$ and $q'_c$ with $\mn{ch}_\Omc(\Dmc)
    \models^{io} q_c$, $\mn{ch}_\Omc(\Dmc')
    \models^{io} q'_c$ and $q'_c \preceq q_c$. Then $q_c \in \Qmc_{\Dmc}$. We show
    that there is no $q''_c \in \Qmc_{\Dmc'}$ with $q_c \preceq
    q''_c$. Assume to the contrary that there is such a $q''_c$. Then $\mn{ch}_\Omc(\Dmc') \models^{io} q''_c$ and thus there is a homomorphism $h$ from $q''_c$ to $\mn{ch}_\Omc(\Dmc')$.  We also have $q_c \rightarrow q''_c$ and $q'_c \rightarrow q_c$ via a non-injective homomorphism since $q_c$ is a
    proper contraction of $q'_c$. By composition, we obtain a
    non-injective homomophism $h'$ from $q'_c$ to
    $\mn{ch}_\Omc(\Dmc')$, in contradiction to
    $q'_c \in \Qmc_{\Dmc'}$.

  \end{enumerate}
%
 Note that `$\preceq$' is a partial
order on the contractions of~$q_w$.
  It remains to note then that when `$\prec$' is interpreted on
  sets of contractions is trivially well-founded. This is because $\prec$ is strict
  and there are only finitely many contractions of $q_w$.
%
%
\end{proof}

\lemdecHomPre*
\begin{proof}\ Condition~1 is decidable since OMQ evaluation
  is. For Condition~2, we first show that it suffices to look at
  databases $\Dmc'$ of a certain restricted shape called pseudo
  trees~\cite{DBLP:conf/ijcai/Bienvenu0LW16} and then argue that this
  enables a reduction to the satisfiability of MSO sentences on
  trees. An alternative for the second step is a reduction to the
  emptiness problem of alternating tree automata, which we conjecture
  to deliver a \TwoExpTime upper bound.

  \smallskip
  \emph{Pseudo trees}. A database \Dmc is a \emph{pseudo tree} if it
  is the union of databases $\Dmc_0,\dots,\Dmc_m$ that satisfy the
  following conditions:
\begin{enumerate}


\item $\Dmc_{1},\dots,\Dmc_m$ are of tree width one;

\item $\Ind(\Dmc_{i})
  \cap \Ind(\Dmc_{j}) = \emptyset$, for $1 \leq i < j \leq k$;

\item $\Dmc_i \cap \Dmc_0$ is a singleton for $1 \leq i \leq k$.

\end{enumerate}
The \emph{width} of $\Dmc$ is $|\Ind(\Dmc_0)|$.
%
\\[2mm]
{\bf Claim.} The condition in Point~2 of Lemma~\ref{lem:iomin} is
equivalent to the same condition when `$\Sbf$-database $\Dmc'$' is
replaced with `pseudo tree \Sbf-database $\Dmc'$ of width at most
$|q_w|$. 
\\[2mm]
\emph{Proof of claim.}  Assume that if
$\mn{ch}_\Omc(\Dmc) \models^{io} q_c$, $q_c$ a contraction of $q_w$, and
that there is an $\Sbf$-database $\Dmc'$ and a contraction $q'_c$ of
$q_w$ such that $\Dmc' \rightarrow \Dmc$,
$\mn{ch}_\Omc(\Dmc') \models^{io} q'_c$, and $q'_c \prec q_c$.  Let
$h$ be an (injective) homomorphism from $q'_c$ to
$\mn{ch}_\Omc(\Dmc')$. Construct a pseudo tree database $\Dmc''$ as
the union of the restriction $\Dmc_0$ of \Dmc to the range of $h$ and
the databases $\Dmc_1,\dots,\Dmc_m$ which are the 1-unravelings of
some constant $a$ from $\Dmc_0$.  It can be verified that $\Dmc'' \rightarrow \Dmc'$, thus $\Dmc' \rightarrow \Dmc$. Moreover,
$\mn{ch}_\Omc(\Dmc'') \models^{io} q'_c$. We have thus replaced
$\Dmc'$ with the pseudo tree database $\Dmc''$.

\smallskip \emph{Reduction to MSO}. It thus suffices to decide the
existence of a pseudo-tree database that satisfies the conditions for
$\Dmc'$ in Point~2 of Lemma~\ref{lem:iomin}.  Let $\Sigma$ be a finite
alphabet. A \emph{$\Sigma$-labeled tree} takes the form $(T,\ell)$
where $T \subseteq S^*$, $S$ a set of any cardinality, is closed under
prefixes and $\ell:T \rightarrow \Sigma$ is a node labeling
function. The satisfiability problem for monadic second-order logic
(MSO) on $\Sigma$-labeled trees is decidable. It
is not hard to encode pseudo-tree \Sbf-databases into $\Sigma$-labeled
trees for an appropriately chosen $\Sigma$,
see~\cite{DBLP:conf/ijcai/Bienvenu0LW16}. Note that the entire
$\Dmc_0$-component can be encoded as a single node label since it
contains only boundedly many constants. Further, it is possible to
express the conditions for $\Dmc'$ in Point~2 of Lemma~\ref{lem:iomin}
as MSO sentences. This is technically very closely related to the tree
automata constructions in~\cite{DBLP:conf/ijcai/Bienvenu0LW16}.  We
refrain from going into technical detail.
\end{proof}

\lemAprop*
\begin{proof}  For Condition~1 of
  Lemma~\ref{lem:iomin}, it is clear that $\Dmc^+ \models
  Q_w$. Further, recall that $\Dmc_0 \not\models Q_a$. It is
  straightfoward to show that $\Dmc^+ \rightarrow \Dmc_0$ and thus also $\Dmc^+ \not\models Q_a$.

  \smallskip

  For Condition~2 of Lemma~\ref{lem:iomin}, assume to the contrary that there are an $\Sbf$-database $\Dmc'$ and contractions $q_c$, $q'_c$ of $q_w$ such that $\mn{ch}_\Omc(\Dmc^+) \models^{io} q_c$, $\Dmc'\rightarrow \Dmc^+$,
  $\mn{ch}_\Omc(\Dmc') \models^{io} q'_c$, and $q'_c \prec
  q_c$. We use a case distinction:
  \begin{itemize}

  \item $\mn{ch}_\Omc(\Dmc_0) \models^{io} q_c$.

    From $\Dmc' \rightarrow \Dmc^+$ and
    $\Dmc^+ \rightarrow \Dmc_0$, we obtain
    $\Dmc' \rightarrow \Dmc_0$. This together with
    $\mn{ch}_\Omc(\Dmc_0) \models^{io} q_c$ and
    $\mn{ch}_\Omc(\Dmc') \models^{io} q'_c$ yields a
    contradiction to $\Dmc_0$ satisfying Condition~2 of
    Lemma~\ref{lem:iomin}.

  \item $\mn{ch}_\Omc(\Dmc_0) \not\models^{io} q_c$.
 From $\mn{ch}_\Omc(\Dmc^+) \models^{io} q_c$ and $\Dmc^+ \rightarrow \Dmc_0$, we obtain $\mn{ch}_\Omc(\Dmc_0) \models q_c$.  By Lemma~\ref{lem:iolem}, there is a contraction $q'_c$ of $q_c$ with $\mn{ch}_\Omc(\Dmc_0) \models^{io} q'_c$. But then we must have $q'_c = q_c$ as otherwise we obtain a contradiction to $\Dmc_0$ satisfying Condition~2 of Lemma~\ref{lem:iomin} (instantiated with $\Dmc=\Dmc'=\Dmc_0)$. Contradiction.

  \end{itemize}

\smallskip

Now for Point~2 of Lemma~\ref{lem:Aprop}. From the fact that
$\Dmc^+ \not\models Q_a$ and that $Q_a$ is equivalent to $Q$ on
$\Sbf$-databases of tree width at most~$\ell$, we obtain that the tree
width of $\Dmc^+$ exceeds~$\ell$.  But by construction of $\Dmc^+$,
the tree width of $\Dmc^+$ cannot be higher than that of \Dmc.
%
%
\end{proof}

\lemCliqueRed*

\begin{proof}\ By what was said in the main body of the paper, it
  remains to prove that %
  \begin{itemize}

  \item[($\dagger$)] for every edge $e$ in \Dmc, there is an atom in $nt(q_c)$
    that $g$ maps to $e$.

  \end{itemize}
  In fact, assume to the contrary that $g$ maps no atom in $nt(q_c)$
  to some edge $\{a,b\} \in \Dmc$. We argue that, then the database
  $\Dmc_1$ obtained from \Dmc by removing all binary facts that
  involve $a$ and $b$ is such that $\Dmc_1^+ \models Q$, contradicting
  the choice of \Dmc. It suffices to show that there is a homomorphism
  $h_1$ from $q_c$ to $\mn{ch}_\Omc(\Dmc_1^+)$. We obtain $h_1$ by
  manipulating the homomorphism $h_0(h(\cdot))$ from $q_c$ to
  $\mn{ch}_\Omc(\Dmc^+)$ in a suitable way. If $h_0(h(\cdot))$ does
  not map any atom in $q_c$ to $\{a,b\}$, then there is nothing to do.
  Otherwise, iterate over all atoms $r(x,y) \in q_c$ that are mapped
  to $\{a,b\}$. 
  Then $r(x,y) \notin nt(q_c)$, that is, this atom must be in a part
  of $q_c$ that is of tree width~1. Assume w.l.o.g.\ that
  $h_0(h(x))=a$ and that $y$ is a successor of $x$ in that part, that
  is, $y$ is further away from the root than $x$; the cases that
  $h_0(h(x))=b$ and/or $x$ is a successor of $y$ can be treated
  analogously and the case that $x=y$ cannot occur since \Dmc contains
  no facts of the form $r(c,c)$. Let $q_y$ be the tree width~1
  subquery of $q_c$ rooted at $y$.  Since $h_0(h(\cdot))$ is
  injective, none of the atoms in $q_y$ is mapped to
  $\{a,b\}$. In $\Dmc_1^+$, a copy of the tree width~1 database
  $\Dmc_a$ has been added, the root identified with $a$. We can
  use $h_0(h(\cdot))$ to (re)define $h_1$ for all variables in
  $q_y$. Informally, whenever $h_0(h(z))$ is (a copy of)
  $c \in \Ind(\Dmc_0)$ in $\Dmc^+$ for any variable $z$ from $q_y$,
  then we can choose for $h_1(z)$ a corresponding copy of $c$ in the
  mentioned copy of $\Dmc_a$ in $\Dmc_1^+$.  This establishes ($\dagger$).
\end{proof}

\subsection{Non-Boolean OMQs}

We next consider the case where $\mathcal{Q} \subseteq
(\ELHIbot,\text{CQ})$ might contain non-Boolean OMQs. We start with a lemma which strengthens the result from Point (1) of Theorem \ref{thm:bestapprox} and whose proof follows from the proof of the same theorem:

\begin{lemma}
\label{lem:kequivansvars}
Let $Q$ be an $n$-ary OMQ from $(\ELHIbot,\text{UCQ})$, $k \geq 1$, and $Q_a$ the UCQ$_k$-approximation of $Q$. Then for every $\Sbf$-database \Dmc and $n$-tuple $\abf$ over $\adom(\Dmc)$ such that $D|_{\overline{\abf}}$ is of tree width $k$, it is the case that: $\Dmc \models Q(\abf)$  iff $\Dmc \models Q_a(\abf)$.
\end{lemma}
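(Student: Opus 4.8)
The plan is to adapt the argument for Point~1 of Theorem~\ref{thm:bestapprox}, observing that the global tree-width hypothesis on $\Dmc$ used there is only ever needed to control the part of the chase into which the \emph{quantified} variables of the contracted query are mapped, and that this part avoids the answer constants in $\abf$. Write $Q=(\Omc,\Sbf,q)$ and $Q_a=(\Omc,\Sbf,q_a)$. I would first dispose of the trivial cases: we may assume $\Dmc$ is consistent with $\Omc$, since otherwise every tuple is an answer to both $Q$ and $Q_a$ and the equivalence holds. The direction $\Dmc\models Q_a(\abf)\Rightarrow\Dmc\models Q(\abf)$ needs no assumption on tree width at all, as by construction $Q_a\subseteq Q$, and instantiating this containment at $\Dmc$ and $\abf$ yields the implication.

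For the converse, suppose $\Dmc\models Q(\abf)$. By Lemma~\ref{lem:chaseprop} there are a CQ $p(\xbf)$ among the disjuncts of $q$ and a homomorphism $h$ from $p$ to $\mn{ch}_\Omc(\Dmc)$ with $h(\xbf)=\abf$. I would then form the contraction $\widehat p$ of $p$ \emph{induced by $h$}: identify two quantified variables whenever $h$ maps them to the same element, and merge any quantified variable $y$ for which $h(y)$ equals the image of an answer variable into that answer variable (never identifying two answer variables, as the definition of contraction demands; this convention is what resolves the case of repeated entries in $\abf$). Setting $\widehat h([z])=h(z)$ gives a well-defined homomorphism from $\widehat p$ to $\mn{ch}_\Omc(\Dmc)$ with $\widehat h(\xbf)=\abf$, and $\widehat h$ is injective on the quantified variables $\widehat\ybf$ of $\widehat p$, with all their images lying outside the set $\abf$. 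Thus $\widehat p$ is a contraction of a disjunct of $q$, and the only remaining point needed to place $\widehat p$ among the disjuncts of $q_a$ is that its tree width is at most~$k$.

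The crux is this tree-width bound. Since $\widehat h$ is injective on $\widehat\ybf$ and sends it to constants distinct from those in $\abf$, it embeds $G_{\widehat p|_{\widehat\ybf}}$ as a subgraph of the Gaifman graph of $\mn{ch}_\Omc(\Dmc)|_{\overline{\abf}}$; hence the tree width of $\widehat p$ is bounded by that of $\mn{ch}_\Omc(\Dmc)|_{\overline{\abf}}$. I would therefore prove that $\mn{ch}_\Omc(\Dmc)|_{\overline{\abf}}$ has tree width at most~$k$, using the hypothesis that $\Dmc|_{\overline{\abf}}$ does. The structural observation is that the $\ELHIbot$-chase augments $\Dmc$ only by unary facts, by parallel binary facts arising from role inclusions, and by finite tree-shaped databases attached at individual constants (the chase being oblivious, the fresh part is a genuine forest and no new Gaifman edges appear among original constants). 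Restricting to facts avoiding $\abf$ leaves $\Dmc|_{\overline{\abf}}$ intact, turns each chase-generated tree below a non-$\abf$ constant into a tree attached at a single vertex of that core, and turns each tree below an $\abf$-constant into a separate floating component of tree width~$1$. None of these operations raises the tree width above $\max(k,1)=k$ (here $k\ge 1$), which is made precise by grafting the width-$1$ decompositions of the hanging and floating trees onto a width-$\le k$ decomposition of $\Dmc|_{\overline{\abf}}$. With the bound in hand, $\widehat p$ is a disjunct of $q_a$ and $\widehat h$ witnesses $\mn{ch}_\Omc(\Dmc)\models\widehat p(\abf)$, so $\Dmc\models Q_a(\abf)$ by Lemma~\ref{lem:chaseprop}.

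I expect the tree-width computation for $\mn{ch}_\Omc(\Dmc)|_{\overline{\abf}}$ to be the only genuinely delicate step: it is precisely where the restriction to $\overline{\abf}$ (rather than to a globally tree-like $\Dmc$, as in Theorem~\ref{thm:bestapprox}) is exploited, and it rests on the fact that the chase contributes nothing beyond unary facts, role-inclusion duplicates, and tree-shaped extensions. The contraction step, in particular the treatment of repeated answer constants, is a minor but real subtlety that the merging convention above is designed to handle cleanly; everything else follows the pattern already established for Theorem~\ref{thm:bestapprox}.
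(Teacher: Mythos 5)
Your proof is correct and follows essentially the route the paper intends: the paper gives no separate argument for this lemma, remarking only that it follows from the proof of Theorem~\ref{thm:bestapprox}, and your write-up is precisely that adaptation spelled out. The key detail you supply --- that the tree width of the contraction $\widehat p$ is measured on its quantified variables only, which map injectively into $\mn{dom}(\mn{ch}_\Omc(\Dmc))\setminus\abf$, so that only the tree width of $\mn{ch}_\Omc(\Dmc)|_{\overline{\abf}}$ matters, and this is controlled by that of $\Dmc|_{\overline{\abf}}$ because the chase contributes nothing beyond unary facts, role-inclusion duplicates of existing edges, and trees hanging off single vertices --- is exactly the point that makes the strengthening go through.
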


Let $Q=(\Omc,\Sbf,q) \notin (\ELHIbot,\text{UCQ})^\equiv_{\text{UCQ}_{\ell}}$ of arity $n$.  Instead of considering the maximal connected components of $q$ as in the Boolean case, we consider the connected components $q_1$, \ldots, $q_p$ of the Boolean CQ $q|_{\ybf}$, where $\ybf$ are the quantified variables of $q$. For each $i$, let $q^+_i$ be the restriction of $q$ to the variables $\xbf \cup \mn{var}(q_i)$. Note that it is not guaranteed that all variables from \xbf occur in some atom of $q^+_i$. We nevertheless assume that the answer variables of $q^+_i$ are exactly $\xbf$, which can be achieved e.g.\ by admitting dummy atoms of the form $\mn{adom}(x_i)$ where \mn{adom} is assumed to
be true for all constants in the input database. Also, let $Q_i=(\Omc,\Sbf,q^+_i)$.

As $Q=(\Omc,\Sbf,q) \notin (\ELHIbot,\text{UCQ})^\equiv_{\text{UCQ}_{\ell}}$, it must be the
case that there is some $w$ such that $Q_w \notin
(\ELHIbot,\text{UCQ})^\equiv_{\text{UCQ}_{\ell}}$.  Let
$q^-(\xbf)=\bigwedge_{1\leq i \neq w \leq p} q^+_i$ and
$Q^-=(\Omc,\Sbf,q^-)$. It can be assumed that $Q \not \equiv Q^-$ (otherwise $Q$ could be replaced by $Q^-$). Let $D^-$ be a database such that $D^- \models Q^-(\abf)$, but $D^- \not \models Q(\abf)$, for some tuple $\abf$ over $\adom(D^-)$. Then $D^- \not \models Q_w(\abf)$. We assume w.l.o.g. that all constants in $\abf$ are distinct. If this is not the case, we can duplicate constants from $\adom(D^-)$ to obtain a tuple with distinct elements which is an answer to $Q^-$.

Let $Q_a=(\Omc,\Sbf, q_a)$ be the $\ell$-approximation of
$Q_w$. Furthermore, let $\Bmc$ be a database such that $\Bmc \models Q_w(\bbf)$, but $\Bmc \not \models Q_a(\bbf)$ for some tuple $\bbf$ over $ \adom(\Bmc)$. We can assume again w.l.o.g. that all constants in
$\bbf$ are distinct and also that the domains of $D^-$ and $\Bmc$ are
disjoint. This also means that we can rename $\bbf$ as $\abf$ in $\Bmc$ and obtain that:  $\Bmc \models Q_w(\abf)$, but $\Bmc \not \models Q_a(\abf)$. 


For a contraction $q_c(\xbf)$ of $q$, a database \Dmc, and an $\abf \in \mn{dom}(\Dmc)^{|\xbf|}$, we write $\Dmc \models^{io} q_c(\abf)$ if all homomorphims $h$ from $q_c$ to $\Dmc$ with $h(\xbf)=\abf$ are injective.
%
%
We also use again a modified version of Lemma \ref{lem:iomin},
established by the same proof.

\begin{lemma}
\label{lem:iominSecond}
Let $Q=(\Omc,\Sbf,q)$ be an OMQ from $(\ELHIbot,\text{CQ})$ and \Dmc
an \Sbf-database with $\Dmc \models Q(\abf)$.
Then there is an $\Sbf$-database $\Dmc'$ such that the following
conditions are satisfied:
  \begin{enumerate}
  \item $\Dmc' \rightarrow \Dmc$, $\Dmc' \models Q(\abf)$ and
  \item if $\mn{ch}_\Omc(\Dmc') \models^{io} q_c(\abf)$, $q_c$ a contraction
    of $q$, then it is not the case that there is an $\Sbf$-database
    $\Dmc''$ and a contraction $q'_c$ of $q$ such that $\Dmc''
    \rightarrow \Dmc'$, $\mn{ch}_\Omc(\Dmc'') \models^{io}
    q'_c(\abf)$, and \mbox{$q'_c \prec q_c$}.
  \end{enumerate}
\end{lemma}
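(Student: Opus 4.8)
The plan is to mimic the proof of Lemma~\ref{lem:iomin} essentially verbatim, tracking the answer tuple $\abf$ throughout. First I would record the answer-variable version of Lemma~\ref{lem:iolem}: if $\Dmc \models p(\abf)$ then $\Dmc \models^{io} p_c(\abf)$ for some contraction $p_c$ of $p$. This is proved by the same iterative identification of quantified variables that are mapped to a common element, never identifying two answer variables; this is legitimate because the constants in $\abf$ were chosen distinct. Applying it to the given $\Dmc$ (using $\Dmc \models Q(\abf)$ and Lemma~\ref{lem:chaseprop}(2), so that $\mn{ch}_\Omc(\Dmc) \models q(\abf)$) yields a contraction $q_c$ of $q$ with $\mn{ch}_\Omc(\Dmc) \models^{io} q_c(\abf)$, so the procedure can be started.

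Next I would set $\Dmc' := \Dmc$ and run the replacement loop: as long as there are an $\Sbf$-database $\Dmc''$ together with a homomorphism $\Dmc'' \rightarrow \Dmc'$ that is the identity on $\abf$, and contractions $q_c, q'_c$ of $q$ such that $\mn{ch}_\Omc(\Dmc') \models^{io} q_c(\abf)$, $\mn{ch}_\Omc(\Dmc'') \models^{io} q'_c(\abf)$, and $q_c$ is a proper contraction of $q'_c$ (that is, $q'_c \prec q_c$), replace $\Dmc'$ by $\Dmc''$. Upon termination Condition~2 holds by construction. Condition~1 is maintained along the way: the final $\Dmc' \rightarrow \Dmc$ is the composite of the chain of identity-on-$\abf$ homomorphisms, hence itself the identity on $\abf$; and $\Dmc' \models Q(\abf)$ holds throughout, since $\mn{ch}_\Omc(\Dmc'') \models^{io} q'_c(\abf)$ composed with the contraction map $q \to q'_c$ (which fixes answer variables) gives a homomorphism $q \to \mn{ch}_\Omc(\Dmc'')$ sending $\xbf$ to $\abf$, with consistency of $\Dmc''$ inherited from $\Dmc'$ via Lemma~\ref{lem:hom}(1).

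For termination I would reuse the well-foundedness argument of Lemma~\ref{lem:iomin}. For a database $\Bmc$ containing $\abf$, let $\Qmc_{\Bmc}$ be the set of contractions $q_c$ of $q$ with $\mn{ch}_\Omc(\Bmc) \models^{io} q_c(\abf)$, and lift $\preceq$ to such sets as before. At each replacement one shows $\Qmc_{\Dmc''} \prec \Qmc_{\Dmc'}$: the inclusion $\Qmc_{\Dmc''} \preceq \Qmc_{\Dmc'}$ follows because $\Dmc'' \rightarrow \Dmc'$ (identity on $\abf$) lifts to $\mn{ch}_\Omc(\Dmc'') \rightarrow \mn{ch}_\Omc(\Dmc')$, so each $q_c \in \Qmc_{\Dmc''}$ satisfies $\mn{ch}_\Omc(\Dmc') \models q_c(\abf)$ and the io-lemma bounds it above in $\preceq$; and $\Qmc_{\Dmc'} \not\preceq \Qmc_{\Dmc''}$ follows from the composition argument: a witness $q''_c \in \Qmc_{\Dmc''}$ with $q_c \preceq q''_c$, combined with the non-injective contraction map $q'_c \to q_c$, would produce a non-injective homomorphism $q'_c \to \mn{ch}_\Omc(\Dmc'')$ fixing $\abf$, contradicting $q'_c \in \Qmc_{\Dmc''}$. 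Since there are only finitely many contractions of $q$, the lifted order is well-founded and the loop terminates.

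The main obstacle I anticipate is not conceptual but is the bookkeeping of answer variables: the $\models^{io}$ relation is now relative to homomorphisms fixing $\xbf$ to $\abf$, contractions must never collapse two answer variables, and every homomorphism in the chain must be the identity on $\abf$. Granting that the constants in $\abf$ are distinct (assumed w.l.o.g.\ and used when duplicating constants to obtain a distinct answer tuple), every step of the Boolean argument goes through with $q_w$ replaced by $q$ and $\models^{io} q_c$ by $\models^{io} q_c(\abf)$, so no genuinely new idea beyond Lemma~\ref{lem:iomin} is needed.
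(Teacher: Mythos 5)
Your proof is correct and takes essentially the same approach as the paper, which simply states that Lemma~\ref{lem:iominSecond} is ``established by the same proof'' as Lemma~\ref{lem:iomin} --- that is, precisely the answer-variable-tracking adaptation of the iterative replacement loop and the well-foundedness argument on sets of contractions that you describe. Your explicit bookkeeping (homomorphisms required to be the identity on $\abf$, contractions never collapsing answer variables, and reliance on the distinctness of the constants in $\abf$, which the paper also assumes w.l.o.g.\ in the surrounding text) fills in details the paper leaves implicit.
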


We consider now the database $\Dmc_0$ obtained from Lemma~\ref{lem:iominSecond} when $Q$ is assumed to be $Q_w$, $\Dmc$ to be $\Bmc$ and $\abf$ to be itself. We consider subsets $\Dmc$ of $\Dmc_0$ and construct $\Dmc^+$ as in the Boolean case. We take such a $\Dmc$ which is subset-minimal with the property that $\Dmc^+ \models Q_w(\abf)$.  As $\Dmc^+ \rightarrow \Dmc_0$, $\Dmc_0 \rightarrow \Bmc$ and $\Bmc \not \models Q_a(\abf)$, it follows that $\Dmc^+ \not \models Q_a(\abf)$.  Further on, from Lemma \ref{lem:kequivansvars} it follows that $(\Dmc^+)_{\overline{\abf}}$ has tree width exceeding $\ell$. From the construction of $\Dmc^+$ it follows that the tree width of $\Dmc|_{\overline{\abf}}$ exceeds $\ell$, so it must contain the $k \times K$ grid as a minor. We apply Theorem \ref{thm:grohetechnew} to $\Dmc|_{\overline{\abf}}$ and obtain a new database $(\Dmc|_{\overline{\abf}})_G$ and a homomorphism $h_0$ from $(\Dmc|_{\overline{\abf}})_G$ to $\Dmc|_{\overline{\abf}}$ such that Points 1 and 2 of that theorem are satisfied. We construct a new
database $\Dmc_G$ by reattaching the $\abf$-part of $\Dmc$ to $(\Dmc|_{\overline{\abf}})_G$ as follows:
\begin{itemize}
\item for every atom $A(a) \in \Dmc$ with $a \in \abf$, we add $A(a)$ to $(\Dmc|_{\overline{\abf}})_G$;
\item for every atom $R(a_1, a_2) \in \Dmc$ with $a_1, a_2 \in \abf$, we add $R(a_1, a_2)$ to $(\Dmc|_{\overline{\abf}})_G$;
\item for every atom $R(a, b)$ or $R(b, a)$ in $\Dmc$ with $a \in \abf$ and $b \in \adom(\Bmc')$, and every $b' \in \adom((\Dmc|_{\overline{\abf}})_G)$ such that $h_0(b')=b$ we add $R(a, b')$ or $R(b', a)$ to $(\Dmc|_{\overline{\abf}})_G$;
\end{itemize}

It can be checked that $\Dmc_G$ maps into $\Dmc$ with an extension of the homomorphism $h_0$ with the identity homomorphism on $\abf$. We construct $\Dmc_G^+$ as in the original proof and than let $\Dmc_G^*$ be the union of $\Dmc^-$ and $\Dmc_G^+$. It can be shown then that $G$ has a $k$-clique iff $\Dmc_G^* \models Q(\abf)$.

\subsection{From CQs to UCQs}

We next explain how to extend the proof from CQs to UCQs,
that is, to the case where $\Qmc \subseteq (\ELHIbot,\text{UCQ})$. 
Let $Q=(\Omc,\Sbf,q) \notin (\ELHIbot,\text{UCQ})^\equiv_{\text{UCQ}_\ell}$. Note that $q$ is a disjunction of conjunctions of connected Boolean CQs (\emph{conCQs}, for short), and that we can find an equivalent conjunction of
disjunctions of conCQs~$q'$. The conjuncts of $q'=q_1 \wedge \cdots
\wedge q_n$ are disjunctions (unions) of connected Boolean CQs
(\emph{UconCQs}, for short). Let $Q_i=(\Omc,\Sbf,q_i)$, for $1 \leq i
\leq n$. We can assume w.l.o.g.\ that $Q_i \not\subseteq Q_j$ for all
$i \neq j$. Also, some $Q_v$ must be non-equivalent to its
UCQ$_\ell$-approximation.  In the remainder of the proof, the UconCQs $q_1,\dots,q_{v-1},q_{v+1},\dots,q_n$ play exactly the role of the conCQs $q_1,\dots,q_{w-1},q_{w+1},\dots,q_n$ in the original proof. 
Let us look more
closely at the UconCQ $q_v$. Let $q_v=p_1 \vee \cdots \vee p_m$ and let $P_i = (\Omc,\Sbf,p_i)$ for
\mbox{$1 \leq i \leq m$}. We can assume w.l.o.g.\ that $P_i
\not\subseteq P_j$ for all $i \neq j$. We also replace any $p_i$ with
its UCQ$_\ell$-approximation $p^a_i$ whenever the OMQ
$$
P^a_i := (\Omc,\Sbf,p_1\vee\dots\vee p_{i-1} \vee p^a_i \vee p_{i+1} \vee \dots \vee p_m)
$$
is equivalent to $Q_v$. Since $Q_v$ is not equivalent to its
UCQ$_\ell$-approximation, these assumptions imply that there is a $w$
such that $P_w \not\subseteq P^a_w$. Let $P_a$ be the UCQ$_\ell$-approximation of~$P_w$. 
Since $P_w \not\subseteq P^a_w$,
there must be an \Sbf-database \Dmc such that $\Dmc \models P_w$,
$\Dmc \not\models P^a$, and $\Dmc \not\models P_i$ for all $i \neq w$.
In the remainder of the proof, the conCQ $p_w$ plays the role of the
conCQ $q_w$ in the original proof. We next observe that the proof of
Lemma~\ref{lem:iomin} actually establishes a slightly stronger result,
namely that for every OMQ $Q^*$ from $(\ELHIbot,\text{CQ})$ over
schema \Sbf and \emph{every} $\Sbf$-database $\Dmc$ with $\Dmc \models
Q^*$, there is an $\Sbf$-database $\Dmc'$ such that $\Dmc'
\rightarrow \Dmc$, $\Dmc' \models Q^*$, and Condition~2 of
Lemma~\ref{lem:iomin} is satisfied (with $Q_w$ replaced by $Q^*$).
%




%

For the remaining proof, we start with the database $\Dmc_0$ that is
obtained as $\Dmc'$ when invoking the strengthened lemma with $Q=P_w$
and the database \Dmc that we had identified before. Then, $\Dmc_0
\models P_w$, and from the fact that $\Dmc_0 \rightarrow \Dmc$, it
follows that: $\Dmc_0 \not \models P^a$, and $\Dmc_0 \not \models
P_i$, for every $i$, with $i \neq w$. The rest of the proof goes
through with essentially no change.

\section*{Proofs for Section~\ref{sect:ptimeupper}}

Towards a proof of Lemma~\ref{lem:charptime2}, we first characterize
answers to OMQs based on a weaker form of \Dmc-labelings. Arguably,
these are more intuitive, but also less `local' than the
\Dmc-labelings defined in Section~\ref{sect:ptimeupper}.

A \emph{weak \Dmc-labeling} of $q$ is a function $\ell:\mn{var}(q)
\rightarrow \mn{dom}(\Dmc) \cup \{ \exists \}$ such that the following
conditions are satisfied:
\begin{enumerate}

\item $\ell(x) \in \mn{dom}(\Dmc)$ for every answer variable $x$;

\item the restriction of $\ell$ to the variables in $V:=\{ x \mid
  \ell(x) \in \mn{dom}(\Dmc) \}$ is a homomorphism from $q|_V$ to
  $\mn{ch}^-_\Omc(\Dmc)$;

\item if $r(x,y) \in q$ and $\ell(y) \in \mn{dom}(\Dmc)$, then
  $\ell(x) \in \mn{dom}(\Dmc)$;

\item if $(x,y) \in G_2$, $\ell(x) \in \mn{dom}(\Dmc)$, and
  $\ell(y)=\exists$, then
  \begin{enumerate}

  \item $(x,y)$ is $\exists$-eligible;

  \item $\Dmc \models (\Omc,\Sbf,\mn{dtree}_{(x,y)})(\ell(x))$; and

  \item $\ell(x)=\ell(x')$ for all $x' \in \mn{reach}^0(x,y)$.

  \end{enumerate}

\item if $q'$ is an $\exists$-MCC of $q$ such that
  $\ell(x) \notin \mn{dom}(\Dmc)$ for every variable $x$ in $q'$, then
  $q'$ is a homomorphic preimage of a ditree and 
  $\Dmc \models (\Omc,\Sbf,\exists x_0 \, \mn{dtree}_{q'})$.

\end{enumerate}
\begin{lemma}
\label{lem:charptime}
Let \Dmc be an \Sbf-database that is consistent with \Omc and
$\abf \in \mn{dom}(\Dmc)^{|\xbf|}$. Then $\Dmc \models
Q(\abf)$ iff there is a weak \Dmc-labeling $\ell$ of $q(\xbf)$ such
that $\ell(\xbf)=\abf$.
\end{lemma}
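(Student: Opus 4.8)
The plan is to reduce both directions to the chase characterization of OMQ answers, namely $\Dmc \models Q(\abf)$ iff $\mn{ch}_\Omc(\Dmc) \models q(\abf)$ (Point~2 of Lemma~\ref{lem:chaseprop}, applicable since $\Dmc$ is consistent with $\Omc$), so that the task becomes translating between homomorphisms $h\colon q \to \mn{ch}_\Omc(\Dmc)$ with $h(\xbf)=\abf$ and weak $\Dmc$-labelings. The first thing I would record is the shape of $\mn{ch}_\Omc(\Dmc)$ for an $\ELpoly$-ontology: since $\ELpoly$ has no genuine inverse roles (range restrictions only add concept facts, never edges pointing back towards the root), $\mn{ch}_\Omc(\Dmc)$ decomposes into the core $\mn{ch}^-_\Omc(\Dmc)$ on $\mn{dom}(\Dmc)$ together with, for each $a \in \mn{dom}(\Dmc)$, a ditree $T_a$ generated below $a$ by the existential restrictions; every role edge incident to a fresh constant points forward inside some $T_a$, and there is no edge from a fresh constant to a domain constant or across two trees. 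Each condition in the definition of a weak $\Dmc$-labeling is a syntactic shadow of this picture.

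For the direction ``$\Rightarrow$'', I would start from a homomorphism $h$ and set $\ell(z)=h(z)$ whenever $h(z)\in\mn{dom}(\Dmc)$ and $\ell(z)=\exists$ otherwise. Condition~1 is immediate from $h(\xbf)=\abf$, and Condition~2 holds because an atom among the variables of $V$ is sent to a fact on $\mn{dom}(\Dmc)$, which lies in $\mn{ch}^-_\Omc(\Dmc)$. Condition~3 is exactly the ``no backward edge'' property: an atom $r(x,y)$ with $h(y)\in\mn{dom}(\Dmc)$ cannot have $h(x)$ inside a tree. For Condition~4 I would take a guarded pair $(x,y)$ that $h$ sends across the boundary; the reachability rules are tailored so that $h$ maps $q|_{\mn{reach}(x,y)}$ into the ditree consisting of $h(x)$ and the fresh constants below it (here one uses that a fresh constant has a unique parent, forcing all predecessors of $y$ onto the root $h(x)$). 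This shows $(x,y)$ is $\exists$-eligible, and composing with the initiality of $\mn{dtree}_{(x,y)}$ among ditrees that $q|_{\mn{reach}(x,y)}$ maps into yields $\Dmc \models (\Omc,\Sbf,\mn{dtree}_{(x,y)})(h(x))$, while the collapse of $\mn{reach}^0(x,y)$ onto $h(x)$ gives~4(c). Condition~5 is the same argument applied to an $\exists$-MCC that $h$ sends entirely into one tree.

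For the direction ``$\Leftarrow$'', given a weak labeling $\ell$ I would assemble $h$ from three kinds of pieces: on $V$ set $h=\ell$, which is a partial homomorphism into $\mn{ch}^-_\Omc(\Dmc)$ by Condition~2; for each boundary crossing $(x,y)$, Condition~4(b) yields a homomorphism $\mn{dtree}_{(x,y)} \to \mn{ch}_\Omc(\Dmc)$ sending the root to $\ell(x)$, which I compose with the canonical homomorphism $q|_{\mn{reach}(x,y)} \to \mn{dtree}_{(x,y)}$ supplied by $\exists$-eligibility; and for each floating $\exists$-MCC I use the witness of Condition~5 in the same manner. A preparatory observation I would make is that the label $\exists$ propagates forward along atoms---if $\ell(x)=\exists$ and $r(x,y)\in q$ then $\ell(y)=\exists$, which is just the contrapositive of Condition~3---so that the $\exists$-labeled variables are precisely those governed by some crossing bundle or $\exists$-MCC. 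Once the pieces are defined, applying Lemma~\ref{lem:chaseprop} to the resulting $h$ delivers $\Dmc \models Q(\abf)$.

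The step I expect to be the main obstacle is verifying, in the ``$\Leftarrow$'' direction, that these partial maps glue into a single well-defined homomorphism that covers all of $\mn{var}(q)$ and satisfies every atom. The delicate points are that one boundary variable may launch several crossings, all of which must land compatibly in the same tree $T_{\ell(x)}$---tolerable because the maps are homomorphisms and the universal chase tree absorbs overlaps---and that the sets $\mn{reach}(x,y)$ can syntactically reach back to domain-labeled variables, so one must check that any variable shared between a crossing bundle and $V$ is pinned consistently. This is exactly what Condition~4(c) secures at level~$0$, the only level at which a crossing bundle meets $V$. Carrying out this bookkeeping rigorously---and confirming that the forward-propagation of $\exists$ together with the reachability rules partitions the $\exists$-variables cleanly among bundles and $\exists$-MCCs---is where the real work lies; the rest is routine verification against the structural description of $\mn{ch}_\Omc(\Dmc)$.
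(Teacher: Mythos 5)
Your proposal follows essentially the same route as the paper's proof: the `only if' direction projects a homomorphism $h\colon q \to \mn{ch}_\Omc(\Dmc)$ to the labeling $\ell(z)=h(z)$ or $\exists$, and the `if' direction reassembles $h$ from the restriction of $\ell$ to $V$, the $\exists$-eligibility/Condition~4b witnesses for boundary-crossing guarded pairs, and the Condition~5 witnesses for $\exists$-MCCs. The gluing bookkeeping you flag is precisely what the paper's sketch compresses into ``it can be verified that $h$ is indeed a homomorphism,'' so the proposal is correct and not a different approach.
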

\begin{proof}\ (sketch) `if'. Assume that $\ell$ is a weak
  \Dmc-labeling of $q(\xbf)$ such that $\ell(\xbf)=\abf$. To show that
  $\Dmc \models Q(\abf)$, it suffices to construct a homomorphism $h$
  from $q$ to $\mn{ch}_\Omc(\Dmc)$ with $h(\xbf)=\abf$. We start by
  putting $h(x)=\ell(x)$ whenever $\ell(x) \in \mn{dom}(\Dmc)$. Next,
  consider every $(x,y) \in G_2$ such that $\ell(x) \in
  \mn{dom}(\Dmc)$ and $\ell(y)=\exists$. We extend $h$ to all
  variables in $\mn{reach}(x,y)$ by using the homomorphism from
  $q|_{\mn{reach}(x.y)}$ to $\mn{dtree}_{(x,y)}$ (existence guaranteed
  by definition of $\exists$-eligible) and the homomorphism from
  $\mn{dtree}_{(x,y)}$ to $\mn{ch}_\Omc(\Dmc)$ that maps the root of
  $\mn{dtree}_{(x,y)}$ to $\ell(x)$ (existence guaranteed by
  Condition 4b). It remains to treat all $\exists$-MCCs $q'$ of $q$.
  Here, we combine the homomorphism from $q'$ to $\mn{dtree}_{q'}$ and
  from $\mn{dtree}_{q'}$ to $\mn{ch}_\Omc(\Dmc)$ (existence guaranteed by
  Condition 5). It can be verified that $h$ is indeed a homomorphism.

\smallskip
`only if'. Assume that $\Dmc \models Q(\abf)$. Then there is a
homomorphism $h$ from $q$ to $\mn{ch}_\Omc(\Dmc)$ with $h(\xbf)=\abf$.
For all variables $x$ in $q$, define
$$
\ell(x)=
\left \{
  \begin{array}{ll}
h(x) & \text{if } h(x) \in \mn{dom}(\Dmc); \\[1mm]
\exists & \text{otherwise}.
  \end{array}
\right.
$$
It can be verified that $\ell$ is a weak \Dmc-labeling of $q$ with
$\ell(\xbf)=\abf$. For Condition~4, one uses that when $(x,y) \in
G_2$, $\ell(x) \in \mn{dom}(\Dmc)$, and $\ell(y)=\exists$, then there
is a homomorphism from $q|_{\mn{reach}(x,y)}$ to the database that the
chase has generated below $\ell(x)$, which takes the form of a ditree
with multi-edges; this shows that $(x,y)$ is
$\exists$-eligible. Condition 4b then follows from the choice of
$\mn{dtree}_{(x,y)}$.
\end{proof}
The problem with weak \Dmc-labelings is that Condition~4c is not yet
sufficiently `local', that is, the variables $x$ and $x'$ mentioned
there can be arbitrarily far apart in a tree decomposition of
$q$. This is rectified in (non-weak) \Dmc-labelings as defined in
Section~\ref{sect:ptimeupper}. We are now ready to prove correctness
of the characterization of OMQ answers in terms of the latter.

\lemcharptimetwo*
\begin{proof}\ (sketch) `if'. Assume that there is a \Dmc-labeling
  $\ell$ of $q(\xbf)$ such that $\ell(\xbf)=\abf$.  Let $\ell'$ be
  obtained from $\ell$ by setting $\ell'(x)=\exists$ iff $\ell(x)
  \notin \mn{dom}(\Dmc)$. It suffices to show that $\ell'$ is a weak
  \Dmc-labeling of $q$. The only condition that is not immediate is
  Condition~4c of weak \Dmc-labelings. So assume that $(x,y) \in G_2$,
  $\ell(x) \in \mn{dom}(\Dmc)$, and $\ell(y)=\exists$. Let $x' \in
  \mn{reach}^0(x,y)$. By Condition 4c of \Dmc-labelings,
  $\ell(y)=((x'',y''),\ell(x))$ where $x'' \in \mn{reach}^0(x,y)$ and $y''
  \in \mn{reach}^1(x,y)$. By Conditions~5 to~7 of \Dmc-labelings
  and since it can be easily proved that
  $\mn{reach}^j(x,y)=\mn{reach}^j(x'',y'')$
  for all $j$, we obtain $\ell(x')=\ell(x)$ as required.

  \smallskip `only if'. Assume that $\Dmc \models Q(\abf)$. Then there
  is a homomorphism $h$ from $q$ to $\mn{ch}_\Omc(\Dmc)$ with
  $h(\xbf)=\abf$. For each variable $z$ such that $h(z) \notin
  \mn{dom}(\Dmc)$ and there is an $(x,y) \in G_2$ for which $h(x) \in
  \mn{dom}(\Dmc)$, $h(y) \notin \mn{dom}(\Dmc)$, Conditions~4a and~4b
  of \Dmc-labelings are satisfied, and $z \in \mn{reach}(y)$, choose
  such an $(x,y)$, and denote it with $(u_z,v_z)$ For all variables
  $x$ in $q$, define
$$
\ell(x)=
\left \{
  \begin{array}{l@{\quad}l}
h(x) & \text{if } h(x) \in \mn{dom}(\Dmc); \\[1mm]
((u_z,v_z),h(u_z)) & \text{if } h(x) \notin \mn{dom}(\Dmc) \\[1mm]
&
  \text{ \ and } u_z, v_z\text{ are defined}  \\[1mm]
\exists & \text{otherwise}.
  \end{array}
\right.
$$
It can be verified that $\ell$ is a \Dmc-labeling of $q$ with
$\ell(\xbf)=\abf$.
\end{proof}

\thmpolycombcompl*
\begin{proof}\
  Let $Q=(\Omc,\Sbf_{\mn{full}},q)$ be from
  $(\ELpoly,\text{CQ})^\equiv_{\text{UCQ}_k}$,
  $Q_r=(\Omc,\Sbf_{\mn{full}},q_r)$ a rewriting of $Q$, \Dmc the input
  database that is consistent with \Omc, and \abf a candidate
  answer. By Theorem~\ref{thm:fulldatabase} and
  Lemma~\ref{lem:rewritings}, $q_r$ is of tree width bounded by
  $k$. Ideally, we would like to play the modified game on $q_r$ and
  answer `yes' if Duplicator has a winning strategy for any of these
  CQs.

  However, we do not have a full rewriting in our hands as we have no
  way of computing one in \PTime.  To solve this problem, we first
  extend $q$ as follows: for each variable $x$ in $q$ and each concept
  inclusion $C \sqsubseteq D \in \Omc$ with
  $\mn{ch}_\Omc(\Dmc_q) \models C(x)$, $x$ viewed as a constant, take
  a fresh copy $q_C$ of $C$ viewed as a CQ and add $q_C$ to $q$,
  identifying $x$ with the root of $q_C$. Note that $Q$ is equivalent
  to $Q^+=(\Omc,\Sbf_{\mn{full}},q^+)$ and that by construction, $q_r$
  syntactically is a subquery of the resulting CQ $q^+$. We play the
  modified game on $q^+$ rather than on $q_r$.

  We have to argue that Duplicator has a winning strategy on~$q^+$ iff
  $\Dmc \models Q(\abf)$. The `if' direction is clear since
  $\Dmc \models Q(\abf)$ implies $\Dmc \models \Qmc^+(\abf)$, thus
  Lemma~\ref{lem:charptime2} yields a \Dmc-labeling $\ell$ of $q^+$
  with $\ell(\xbf)=\abf$, and $\ell$ clearly gives rise to a winning
  strategy for Duplicator on $q^+$.  Conversely, a winning strategy
  for Duplicator on $q^+$ also gives such a strategy on any subquery
  of $q^+$, such as $q_r$. Thus, there is a \Dmc-labeling $\ell$ of
  $q_r$, which means that $\Dmc \models Q_r(\abf)$ and thus
  $\Dmc \models Q(\abf)$.
\end{proof}

\section*{Proofs for Section~\ref{sect:deciding}}
\label{app:deciding}


\thmmainhard*
\begin{proof}\ Point~1 is proved by reduction from the following
  problem: given an OMQ of the form $Q=(\Omc,\Sbf, A(x))$ with \Omc formulated in $\EL_\bot$, is $Q$ empty?  For $Q$ to be empty there must be no \Sbf-database \Dmc that is consistent with \Omc and which satisfies $\Dmc \models Q(a)$ with $a \in \Ind(\Dmc)$.  This problem is known to be \ExpTime-hard \cite{jair-data-schema}.

  We start with the case $k=1$, that is, we consider
  UCQ$_1$-equivalence and CQ$_1$-equivalence, the latter both while
  preserving the ontology and in the general case. We use the same
  reduction for all three cases and afterwards explain how to
  generalize to $k>1$.

  Let $Q=(\Omc,\Sbf,A(x))$ be as stated above. Reserve fresh concept
  names $B, B_1,B_2$ and a fresh role name $r$. Let $\Omc^*$ be
  obtained from \Omc by replacing every concept inclusion of the form
  $C \sqsubseteq \bot$ with $C \sqsubseteq B$.  Now define
  $$
  \begin{array}{rcl}
    \Omc' &=& \Omc^* \cup \{ \exists s . B \sqsubseteq B \mid s \text{
      role name in } \Omc\} \; \cup \\[1mm]
              && \hspace*{8.5mm} \{ B \sqsubseteq A \sqcap \exists r . (B_1
              \sqcap B_2 \sqcap \exists r . \top) \} \\[1mm]
    \Sbf' &=& \Sbf \cup \{ r, B_1,B_2 \} \\[1mm]
    q(x) &=& \exists y_1 \exists y_2 \exists z \, A(x) \wedge r(x,y_1) \wedge
             r(x,y_2) \wedge B_1(y_1) \; \wedge \\[1mm]
             && \hspace*{14mm} B_2(y_2) \wedge r(y_1,z)
             \wedge r(y_2,z)
  \end{array}
  $$
  and set $Q'=(\Omc',\Sbf',q)$.
  \\[2mm]
  {\bf Claim.} $Q$ is empty iff $Q'$ is (U)CQ$_1$-equivalent (while
  preserving the ontology or not).
  \\[2mm]
  For (the contrapositive of) `if', assume that $Q$ is non-empty and
  take an $\Sbf$-database $\Dmc_0$ that is consistent with \Omc and
  satisfies $\Dmc_0 \models A(a)$. Glue to $a$ in $\Dmc_0$ a copy of
  $q$ without the atom $A(x)$ and call the resulting
  $\Sbf'$-database~$\Dmc$. Clearly, $\Dmc \models Q'(a)$.  Let
  $\Dmc^u_a$ be the 1-unraveling of $\Dmc$ up to~$a$. Then $\Dmc^u_a
  \not\models Q'(a)$ since the copy of $q$ that we have glued to
  $\Dmc_0$ has been `broken' by unraveling, $B_1,B_2$ do not occur in
  $\Omc^*$, and the concept inclusion $ B \sqsubseteq A \sqcap \exists
  r . (B_1 \sqcap B_2 \sqcap \exists r . \top)$ in $\Omc'$ cannot fire
  since consistency of $\Dmc_0$ with \Omc and
  Lemma~\ref{lem:kunravprop} imply that \Omc derives $B$ at $a$ in
  $\Dmc^u_a$. But by Lemma~\ref{lem:kunravprop}, no OMQ from
  $(\ELHIbot,\text{UCQ}_1)$ can distinguish $a$ in $\Dmc$ from $a$ in
  $\Dmc^u_a$.  Consequently, $Q'$ is not UCQ$_1$-equivalent.

  \smallskip For `only if', assume that $Q$ is empty. We show that
  $Q'$ is equivalent to $(\Omc',\Sbf',B(x))$ and thus
  CQ$_1$-equivalent while preserving the ontology.  The containment
  $(\Omc',\Sbf',B(x)) \subseteq Q'$ is immediate by construction of
  $Q'$. For the converse, let $\Dmc$ be an $\Sbf'$-database with $\Dmc
  \models Q'(a)$. This clearly implies that \Omc derives $A$ at $a$ in
  \Dmc. Since $Q$ is empty, this is only possible when $B$ is also
  derived at $a$. Thus $\Dmc \models (\Omc',\Sbf',B(x))(a)$.

  \smallskip Sketch for UCQ$_k$-case, $k>1$. The main properties of
  CQ~$q$ in the above proof are that it is of tree width at least
  $k+1$, homomorphically maps into a directed tree (that can be
  generated by an \EL-concept), and does not admit a homomorphism to
  its $k$-unraveling. To achieve the same for $k>1$, we can use as~$q$
  the undirected $k+2$-clique whose vertices we assume w.l.o.g.\ to be
  $\{1,\dots,k+2\}$, orient each edge $\{i,j\}$ in the direction from
  $i$ to $j$ if $i<j$, subdivide each edge $(i,j)$ into $j-i$ edges by
  introducing intermediate points, represent directed edges as
  $r$-atoms, add the atom $A(1)$, and finally label each vertex with a
  different concept name $B_i$, $i \geq 1$. For the case $k=2$, the
  resulting CQ is displayed in Figure~\ref{fig:CQkthree}.
  \begin{figure}[t]
    \begin{boxedminipage}[t]{\columnwidth}
      \begin{center}
 \begin{tikzpicture}[auto, scale=0.7]
 \GraphInit
 \scriptsize
 \node(x1) [active] at (0, 0) {$1$};
 \node(x1l)  at (-0.8, 0) {$A, B_1$};

 \node(x2) [active] at (4, 0) {$2$};
  \node(x2l)  at (4.6, 0) {$B_2$};

 \node(x3) [active] at (0, 4) {$3$};
 \node(x3l)  at (-0.6, 4) {$ B_7$};

 \node(x4) [active] at (4, 4) {$4$};
 \node(x4l)  at (4.6, 4) {$B_8$};

 \node(x13) [empty] at (0, 2) {};
  \node(x13l)  at (-0.5, 2) {$B_4$};

 \node(x24) [empty] at (4, 2) {};
  \node(x24l)  at (4.5, 2) {$B_6$};

 \node(x141)[empty]  at (1.33, 1.33) {};
  \node(x141l)  at (0.95, 1.55) {$B_3$};

 \node(x142)[empty]  at (2.66, 2.66) {};
 \node(x142l)  at (2.25, 2.85) {$B_5$};

 \draw [-latex] (x1) -- (x13) node[midway, left] {$r$};
 \draw [-latex] (x13) -- (x3) node[midway, left] {$r$};
  \draw [-latex] (x3) -- (x4) node[midway, above] {$r$};
 \draw [-latex] (x1) -- (x141) node[midway, above] {$r$};
 \draw [-latex] (x141) -- (x142) node[midway, above] {$r$};
 \draw [-latex] (x142) -- (x4) node[midway, above] {$r$};
 \draw [-latex] (x1) -- (x2) node[midway, above] {$r$};
 \draw [-latex] (x2) -- (x3);

 \node(l) at (1,3.2) {$r$};

 \draw [-latex] (x2) -- (x24) node[midway, right] {$r$};
 \draw [-latex] (x24) -- (x4) node[midway, right] {$r$};

 \end{tikzpicture}
      \end{center}
    \end{boxedminipage}
    \caption{CQ for $k=2$}
    \label{fig:CQkthree}
  \end{figure}
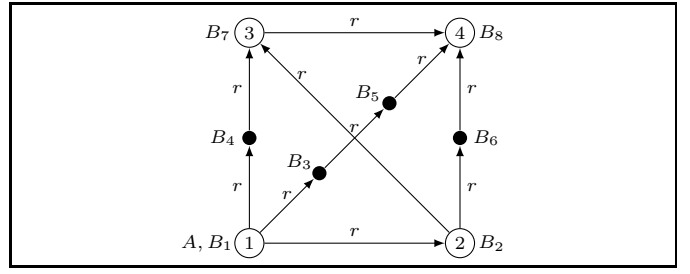
  The reduction can then be adapted by replacing $q$ as described,
  replacing $B_1,B_2$ in $\Sbf'$ with the set of all fresh concept
  names $B_i$ in $q$, and replacing the concept
  $A \sqcap \exists r . (B_1 \sqcap B_2 \sqcap \exists r . \top)$ on
  the right-hand side of the concept inclusion in the second line of
  the definition of $\Omc'$ with some \EL-concept $C$ such that
  $a \in C^\Imc$ implies $\Imc \models q(a)$ for all interpretations
  \Imc.  Such a concept can be obtained by identifying all vertices
  that are reachable from vertex~1 in exactly $i$ steps, for each
  $i$. In the example case $k=2$, it takes the form
  $$
    A \sqcap B_1 \sqcap \exists r . (B_2 \sqcap B_3 \sqcap B_4 \sqcap
    \exists r . (B_5 \sqcap B_6 \sqcap B_7 \sqcap \exists r . B_8)).
  $$
  In the proof of the `if' direction of the claim,
  1-unravelings are replaced with $k$-unravelings. Note that $q$ does
  not admit a homomorphism to its $k$-unraveling: since $q$ has tree
  width at least $k+1$, such a homomorphism would have to be
  non-injective and thus the $k$-unraveling of $q$ would have to
  comprise atoms $B_i(x),B_j(x)$ for some $i,j$ with $i \neq j$.

  \medskip

  Now for Point~2 of Theorem~\ref{thm:mainhard}. The following was
  established in the proof of Theorem~48
  in~\cite{DBLP:conf/ijcai/Bienvenu0LW16}.
  \begin{theorem}
\label{lem:2expbasic}
Let $M=(Q,\Sigma,\Gamma,q_0,\Delta)$ be an exponentially space bounded
alternating Turing machine and let $k=|\Gamma|-1$. Given an input $w$
to $M$, one can construct in polynomial time a Boolean OMQ
$Q_w=(\Omc_w,\Sbf_w,q_w)$ from $(\ELI,\text{CQ})$ such that for a
selected concept name $A^* \notin \Sbf_w$, the following conditions
are satisfied:
\begin{enumerate}

\item $M$ accepts $w$ iff there is an $\Sbf_w$-database
  \Dmc and an $a \in \Ind(\Dmc)$ such that $\Dmc \models
  (\Omc_w,\Sbf_w,A^*(x))(a)$ and $\Dmc \not\models Q_w$;

\item $q_w$ is connected, uses only symbols from $\Sbf_w$, is of tree
  width $k$, and not equivalent to any CQ of smaller tree width;


\item the restriction of $\mn{ch}_{\Omc_w}(\Dmc_{q_w})$ to symbols in
  $\Sbf_w$ is $\Dmc_{q_w}$.

\end{enumerate}
\end{theorem}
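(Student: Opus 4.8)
The plan is to reduce from the word problem of exponentially space bounded alternating Turing machines, which is \TwoExpTime-complete, arranging matters so that $M$ accepts $w$ if and only if a suitable certificate database exists. The certificate $\Dmc$ will encode a full accepting computation tree of $M$ on $w$, and the three components of $Q_w$ play decoupled roles: $A^*$ witnesses \emph{acceptance}, while $q_w$ witnesses the \emph{absence} of encoding errors. Concretely, I would design $\Omc_w$ so that $\Dmc \models (\Omc_w,\Sbf_w,A^*(x))(a)$ holds exactly when the configuration tree rooted at $a$ is accepting, and so that $\Dmc \not\models Q_w$ holds exactly when $\Dmc$ is a \emph{legal} encoding of a computation tree (correct initial configuration, transitions respecting $\Delta$, well-formed branching). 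For the forward direction, an accepting computation tree yields an error-free such $\Dmc$ whose root derives $A^*$; for the backward direction, error-freeness forces $\Dmc$ to describe a genuine computation and $A^*$ at $a$ forces it to be accepting. Note that the polynomial-time bound concerns only the construction of $\Omc_w$, $\Sbf_w$, and $q_w$: the certificate $\Dmc$ is existentially quantified and may be (doubly) exponential.

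For the encoding I would let $n=p(|w|)$ bound the tape length $2^n$ and put into $\Sbf_w$ concept names for the $n$ address bits of a cell, one concept per symbol of $\Gamma$, markers for head position and current state, and markers separating existential from universal configurations together with one designated role per applicable transition. A configuration is a path of $2^n$ cells carrying consecutive addresses, and the computation tree is the tree of such configurations linked by the transition roles. Acceptance is propagated using conjunction only: accepting leaves receive a concept $\mi{Acc}$; an existential configuration inherits $\mi{Acc}$ from any successor via $\mi{Ex} \sqcap \exists r . \mi{Acc} \sqsubseteq \mi{Acc}$ (one applicable transition suffices); a universal configuration receives $\mi{Acc}$ via a conjunction $\mi{Univ} \sqcap \bigsqcap_i \exists r_i . \mi{Acc} \sqsubseteq \mi{Acc}$ ranging over its required successors; and $\mi{Acc}$ at the root finally yields $A^*$. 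Inverse roles are essential for the part that compares a cell at a given address with the cell of the same address in the successor configuration, which is what lets $q_w$ verify transitions across the exponentially many cells.

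Turning to the query and its tree width: by Property~2, $q_w$ must use only $\Sbf_w$-symbols, be connected, and have tree width exactly $k$. I would let $\Omc_w$ flag every \emph{local} violation (two cells sharing an address but differing content, a transition inconsistent with $\Delta$, an ill-formed initial configuration, a cell without a unique symbol) by generating, outside $\Sbf_w$, a single uniform auxiliary gadget, and build $q_w$ as a fixed pattern that embeds homomorphically into $\mn{ch}_{\Omc_w}(\Dmc)$ precisely when some gadget is present. To pin the tree width to $k=|\Gamma|-1$ I would splice into $q_w$ a clique gadget on $k+1$ variables, pairwise joined by role atoms and labelled by the $|\Gamma|=k+1$ distinct symbol concepts so that no two of these variables can be identified; its Gaifman graph contains $K_{k+1}$, so the tree width is at least $k$, while the rest of $q_w$ is attached in a tree-like fashion to make it exactly $k$. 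Since the distinctly labelled $K_{k+1}$ is its own core, $q_w$ is not equivalent to any CQ of smaller tree width. Property~3 I would then secure by inspection: because $q_w$ contains only $\Sbf_w$-atoms and every inclusion of $\Omc_w$ whose right-hand side carries an $\Sbf_w$-symbol is guarded so as not to fire on $\Dmc_{q_w}$, chasing $q_w$ adds only non-$\Sbf_w$ material, so restricting $\mn{ch}_{\Omc_w}(\Dmc_{q_w})$ to $\Sbf_w$ returns exactly $\Dmc_{q_w}$.

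The main obstacle is realising all of this with a \emph{single} connected CQ whose tree width is forced to be exactly $k$ with a core of tree width $k$. A CQ can only test a conjunctive pattern, so the genuinely disjunctive condition ``some error occurs'' must be funnelled by the ontology into one uniform gadget, and the clique gadget that fixes the tree width must match only through that error gadget, never in a legal, error-free computation. Keeping the address comparison across the $2^n$ cells expressible with inverse roles, keeping $\mi{Acc}$-propagation purely conjunctive despite alternation, and simultaneously preserving Property~3 is exactly the delicate part; I would reuse the construction of \cite{DBLP:conf/ijcai/Bienvenu0LW16} and verify that the query it produces has tree width $|\Gamma|-1$, is not equivalent to any CQ of smaller tree width, and satisfies Property~3.
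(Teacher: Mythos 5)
Your overall architecture---reducing from the word problem of an exponentially space bounded ATM, letting $A^*$ certify acceptance and $q_w$ act as an error detector, and ultimately deferring to the construction of \cite{DBLP:conf/ijcai/Bienvenu0LW16}---is exactly the route of the paper, which proves the statement by citing Theorem~48 of that work and then verifying the tree width claims. But your specific device for pinning the tree width, a $(k{+}1)$-clique spliced into $q_w$ and labelled with the $k{+}1$ symbol concepts, cannot work, and it also mischaracterizes what the cited construction produces. In $\ELI$ there are no role inclusions, so the chase never adds role atoms between existing database constants, and the anonymous part it generates is a ditree: bipartite and free of self-loops. Hence for $k \geq 2$ the clique $K_{k+1}$ admits no homomorphism whatsoever into any chase-generated ``error gadget''; every role atom of the clique must be matched inside the database $\Dmc$ itself. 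This defeats your own design requirement that the clique ``match only through the error gadget'': a cheating database need not contain any clique. Concretely, take a database that encodes a computation tree in the natural way (a tree of path-shaped configurations, hence of small tree width, without self-loops) but corrupts a single transition; the acceptance-propagation inclusions still derive $A^*$ at the root, while $q_w$ with its clique has no homomorphic image in $\mn{ch}_{\Omc_w}(\Dmc)$ (the chase preserves tree width by Lemma~\ref{lem:chaseprop}), so $\Dmc \not\models Q_w$. This yields a false certificate whenever $M$ rejects $w$, breaking the right-to-left direction of Property~1.

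What the cited construction actually contains---and what the paper's verification relies on---is a \emph{bipartite} high-tree-width gadget: a subquery $p$ whose Gaifman graph contains as a minor $K_{k+1,k+1}$ with a perfect matching removed. This graph has degeneracy at least $k$, hence tree width at least $k$, and since the Gaifman graph of $p$ is a core, $q_w$ is not equivalent to any CQ of smaller tree width, giving Property~2. Crucially, being $2$-colourable, the gadget can collapse homomorphically onto a single (multi-)edge of the tree-shaped anonymous part of the chase once the ontology places the appropriate label sets on a parent--child pair there; this is precisely the compatibility between high tree width of the query and tree-shaped error gadgets that a clique destroys, since a clique is not $2$-colourable. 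Your remaining ingredients (conjunctive $\mi{Acc}$-propagation for alternation, inverse roles for address comparison across the $2^n$ cells, guarding the inclusions so that the chase adds only non-$\Sbf_w$ material for Property~3) are all consistent with the cited construction, but until the clique is replaced by a homomorphically tree-collapsible gadget of tree width $k$, the proposal has a genuine gap at the one point where the statement goes beyond the literal content of \cite{DBLP:conf/ijcai/Bienvenu0LW16}.
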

We remark that tree width is not explicitly mentioned
in~\cite{DBLP:conf/ijcai/Bienvenu0LW16}. However, the CQ $q_w$
constructed there contains a subquery $p$ whose Gaifman graph contains
as a minor the complete balanced bipartite graph $K_{k+1,k+1}$ with
exactly one adjacent edge dropped from each vertex. It is not hard to
verify that such a graph has degeneracy at least $k$, thus tree width
at least $k$. Moreover, the Gaifman graph of $p$ is a core, thus $p$
is not equivalent to a CQ of tree width smaller than $k$.


Let $k \geq 1$.  We can choose $M=(Q,\Sigma,\Gamma,q_0,\Delta)$ to
have a \TwoExpTime-hard word problem and satisfying $|\Gamma| > k+1$.
We reduce the word problem for $M$, as follows. Let $w$ be an input to
$M$, and let $Q_w=(\Omc_w,\Sbf_w,q_w)$ be as in
Theorem~\ref{lem:2expbasic}.  Define $q(x)=A^*(x) \wedge q_w$ where
$x$ is a fresh variable and let $Q=(\Omc_w,\Sbf_w,q)$.
\\[2mm]
{\bf Claim} $M$ accepts $w$ iff $Q$ is not (U)CQ$_k$-equivalent (while
preserving the ontology or not).
\\[2mm]
For `only if', assume that $M$ accepts $w$. By Point~1 of
Theorem~\ref{lem:2expbasic}, there is an $\Sbf_w$-database $\Dmc_0$
and an $a \in \Ind(\Dmc_0)$ such that $\Dmc_0 \models
(\Omc_w,\Sbf_w,A^*(x))(a)$ and $\Dmc_0 \not\models Q_w$.  Let \Dmc be
the disjoint union of $\Dmc_0$ and $\Dmc_{q_w}$. Then $\Dmc \models
Q(a)$. Now assume to the contrary of what is to be shown that $Q$ is
UCQ$_k$-equivalent. Then $Q$ is equivalent to its
UCQ$_k$-approximation $Q_a=(\Omc_w,\Sbf_w,q_a)$.  Clearly, we can
remove from $q_a$ any CQ in which the variable $x$ from $q$ is
identified with any other variable, without compromising equivalence
or altering tree width. Thus each CQ in $q_a$ takes the form $A^*(x)
\wedge q'_w$ where $q'_w$ is a contraction of $q_w$. Since $q_w$ is
connected, so is $q'_w$. Let $Q'_a$ be the Boolean OMQ
$(\Omc_w,\Sbf_w,q'_a)$ where $q'_a$ consists of the $q'_w$-parts of the
CQs in $q_a$. Clearly, $Q'_a$ is just the UCQ$_k$-approximation of
$Q_w$. From $\Dmc \models Q(a)$ and $\Dmc_0 \not\models Q_w$, we thus
obtain $\Dmc \models Q_a$ and $\Dmc_0 \not\models Q'_a$.
Consequently, $\Dmc_{q_w} \models Q'_a$. Since $q_w$ and thus $q'_w$
uses only symbols from $\Sbf_w$ and by Point~3 of
Theorem~\ref{lem:2expbasic}, this implies that there is a
homomorphism from $q'_w$ to $q_w$. This means that $q'_w$ is
equivalent to $q_w$, but $q'_w$ is of tree width $k$, in contradiction
to Point~2 of Theorem~\ref{lem:2expbasic}.

\smallskip

For `if', assume that $M$ does not accept $w$. By Point~1 of
Theorem~\ref{lem:2expbasic}, $\Dmc \models (\Omc_w,\Sbf_w,A^*(x))(a)$
implies $\Dmc \models Q_w$ for any $\Sbf_w$-database \Dmc and $a \in
\Ind(\Dmc)$. Consequently, $Q$ is equivalent to
$(\Omc_w,\Sbf_w,A^*(x))$ and thus CQ$_1$-equivalent while preserving
the ontology.

\medskip

We finally address Point~3 of Theorem~\ref{thm:mainhard}. It is proved
in \cite{DBLP:conf/kr/BienvenuLW12} that containment in
$(\text{DL-Lite}^\Rmc,\text{CQ})$ is $\Pi^p_2$-complete.  However, it
is rather unclear how to utilize the hardness proof given in that
paper for our purposes. We instead give a direct proof by reduction
from $\forall\exists$-QBF, building on and extending an
\NPclass-hardness proof for the combined complexity of (a restricted
version of) query evaluation in $(\text{DL-Lite}^\Rmc,\text{CQ})$
given in \cite{DBLP:conf/dlog/KikotKZ11}. To make the presentation
more digestible, we first show that containment between unary OMQs
$(\Omc,\Sbf,C(x))$ and $(\Omc,\Sbf,q(x))$, where \Omc is a
$\text{DL-Lite}^\Rmc$-ontology and $C$ a conjunction of concept names,
is $\Pi^p_2$-hard.

Thus let $\vp=\forall \xbf \exists \ybf \,(C_1 \vee \cdots \vee
C_\ell)$ be a sentence of $\forall\exists$-QBF where each
$C_1,\dots,C_\ell$ is a clause. Further let $\xbf=x_1 \cdots x_n$ and
$\ybf = y_1 \cdots y_m$.  Let the
$\text{DL-Lite}^\Rmc$-ontology \Omc contain the following:
$$
\begin{array}{rcll}
  T_i &\sqsubseteq& X_i & \text{for } 1 \leq i \leq n\\[.5mm]
  F_i &\sqsubseteq& X_i & \text{for } 1 \leq i \leq n\\[.5mm]
  T_i &\sqsubseteq& C_j & \text{if } x_i \in C_j\\[.5mm]
  F_i &\sqsubseteq& C_i & \text{if } \neg x_i \in C_j\\[.5mm]
  L_i & \sqsubseteq& \exists r . (L_{i+1} \sqcap Y_{i+1}) & \text{for }
  0 \leq i \leq m\\[.5mm]
  L_i & \sqsubseteq& \exists r . (L_{i+1} \sqcap \overline{Y}_{i+1}) & \text{for }
  0 \leq i \leq m\\[.5mm]
  Y_i & \sqsubseteq& \exists r^- . C^+_j & \text{if } y_i \in C_j\\[.5mm]
  \overline{Y}_i & \sqsubseteq& \exists r^- . C^+_j & \text{if } \neg
  y_i \in C_j\\[.5mm]
  C^+_i &\sqsubseteq& C_i & \text{for } 1 \leq i \leq \ell \\[.5mm]
  C^+_i &\sqsubseteq& \exists r^- . C^+_i& \text{for } 1 \leq i \leq \ell
\end{array}
$$
where, as usual in $\text{DL-Lite}^\Rmc$, $A \sqsubseteq \exists r
. (B_1 \sqcap \cdots \sqcap B_n)$ abbreviates $A \sqsubseteq \exists
r_0 . \top, r_0 \sqsubseteq r, \exists r_0^- . \top \sqsubseteq B_1,\dots,
\exists r_0^- . \top \sqsubseteq B_n$, where $r_0$ is a fresh role name.
Set
$$
  \Sbf =\{L_0,T_1,\dots,T_n,F_1,\dots,F_n\}
$$
and define the CQ $q(x_0)$ to be as follows, all variables
except~$x_0$ existentially quantified:
$$
\begin{array}{l}
  r(x_0,x_1) \wedge \cdots \wedge r(x_{m-1},x_m) \; \wedge\\[1mm]
  C_1(z_0^1) \wedge r(z_0^1,z_1^1) \wedge \cdots \wedge
  r(z_{m-1}^1,x_m) \; \wedge\\[1mm]
  \hspace*{3mm}\qquad\qquad\qquad\qquad \cdots \\[1mm]
  C_\ell(z_0^{\ell}) \wedge r(z_0^{\ell},z_1^{\ell}) \wedge \cdots \wedge
  r(z_{m-1}^{\ell},x_m).
\end{array}
$$
Let $Q_1=(\Omc,\Sbf,X_1(x)\wedge\cdots\wedge X_n(x))$ and
$Q_2=(\Omc,\Sbf,q)$.  Then we have the following, which implies
$\Pi^p_2$-hardness of the containment question mentioned above.
\\[2mm]
{\bf Claim 1.} $Q_1 \subseteq Q_2$ iff $\vp$ is true.
\\[2mm]
\emph{Proof of claim}. For the `if' direction, assume that
$Q_1 \subseteq Q_2$. Let $\pi$ be a truth assignment for the variables
\xbf. We have to show that $\pi$ can be extended to the variables in
\ybf such that $C_1 \vee \cdots \vee C_\ell$ is satisfied.  Let
$\Dmc_\pi$ be the \Sbf-database that contains the fact $T_i(a)$ if
$\pi(x_i)=1$ and $F_i(a)$ if $\pi(x_i)=0$. Clearly,
$\Dmc_\pi \models Q_1(a)$, and thus $\Dmc_\pi \models Q_2(a)$, that
is, there is a homomorphism $h$ from $q$ to $\mn{ch}_\Omc(\Dmc_\pi)$
with $h(x_0)=a$. It is easy to see that $\mn{ch}_\Omc(\Dmc_\pi)$ takes
the form of a binary tree of depth $m$ with root $a$ in which every
path $p$ corresponds to a truth assignment $\pi_p$ to the variables in
\ybf, and vice versa: the node on level $i$ is labeled with (exactly
one of) $Y_i$ or $\overline{Y}_i$ and $\pi_p(y_i)=1$ iff the former is
the case. By construction of $q$, the variable $x_m$ of $q$ must be
mapped to the final node of a path $p$, and we extend $\pi$ with
$\pi_p$. It can be verfied that the use of the concept names $C_j$ in
\Omc and $q$ imply that $\pi$ satisfies all clauses from the QBF.

\smallskip For the `only if' direction, assume that $\vp$ is true. Let
\Dmc be an $\Sbf$-database with $\Dmc \models Q_1(a)$. We produce a
homomorphism $h$ from $q$ to $\mn{ch}_\Omc(\Dmc)$ such that
$h(x_0)=a$. Since $\Dmc \models Q_1(a)$ and
$X_1,\dots,X_\ell \notin \Sbf$, we must have $T_i(a)$ or $F_i(a)$ (or
both) in \Dmc for $1 \leq i \leq n$. Let $\pi_\Dmc$ be a truth
assignment such that $\pi_\Dmc(x_i)=1$ implies $T_i(a) \in \Dmc$ and
$\pi_\Dmc(x_i)=0$ implies $F_i(a) \in \Dmc$. Since $\vp$ is true, we
can extend $\pi_\Dmc$ to a truth assignment to \ybf such that all
clauses are satisfied. Thus truth assignment identifies a path in
$\mn{ch}_\Omc(\Dmc)$ in the subtree database rooted at $a$. Then
$h$ can map the variables $x_0,\dots,x_m$ from $q$ to that path.
Since all clauses are satisfied, the remaining paths in $q$ can also
be mapped.

We next modify the $\Pi^p_2$-hardness proof just given so that it
applies to UCQ$_k$-equivalence in
$(\text{DL-Lite}^\Rmc_{\mn{horn}},\text{CQ})$.  Fix some $k \geq 1$.
We would like to reuse essentially the same CQ as before, but now we
have to make sure that it is of tree width exceeding~$k$. To achieve
this, we mix in the CQ from the proof of Point~1 of
Theorem~\ref{thm:mainhard} that we had obtained by starting with the
$k+2$-clique based on vertices $\{1,\dots,k+2\}$, orienting each edge
$\{i,j\}$ in the direction from $i$ to $j$ if $i<j$, subdividing each
edge $(i,j)$ into $j-i$ edges by introducing intermediate points,
representing directed edges as $r$-atoms, and finally labeling each
vertex with a different concept name $B_i$, $i \geq 1$. All variables
are unquantified, that is, the resulting CQ $p$ is just a set of
atoms.
%
By identifying all vertices that are reachable from vertex~1 in
exactly $i$ steps, for each $i$, we obtain a path-shaped contraction
of $p$ that can be represented as an \EL-concept
$$
  C=C_1 \sqcap \exists r . (C_2 \sqcap \exists r . (C_3 \sqcap \cdots
  \sqcap \exists r . C_{k+2} ) \cdots )
$$
such that $a \in C^\Imc$ implies $\Imc \models p(a)$ for all
interpretations~\Imc. Let \Omc be the ontology from the
previous reduction extended by the following concept
inclusions:
$$
\begin{array}{rcll}
  I_i & \sqsubseteq & C_i \sqcap \exists r . I_{i+1} & \text{for } 1
                                                       \leq i \leq k+1
  \\[.5mm]
  I_{k+2} & \sqsubseteq & L_0 \\[.5mm]
  B & \sqsubseteq & C_i &\text{for } 1 \leq i \leq \ell \\[.5mm]
  B &\sqsubseteq& \exists r . B \\[.5mm]
  B &\sqsubseteq& \exists r^- . B
\end{array}
$$
Let $\Sbf$ consist of $\{ I_1,T_1,\dots,T_n,F_1,\dots,F_n,r,B\}$ and all
concept names $B_i$ introduced in the construction of the CQ
$p$. Assume that the variable in $p$ that corresponds to vertex 1 from
the original clique is $x_0$ and the vertex that corresponds to vertex
$k+2$ is $x_{k+1}$. Define the CQ $q(x_0)$ to be as follows, all
variables except~$x_0$ existentially quantified, $w=m+k+2$:
$$
\begin{array}{l}
  I_1(x_0) \wedge X_1(x_0) \wedge \cdots \wedge X_n(x_0) \;
  \wedge\\[1mm]
  p \; \wedge\\[1mm]
  r(x_{k+1},x_{k+2}) \wedge \cdots \wedge r(x_{w-1},x_w) \; \wedge\\[1mm]
  C_1(z_0^1) \wedge r(z_0^1,z_1^1) \wedge \cdots \wedge
  r(z_{w-1}^1,x_w) \; \wedge\\[1mm]
  \hspace*{3mm}\qquad\qquad\qquad\qquad \cdots \\[1mm]
  C_\ell(z_0^{\ell}) \wedge r(z_0^{\ell},z_1^{\ell}) \wedge \cdots \wedge
  r(z_{w-1}^{\ell},x_w).
\end{array}
$$
\\[2mm]
{\bf Claim 2.} $Q=(\Omc,\Sbf,q)$ is UCQ$_k$-equivalent iff $\varphi$ is
true.
\\[2mm]
In fact, it can be verified that $Q$ is equivalent to $(\Omc,\Sbf,q')$
if $\varphi$ is true, where $q'$ is the CQ that consists of the first
line of the definition of $q$. Conversely, if $\varphi$ is false, then
consider the following CQ, viewed as a database \Dmc:
$$
I_1(x_0) \wedge X_1(x_0) \wedge \cdots \wedge X_n(x_0) \; \wedge p
\wedge B(x_{k+1}).
$$
It is easy to see that \Dmc has tree width $k+1$ and that
$\Dmc \models Q(x_0)$, $x_0$ meaning the constant of the same name in
\Dmc here. It can also be verified that the $k$-unraveling $\Dmc'$ of
\Dmc is such that $\Dmc' \not\models Q(x_0)$. Together with
Corollary~\ref{prop:wasaclaim} and Theorem~\ref{thm:bestapprox},
this implies that $Q$ is not UCQ$_k$-equivalent.
\end{proof}


In the following proof, we are going to make use of complexity
results for OMQ containment from the literature. Recall that, in this
paper, $Q_1 \subseteq Q_2$ with $Q_i = (\Omc_i,\Sbf,q_i)$ if
$Q_1(\Dmc) \subseteq Q_2(\Dmc)$ for all \Sbf-databases \Dmc including
those that are inconsistent with $\Omc_1$ or $\Omc_2$. In the
literature on containment, in contrast, it is common to consider only
those \Dmc that are consistent with both $\Omc_1$ and $\Omc_2$.  We
refer to this as \emph{consistent containment}. In the proof that
follows, the actual queries are UCQs and $\Omc_1$ and
$\Omc_2$ are the same ontology \Omc. In this case, the gap
between containment and consistent containment is unproblematic since
we can reduce containment to consistent containment in polynomial
time, as follows. Let $\Omc'$ be obtained from \Omc by 
\begin{enumerate}

\item replacing every CI $C \sqsubseteq \bot$ with $C \sqsubseteq B$

\item adding $\top \sqsubseteq A$ 

\end{enumerate}
where $A$ and $B$ are fresh concept names. Moreover, if $\xbf=x_1
\cdots x_n$ are the answer variables in $q_1$ and $q_2$ (which we can
w.l.o.g.\ assume to be identical), then let $q'_i$ be obtained from
$q_i$ by adding as an additional disjunct the CQ $A(x_1) \wedge \cdots
\wedge A(x_n) \wedge \exists y \, B(y)$, for $i \in \{1,2\}$. It can
be shown that $Q_1 \subseteq Q_2$ if $(\Omc',\Sbf,q'_1)$ is
consistently contained in $(\Omc',\Sbf,q'_2)$. A crucial observation
is that every $\Sbf$-database is consistent with $\Omc'$.

\thmdecupper*
\begin{proof}\ Points~1 and~2 are proved in a uniform way.  By
  Corollary~\ref{prop:wasaclaim}, it suffices to construct the
  UCQ$_k$-approximation $Q_a=(\Omc,\Sbf,q_a)$ of the input query
  $Q=(\Omc,\Sbf,q)$, and check whether $Q \subseteq Q_a$ (the converse
  containment holds by construction of $Q_a$).  An approach based on
  alternating tree automata has been used in
  \cite{DBLP:conf/ijcai/Bienvenu0LW16} to show that OMQ containment is
  \ExpTime-complete in $(\ELHbot,\text{CQ})$ and \TwoExpTime-complete
  in $(\ELHIbot,\text{CQ})$. It is an easy exercise, and does not
  require any new ideas, to extend this approach from CQs to UCQs, and
  from \ELHbot to \ELpoly. Applying the resulting decision procedures
  for containment as a black box, we obtain a \TwoExpTime upper bound
  for $(\ELpoly,\text{UCQ})$ and a {3\sc ExpTime} upper bound for
  $(\ELHIbot,\text{UCQ})$. To lower these bounds by one exponential,
  we have to address the fact that $q_a$ has exponentially many
  disjuncts (each of polynomial size). This requires another minor
  change in the decision procedure for containment, exploiting that
  every collapsing of a CQ in $q_a$ is also a collapsing of $q$.  We
  give more details in what follows.

  The central relevant statement from
  \cite{DBLP:conf/ijcai/Bienvenu0LW16} is as follows.
\begin{theorem}
\label{thm:fromcontainment}
  For every OMQ $Q=(\Omc,\Sigma,q)$ from
  $(\mathcal{ELHI}_\bot,\text{CQ})$ with $q$ Boolean, there is a
  two-way alternating parity tree automaton $\Amf_Q$ that accepts a
  $(|\Omc| \cdot |q|)$-ary $\Sigma_\varepsilon \cup \Sigma_N$-labeled
  tree $(T, L)$ iff it is proper, $\Dmc_{(T, L)}$ is consistent with
  \Omc\!\!\!, and $\Dmc_{(T, L)} \models Q$.  $\Amf_Q$ has at most
  $2^{p(|q|+\mn{log}(|\Omc|))}$ states, and at most $p(|q|+|\Omc|)$
  states if \Omc is an \ELpoly-ontology, $p$ a polynomial.
  It can be constructed in time polynomial in its size.
\end{theorem}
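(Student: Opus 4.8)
The plan is to build $\Amf_Q$ as the product (realised by a universal branch at the root) of three two-way alternating automata running on the same $(|\Omc|\cdot|q|)$-ary infinite tree: a \emph{properness} automaton $\Amf_{\mathrm{prop}}$, a \emph{consistency} automaton $\Amf_{\mathrm{cons}}$, and a \emph{matching} automaton $\Amf_{\mathrm{match}}$. Recall that the intended encoding reads off from $(T,L)$ a pseudo-tree database $\Dmc_{(T,L)}$: a bounded-width core $\Dmc_0$ (of width at most $|q|$, encoded in the root region) together with tree-width-$1$ parts hanging off it, where $\Sigma_N$-labelled nodes carry the local facts and $\Sigma_\varepsilon$ marks the padding needed to turn the finite database into an infinite tree, so that the parity condition can treat the all-$\Sigma_\varepsilon$ region trivially. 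Crucially, I would have each $\Sigma_N$-label additionally record the \emph{type} $t\subseteq\mn{sub}(\Omc)$ realised at that node in $\mn{ch}_\Omc(\Dmc_{(T,L)})$. Pushing the (exponentially many) types into the label alphabet rather than into the state space is exactly the device that keeps the state count polynomial in $|\Omc|$, since the alphabet size does not count towards the number of states.

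Next I would construct $\Amf_{\mathrm{prop}}$ and $\Amf_{\mathrm{cons}}$. $\Amf_{\mathrm{prop}}$ checks the syntactic well-formedness of the encoding (the $\Sigma_\varepsilon$-region is downward closed, edge labels are coherent, the core bag sits at the root, role labels are closed under the role hierarchy $r\sqsubseteq s$) and that the type annotation is \emph{closed under the axioms of} $\Omc$, so that it describes some model of $\Dmc_{(T,L)}$ and $\Omc$; its \emph{canonicity}, i.e. that it is the least such annotation and hence the chase, is pinned down by the global parity condition discussed below. Closure is verified axiom by axiom: in a state ``checking $C\sqsubseteq D$'' the automaton reads the current type from the label and, using two-wayness, descends to $r$-children (or ascends along inverse roles) to inspect their types, branching universally over the at most $|\Omc|$ axioms. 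Since the automaton only \emph{reads} types from labels and carries at most a pointer into $\mn{sub}(\Omc)$, this uses $\mathrm{poly}(|\Omc|+|q|)$ states. $\Amf_{\mathrm{cons}}$ simply verifies that no annotated type contains $\bot$, equivalently that no axiom $C\sqsubseteq\bot$ ever fires; again a local check with polynomially many states.

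Then I would build $\Amf_{\mathrm{match}}$, which guesses and verifies a homomorphism $h$ from $q$ into $\mn{ch}_\Omc(\Dmc_{(T,L)})$. Following the $\Dmc$-labeling machinery of Section~\ref{sect:ptimeupper}, I split $q$ at the boundary between $\Dmc_0$ and the chase-generated tree parts: variables mapped into $\Dmc_0$ are pinned existentially, while a guarded pair $(x,y)$ whose image crosses into a tree branch must be $\exists$-eligible and is handed off to be matched inside a $\mn{dtree}$, with concept atoms checked against the annotated types. The state of $\Amf_{\mathrm{match}}$ records the current matching obligation, namely which variables of $q$ are still to be placed and the positions and types they must hit, of which there are $2^{O(|q|)}$, multiplied by $\mathrm{poly}(|\Omc|,|q|)$ for the type bookkeeping; the parity condition rejects runs that defer a matching obligation or a type justification forever, thereby selecting the \emph{least} model (the chase) rather than an arbitrary model of $\Omc$. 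Altogether the product automaton has $2^{O(|q|)}\cdot\mathrm{poly}(|\Omc|)=2^{p(|q|+\mn{log}(|\Omc|))}$ states. For the $\ELpoly$ case the branches of the chase carry only downward role edges, so each connected piece of $q$ can be matched by a top-down sweep that tracks a single variable together with a polynomial description of its residual obligations, instead of a subset of simultaneously pending variables; this removes the exponential in $|q|$ and yields $p(|q|+|\Omc|)$ states.

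The hard part will be the correctness of the type and consistency component: I must guarantee that $\Amf_Q$ accepts only trees whose type annotation is the \emph{canonical} one, since a homomorphism of $q$ into a spuriously enlarged model would cause false acceptance. This is where the interplay of two-wayness and the parity acceptance condition is delicate: justifications of annotated concept memberships must be well-founded even through the cyclic core $\Dmc_0$ and through inverse-role edges, and this is the step I would work out in full detail, together with the boundary decomposition of $q$ via $\exists$-eligibility. The remaining claim, constructibility in time polynomial in the size of $\Amf_Q$, is then routine: each transition is defined by a local condition (axiom satisfaction, matching step) computable in polynomial time from the state and the labels read, so the whole transition table can be written down within the stated bound.
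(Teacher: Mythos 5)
You should first note that the paper does not prove Theorem~\ref{thm:fromcontainment} at all: it is imported from \cite{DBLP:conf/ijcai/Bienvenu0LW16}, and the appendix recalls only the one ingredient of that construction needed for the later modification, namely forest decompositions and Lemma~\ref{forest-decomp-lemma}. Your reconstruction is in the right general spirit --- two-way alternation, parity acceptance to enforce well-founded derivations, exponential dependence on $|q|$ confined to the matching component --- but measured against the cited construction it has two concrete gaps.

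First, the boundary hand-off is unsound for $\ELHIbot$. You split $q$ using the $\Dmc$-labeling machinery of Section~\ref{sect:ptimeupper} ($\exists$-eligibility, $\mn{reach}$, $\mn{dtree}$) and match crossing parts ``inside a $\mn{dtree}$''. That machinery is only valid for \ELpoly, where the chase attaches ditrees with all edges directed away from the root. For $\ELHIbot$ the chase attaches tree-shaped databases representing $\ELI$-concepts, which contain inverse-role edges: the anonymous parts are trees in the Gaifman sense but not ditrees, $\mn{dtree}_{(x,y)}$ need not exist, and a homomorphism from a query part that enters an anonymous tree can climb back out into the database. This is precisely what the forest decompositions of \cite{DBLP:conf/ijcai/Bienvenu0LW16} handle: one decomposes a \emph{contraction} of $q$ into a core $q_0$ and weakly tree-shaped parts $q_i$ rooted in $q_0$ (Lemma~\ref{forest-decomp-lemma}), and the disjunction over all forest decompositions in the transition relation is also exactly where the $2^{p(|q|+\mn{log}(|\Omc|))}$ state bound comes from --- and the hook the paper later exploits to handle the exponentially large UCQ $q_a$. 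Second, your annotation device changes the interface of the theorem and defers its hardest step. The alphabet in the statement encodes only the database; if you push chase types into $\Sigma_N$, you must additionally show that each pseudo-tree database admits exactly one correct annotation (true by uniqueness of the chase for Horn ontologies, but needed: otherwise the database/tree correspondence used when complementing and intersecting automata for the containment check breaks), and you must actually carry out the well-foundedness argument you postpone --- closure alone certifies an arbitrary model, and ``the parity condition selects the least model'' is an assertion, not a proof, exactly where two-wayness, the cyclic core $\Dmc_0$, and inverse edges interact. Relatedly, in your ELHI state count, recording ``the types they must hit'' inside states would immediately exceed $2^{p(|q|+\mn{log}(|\Omc|))}$; obligations must be confined to atoms of $q$ (read off from $q$ itself), with single-concept derivation states contributing only a factor polynomial in $|\Omc|$.
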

Here, $\Sigma_\varepsilon \cup \Sigma_N$ are suitable alphabets such
that, among other things, a $\Sigma_\varepsilon \cup \Sigma_N$-labeled
tree $(T, L)$ represents an (almost) tree-shaped database $\Dmc_{(T, L)}$.
The term `proper' refers to a technical condition that need not bother
us here.  The construction of the automaton $\Amf_Q$ from
Theorem~\ref{thm:fromcontainment} relies on the notion of forest
decompositions, which partitions a query into a center part and
several tree-shaped parts.

A \emph{forest decomposition} of $q$ is a tuple $F=(q_0, q_1,x_1,
\ldots, q_k,x_k, \mu)$ where $(q_0, q_1, \ldots, q_k)$ is a partition
of (the atoms of) a contraction of $q$, $x_1,\dots,x_k$ are variables
from $q_0$, and $\mu$ is a mapping from $\mn{var}(q_0)$ to a fixed set
of constants $\Ind_0$ such that the following conditions are satisfied
for $1 \leq i,j \leq k$;
\begin{enumerate}
  \item $q_0$ is non-empty;

  \item $q_i$ is weakly tree-shaped with root $x_i$, that is, $G_q$ is
    a tree (multi-edges allowed);

  \item $\mn{var}(q_i) \cap \mn{var}(q_0)=\{ x_i \}$;

  \item $\mn{var}(q_i) \cap \mn{var}(q_j) \subseteq
  \mn{var}(q_0)$ if $i \neq j$;

  \item $q_i$ contains no atom $A(x_i)$;

  \item $x_i$ has a single successor in $q_i$.

\end{enumerate}
The central property of forest decompositions is then as follows.
\begin{lemma}\label{forest-decomp-lemma}
  Let \Omc be an $\mathcal{ELHI}_\bot$-ontology, $(T,L)$ a proper
  $\Sigma_\varepsilon \cup \Sigma_N$-labeled tree, \Cmc the part of
  $\Dmc_{(T,L)}$ represented by the root vertex of $T$, and $q$ a
  Boolean connected CQ.  Then the following are equivalent:
  \begin{enumerate}

  \item there is a homomorphism $h$ from $q$ to
    $\mn{ch}_\Omc(\Dmc_{(T,L)})$ whose range has a non-empty
    intersection with $\Ind(\Cmc)$;

  \item there is a forest decomposition $F=(q_0, q_1,x_1, \ldots,$
    $q_k,x_k,\mu)$ of $q$ such that
  \begin{itemize}
  \item $\mu$ is a homomorphism from $q_0$ to
    $\mn{ch}_\Omc(\Dmc_{(T,L)})$ whose range falls within
    $\Ind(\Cmc)$;
  \item there is a homomorphism $h_i$ from $q_i$ to
    $\mn{ch}_\Omc(\Dmc_{(T,L)})$ such that $h_i(x_i)=\mu(x_i)$, for $1
    \leq i \leq k$.
  \end{itemize}
  \end{enumerate}
\end{lemma}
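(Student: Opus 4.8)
The plan is to prove the two directions separately, with $2\Rightarrow1$ a routine gluing argument and $1\Rightarrow2$ the substantive direction, which rests on the tree-like shape of $\mn{ch}_\Omc(\Dmc_{(T,L)})$.

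For $2\Rightarrow1$ I would paste the given homomorphisms together. Let $q^c$ be the contraction of $q$ whose atoms are partitioned by $q_0,q_1,\dots,q_k$, and define $h$ on $q^c$ by $h=\mu$ on $\mn{var}(q_0)$ and $h=h_i$ on $\mn{var}(q_i)$. These maps are jointly well defined: by Conditions~3 and~4 of forest decompositions the only variables shared between two distinct parts lie in $\{x_1,\dots,x_k\}$, and on each $x_i$ the two candidate values agree because $h_i(x_i)=\mu(x_i)$. Hence $h$ is a homomorphism from $q^c$ to $\mn{ch}_\Omc(\Dmc_{(T,L)})$, and composing with the contraction homomorphism $q\to q^c$ yields a homomorphism from $q$ to the chase. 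Its range meets $\Ind(\Cmc)$ because $q_0$ is non-empty and $\mu$ maps into $\Ind(\Cmc)$.

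The interesting direction is $1\Rightarrow2$. First I would replace $q$ by the contraction $q'$ obtained by identifying $u$ and $v$ whenever $h(u)=h(v)$, so that the induced map $\bar h\colon q'\to\mn{ch}_\Omc(\Dmc_{(T,L)})$ is injective and $q'$ is still connected; any forest decomposition of $q'$ is a forest decomposition of a contraction of $q$, as required. The key structural fact I would record is that $\mn{ch}_\Omc(\Dmc_{(T,L)})$ is the union of the center $\Cmc$ with a disjoint family of tree-shaped pieces, each attached to $\Cmc$ at a single constant; this follows from the encoding of $(T,L)$ as an almost tree-shaped database together with the fact that the oblivious chase only appends further tree-shaped material. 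Let $V_0:=\{v\mid\bar h(v)\in\Ind(\Cmc)\}$, which is non-empty by hypothesis. I take $\mn{var}(q_0):=V_0$, let $q_0$ consist of all atoms of $q'$ all of whose variables lie in $V_0$, and put $\mu:=\bar h|_{V_0}$, a homomorphism into $\Ind(\Cmc)$. It then remains to carve the rest of $q'$ into the tree parts $q_i$: each maximal connected subquery of $q'$ on variables outside $V_0$ is mapped by the injective $\bar h$ into a single tree of the chase, and—again by injectivity—every edge joining it to $V_0$ must land at the unique center variable whose image is the attachment constant of that tree, at a first-level element of the tree. To enforce Condition~6 (single successor at the root) I would refine this piece according to the children of the attachment constant, creating one $q_i$ per child branch, rooted at the corresponding center variable $x_i$, carrying the connecting role atom but assigning all concept atoms on $x_i$ to $q_0$ (Condition~5). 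Injectivity and the tree shape of the chase then give that each $q_i$ is weakly tree-shaped (Condition~2), that distinct parts share only their roots (Conditions~3 and~4), that $q_0,\dots,q_k$ partition the atoms of $q'$, and that the $h_i:=\bar h|_{\mn{var}(q_i)}$ satisfy $h_i(x_i)=\mu(x_i)$.

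I expect the main obstacle to be exactly this carving step: one must pin down the precise shape of $\mn{ch}_\Omc(\Dmc_{(T,L)})$ from the tree encoding and then verify, using injectivity of $\bar h$, that the off-center fragments really are trees attached at single points and that the bookkeeping conditions of a forest decomposition hold simultaneously. The remaining verifications, as well as the well-definedness check in $2\Rightarrow1$, are routine.
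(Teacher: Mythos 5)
The paper never proves this lemma: it is recalled verbatim from the automata construction of \cite{DBLP:conf/ijcai/Bienvenu0LW16} as ``the central property of forest decompositions'' and then used as a black box, so there is no in-paper argument to compare against. Your proof is the natural one and is essentially correct. The $2\Rightarrow 1$ gluing is sound: Conditions~3 and~4 of the definition reduce all overlaps between parts to the roots $x_i$, where $h_i(x_i)=\mu(x_i)$ forces agreement, and non-emptiness of $q_0$ gives the required intersection with $\Ind(\Cmc)$. For $1\Rightarrow 2$, passing to the contraction on which $h$ becomes injective and then cutting along the boundary between $\Ind(\Cmc)$ and the attached trees is exactly the right move; injectivity is what forces each off-center connected piece into a single child-subtree and ties all its crossing atoms to a unique root variable, yielding Conditions~2--6. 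Two points deserve tightening. First, the structural claim that $\mn{ch}_\Omc(\Dmc_{(T,L)})$ is a center plus trees each glued to $\Cmc$ at a single constant is carrying the whole argument; it needs both the pseudo-tree shape of the database encoded by a proper $\Sigma_\varepsilon\cup\Sigma_N$-labeled tree and the observation that the $\ELHIbot$ chase only appends tree-shaped material below existing constants while role inclusions leave the Gaifman graph unchanged. You flag this as the main obstacle, which is fair, but it should be spelled out rather than asserted. Second, with $q_0$ defined as the set of atoms having all variables in $V_0$, it can happen that $q_0$ contains no atom at all (e.g.\ when the unique center variable occurs only in role atoms crossing into a tree), which formally clashes with Condition~1 as the paper states it; this is resolved only under the convention---which you in effect adopt by stipulating $\mn{var}(q_0):=V_0$---that $q_0$ carries its variable set explicitly and ``non-empty'' refers to $V_0\neq\emptyset$, guaranteed by the hypothesis of item~1. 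With that reading the argument goes through.
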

In the construction of $\Amf_Q$, the transition relation contains a
disjunction over all forest decompositions of the input query~$q$.
We, however, are not interested in the CQ $q_a$ rather than in $q$.
But this is easy to achieve: instead of using all forest
decompositions of $q$ in the mentioned disjunction, we use all forest
decompositions of a CQ from $q_a$. Because of the use of contractions
in the definition of forest decompositions, each such decomposition is
also a forest decomposition of $q$, and consequently no further
modifications of the construction are required.

\smallskip For Point~3, we again have to check whether $Q \subseteq
Q_a=(\Omc,\Sbf,q_a)$. It was shown in \cite{DBLP:conf/kr/BienvenuLW12}
that containment in $(\text{DL-Lite}_{\mn{horn}},\text{CQ})$ is
$\Pi^p_2$-complete and the proof extends to
$(\text{DL-Lite}^\Rmc_{\mn{horn}},\text{UCQ})$. It thus again remains
to deal with the fact that $q_a$ consists of exponentially many CQs
(of polynomial size). We sketch the proof of a $\Sigma^p_2$ upper
bound for checking non-containment, that is $Q \not\subseteq Q_a$.  We
will make use of the fact that OMQ evaluation in
$(\text{DL-Lite}^\Rmc_{\mn{horn}},\text{UCQ})$ is
\NPclass-complete~\cite{CGLLR07,CombinedApproachRSA:IJCAI15}.

A pair $(\Dmc,\abf)$ with \Dmc an \Sbf-database and \abf a tuple over
$\Ind(\Dmc)$ is a \emph{witness} for $Q \not\subseteq Q_a$ if $\Dmc
\models Q(\abf)$ and $\Dmc \not\models Q_a(\abf)$. It is observed in
\cite{DBLP:conf/kr/BienvenuLW12} that it suffices to consider
witnesses $(\Dmc,\abf)$ where the number of constants in \Dmc is
bounded by $|q| \cdot (|\Sigma|+1)$. To decide whether $Q
\not\subseteq Q_a$, we guess a witness $(\Dmc,\abf)$ of such
dimension. We then verify in \NPclass that $\Dmc \models Q(\abf)$,
co-guess a CQ $p$ from $q_a$, and verify in \coNP that $\Dmc
\not\models (\Omc,\Sbf,p)(\abf)$. With co-guessing a CQ $p'$ in $q_a$,
we mean to co-guess an equivalence relation on $q$ that represents
variable identifications, to then produce $p'$ in polynomial time, and
to verify in polynomial time that it has tree width $k$ (note that $k$
is fixed). The overall algorithm clearly runs in $\Sigma^p_2$, as desired.
\end{proof}


\thmnpcompl*
\begin{proof}\ The \NPclass lower bounds are inherited from the case where the ontology is empty~\cite{DKV02}, while the \ExpTime lower bound is proved by a reduction from the subsumption problem in \ELI, namely given an
\ELI-ontology \Omc and concept names $A,B$, is $A$ subsumed by $B$
w.r.t.\ \Omc (written $\Omc \models A \sqsubseteq B$), that is, is
$A^\Imc \subseteq B^\Imc$ in every model \Imc of~\Omc? This problem is
known to be \ExpTime-complete \cite{DBLP:conf/owled/BaaderLB08}. We
start with the case $k=1$. Thus, let \Omc, $A$, and $B$ be as
stated. Define
  $$
  \begin{array}{rcl}
    \Omc' &=& \Omc \cup \{ B \sqsubseteq \exists r . (B_1
              \sqcap B_2 \sqcap \exists r . \top) \} \\[1mm]
    q(x) &=& \exists y_1 \exists y_2 \exists z \, A(x) \wedge r(x,y_1) \wedge
             r(x,y_2) \wedge B_1(y_1) \; \wedge \\[1mm]
             && \hspace*{14mm} B_2(y_2) \wedge r(y_1,z)
             \wedge r(y_2,z)
  \end{array}
  $$
  where $r$ is a fresh role name, and set
  $Q=(\Omc',\Sbf_{\mn{full}},q)$.  Notice the similarity of this
  construction to the proof of Point~1 of Theorem~\ref{thm:mainhard}.
  \\[2mm]
  {\bf Claim.} $\Omc \models A \sqsubseteq B$ iff $Q$ is
  (U)CQ$_1$-equivalent.
  \\[2mm]
  For the `if' direction, it suffices to note that when
  $\Omc \not\models A \sqsubseteq B$ then $Q$ is a full rewriting of
  itself. For the `only if' direction, from
  $\Omc \models A \sqsubseteq B$ it follows that
  $Q=(\Omc',\Sbf_{\mn{full}},A(x))$ is a full rewriting of $Q$. The
  generalization to the case $k>1$ is as in the proof of Point~1 of
  Theorem~\ref{thm:mainhard}, details are omitted.

\medskip


Let us focus on the upper bounds. We first argue that instead of proving the
results for $(\ELpoly,\text{UCQ})$, $(\ELHIbot,\text{UCQ})$, and $(\text{DL-Lite}_{\mn{horn}}^\Rmc,\text{UCQ})$, it suffices to
establish them for the corresponding OMQ languages based on CQs.  In
fact, we can assume w.l.o.g.\ that, when an OMQ
$Q=(\Omc,\Sbf_{\mn{full}},q)$ is given as input and $q(\xbf)= \bigvee_{1\leq i \leq n} p_i$, then the OMQs $Q_i=(\Omc,\Sbf_{\mn{full}},p_i)$, $1 \leq i \leq n$, are pairwise incomparable regarding containment. The reason is that, when the schema is full, OMQ containment trivially reduces to OMQ evaluation, which means that containment in $(\ELpoly,\text{CQ})$ and $(\text{DL-Lite}_{\mn{horn}}^\Rmc,\text{CQ})$ is in \NPclass, and in \ExpTime in $(\ELHIbot,\text{CQ})$
\cite{DBLP:conf/dlog/KikotKZ11,LutzTomanWolter-IJCAI09,DBLP:conf/kr/OrtizRS10}. We
can thus remove a disjunct $p_i$ from $q$ if there is a $p_j$, $j <
i$, such that $Q_j \subseteq Q_i$. We can also assume that
$\Dmc_{p_i}$ is consistent with \Omc, for $1 \leq i \leq n$, since
otherwise we can remove the disjunct $p_i$ and if all disjuncts are
removed then $Q$ is trivially UCQ$_k$-equivalent; moreover, checking
consistency of $\Dmc_{p_i}$ with \Omc also reduces easily to OMQ
evaluation.
  \\[2mm]
  {\bf Claim.} $Q$ is UCQ$_k$-equivalent iff every $Q_i$ is.
  \\[2mm]
  \emph{Proof of claim.} For the non-trivial `only if' direction,
  assume that $Q$ is UCQ$_k$-equivalent and let
  $Q'=(\Omc,\Sbf_{\mn{full}},q')$ be an equivalent OMQ with $q'$ from
  UCQ$_k$. Consider some $Q_i$. Clearly, $Q_i \subseteq Q'$ implies
  $\Dmc_{p_i} \models Q'(\xbf)$ where the answer variables \xbf of
  $p_i$ are viewed as constants. But then $\Dmc_{p_i} \models Q''$
  where $Q''= (\Omc,\Sbf_{\mn{full}},p')$ for some CQ $p'$ in $q'$,
  and thus $Q_i \subseteq Q''$.  Since $Q'' \subseteq Q$, we can argue
  analogously that $Q'' \subseteq Q_j$ for some $j$. Thus $Q_i
  \subseteq Q_j$ implies that $i=j$ since otherwise $Q_i$ and $Q_j$
  would be comparable regarding containment. But then $Q_i$ is
  equivalent to $Q''$ and thus CQ$_k$-equivalent. This finishes the
  proof of the claim.

   It thus suffices to prove upper bounds for UCQ$_k$-equivalence in
  $(\ELpoly,\text{CQ})$, $(\ELHIbot,\text{CQ})$, and $(\text{DL-Lite}_{\mn{horn}}^\Rmc,\text{CQ})$. All these bounds
  are established in a uniform way.  Assume that the OMQ
  $Q=(\Omc,\Sbf_{\mn{full}},q)$ is given where $q$ is a CQ. We first
  check whether $Q$ is empty (using a containment check) and if it is
  then we return that $Q$ is UCQ$_k$-equivalent. Otherwise, $\Dmc_q$
  must be consistent with \Omc. We then extend $q$ to a CQ $q'$ as
  follows, paralelling Step~2 in the construction of rewritings: for
  each $C \sqsubseteq D \in \Omc$ and $x \in C^{\Dmc_q}$, add a fresh
  copy $q_C$ of $C$ viewed as a CQ and add $q_C$ to $q$, identifying
  $x$ with the root of $q_C$. We then guess a subquery $q''$ of $q'$
  of tree width at most $k$ and check whether $Q$ is equivalent to
  $(\Omc,\Sbf_{\mn{full}},q'')$. Equivalence can be implemented as two
  containment checks; see above for the relevant complexities.

  If we are able to guess correctly, then clearly $Q$ is
  UCQ$_k$-equivalent. Conversely, if $Q$ is UCQ$_k$-equivalent, then
  by Theorem~\ref{thm:fulldatabase} there is a full rewriting
  $Q'=(\Omc,\Sbf_{\mn{full}},p)$ of $Q$ with $p \in \text{CQ}_k$. It is
  easy to verify that $p$ is a subquery of $q'$.

  It can be verified that in all the considered cases, the above procedure yields the stated upper bounds.
  \end{proof}
%

  %

\section*{Proofs for Section~\ref{sec:dl-lite-f}}
\label{app:dl-lite-f}

The result that we can immediately inherit from~\cite{Figueira16} is the following that talks about BCQs:

\begin{theorem}[Figueira]\label{pro:dl-lite-f-ucq-k-equiv-bcqs}
  For OMQs from $(\text{DL-Lite}^{\Fmc}_{=},\text{BCQ})$ based on the full schema, BCQ$_k$-equivalence while preserving the ontology is in~\TwoExpTime, for any $k \geq 1$.
  Moreover, an equivalent OMQ from $(\text{DL-Lite}^{\Fmc}_{=},\text{BCQ}_k)$ can be constructed in double exponential time (if it exists).
\end{theorem}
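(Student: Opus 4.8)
The plan is to recast OMQs from $(\text{DL-Lite}^{\Fmc}_{=},\text{BCQ})$ over the full schema as Boolean conjunctive queries evaluated over relational databases subject to key (functional) dependencies, and then to read the statement off Figueira's decidability and construction results for semantic tree-likeness under such dependencies~\cite{Figueira16}. Since our setting is a \emph{special case} of Figueira's (binary relations with the first attribute as key), his upper bounds will transfer without modification; essentially all the work lies in making the translation faithful.

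First I would fix the dictionary. Given $Q=(\Omc,\Sbf_{\mn{full}},q)$, the ontology $\Omc$ is a set of functionality assertions $(\mathsf{funct}~r)$, and I read each such assertion as the key dependency declaring that in the binary relation $r$ the first component determines the second. Concept names become unary and role names binary relations, forming the relational schema, and $q$ becomes a Boolean CQ over it. Because DL-Lite$^{\Fmc}_{=}$ has no inclusions and no $\bot$, two facts are crucial and easy to verify: every $\Sbf_{\mn{full}}$-database is consistent with $\Omc$, and the chase procedure for DL-Lite$^{\Fmc}_{=}$ introduces no fresh individuals but merely identifies the terms forced equal by functionality. Thus $\mn{ch}_\Omc(\Dmc)$ is the finite quotient of $\Dmc$ that is exactly the FD-chase of $\Dmc$ under the corresponding key dependencies $\Sigma$.

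Next I would transport the two relevant semantic notions across this dictionary. Using the universality of the (finite) chase, $\Dmc \models Q$ holds iff $q$ is satisfied in the FD-chase of $\Dmc$. Moreover, as $\Dmc$ ranges over all databases, $\mn{ch}_\Omc(\Dmc)$ ranges over exactly the databases satisfying $\Sigma$ (a $\Sigma$-satisfying database is its own chase); hence $Q \equiv (\Omc,\Sbf_{\mn{full}},q')$ holds iff $q$ and $q'$ are equivalent under $\Sigma$. Specializing, and observing that for Boolean CQs the paper's tree-width notion coincides with ordinary tree width of the Gaifman graph, I conclude that $Q$ is BCQ$_k$-equivalent \emph{while preserving the ontology} precisely when $q$ is $\Sigma$-equivalent to a BCQ of tree width at most $k$ — which is exactly Figueira's notion of semantic tree-likeness under $\Sigma$, with $\Sigma$ fixed, matching the requirement that $\Omc$ be preserved (changing $\Omc$ would amount to changing $\Sigma$, which is not what Figueira studies). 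Figueira's theorem then supplies the \TwoExpTime decision bound and the double-exponential construction of a witnessing query $q'$; reattaching $\Omc$ yields the OMQ $(\Omc,\Sbf_{\mn{full}},q')$ claimed in the ``moreover'' part.

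The main obstacle is not combinatorial but one of fine print: I must confirm that the correspondence is faithful at every point where the two formalisms could quietly diverge. Concretely, I would check that Figueira's functional dependencies, instantiated to binary relations with the first attribute as key, are literally our functionality assertions; that his equivalence quantifies over the same class of databases (those satisfying $\Sigma$); and that his measure of tree-likeness agrees with CQ$_k$ as defined here. I do not intend to reopen Figueira's tree-walking-automata argument — it is imported as a black box — so the proposal reduces to making the reduction airtight and verifying that both the \TwoExpTime bound and the query construction survive the translation unchanged.
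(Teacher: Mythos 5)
Your proposal is correct and matches the paper's treatment exactly: the paper gives no proof of this theorem but imports it verbatim from Figueira, and your translation (functionality assertions as key dependencies on binary relations, the chase as the finite FD-quotient, OMQ equivalence with a fixed ontology collapsing to CQ equivalence over $\Sigma$-satisfying databases, and the tree-width notions agreeing for Boolean CQs) is precisely the routine dictionary that justifies the import. One small wrinkle: under the paper's standard-names semantics a database violating a functionality assertion is in fact \emph{inconsistent} with $\Omc$ rather than consistent-and-quotiented (the paper itself checks consistency with $\Omc^=$ in the proof of Theorem~\ref{the:main-dl-lite-f-acyclic}), but this does not affect your conclusion since any two OMQs sharing the ontology trivially agree on inconsistent databases.
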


We need to show that the above result can be stated for UBCQs (Theorem~\ref{pro:dl-lite-f-ucq-k-equiv} in the main body of the paper). To this end, we establish the following technical result.

\begin{lemma}\label{lem:bcqs-to-ubcqs}
Consider an OMQ $Q = (\Omc,\Sbf_{\mn{full}},q)$ from $(\text{DL-Lite}^\Fmc,\text{UBCQ})$. The following are equivalent:
\begin{enumerate}
\item $Q$ is UBCQ$_k$-equivalent while preserving the ontology.
\item For each $q'$ in $q$, (i) $(\Omc,\Sbf_{\mn{full}},q')$ is BCQ$_k$-equivalent while preserving the ontology, or (ii) there exists $q''$ in $q$ such that $(\Omc,\Sbf_{\mn{full}},q') \subseteq (\Omc,\Sbf_{\mn{full}},q'')$.
\end{enumerate}
\end{lemma}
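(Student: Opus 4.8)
The plan is to reduce everything to the standard containment characterization for unions over the full schema and then run a short round-trip argument. First I would record the analogue of Lemma~\ref{lem:contfull} for $(\text{DL-Lite}^\Fmc,\text{UBCQ})$: since on every database consistent with $\Omc$ the chase $\mn{ch}_\Omc(\Dmc)$ (which for $\text{DL-Lite}^\Fmc$ additionally identifies terms forced equal by the functionality assertions) is a universal model, for OMQs $Q_1=(\Omc,\Sbf_{\mn{full}},q_1)$ and $Q_2=(\Omc,\Sbf_{\mn{full}},q_2)$ sharing the ontology one has $Q_1 \subseteq Q_2$ iff for every disjunct $q_1'$ of $q_1$ with $\Dmc_{q_1'}$ consistent with $\Omc$ there is a disjunct $q_2'$ of $q_2$ with $q_2' \rightarrow \mn{ch}_\Omc(\Dmc_{q_1'})$. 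In particular, for single disjuncts $a,b$ we have $(\Omc,\Sbf_{\mn{full}},a) \subseteq (\Omc,\Sbf_{\mn{full}},b)$ iff $b \rightarrow \mn{ch}_\Omc(\Dmc_a)$. I would also observe, using Lemma~\ref{lem:hom}(1), that any disjunct $q'$ with $\Dmc_{q'}$ inconsistent with $\Omc$ can never map into a consistent chase, hence contributes nothing on consistent databases and may be dropped; so throughout I may assume every $\Dmc_{q'}$ is consistent with $\Omc$.

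For $1 \Rightarrow 2$, suppose $Q \equiv (\Omc,\Sbf_{\mn{full}},p)$ with $p=\bigvee_\ell p_\ell$ and each $p_\ell$ of tree width at most $k$. Fix a disjunct $q'$ of $q$. Since $\Dmc_{q'} \models (\Omc,\Sbf_{\mn{full}},q') \subseteq (\Omc,\Sbf_{\mn{full}},p)$, the characterization yields some $p_\ell$ with $p_\ell \rightarrow \mn{ch}_\Omc(\Dmc_{q'})$, i.e.\ $(\Omc,\Sbf_{\mn{full}},q') \subseteq (\Omc,\Sbf_{\mn{full}},p_\ell)$. Applying the characterization again to $(\Omc,\Sbf_{\mn{full}},p_\ell) \subseteq (\Omc,\Sbf_{\mn{full}},p) \equiv Q$ on the database $\Dmc_{p_\ell}$ gives a disjunct $q''$ of $q$ with $q'' \rightarrow \mn{ch}_\Omc(\Dmc_{p_\ell})$, i.e.\ $(\Omc,\Sbf_{\mn{full}},p_\ell) \subseteq (\Omc,\Sbf_{\mn{full}},q'')$. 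Hence $(\Omc,\Sbf_{\mn{full}},q') \subseteq (\Omc,\Sbf_{\mn{full}},p_\ell) \subseteq (\Omc,\Sbf_{\mn{full}},q'')$. If $q''$ is equivalent to $q'$, then all three OMQs coincide up to equivalence, so $(\Omc,\Sbf_{\mn{full}},q') \equiv (\Omc,\Sbf_{\mn{full}},p_\ell)$ with $p_\ell$ of tree width at most $k$, giving (i); otherwise $q''$ is a genuinely larger disjunct and (ii) holds.

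For $2 \Rightarrow 1$, let $I$ be the set of disjuncts satisfying (i) and, for each $q' \in I$, fix a witness $p_{q'}$ of tree width at most $k$ with $(\Omc,\Sbf_{\mn{full}},q') \equiv (\Omc,\Sbf_{\mn{full}},p_{q'})$; set $p=\bigvee_{q' \in I} p_{q'}$. The inclusion $(\Omc,\Sbf_{\mn{full}},p) \subseteq Q$ is immediate, since each $p_{q'}$ is equivalent to a disjunct of $q$. For the converse it suffices, by the characterization, to show that every consistent disjunct $q'$ of $q$ is contained in $(\Omc,\Sbf_{\mn{full}},q'')$ for some $q'' \in I$. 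I would produce such a $q''$ by following a domination chain supplied by (ii): starting from $q'$, whenever the current disjunct is not in $I$, (ii) hands back a larger disjunct, and I repeat. The main obstacle, and the only delicate point, is guaranteeing that this chain terminates inside $I$ rather than cycling among mutually equivalent disjuncts. I would handle it by first identifying equivalent disjuncts (which alters neither $Q$ nor membership in $I$ and turns the containment preorder on the remaining disjuncts into a partial order), so that each step furnished by (ii) becomes a strict ascent in a finite partial order. The chain then reaches a $\subseteq$-maximal disjunct, which cannot satisfy (ii) and hence, by condition~2, must satisfy (i) and lie in $I$. This yields $q'' \in I$ with $(\Omc,\Sbf_{\mn{full}},q') \subseteq (\Omc,\Sbf_{\mn{full}},q'')$, whence $p_{q''} \rightarrow \mn{ch}_\Omc(\Dmc_{q'})$ and $\Dmc_{q'} \models (\Omc,\Sbf_{\mn{full}},p)$; combined with the first inclusion this gives $Q \equiv (\Omc,\Sbf_{\mn{full}},p)$, establishing condition~1.
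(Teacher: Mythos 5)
Your `$1 \Rightarrow 2$' direction is essentially the paper's own proof: push a disjunct $q'$ of $q$ through the witness UBCQ$_k$ $\hat q$ to obtain $(\Omc,\Sbf_{\mn{full}},q') \subseteq (\Omc,\Sbf_{\mn{full}},p_\ell) \subseteq (\Omc,\Sbf_{\mn{full}},q'')$ with $p_\ell$ a disjunct of $\hat q$ and $q''$ a disjunct of $q$, then case-split; your split on whether $q'' \equiv q'$ is, if anything, cleaner than the paper's syntactic split on whether $q'' = q'$. The paper dismisses `$2 \Rightarrow 1$' as clear, so your effort goes precisely into the part the paper does not write down, and you correctly isolate the one delicate point there: the domination chain supplied by clause (ii) could cycle among mutually equivalent disjuncts.

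Your fix for that point, however, does not close the gap. After you identify equivalent disjuncts, condition~2 need not survive: if a disjunct $q'$ outside $I$ was dominated only by disjuncts equivalent to $q'$ itself, then in the reduced query it has no dominator left, clause (ii) now fails for it, and your ``strict ascent'' stalls at $q'$ without ever reaching $I$. This is not a repairable slip in your argument but a defect of the statement as written. Read literally, (ii) is satisfied by $q''=q'$ and condition~2 is vacuous; read as the paper's own case analysis suggests (some \emph{other} disjunct contains $q'$), the direction `$2 \Rightarrow 1$' is false: take $q = q_1 \vee q_2$ with $q_1,q_2$ syntactically distinct but equivalent as OMQs under \Omc and neither BCQ$_k$-equivalent (e.g.\ $\Omc=\emptyset$, $q_1$ a $(k+2)$-clique and $q_2$ that clique with a redundant retractable atom). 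Each disjunct is then contained in the other, so condition~2 holds, yet $Q \equiv (\Omc,\Sbf_{\mn{full}},q_1)$ is not UBCQ$_k$-equivalent by your own first direction. The lemma only goes through if (ii) is read as \emph{proper} containment $(\Omc,\Sbf_{\mn{full}},q') \subsetneq (\Omc,\Sbf_{\mn{full}},q'')$, or if the disjuncts are assumed pairwise incomparable as the paper arranges elsewhere (cf.\ the proof of Theorem~\ref{thm:npcompl}). Under the proper-containment reading your chain argument works verbatim and the normalization detour is unnecessary: a chain of strict containments over finitely many disjuncts cannot cycle, so it ends at a disjunct with no strict dominator, which by condition~2 must lie in $I$. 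I would restate (ii) accordingly and keep the rest of your `$2 \Rightarrow 1$' argument, which is otherwise sound (including the consistency bookkeeping and the use of the chase-based containment criterion).
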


\begin{proof}
$(1) \Rightarrow (2)$. By hypothesis, there exists a UBCQ$_k$ $\hat{q}$ such that $(\Omc,\Sbf_{\mn{full}},q) \equiv (\Omc,\Sbf_{\mn{full}},\hat{q})$. Consider an arbitrary BCQ $q'$ in $q$. Since $(\Omc,\Sbf_{\mn{full}},q) \subseteq (\Omc,\Sbf_{\mn{full}},\hat{q})$, we get that there exists $p$ in $\hat{q}$ such that $(\Omc,\Sbf_{\mn{full}},q') \subseteq (\Omc,\Sbf_{\mn{full}},p)$. But since $(\Omc,\Sbf_{\mn{full}},\hat{q}) \subseteq (\Omc,\Sbf_{\mn{full}},q)$, we get that there exists $p'$ in $q$ such that $(\Omc,\Sbf_{\mn{full}},p) \subseteq (\Omc,\Sbf_{\mn{full}},p')$. Therefore,
\[
(\Omc,\Sbf_{\mn{full}},q')\ \subseteq\ (\Omc,\Sbf_{\mn{full}},p)\ \subseteq\ (\Omc,\Sbf_{\mn{full}},p'),
\]
where $p$ belongs to $\hat{q}$ and $p'$ belongs to $q$. We consider two cases: $q' = p'$, which implies that $(\Omc,\Sbf_{\mn{full}},q') \equiv (\Omc,\Sbf_{\mn{full}},p)$, and condition (i) holds since $p$ is from BCQ$_k$, and $q' \neq p'$ but since $(\Omc,\Sbf_{\mn{full}},q') \subseteq (\Omc,\Sbf_{\mn{full}},p')$ condition (ii) holds.

$(2) \Rightarrow (1)$. This direction is clear.
\end{proof}

It is easy to verify that Theorem~\ref{pro:dl-lite-f-ucq-k-equiv} follows from Theorem~\ref{pro:dl-lite-f-ucq-k-equiv-bcqs} and Lemma~\ref{lem:bcqs-to-ubcqs}.

\lemrewritingucqequiv*

For showing the above lemma, we first need to establish the following technical lemma:

\begin{lemma}\label{lem:rewriting-eval}
  Let $Q = (\Omc,\Sbf_{\mn{full}},q) \in
  (\text{DL-Lite}^\Fmc,\text{UBCQ})$. Then there is a UBCQ
  $\mn{rew}(Q)$ such that for every database $\Dmc$ that
  satisfies~$\Omc^{=}$, $\Dmc \models Q$ iff $\Dmc \models
  \mathsf{rew}(Q)$. Furthermore, if $q \in \text{UBCQ}_k$ for some $k
  \geq 1$, then $\mathsf{rew}(Q) \in \text{UBCQ}_k$.
\end{lemma}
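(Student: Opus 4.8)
The plan is to obtain $\mn{rew}(Q)$ from the classical perfect rewriting of $\text{DL-Lite}^\Fmc$ unions of conjunctive queries, and then to argue separately about correctness over databases satisfying $\Omc^=$ and about preservation of tree width. Write $\Omc_p$ for the set of positive inclusions of $\Omc$ (the statements $B_1 \sqsubseteq B_2$), so that $\Omc^\sqsubseteq$ consists of $\Omc_p$ together with the negative constraints $B_1 \sqcap \cdots \sqcap B_n \sqsubseteq \bot$ and $r_1 \sqcap \cdots \sqcap r_n \sqsubseteq \bot$, and $\Omc = \Omc^\sqsubseteq \cup \Omc^=$. Since $\Omc_p$ is expressible in $\mathcal{ELI}$, its chase attaches tree-shaped gadgets below each database constant, and by Lemma~\ref{lem:chaseprop}(4) the tree width of $\mn{ch}_{\Omc_p}(\Dmc)$ equals that of $\Dmc$. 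First I would invoke the FO-rewritability of $\text{DL-Lite}^\Fmc$ \cite{CGLLR07}: from $q$ and $\Omc_p$ one computes in single exponential time a UBCQ $\rho(q)$ such that, for every database $\Dmc$ consistent with $\Omc$, $\Dmc \models Q$ iff $\Dmc \models \rho(q)$ under plain UBCQ evaluation.

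Two points then have to be filled in. First, functionality must be shown irrelevant to matching over databases satisfying $\Omc^=$. Here I would use the separation property of $\text{DL-Lite}^\Fmc$ \cite{CGLLR07}: as there are no role inclusions, a functional role is never specialised, so the successors that the $\Omc_p$-chase creates are always mergeable rather than clashing; consequently a database with $\Dmc \models \Omc^=$ is consistent with $\Omc$ exactly when the $\Omc_p$-closure of $\Dmc$ violates no negative constraint, and for such consistent $\Dmc$ the merges forced by functionality only identify anonymous elements whose basic-concept labels are already realised at the merge target. Hence no new match is created, i.e.\ $\Dmc \models Q$ iff $\mn{ch}_{\Omc_p}(\Dmc) \models q$ iff $\Dmc \models \rho(q)$.

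Second, the case where $\Dmc \models \Omc^=$ but $\Dmc$ is inconsistent with $\Omc$ must be covered, since there $\Dmc \models Q$ holds vacuously. By the previous paragraph such inconsistency is caused by a negative constraint, so I would add to $\mn{rew}(Q)$, for each constraint $B_1 \sqcap \cdots \sqcap B_n \sqsubseteq \bot$ (resp.\ $r_1 \sqcap \cdots \sqcap r_n \sqsubseteq \bot$), the $\Omc_p$-rewritings of the Boolean violation query $\exists x\,(\beta_1(x) \wedge \cdots \wedge \beta_n(x))$ (resp.\ $\exists x \exists y\,(r_1(x,y) \wedge \cdots \wedge r_n(x,y))$), where $\beta_i$ expresses membership in the basic concept $B_i$. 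Each violation query is a star (resp.\ a single edge), so it and all its $\Omc_p$-rewritings have tree width $1$, and on an inconsistent $\Omc^=$-database one of them fires. Defining $\mn{rew}(Q)$ as $\rho(q)$ together with these disjuncts then yields $\Dmc \models Q$ iff $\Dmc \models \mn{rew}(Q)$ for all $\Dmc \models \Omc^=$.

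The main obstacle is the tree-width claim: if $q \in \text{UBCQ}_k$ then $\mn{rew}(Q) \in \text{UBCQ}_k$. The violation disjuncts are harmless as they have tree width $1 \le k$, so everything reduces to showing that each conjunctive query in the minimal rewriting $\rho(q)$ has tree width at most that of the disjunct $p$ of $q$ it originates from. I would prove this by tracking tree width through the rewriting steps. Reformulation steps only replace an atom by another on the same variable(s), either relabelling it or attaching/removing a degree-one gadget variable, and hence leave tree width unchanged. The delicate steps are the reductions (variable identifications): a single identification can raise tree width by one, so the argument must use that a productive reduction identifies two variables $x,x'$ sharing a private neighbour $y$ that is about to be eliminated, which makes the combined reduce-and-reformulate step a contraction of the degree-two path $x\text{--}y\text{--}x'$, i.e.\ a graph-minor operation and therefore tree-width non-increasing (the gadget structure being exactly the chase trees governed by Lemma~\ref{lem:chaseprop}(4)); reductions that do not free a variable produce conjunctive queries subsumed by another disjunct and are discarded by UBCQ-minimisation. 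Making this bookkeeping precise, and verifying that minimisation removes every tree-width-increasing by-product (as in the familiar cycle-versus-path example, where a cyclic rewriting is always homomorphically dominated by a tree-like one), is where the real work lies; this is closely related to the tree-width behaviour of rewritings studied in \cite{PabloAndreasPods,Figueira16}.
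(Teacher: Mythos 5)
Your route is genuinely different from the paper's: you propose to take the classical PerfectRef-style rewriting of \cite{CGLLR07}, minimise it as a UCQ, and then argue post hoc that the surviving disjuncts have tree width at most $k$. The paper instead \emph{defines} $\mn{rew}(Q)$ by enumerating contractions of each disjunct, pruning off $\Omc$-generatable trees, and --- this is the key move --- discarding outright any candidate whose Gaifman graph is not a minor of the original (Step~2 of the construction). The tree-width bound then follows for free from minor-closedness of tree width $\le k$, and all the work is pushed into the completeness direction, where the contraction and the pruned trees are read off an actual homomorphism $h$ into $\mn{ch}_\Omc(\Dmc)$; there, two variables are identified only when they map to the same database constant and are joined in $q$ by a path whose interior maps into the anonymous (tree-shaped) part of the chase, so the result is obtained by contracting those paths and is automatically a minor. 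Your treatment of functionality via separation and of inconsistency via violation disjuncts is fine (the latter is arguably more explicit than the paper, whose ``only if'' direction tacitly assumes consistency), and those disjuncts indeed have tree width~1.

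The gap is exactly where you say the real work lies, and your sketch for it does not close it. The dichotomy ``a productive reduction contracts a degree-two path $x$--$y$--$x'$; an unproductive one is subsumed and discarded'' is not exhaustive: PerfectRef's reduce step also unifies atoms $r(x,y)$ and $r(x',y')$ with $y\neq y'$, identifying $x$ with $x'$ even though they need not be joined through the part of the query that is subsequently eliminated, and such an identification is not a minor operation --- e.g.\ starting from a path $x_1\,e\,x_2\,e\cdots e\,x_n$ with pendant atoms $r(x_1,u)$ and $r(x_n,u')$ and $\Omc=\{A\sqsubseteq\exists r\}$, reducing the two $r$-atoms yields a cycle of tree width~2 from a query of tree width~1, and the reduction \emph{is} productive in the sense that it frees $u=u'$ for reformulation. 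In this and similar cases the offending disjunct happens to be homomorphically dominated by the disjunct obtained by reformulating the two atoms independently, but that is precisely the statement you would have to prove in general: every reduce step whose identified variables are not linked through the eliminated subquery yields a disjunct dominated by one in which the identification is not performed. That claim is plausible but nontrivial (it must also cover unifications of atoms that already share variables, cascaded reductions that fold whole subtrees together, and the interaction with the answer-variable-free $\exists$-components), and nothing in your proposal establishes it. Without it, the ``Furthermore'' part of the lemma is unproven, so as it stands the proposal has a genuine gap; the paper's minor-filter construction is exactly the device that makes this issue disappear.
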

\begin{proof}
  We start with introducing some useful notions. Let $q$ be a BCQ. A
  BCQ $p \subseteq q$ is a \emph{tree in $q$ with root $x$} if $x\in
  \mn{var}(q)$ is an articulation point that separates $q$ into
  components $q',p$ with $p$ of the form $r(x,y) \wedge \varphi(\ybf)$
  where $\varphi(\ybf)$ is a tree with root $y$. Let $\mn{at}$ be an atom of the form $A(x)$, $S(x,z)$, or $S(z,x)$ with $z$ a fresh variable. For an ontology \Omc, we say that \mn{at} \emph{\Omc-generates} $p$ if $\{ \mn{at} \},\Omc \models p(x)$ where the variables in
  $\mn{at}$, including $x$, are viewed as constants in the database
  $\{ \mn{at} \}$. Moreover, $p$ is \emph{\Omc-generatable} if there
  is an atom that \Omc-generates $p$. Further, $\mn{at}$
  \emph{detachedly \Omc-generates} a BCQ $p$ if $\{ \mn{at} \},\Omc
  \models p$ and $p$ is \emph{detached \Omc-generatable} if there is
  an atom that detachedly \Omc-generates $p$.

  Now let $Q = (\Omc,\Sbf_{\mn{full}},q) \in
  (\text{DL-Lite}^\Fmc,\text{UBCQ})$ as in the lemma. We define
  $\mn{rew}(Q)$ to be the disjunction of all BCQs that can be obtained
  in the following way:
  \begin{enumerate}

  \item choose a BCQ $p$ in $q$, a contraction $p'$ of $p$, and a set
    $S$ of trees in $p'$ 
    and are \Omc-generatable, and remove those
    trees from $p'$;

  \item if for the resulting BCQ $p''$, $G_{p''}$ is not a minor of
    $G_p$, then stop;

  \item otherwise, for each $p_t \in S$ choose an atom $\mn{at}$ that
    \Omc-generates $p_t$ and add $\mn{at}$;

  \item choose a set of maximal connected components of the
    resulting BCQ that are detachedly \Omc-generatable, for each
    such component choose an atom $\mn{at}$ that detachedly
    \Omc-generates it, and add $\mn{at}$.

  \end{enumerate}
  With this definition of $\mn{rew}(Q)$, the ``Furthermore'' part of
  the lemma is trivially satisfied since the class of structures of
  tree width $k$ is minor closed \cite{GroheBook} and the additional
  modifications in Steps~3 and~4 clearly cannot increase tree
  width. It thus remains to show the following.
  \\[2mm]
  {\bf Claim.} For every database $\Dmc$ that satisfies~$\Omc^{=}$, $\Dmc
  \models Q$ iff $\Dmc \models \mathsf{rew}(Q)$.
  \\[2mm]
  The `if' direction is easy to show using the construction of
  $\mathsf{rew}(Q)$, we omit details. For the `only if' direction, let
  \Dmc be a database that satisfies~$\Omc^{=}$ and such that $\Dmc
  \models Q$. Then there is a CQ $p$ in $q$ and a homomorphism $h$
  from $p$ to $\mn{ch}_{\Omc}(\Dmc)$. We show how to use $h$ to guide
  the choices in Steps~1 to~4 above so as to obtain a BCQ $\widehat q$
  in $\mn{rew}(Q)$ such that $h$ can be extended to a homomorphism
  from $\widehat q$ to $\mn{ch}_{\Omc}(\Dmc)$.

  Let $V_2$ denote the set of pairs $(x_1,x_2) \in \mn{var}(p)^2$ such
  that $h(x_1) = h(x_2) \in \Ind(\Dmc)$ and, additionally, $p$
  contains atoms
  %
  \begin{equation}
  r_1(y_1, y_2), \dots, r_{n-1}(y_{n-1}, y_n)
\tag{$\dagger$}
  \end{equation}
  %
  such that $y_0=x_1$, $y_n=x_1$, and $h(y_i) \notin \Ind(\Dmc)$ for
  \mbox{$1 < i < n$}. In Step~1, we choose as $p'$ the contraction of
  $p$ that is obtained by identifying $x_1$ and $x_2$ whenever
  $(x_1,x_2) \in V_2$. We further choose as $S$ be the set of all
  trees $p_t(x) = r(x,y) \wedge \varphi(\ybf)$ in $p'$ such that $h(x)
  \in \Ind(\Dmc)$ and $h(y) \notin \Ind(\Dmc)$. We must then have
  $h(z) \notin \Ind(\Dmc)$ for all $z \in \ybf$ as otherwise $x$ would
  have been identified with some variable in \ybf.  By the
  construction of $\mn{ch}_\Omc(\Dmc)$, there must thus be a fact $F$
  in \Dmc such that $\{ F \},\Dmc \models p_t(h(x))$. As a consequence
  of this and the semantics of DL-Lite, we find an atom $\mn{at}$ of
  the form $A(x)$, $S(x,z)$, or $S(z,x)$, with $z$ a fresh variable,
  such that $\mn{at}$ \Omc-generates $p_t$ and $h$ extends to a
  homomorphism from $\{\mn{at}\}$ to \Dmc. By the former, $p_t$ is
  \Omc-generatable.

  Let $p''$ be the result of removing all trees in $S$ from $p'$.  We
  next argue that $p''$ is a minor of $p$ and thus we do not stop in
  Step~2. We work with the definition of minors from \cite{GroheBook},
  that is, an undirected graph $G_1$ is a minor of an undirected graph
  $G_2$ if $G_1$ can be obtained from a (not necessarily induced)
  subgraph of $G_2$ by contracting edges. We start with the subquery
  $p^*$ of $p$ that is the result of
  \begin{enumerate}

  \item taking the restriction of $p$ to all variables $y$ such that
    $h(y) \in \Ind(\Dmc)$ or $y$ is in a maximal connected component
    of $p$ all of whose variables are mapped by $h$ to outside of
    $\Ind(\Dmc)$ and then

  \item adding back a path of the form $(\dagger)$ for all
    $(x_1,x_2) \in V_2$.

  \end{enumerate}
  Clearly, $G_{p^*}$ is a subgraph of $G_{p}$. We can obtain $p''$
  from $p^*$ by contracting all edges in $G_{p^*}$ that are induced by
  the paths that have been added back. Thus $p''$ is a minor of $p$.

  The choices in Steps~3 and~4 can also be guided by $h$. We have
  already argued that all $p_t \in S$ are \Omc-generatable, and that
  we can find replacing atoms $\mn{at}$ that are `compatible with
  $h$', which we choose in Step~3. For Step~4, we choose those maximal
  connected components all of whose variables $h$ maps to outside of
  $\Ind(\Dmc)$. By construction of $\mn{ch}_\Omc(\Dmc)$, for every
  such component there must be an atom $\mn{at}$ of one of the three
  relevant forms that detachedly \Omc-generates it and such that $h$
  extends to a homomorphism from $\{\mn{at}\}$ to \Dmc.

  Let $\widehat p$ be the CQ generated by guiding Steps~1 to~4 with
  $h$ as described above. By construction, we clearly find an
  extension $h'$ of $h$ to the fresh variables in $\widehat p$ such
  that $h'$ is a homomorphism from $\widehat p$ to \Dmc. Thus,
  $\Dmc \models \mn{rew}(Q)$.
\end{proof}

We can now give the proof of Lemma~\ref{lem:rewriting-ucq-equiv}.

\begin{proof}
\underline{$(1) \Rightarrow (2)$.} By hypothesis, there exists an OMQ $Q' = (\Omc,\Sbf_{\mn{full}},q')$ from $(\text{DL-Lite}^\Fmc,\text{UBCQ}_k)$ such that $Q \equiv Q'$. By Lemma~\ref{lem:rewriting-eval}, we can conclude that
\[
(\Omc^=,\Sbf_{\mn{full}},\mathsf{rew}(Q))\ \equiv\ (\Omc^=,\Sbf_{\mn{full}},\mathsf{rew}(Q')).
\]
By Lemma~\ref{lem:rewriting-eval}, $\mathsf{rew}(Q') \in \text{UBCQ}_k$, and (2) follows.

\medskip

\underline{$(2) \Rightarrow (1)$.} By hypothesis, there is $q' \in \text{UBCQ}_k$ such that 
\[
(\Omc^=,\Sbf_{\mn{full}},\mathsf{rew}(Q))\ \equiv\ (\Omc^=,\Sbf_{\mn{full}},q').
\]
By Lemma~\ref{lem:rewriting-eval}, $Q \equiv (\Omc^=,\Sbf_{\mn{full}},\mathsf{rew}(Q))$. It is not difficult to show that $(\Omc^=,\Sbf_{\mn{full}},\mathsf{rew}(Q)) \equiv (\Omc,\Sbf_{\mn{full}},\mathsf{rew}(Q))$, which implies that $Q \equiv (\Omc,\Sbf_{\mn{full}},\mathsf{rew}(Q))$.
It is also easy to verify that $(\Omc,\Sbf_{\mn{full}},\mathsf{rew}(Q)) \equiv (\Omc,\Sbf_{\mn{full}},q')$. Therefore, $Q \equiv (\Omc,\Sbf_{\mn{full}},q')$, and (1) follows since $q' \in \text{UBCQ}_k$.
\end{proof}

\thedllitefacyclic*


Before giving the proof of the above result, we first present the reduction of UBCQ$_1$-equivalence in $(\text{DL-Lite}^\Fmc,\text{UBCQ})$ to UBCQ$_1$-equivalence in $(\text{DL-Lite},\text{UBCQ})$ announced in the paper, where $\text{DL-Lite}$ denotes the DL obtained from $\text{DL-Lite}^\Fmc$ after dropping the functionality assertions.

%
%
For a UBCQ $q$, we use $\mn{id}_F(q)$ to denote the result of
contracting each BCQ in $q$ in a minimal way such that the
functionality assertions in $F$ are respected. It is important to
observe that if $q$ is of tree width~1, then so is $\mn{id}_F(q)$.
This is not the case for any higher tree width.

\begin{lemma}
\label{lem:dlliteFindep}
Let $Q=(\Omc,\Sbf_{\mn{full}},q)$ be an OMQ from
$(\text{DL-Lite}^\Fmc,\text{UBCQ})$. On databases that are consistent
with~$\Omc^=$, $Q$ is equivalent to any of the following:
$(\Omc^\sqsubseteq,\Sbf_{\mn{full}},q)$,
$(\Omc,\Sbf_{\mn{full}},\mn{id}_{\Omc^=}(q))$,
$(\Omc^\sqsubseteq,\Sbf_{\mn{full}},\mn{id}_{\Omc^=}(q))$.
\end{lemma}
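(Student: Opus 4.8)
The plan is to isolate two independent facts and then combine them. Fix an $\Omc^=$\nobreakdash-consistent database $\Dmc$ and write $M_1=\mn{ch}_\Omc(\Dmc)$ and $M_2=\mn{ch}_{\Omc^\sqsubseteq}(\Dmc)$. \textbf{Fact~(I):} on $\Omc^=$\nobreakdash-consistent databases, $(\Omc,\Sbf_{\mn{full}},p)\equiv(\Omc^\sqsubseteq,\Sbf_{\mn{full}},p)$ for every UBCQ $p$. \textbf{Fact~(II):} on such databases, $(\Omc,\Sbf_{\mn{full}},q)\equiv(\Omc,\Sbf_{\mn{full}},\mn{id}_{\Omc^=}(q))$. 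The three equivalences of the lemma then follow at once: the first is Fact~(I) with $p=q$; the second is Fact~(II); and the third is obtained by applying Fact~(II) and then Fact~(I) to the query $\mn{id}_{\Omc^=}(q)$, using that equivalence over a fixed class of databases is transitive.

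I would dispatch Fact~(II) first, since it is routine. As $\mn{id}_{\Omc^=}(q)$ is a contraction of $q$, composing a homomorphism $q\to\mn{id}_{\Omc^=}(q)$ with any homomorphism $\mn{id}_{\Omc^=}(q)\to M_1$ gives $(\Omc,\Sbf_{\mn{full}},\mn{id}_{\Omc^=}(q))\subseteq(\Omc,\Sbf_{\mn{full}},q)$. For the converse, $M_1$ is a model of $\Omc$ and hence satisfies every $(\mathsf{funct}\ r)\in\Omc^=$; thus any homomorphism $h\colon q\to M_1$ is forced to identify $y_1,y_2$ whenever $r(x,y_1),r(x,y_2)$ occur in a disjunct with $(\mathsf{funct}\ r)\in\Omc^=$, so $h$ factors through the quotient onto $\mn{id}_{\Omc^=}(q)$, yielding $\mn{id}_{\Omc^=}(q)\to M_1$. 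Both OMQs use the same ontology $\Omc$, so their consistency status coincides and the vacuous (inconsistent) case is immediate.

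The substance is Fact~(I), where I would argue that over $\Omc^=$\nobreakdash-consistent databases functionality is irrelevant to query answering. Two things must be shown: (a) $\Dmc$ is consistent with $\Omc$ iff it is consistent with $\Omc^\sqsubseteq$; and (b) when consistent, $M_1$ and $M_2$ are homomorphically equivalent, hence agree on all UBCQs. The direction $M_2\to M_1$ is free from Point~3 of Lemma~\ref{lem:chaseprop}, since $M_1$ is a model of $\Dmc$ and $\Omc^\sqsubseteq$. Everything else rests on the canonical structural property of DL-Lite: in $M_2$ every anonymous element introduced as an $r$\nobreakdash-successor carries exactly the type $t_r$, the $\Omc^\sqsubseteq$\nobreakdash-closure of $\{\exists r^-.\top\}$, and its subtree is determined by $t_r$. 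Because $\Dmc$ satisfies $\Omc^=$, no two database constants are identified, so a functionality merge only ever glues anonymous $r$\nobreakdash-successors of a common element to one another, or to a single constant $r$\nobreakdash-successor whose type already subsumes $t_r$; consequently every element of $M_1$ realizes the type of some element of $M_2$.

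From this uniform-type property I would build the homomorphism $M_1\to M_2$ by induction along the forest-shaped generation order of $M_1$: let $\nu$ be the identity on $\mn{dom}(\Dmc)$, and for an anonymous $d$ with parent $p$ generated via $r$, map $d$ to an $r$\nobreakdash-successor of $\nu(p)$ of type $t_r$ in $M_2$, which exists because $\nu(p)$ has the same type as $p$ and hence triggers the same existential; maintaining ``$\mathrm{type}(d)=\mathrm{type}(\nu(d))$'' as the inductive invariant shows $\nu$ preserves all atoms. The same type matching yields (a): if $\Dmc$ is $\Omc^\sqsubseteq$\nobreakdash-consistent then no element of $M_2$, hence no element of $M_1$, realizes a disjointness-forbidden combination, so the functionality chase never derives $\bot$ and $\Dmc$ is $\Omc$\nobreakdash-consistent, the reverse implication being trivial. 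I expect the main obstacle to be making this construction fully rigorous, in particular verifying that functionality identifications keep the anonymous part forest-shaped (merges occur only between siblings and never create a second parent) and that cascaded merges deeper in the tree never enrich a type beyond the relevant $t_s$. An alternative, more economical route would invoke Lemma~\ref{lem:rewriting-eval}, observing that $\mn{rew}(\cdot)$ depends on $\Omc$ only through $\Omc$\nobreakdash-generation from single atoms and that single-atom databases satisfy $\Omc^=$, whence $\mn{rew}(\Omc,\Sbf_{\mn{full}},q)=\mn{rew}(\Omc^\sqsubseteq,\Sbf_{\mn{full}},q)$; but this merely relocates the structural argument to the singleton case and still needs (a), so I would keep the direct argument as primary.
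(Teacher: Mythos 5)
Your proposal is correct. Note first that the paper states Lemma~\ref{lem:dlliteFindep} without proof, so there is no official argument to compare against; your decomposition into Fact~(I) (functionality assertions can be dropped from the ontology) and Fact~(II) (the query can be replaced by its $\Omc^=$-contraction) is the natural one, the two trivial containment directions are handled correctly, and the factoring argument for Fact~(II) is exactly right. The only substantive obligations are the ones you yourself flag in Fact~(I), and they do go through, for reasons specific to the syntax of DL-Lite$^\Fmc$: there are no role inclusions and every existential on a right-hand side is unqualified, so (i) the chase never creates a role edge between two database constants, and hence --- since $\Dmc$ satisfies $\Omc^=$ --- every functionality merge involves at least one anonymous element; and (ii) every anonymous $r$-successor carries exactly the type $t_r$ (the $\Omc^\sqsubseteq$-closure of $\{\exists r^- . \top\}$), while any element that already has an incoming $r$-edge has a type containing $t_r$, so a merge never enlarges any element's type and the anonymous part stays forest-shaped (merges occur only among successors of a common parent reached by the same role). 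This yields both the homomorphism $M_1 \rightarrow M_2$ and the coincidence of consistency with $\Omc$ and with $\Omc^\sqsubseteq$ on $\Omc^=$-consistent databases. One presentational caveat: the chase of Lemma~\ref{lem:chaseprop} is defined only for $\ELHIbot$, so writing $M_1 = \mn{ch}_\Omc(\Dmc)$ for a DL-Lite$^\Fmc$ ontology implicitly invokes a functionality-aware chase together with its universality; either state this explicitly, or sidestep it by phrasing the nontrivial direction of Fact~(I) as ``there is a model of $\Dmc$ and $\Omc$ that maps homomorphically into $M_2$'', which you can obtain from $M_2$ directly by discarding the redundant anonymous successors. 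Your remark that the alternative route through Lemma~\ref{lem:rewriting-eval} merely relocates the same structural argument is also accurate.
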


\begin{lemma}
\label{lem:dlliteFOrewr}
Let $Q=(\Omc,\Sbf_{\mn{full}},q)$ be an OMQ from
$(\text{DL-Lite}^\Fmc,\text{UBCQ})$. There exists a UBCQ $q'$ such
that
\begin{enumerate}

\item $Q$ is equivalent to $(\Omc^=,\Sbf_{\mn{full}},q')$;

\item $(\emptyset,\Sbf_{\mn{full}},q')$ is equivalent to
  $(\Omc^\sqsubseteq,\Sbf_{\mn{full}},q')$;

\item If $q \in \text{UBCQ}_k$ for some $k \geq 1$, then $q' \in \text{UBCQ}_k$.
\end{enumerate}
\end{lemma}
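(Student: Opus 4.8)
The plan is to take $q'$ to be the rewriting $\mathsf{rew}(Q)$ supplied by Lemma~\ref{lem:rewriting-eval}, augmented by a fixed UBCQ $\Phi$ whose plain evaluation is true on exactly those databases that are inconsistent with $\Omc^\sqsubseteq$; such a $\Phi$ exists by the standard first-order rewritability of consistency in DL-Lite~\cite{CGLLR07}, each of its disjuncts being a tree-shaped witness of a violated negative inclusion, so that $\Phi \in \text{UBCQ}_1 \subseteq \text{UBCQ}_k$ for every $k \geq 1$. Together with the ``Furthermore'' part of Lemma~\ref{lem:rewriting-eval} (which gives $\mathsf{rew}(Q) \in \text{UBCQ}_k$ whenever $q \in \text{UBCQ}_k$), this makes Property~3 immediate.

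For Property~1 I would prove $Q \equiv (\Omc^=,\Sbf_{\mn{full}},q')$ over all databases by a case distinction on whether $\Dmc$ satisfies $\Omc^=$. If $\Dmc$ violates a functionality assertion then $\Dmc$ is inconsistent with $\Omc^=$, hence with $\Omc \supseteq \Omc^=$, and both OMQs hold vacuously. If $\Dmc$ satisfies $\Omc^=$ then $\mn{ch}_{\Omc^=}(\Dmc)=\Dmc$, since a functionality-only ontology generates no fresh elements and, on a functional database, merges nothing; thus $\Dmc \models (\Omc^=,\Sbf_{\mn{full}},q')$ iff $\Dmc \models q'$ under plain evaluation. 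It then remains to match $\Dmc \models q'$ with $\Dmc \models Q$, splitting once more on consistency with $\Omc^\sqsubseteq$: on consistent $\Dmc$ the disjunct $\Phi$ is false and Lemma~\ref{lem:rewriting-eval} gives $\Dmc \models \mathsf{rew}(Q)$ iff $\Dmc \models Q$, while on inconsistent $\Dmc$ both $\Phi$ and $Q$ are true.

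Property~2 is the crux. The containment $(\emptyset,\Sbf_{\mn{full}},q') \subseteq (\Omc^\sqsubseteq,\Sbf_{\mn{full}},q')$ is routine (monotonicity of UCQs together with vacuous truth on inconsistent databases). For the reverse, suppose $\Dmc \models (\Omc^\sqsubseteq,\Sbf_{\mn{full}},q')$; I must show $\Dmc \models q'$ under plain evaluation. If $\Dmc$ is inconsistent with $\Omc^\sqsubseteq$ this is immediate from $\Phi$. Otherwise $\Dmc$ is consistent and $\mn{ch}_{\Omc^\sqsubseteq}(\Dmc) \models \mathsf{rew}(Q)$, and I would push a satisfying homomorphism back to $\Dmc$: the anonymous part of $\mn{ch}_{\Omc^\sqsubseteq}(\Dmc)$ is a forest of trees hanging off database elements, the maximal subtrees of the matching disjunct that land in this forest are $\Omc^\sqsubseteq$-generatable and hence $\Omc$-generatable (merging in the $\Omc$-chase only creates further homomorphisms), and collapsing each such subtree to a generating atom yields a disjunct that already maps into $\Dmc$.

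The main obstacle is exactly this last step: I must know that collapsing these subtrees does not leave the disjunction $\mathsf{rew}(Q)$, i.e. that the rewriting of Lemma~\ref{lem:rewriting-eval} is saturated (idempotent) under backward application of the positive inclusions of $\Omc$. I would establish this from the exhaustive nature of its construction, which collapses maximal $\Omc$-generatable trees in a single pass using full $\Omc$-entailment, so that any further collapse of an $\Omc$-generatable subtree of one of its disjuncts produces a disjunct already generated from the same source CQ. Care is needed to check that the minor condition in Step~2 of that construction survives the additional collapses and that the generating atoms chosen for the pushed-back trees can be taken compatible with the homomorphism into $\Dmc$; I expect these to be the only nontrivial bookkeeping points once the idempotence of the construction is in place.
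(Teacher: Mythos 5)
Your choices for Points~1 and~3 are workable, but Point~2 --- which you correctly identify as the crux --- rests on a claim you do not prove and that is genuinely nontrivial: that $\mathsf{rew}(Q)$ is saturated under further collapsing of $\Omc$-generatable trees, so that pushing a homomorphism back out of the anonymous part of $\mn{ch}_{\Omc^\sqsubseteq}(\Dmc)$ lands you in (something subsumed by) another disjunct of $\mathsf{rew}(Q)$. The disjuncts of $\mathsf{rew}(Q)$ are produced in a single pass from the disjuncts of the original $q$ (contract, remove a set of $\Omc$-generatable trees subject to a minor condition relative to $q$, reattach generating atoms); a tree of a disjunct of $\mathsf{rew}(Q)$ that your pushed-back homomorphism wants to collapse may contain the generating atoms added in that pass, and whether the composite collapse is again witnessed by a single admissible choice of contraction and tree set for $q$, with the minor condition intact, is exactly the kind of closure property that needs its own induction. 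As it stands this is a gap, not bookkeeping.

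The paper's proof shows the saturation issue can be avoided entirely, by two changes. First, it applies Lemma~\ref{lem:rewriting-eval} not to $Q$ but to $Q^\sqsubseteq=(\Omc^\sqsubseteq,\Sbf_{\mn{full}},q)$ and sets $q'=\mn{rew}(Q^\sqsubseteq)$; since $(\Omc^\sqsubseteq)^{=}=\emptyset$, the precondition ``$\Dmc$ satisfies $\Omc^=$'' is vacuous and $q'$ is a plain rewriting of $Q^\sqsubseteq$ valid on \emph{all} databases (so no extra inconsistency disjunct $\Phi$ is needed, and Point~1 follows from Lemma~\ref{lem:dlliteFindep} essentially as in your argument). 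Second, for Point~2 it never modifies the query: given a homomorphism $h$ from a disjunct of $q'$ into $\mn{ch}_{\Omc^\sqsubseteq}(\Dmc)$, restrict the chase to the range of $h$ to obtain a finite database $\Dmc'$ with $\Dmc'\models q'$ under plain evaluation; the unconditional rewriting equivalence then yields $\Dmc'\models Q^\sqsubseteq$, and since $\Dmc'\subseteq\mn{ch}_{\Omc^\sqsubseteq}(\Dmc)$ one transfers back by homomorphism preservation to get $\Dmc\models Q^\sqsubseteq$ and hence $\Dmc\models q'$. Note that this ``bounce off the image'' trick is unavailable with your choice $q'=\mathsf{rew}(Q)$: the image database $\Dmc'$ sits inside the $\Omc^\sqsubseteq$-chase and may violate the functionality assertions in $\Omc^=$ (the chase freely adds fresh $r$-successors next to existing ones), so Lemma~\ref{lem:rewriting-eval} gives you no handle on $\Dmc'$. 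Your choice of rewriting target is what forces you into the saturation argument; switching to $\mn{rew}(Q^\sqsubseteq)$ dissolves the obstacle.
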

\begin{proof}
  Let $Q^\sqsubseteq = (\Omc^\sqsubseteq,\Sbf_{\mn{full}},q)$ and assume that $q' = \mn{rew}(Q^\sqsubseteq)$ is the UBCQ provided by Lemma~\ref{lem:rewriting-eval}.

  Then Point~1 is satisfied. Indeed, if
  $\Dmc$ is inconsistent with $\Omc^=$, then $\Dmc \models Q$ and
  $\Dmc \models (\Omc^=,\Sbf_{\mn{full}},q')$. Otherwise, $\Dmc
  \models Q$ iff $\Dmc \models Q^\sqsubseteq$ (by
  Lemma~\ref{lem:dlliteFindep}) iff $\Dmc \models q'$ (by
  Lemma~\ref{lem:rewriting-eval}).

  For Point~2, it suffices to show that
  $(\Omc^\sqsubseteq,\Sbf_{\mn{full}},q') \subseteq
  (\emptyset,\Sbf_{\mn{full}},q')$. Assume that $\Dmc \models
  (\Omc^\sqsubseteq,\Sbf_{\mn{full}},q')$. Then there is a
  homomorphism $h$ from $q'$ to $\mn{ch}_{\Omc^\sqsubseteq}(\Dmc)$.
  Let $\Dmc'$ be the restriction of \Dmc to the constants in the range
  of $h$. By construction, $\Dmc' \models q'$. Thus $\Dmc' \models
  Q^\sqsubseteq$, that is, there is a homomorphism $g$ from $q$ to
  $\mn{ch}_{\Omc^\sqsubseteq}(\Dmc')$. Since $\Dmc'$ is a subset of
  $\mn{ch}_{\Omc^\sqsubseteq}(\Dmc)$, there is also a homomorphism
  from $q$ to $\mn{ch}_{\Omc^\sqsubseteq}(\Dmc)$. We obtain that $\Dmc
  \models Q^\sqsubseteq$, and thus $\Dmc \models (\emptyset,\Sbf_{\mn{full}},q')$.

  Point~3 follows from Lemma~\ref{lem:rewriting-eval}.
\end{proof}

Now for the reduction. Let $Q_1=(\Omc,\Sbf_{\mn{full}},q_1)$ be an OMQ from $(\text{DL-Lite}^\Fmc,\text{UBCQ})$, and let $Q_2=(\Omc^\sqsubseteq,\Sbf_{\mn{full}},q_2)$, where  $q_2=\mn{id}_{\Omc^=}(q)$. We can show the following.
%
%
\begin{lemma}
\label{lem:dlliteToELI}
$Q_1$ is UBCQ$_1$-equivalent iff $Q_2$ is UBCQ$_1$-equivalent.
Moreover, if $Q_2$ is equivalent to $Q'_2 =(\Omc^\sqsubseteq,\Sbf_{\mn{full}},q'_2)$
with $q'_2 \in \text{UBCQ}_1$, then $Q_1 \equiv (\Omc,\Sbf_{\mn{full}},q'_2)$.
\end{lemma}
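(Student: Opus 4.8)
The plan is to prove the three assertions separately — the ``moreover'' clause first, then the two directions of the iff — in each case reducing to the functionality-free fragment, where the analogues of Theorem~\ref{thm:bestapprox} and Corollary~\ref{prop:wasaclaim} are available. The two tools that drive the reduction are Lemma~\ref{lem:dlliteFindep}, which links $Q_1$ and $Q_2$ on $\Omc^=$-consistent databases and exploits that $\mn{id}_{\Omc^=}$ preserves tree width~$1$, and the first-order rewritings of Lemmas~\ref{lem:rewriting-eval} and~\ref{lem:dlliteFOrewr}, which eliminate functionality assertions from queries and witnesses. Throughout I use that all queries are Boolean, so $Q(\Dmc)$ is a single truth value, and that $\Omc^=\subseteq\Omc$, hence every database inconsistent with $\Omc^=$ is also inconsistent with $\Omc$.

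I would begin with the purely semantic ``moreover'' clause. Assume $Q_2\equiv(\Omc^\sqsubseteq,\Sbf_{\mn{full}},q'_2)$ with $q'_2\in\text{UBCQ}_1$ and set $R=(\Omc,\Sbf_{\mn{full}},q'_2)\in(\text{DL-Lite}^\Fmc,\text{UBCQ}_1)$. Fix a database $\Dmc$. If $\Dmc$ is inconsistent with $\Omc^=$, then it is inconsistent with $\Omc$, so both $Q_1$ and $R$ return \emph{true} and agree. If $\Dmc$ is consistent with $\Omc^=$, then Lemma~\ref{lem:dlliteFindep} gives $Q_1(\Dmc)=(\Omc^\sqsubseteq,\Sbf_{\mn{full}},\mn{id}_{\Omc^=}(q_1))(\Dmc)=Q_2(\Dmc)$ and, applied to $R$, $R(\Dmc)=(\Omc^\sqsubseteq,\Sbf_{\mn{full}},q'_2)(\Dmc)$; the hypothesis then closes the chain $Q_1(\Dmc)=Q_2(\Dmc)=R(\Dmc)$. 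Hence $Q_1\equiv R$. The $(\Leftarrow)$ direction of the iff is then immediate: since $Q_2$ lies in the functionality-free fragment, its $\text{UBCQ}_1$-approximation $(Q_2)_a=(\Omc^\sqsubseteq,\Sbf_{\mn{full}},(q_2)_a)$ is, by the $\text{DL-Lite}$ analogues of Theorem~\ref{thm:bestapprox} and Corollary~\ref{prop:wasaclaim} (their proofs use only the chase and unraveling, and role disjointness affects consistency alone), an ontology-preserving $\text{UBCQ}_1$-witness for $Q_2$; applying the moreover clause with $q'_2=(q_2)_a$ yields $Q_1\equiv(\Omc,\Sbf_{\mn{full}},(q_2)_a)$, so $Q_1$ is $\text{UBCQ}_1$-equivalent.

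For the $(\Rightarrow)$ direction I would start from an arbitrary witness $Q'=(\Omc',\Sbf_{\mn{full}},p)$ with $p\in\text{UBCQ}_1$ and $Q_1\equiv Q'$, and first strip it of functionality: Lemma~\ref{lem:dlliteFOrewr} applied to $Q'$ produces $p^\dagger\in\text{UBCQ}_1$ with $Q_1\equiv((\Omc')^=,\Sbf_{\mn{full}},p^\dagger)$ and $(\emptyset,\Sbf_{\mn{full}},p^\dagger)\equiv((\Omc')^\sqsubseteq,\Sbf_{\mn{full}},p^\dagger)$. On databases consistent with $(\Omc')^=$, Lemma~\ref{lem:dlliteFindep} further turns this into the genuinely functionality-free, tree-width-$1$ query $\mn{id}_{(\Omc')^=}(p^\dagger)$. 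The goal is to manufacture from $p^\dagger$, by re-attaching $\Omc^\sqsubseteq$ to a suitable contraction of it, a $\text{DL-Lite}$ OMQ in $\text{UBCQ}_1$ equivalent to $Q_2$ \emph{on all databases}; once such an OMQ exists, the $\text{DL-Lite}$ analogue of Corollary~\ref{prop:wasaclaim} gives $Q_2\equiv(Q_2)_a$, hence $\text{UBCQ}_1$-equivalence of $Q_2$. I would verify the equivalence by a case split on consistency with $\Omc^=$, using Lemma~\ref{lem:dlliteFindep} on the consistent databases and the fact that $Q_2\equiv(\emptyset,\Sbf_{\mn{full}},\mn{rew}(Q_2))$ (Lemma~\ref{lem:rewriting-eval}, since $(\Omc^\sqsubseteq)^==\emptyset$) to control the remaining databases.

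The hard part will be exactly this $(\Rightarrow)$ direction, and it is concentrated in the mismatch between the functionality assertions $(\Omc')^=$ of the given witness and $\Omc^=$, together with their interaction with tree width. Since $Q_1$ is \emph{true} (by inconsistency) on every $\Omc^=$-inconsistent database — including tree-width-$1$ ones such as $\{r(a,b),r(a,c)\}$ with $r$ functional — whereas $Q_2$ is functionality-free and evaluates its query there, any witness for $Q_1$ and any witness for $Q_2$ must genuinely differ on such databases, so $Q'$ cannot simply be reused. The delicate step is to show that stripping functionality from $p^\dagger$ and re-attaching $\Omc^\sqsubseteq$ yields a query that, viewed as a plain $\text{UCQ}$, remains tree-width-$1$-equivalent and captures $Q_2$ on the functionality-violating databases by ordinary evaluation rather than by triviality; this is where the first-order rewritability of $\mn{rew}$ and the tree-width-preservation of $\mn{id}_{\Omc^=}$ must be combined, and checking the equivalence on the functionality-inconsistent tree-width-$1$ databases is the technically most demanding point.
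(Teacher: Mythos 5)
Your treatment of the ``Moreover'' clause and of the direction from $Q_2$ to $Q_1$ is correct and essentially matches the paper: both rest on Lemma~\ref{lem:dlliteFindep} over $\Omc^=$-consistent databases and on triviality (mutual inconsistency) elsewhere, and the $(\Leftarrow)$ direction then follows by instantiating the Moreover clause with an ontology-preserving tree-width-$1$ witness for $Q_2$, which exists by (the DL-Lite analogue of) Corollary~\ref{prop:wasaclaim}.

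The $(\Rightarrow)$ direction, however, contains a genuine gap, and you flag it yourself: the step you defer as ``the technically most demanding point'' is precisely the content of the paper's proof, and your proposal neither pins down the candidate witness for $Q_2$ nor verifies its equivalence to $Q_2$. The idea you are missing is that one never has to reason about functionality-violating databases by direct evaluation. The paper first uses Lemma~\ref{lem:dlliteFOrewr} to replace the given witness by $(\Omc^=,\Sbf_{\mn{full}},q'_1)\equiv Q_1$ with $q'_1\in\text{UBCQ}_1$, and then normalizes each disjunct $p$ of $q'_1$ to $\mn{id}_{\Omc^=}(p)$, so that every canonical database $\Dmc_p$ of a disjunct \emph{satisfies} the functionality assertions; the witness for $Q_2$ is then simply $(\emptyset,\Sbf_{\mn{full}},q'_1)$, not a re-attachment of $\Omc^\sqsubseteq$ to a contraction. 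Both containments between this OMQ and $Q_2$ are established by routing through the canonical databases $\Dmc_p$ (of disjuncts of $q'_1$ in one direction, of $q_2=\mn{id}_{\Omc^=}(q_1)$ in the other): since these satisfy $\Omc^=$, Lemma~\ref{lem:dlliteFindep} applies to them, and the conclusion is transported back to an arbitrary database $\Dmc$ --- which may well violate $\Omc^=$ --- by homomorphism preservation (Lemma~\ref{lem:hom}) together with Point~2 of Lemma~\ref{lem:dlliteFOrewr}. Without this detour through the canonical databases, your plan of checking equivalence on the functionality-inconsistent tree-width-$1$ databases directly does not go through, so this direction remains unproved as written.
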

\begin{proof}
  The `if' direction is implied by the ``Moreover'' part. To prove the
  ``Moreover'' part, assume that $Q_2$ is equivalent to $Q'_2
  =(\Omc^\sqsubseteq,\Sbf_{\mn{full}},q'_2)$ with $q'_2 \in
  \text{UBCQ}_1$. We have to show that $Q_1$ and
  $Q'_1=(\Omc,\Sbf_{\mn{full}},q'_2)$ give the same answers on all databases
  \Dmc. If \Dmc is inconsistent with $\Omc$, then $\Dmc \models Q_1$ and
  $\Dmc \models Q'_1$. If \Dmc is consistent
  with $\Omc^=$, then $\Dmc \models Q_1$ iff $\Dmc \models Q_2$ by
  Lemma~\ref{lem:dlliteFindep}. Moreover, $\Dmc \models Q_2$ iff $\Dmc
  \models Q'_2$ iff $\Dmc \models Q'_1$, the
  latter by Lemma~\ref{lem:dlliteFindep}.

  \smallskip For the `only if' direction, assume that $Q_1$ is
  equivalent to an OMQ $Q'_1$ from
  $(\text{DL-Lite}^\Fmc,\text{UBCQ}_1)$. By
  Lemma~\ref{lem:dlliteFOrewr}, we can assume $Q'_1$ to be
  of the form
  $(\Omc^=,\Sbf_{\mn{full}},q'_1)$ with all BCQs in $q'_1$ of tree
  width~1. We can assume w.l.o.g.\ that
  \begin{enumerate}


  \item[($*$)] $\Dmc_{p}$ satisfies all functionality assertions in $\Omc^=$,
    for each BCQ $p$ in $q_1$.


  \end{enumerate}
  %
%
  In fact, we can achieve ($*$) by replacing each BCQ $p$ in $q'_1$
  with $\mn{id}_{\Omc^=}(p)$ which preserves 
  tree width~1 and, by Lemma~\ref{lem:dlliteFindep}, is also
  equivalence preserving regarding $Q'_1$.

  \smallskip

  We now show that $Q'_2=(\emptyset,\Sbf_{\mn{full}},q'_1)$ is equivalent
  to~$Q_2$.

  \smallskip `$\subseteq$'. Let \Dmc be a database with $\Dmc
  \models Q'_2$.
%
%
  Then there is a homomorphism from some BCQ $p$ in $q'_1$ to
  $\Dmc$. Clearly $\Dmc_{p} \models Q'_2$ and by construction of
  $Q'_2$, this implies $\Dmc_{p} \models Q'_1$ and thus also $\Dmc_p
  \models Q_1$. By ($*$), $\Dmc_{p}$ satisfies all the functionality
  assertions in $\Omc^=$. By Lemma~\ref{lem:dlliteFindep}, we thus
  have $\Dmc_{p} \models Q_2$.
  Since there is a homomorphism from $\Dmc_{p}$ to \Dmc,
  this yields $\Dmc \models Q_2$ as required.
%
%
  %
%
%

  \smallskip `$\supseteq$'. Let $\Dmc$ be a database with $\Dmc
  \models Q_2$.  Then there is a homomorphism from some BCQ $p$ in
  $q_2$ to $\mn{ch}_{\Omc^\sqsubseteq}(\Dmc)$. Clearly, we have
  $\Dmc_p \models Q_2$. By construction of $q_2$, $\Dmc_p$ satisfies all the functionality assertions in $\Omc^=$.  Lemma~\ref{lem:dlliteFindep}
  thus yields $\Dmc_p \models Q_1$ and, consequently, $\Dmc_p \models
  Q'_1$. Since $\Dmc_p$ satisfies all the functionality assertions in
  $\Omc^=$, this means $\Dmc_p \models Q_2'$. The homomorphism from $p$ to
  $\mn{ch}_{\Omc^\sqsubseteq}(\Dmc)$ yields $\Dmc \models
  (\Omc^\sqsubseteq,\Sbf_{\mn{full}},q'_1)$. By Point~2 of
  Lemma~\ref{lem:dlliteFOrewr}, this implies $\Dmc \models
  Q'_2$, as required.
\end{proof}


We can now give the proof of Theorem~\ref{the:main-dl-lite-f-acyclic}.

\begin{proof}
\noindent
Point~1 follows from the ``Moreover'' part of Lemma~\ref{lem:dlliteToELI} and Corollary~\ref{prop:wasaclaim}.

To prove Point~2, assume that an OMQ $Q=(\Omc,\Sbf_{\mn{full}},q)$ from $(\text{DL-Lite}^{\Fmc},\text{UBCQ})^{\equiv}_{\text{UBCQ}_1}$ is given as an input, along with a database \Dmc. We can check in \PTime whether \Dmc is consistent with $\Omc^=$. If it is not, then we answer `true'. Otherwise, by Lemma~\ref{lem:dlliteFindep} it suffices to evaluate $(\Omc^\sqsubseteq,\Sbf_{\mn{full}},q)$ in place of $Q$. This OMQ, however, is from $(\text{DL-Lite},\text{UBCQ})^\equiv_{\text{UBCQ}_1}$ and thus we can
invoke Theorem~\ref{thm:main1}.

The upper bound in Point~3 is an immediate consequence of Lemma~\ref{lem:dlliteToELI} and Point~3 of Theorem~\ref{thm:npcompl}. The lower bound is inherited from the case where the ontology is empty.
\end{proof}

\end{document}